\newif\ifacm
\newif\ifhldiff
\newif\ifcameraready
\colorlet{hldiffcolor}{ForestGreen}
\mathchardef\mhyphen="2D
\newcommand{\msf}[1]{\ensuremath{{\mathsf {#1}}}}
\newcommand{\mcal}[1]{\ensuremath{\mathcal {#1}}}
\newcommand{\algA}{{\ensuremath{\mcal{A}}}\xspace}
\newcommand{\Sim}{\ensuremath{\msf{Sim}}\xspace}
\newcommand{\A}{\ensuremath{{\mcal A}}\xspace}
\newcommand{\B}{\ensuremath{{\mcal B}}\xspace}
\newcommand{\C}{\ensuremath{{\mcal C}}\xspace}
\renewcommand{\S}{{\mcal S}}
\newcommand{\A}{\ensuremath{{\cal A}}\xspace}
\newcommand{\B}{\ensuremath{{\cal B}}\xspace}
\newcommand{\C}{\ensuremath{{\cal C}}\xspace}
\renewcommand{\S}{{\cal S}}
\renewcommand{\H}{{\mathsf{H}}}
\newcommand{\ct}{\ensuremath{{\sf ct}}\xspace}
\renewcommand{\ppt}{\ensuremath{{\text{p.p.t.}}}\xspace}
\newcommand{\R}{\ensuremath{\mathbb{R}}}
\newcommand{\Ring}{\ensuremath{\mcal{R}}\xspace}
\renewcommand{\secparam}{\ensuremath{\lambda}\xspace}
\newcommand{\ipe}{\ensuremath{{{\sf IPFE}}}\xspace}
\newcommand{\ipfe}{\ensuremath{{{\sf IPFE}}}\xspace}
\newcommand{\IPFE}{\ensuremath{{{\sf IPFE}}}\xspace}
\newcommand{\ipfeSimReal}{\ensuremath{{\sf IPFE\text{-}SIM\text{-}Real}_{\A, \ipfe}}\xspace}
\newcommand{\ipfeSimIdeal}{\ensuremath{{\sf IPFE\text{-}SIM\text{-}Ideal}_{\A, \ipfe}^\Sim}\xspace}
\newcommand{\ipfeIND}{\ensuremath{{\sf IPFE\text{-}IND\text{-}SEL}_{\A, \ipfe}}\xspace}
\newcommand{\hnfsis}{\ensuremath{{{\sf HNF\text{-}SIS}}}\xspace}
\newcommand{\sis}{\ensuremath{{{\sf SIS}}}\xspace}
\newcommand{\isis}{\ensuremath{{{\sf ISIS}}}\xspace}
\newcommand{\lwe}{\ensuremath{{{\sf LWE}}}\xspace}
\newcommand{\LWE}{\ensuremath{{{\sf LWE}}}\xspace}
\newcommand{\mheLWE}{\ensuremath{{{\sf mheLWE}}}\xspace}
\newcommand{\crs}{\ensuremath{{\sf crs}}\xspace}
\newcommand{\Gen}{\ensuremath{{\sf Gen}}\xspace}
\newcommand{\Setup}{\ensuremath{{\sf Setup}}\xspace}
\newcommand{\Hyb}{\ensuremath{{\sf Hyb}}\xspace}
\newcommand{\Enc}{\ensuremath{{\sf Enc}}\xspace}
\newcommand{\Dec}{\ensuremath{{\sf Dec}}\xspace}
\newcommand{\KGen}{\ensuremath{{\sf KGen}}\xspace}
\newcommand{\PubKGen}{\ensuremath{{\sf PubKGen}}\xspace}
\newcommand{\ret}{\ensuremath{{\bf ret}}\xspace}
\newcommand{\TrapdoorGen}{\ensuremath{{\sf TrapdoorGen}}\xspace}
\newcommand{\GenR}{\ensuremath{\mathsf{GenR}}\xspace}
\newcommand{\AdvertisementGen}{\ensuremath{{\mathsf{AdGen}}}\xspace}
\newcommand{\AdGen}{\ensuremath{{\mathsf{AdGen}}}\xspace}
\newcommand{\AdvertisementVerify}{\ensuremath{{\mathsf{AdVerify}}}\xspace}
\newcommand{\AdVerify}{\ensuremath{{\mathsf{AdVerify}}}\xspace}
\newcommand{\AuxGen}{\ensuremath{{\mathsf{AuxGen}}}\xspace}
\newcommand{\AuxVerify}{\ensuremath{{\mathsf{AuxVerify}}}\xspace}
\newcommand{\PreSign}{\ensuremath{\mathsf{PreSign}}\xspace}
\newcommand{\FPreSign}{\ensuremath{{\mathsf{FPreSign}}}\xspace}
\newcommand{\PreVerify}{\ensuremath{\mathsf{PreVerify}}\xspace}
\newcommand{\FPreVerify}{\ensuremath{{\mathsf{FPreVerify}}}\xspace}
\newcommand{\Adapt}{\ensuremath{\mathsf{Adapt}}\xspace}
\newcommand{\Ext}{\ensuremath{\mathsf{Ext}}\xspace}
\newcommand{\FExt}{\ensuremath{{\mathsf{FExt}}}\xspace}
\newcommand{\aSigForge}{\ensuremath{\mathsf{aSigForge}}\xspace}
\newcommand{\faSigForge}{\ensuremath{\mathsf{faEUF{\mhyphen}CMA}}\xspace}
\newcommand{\aExt}{\ensuremath{\mathsf{aExt}}\xspace}
\newcommand{\aUniqueExt}{\ensuremath{\mathsf{aUniqueExt}}\xspace}
\newcommand{\aWitExt}{\ensuremath{\mathsf{aWitExt}}\xspace}
\newcommand{\aUnlink}{\ensuremath{\mathsf{aUnlink}}\xspace}
\newcommand{\faWitExt}{\ensuremath{\mathsf{faWitExt}}\xspace}
\newcommand{\faWitInd}{\ensuremath{\mathsf{faWI}}\xspace}
\newcommand{\faWI}{\ensuremath{\mathsf{faWI}}\xspace}
\newcommand{\faWH}{\ensuremath{\mathsf{faWH}}\xspace}
\newcommand{\faZKReal}{\ensuremath{\mathsf{faZKReal}}\xspace}
\newcommand{\faZKIdeal}{\ensuremath{\mathsf{faZKIdeal}}\xspace}
\renewcommand{\st}{\ensuremath{\mathsf{st}}\xspace}
\newcommand{\stmt}{\ensuremath{\mathsf{stmt}}\xspace}
\newcommand{\wit}{\ensuremath{\mathsf{wit}}\xspace}
\newcommand{\DS}{\ensuremath{\mathsf{DS}}\xspace}
\newcommand{\ds}{\ensuremath{\mathsf{DS}}\xspace}
\newcommand{\AS}{\ensuremath{\mathsf{AS}}\xspace}
\newcommand{\as}{\ensuremath{\mathsf{AS}}\xspace}
\newcommand{\FAS}{\ensuremath{\mathsf{FAS}}\xspace}
\newcommand{\fas}{\ensuremath{\mathsf{FAS}}\xspace}
\newcommand{\Sch}{\ensuremath{\mathsf{Sch}}\xspace}
\newcommand{\Lyu}{\ensuremath{\mathsf{Lyu'}}\xspace}
\newcommand{\Sign}{{\ensuremath{{\sf Sign}}\xspace}}
\newcommand{\Prove}{{\ensuremath{{\sf Prove}}\xspace}}
\newcommand{\Verify}{{\ensuremath{{\sf Vf}}\xspace}}
\newcommand{\NIZK}{{\ensuremath{\mathsf{NIZK}}\xspace}}
\renewcommand{\nizk}{{\ensuremath{\mathsf{NIZK}}\xspace}}
\newcommand{\D}{\ensuremath{{\mcal D}}\xspace}
\newcommand{\D}{\ensuremath{{\cal D}}\xspace}
\definecolor{darkgreen}{rgb}{0,0.5,0}
\definecolor{lightblue}{RGB}{0,176,240}
\definecolor{darkblue}{RGB}{0,112,192}
\definecolor{lightpurple}{RGB}{124, 66, 168}
\definecolor{grey}{RGB}{139, 137, 137}
\definecolor{maroon}{RGB}{178, 34, 34}
\definecolor{green}{RGB}{34, 139, 34}
\definecolor{types}{RGB}{72, 61, 139}
\definecolor{gold}{rgb}{0.8, 0.33, 0.0}
\definecolor{mygreen}{HTML}{588D6A}
\definecolor{myred}{HTML}{C86733}
\definecolor{myblue}{HTML}{5B68FF}
\definecolor{myshadow}{HTML}{E6C5B4}
\definecolor{darkgray}{gray}{0.3}
\newcommand{\skiptext}[1]{}
\newcommand{\getr}{\ensuremath{{\overset{\$}{\leftarrow}}}\xspace}
\newcommand{\mpk}{\ensuremath{{\sf mpk}}\xspace}
\newcommand{\msk}{\ensuremath{{\sf msk}}\xspace}
\definecolor{darkred}{rgb}{0.5, 0, 0}
\definecolor{darkgreen}{rgb}{0, 0.5, 0}
\definecolor{darkblue}{rgb}{0,0,0.5}
\newcommand\markx[2]{}
\renewcommand{\sk}{\ensuremath{\mathsf{sk}}\xspace}
\newcommand{\aux}{{\sf aux}}
\renewcommand{\negl}{{\sf negl}}
\newcommand{\N}{\ensuremath{\mathbb{N}}\xspace}
\newcommand{\G}{\ensuremath{\mathbb{G}}\xspace}
\newcommand{\Z}{\ensuremath{\mathbb{Z}}\xspace}
\newcommand{\ignore}[1]{}
\renewcommand{\part}{\ensuremath{\mcal{S}}\xspace}
\newcounter{task}
\newtheorem{thm}{Theorem}[section]      
\newtheorem{theorem}[thm]{Theorem}
\newtheorem{lemma}[thm]{Lemma}
\newtheorem{claim}[thm]{Claim}
\newtheorem{corollary}[thm]{Corollary}
\newtheorem{remark}[thm]{Remark}
\Crefname{claim}{Claim}{Claims}
\crefname{claim}{claim}{claims}
\newtheorem{definition}[thm]{Definition}
\newtheoremstyle{boxes}
{2pt}
{0pt}
{}
{}
{\bfseries}
{}
{\newline}
{\thmname{#1}\thmnumber{ #2}:  
\thmnote{#3}}
\theoremstyle{boxes}
\newcommand{\elaine}[1]{{\footnotesize\color{magenta}[Elaine: #1]}}
\newcommand{\nikhil}[1]{{\footnotesize\color{red}[Nikhil: #1]}}
\newcommand{\pnote}[1]{{\footnotesize\color{green}[Pratik: #1]}}
\newcommand{\anote}[1]{{\footnotesize\color{orange}[Aravind: #1]}}
\newcommand{\rex}[1]{{\footnotesize\color{teal}[Rex: #1]}}
\renewcommand{\elaine}[1]{}
\renewcommand{\nikhil}[1]{}
\renewcommand{\pnote}[1]{}
\renewcommand{\anote}[1]{}
\renewcommand{\rex}[1]{}
\newcounter{cnt:challenge}
\newcommand{\bfa}{{\bf a}}
\newcommand{\bfb}{{\bf b}}
\newcommand{\bfc}{{\bf c}}
\newcommand{\bfe}{{\bf e}}
\newcommand{\bfr}{{\bf r}}
\newcommand{\bfs}{{\bf s}}
\newcommand{\bft}{{\bf t}}
\newcommand{\bfu}{{\bf u}}
\newcommand{\bfv}{{\bf v}}
\newcommand{\bfw}{{\bf w}}
\newcommand{\bfx}{{\bf x}}
\newcommand{\bfy}{{\bf y}}
\newcommand{\bfz}{{\bf z}}
\newcommand{\bfA}{{\bf A}}
\newcommand{\bfE}{{\bf E}}
\newcommand{\bfR}{{\bf R}}
\newcommand{\bfS}{{\bf S}}
\newcommand{\bfT}{{\bf T}}
\newcommand{\bfU}{{\bf U}}
\newcommand{\bfX}{{\bf X}}
\newcommand{\bfZ}{{\bf Z}}
\newcommand{\bfnum}[1]{{\bf #1}}
\renewcommand{\pp}{\mathsf{pp}}
\newcommand{\advt}{\mathsf{advt}}
\renewcommand{\state}{\ensuremath{\mathsf{st}}}
\newcommand{\GroupGen}{\ensuremath{\mathsf{GGen}}\xspace}
\newcommand{\DL}{\ensuremath{\mathsf{DL}}\xspace}
\newcommand{\DLog}{\ensuremath{\mathsf{DLog}}\xspace}
\newcommand{\fpS}{\ensuremath{\mathsf{fpS}}\xspace}
\newcommand{\td}{\ensuremath{\mathsf{td}}\xspace}
\gdef\@copyrightpermission{
  \begin{minipage}{0.3\columnwidth}
   \href{https://creativecommons.org/licenses/by/4.0/}{\includegraphics[width=0.90\textwidth]{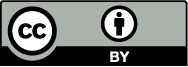}}
  \end{minipage}\hfill
  \begin{minipage}{0.7\columnwidth}
   \href{https://creativecommons.org/licenses/by/4.0/}{This work is licensed under a Creative Commons Attribution International 4.0 License.}
  \end{minipage}
  \vspace{5pt}
}
\begin{document}
\pagestyle{fancy}
\begin{CCSXML}
<ccs2024>
<concept>
<concept_id>10002978.10002979</concept_id>
<concept_desc>Security and privacy~Cryptography</concept_desc>
<concept_significance>500</concept_significance>
</concept>
</ccs2024>
\end{CCSXML}
\ccsdesc[500]{Security and privacy~Cryptography}

\ifhldiff
\title{Functional Adaptor Signatures: Beyond All-or-Nothing Blockchain-based Payments {\color{hldiffcolor}(Highlight Version)}}
\else
\title{Functional Adaptor Signatures: Beyond All-or-Nothing Blockchain-based Payments}
\fi

\author{Nikhil Vanjani}
\affiliation{%
 \institution{Carnegie Mellon University}
 \city{Pittsburgh}
 \country{USA}
 }
\email{nvanjani@cmu.edu}
\author{Pratik Soni}
\affiliation{%
 \institution{University of Utah}
 \city{Salt Lake City}
 \country{USA}
 }
\email{psoni@cs.utah.edu}
\author{Sri AravindaKrishnan Thyagarajan}
\affiliation{%
 \institution{University of Sydney}
 \city{Sydney}
 \country{Australia}
 }
\email{t.srikrishnan@gmail.com}

\begin{abstract}

In scenarios where a seller holds sensitive data $x$, like employee / patient records or ecological data, and a buyer seeks to obtain an evaluation of specific function $f$ on this data, solutions in trustless digital environments like blockchain-based Web3 systems typically fall into two categories: (1) Smart contract-powered solutions and (2) cryptographic solutions leveraging tools such as adaptor signatures.
The former approach offers atomic transactions where the buyer learns the function evaluation $f(x)$ (and not $x$ entirely) upon payment.
However, this approach is often inefficient, costly, lacks privacy for the seller's data, and is incompatible with systems that do not support smart contracts with required functionalities.
In contrast, the adaptor signature-based approach addresses all of the above issues but comes with an "all-or-nothing" guarantee, where the buyer fully extracts $x$ and does not support functional extraction of the sensitive data.
In this work, we aim to bridge the gap between these approaches, developing a solution that enables fair functional sales of information while offering improved efficiency, privacy, and compatibility similar to adaptor signatures.

Towards this, we propose \emph{functional adaptor signatures (FAS)} a novel cryptographic primitive that achieves all the desired properties as listed above.
Using FAS, the seller can publish an advertisement committing to $x$. 
The buyer can pre-sign the payment transaction w.r.t.\ a function $f$, and send it along with the transaction to the seller.
The seller adapts the pre-signature into a valid (buyer's) signature and posts the payment and the adapted signature on the blockchain to get paid.
Finally, using the pre-signature and the posted signature, the buyer efficiently extracts $f(x)$, and completes the sale.
We formalize the security properties of FAS, among which is a new notion called \emph{witness privacy} to capture seller's privacy, which ensures the buyer does not learn anything beyond $f(x)$.
We present multiple variants of witness privacy, namely, \emph{witness hiding, witness indistinguishability}, and \emph{zero-knowledge}, to capture varying levels of leakage about $x$ beyond $f(x)$ to a malicious buyer.

We introduce two efficient constructions of FAS supporting linear functions (like statistics/aggregates, kernels in machine learning, etc.), that satisfy the strongest notion of witness privacy.
 One construction is based on prime-order groups and compatible with Schnorr signatures for payments, and the other is based on lattices and compatible with a variant of Lyubashevsky's signature scheme.
A central conceptual contribution of our work lies in revealing a surprising connection between functional encryption, a well-explored concept over the past decade, and adaptor signatures, a relatively new primitive in the cryptographic landscape.
On a technical level, we avoid heavy cryptographic machinery and achieve improved efficiency, by making black-box use of building blocks like \emph{inner product functional encryption (IPFE)} while relying on certain security-enhancing techniques for the IPFE in a non-black-box manner.
We implement our FAS construction for Schnorr signatures and show that for reasonably sized seller witnesses, the different operations are quite efficient even for commodity hardware.

\end{abstract}

\keywords{Adaptor Signatures; Functional Encryption}

\maketitle

\else 

\title{Functional Adaptor Signatures: Beyond All-or-Nothing Blockchain-based Payments}

\author[1]{Nikhil Vanjani} 
\author[2]{Pratik Soni} 
\author[3]{Sri AravindaKrishnan Thyagarajan} 

\affil[1]{Carnegie Mellon University}
\affil[2]{University of Utah}
\affil[3]{University of Sydney}
\date{}

\renewcommand{\paragraph}[1]{\vspace{10pt}\noindent\textbf{#1}}



\begin{document}

\maketitle


\newpage
{\small\tableofcontents}
\newpage

\fi 


\pagestyle{plain}
\pnote{we can even move the syntax of different algorithm to the appendix, we have already introduced them earlier in section 1/2; this will create more space for lemmas about hybrid indistinguishability in 5 and robustness proofs in 6.}
\section{Introduction}

The shift from centralized Web2 to decentralized blockchain-based Web3 solutions has transformed digital goods trading.
Smart contracts have been pivotal in this transition, and they facilitate seamless transactions between sellers and buyers.
Imagine a seller offering a solution to a computational puzzle or a problem for sale.
Typical smart contracts let the seller specify the terms of their digital offering, referred to as an \emph{advertisement}, creating a transparent and accessible mechanism for buyers.
The buyers lock coins to the contract as an expression of interest to buy the solution.
Now the seller publishes the solution in the form of a transaction on the blockchain that invokes the contract.
The contract executes a validation procedure to verify the solution's correctness, and if successful, transfers the locked coins to the seller.
If the seller fails to respond with a correct solution, the contract refunds the buyer after the expiry of a timeout.
This template underpins various digital trading scenarios including Hash Timelock Contracts~\cite{lightningnetwork}, Bridges~\cite{bridge}, NFT sales~\cite{nft}, smart contract-based offline services~\cite{chainlink}, and e-commerce~\cite{khan2022revolutionizing}.

Despite the advantages of smart contracts, their current implementation faces significant challenges:
\begin{itemize}[leftmargin=*]
    \item \emph{Cost and Efficiency}: Sellers often incur hefty fees for posting advertisements on smart contracts and verifying secret solutions, resulting in higher transaction costs. Ethereum measures these costs in gas, and many popular smart contract applications have significantly higher gas costs compared to regular user-to-user payment transactions~\cite{livegascost}.
    \item \emph{Privacy}: Openly disclosing the secret solution compromises privacy. Encrypting values on the contract to address this necessitates more expensive and intricate cryptographic tools,  further exacerbating costs. Further, the approach affects the 
    \ifcameraready
    fungibility
    \else
    fungibility\footnote{A blockchain system is
    said to be fungible if all units/coins in the system have the same value, independent of their history.} 
    \fi
    of the system as pointed out in many previous works~\cite{erwig2021two,vts,thyagarajan2021lockable,thyagarajan2022universal,hanzlik2022sweep}.
    \item \emph{Compatibility}: Finally, the method relies on \emph{complex} smart contracts, rendering itself incompatible with prominent blockchains like Bitcoin. This poor adaptability hampers scalability in the broader Web3 ecosystem.
\end{itemize}

\smallskip\noindent\textbf{Adaptor Signatures.} As the community delves deeper into enhancing the efficiency and simplicity of smart contracts, cryptographic tools like adaptor signatures~\cite{malavolta2019anonymous,pq-as,asig,erwig2021two,thyagarajan2022universal,hanzlik2022sweep,qin2023blindhub} have emerged as promising solutions. 
Adaptor signatures help model a blockchain-based fair exchange between a buyer for its signature $\sigma$ (on a transaction) and an NP witness $x$ that is known to the seller.
\ifhldiff
{\color{hldiffcolor}
\fi
More formally, the seller first samples an NP statement $X$ along with its witness $x$. The statement $X$ binds the seller to its witness which it publishes to advertise its intent to sell $x$.\footnote{Typically, this step is not explicitly stated in the formalization of adaptor signatures. We adopt this extension to aid our functional adaptor signature formalization.} 
Now, adaptor signatures offers four algorithms $\PreSign, \PreVerify, \Adapt, \Ext$, that work as follows.
The buyer using the pre-sign algorithm $\PreSign$ first generates a pre-signature $\widetilde{\sigma}$ on the payment transaction $m$ w.r.t.\ the signing key $\sk$ and seller's statement $X$.
The seller verifies the validity of $\widetilde{\sigma}$ using $\PreVerify$ algorithm. Then, using the $\Adapt$ algorithm, the seller adapts $\widetilde{\sigma}$ into a full signature $\sigma$ on $m$ using its witness $x$. To redeem the transaction $m$, the seller posts $\sigma$ on the blockchain. 
\ifhldiff
}
\fi
Central to the efficacy of adaptor signatures is their unique property that allows the buyer, with access to the pre-signature  $\widetilde\sigma$ and the posted signature $\sigma$, to efficiently \emph{extract} the secret solution $x$ using the algorithm $\Ext$, thus completing the fair exchange.
This innovative approach addresses the drawbacks of traditional smart contracts:
\begin{itemize}[leftmargin=*]
	\item \emph{Improved efficiency and privacy:} Adaptor signatures require only a single transaction and signature to be posted on the blockchain, reducing transaction costs and enhancing efficiency. Moreover, with buyers locally extracting secrets, transactions resemble regular payments, thereby improving the privacy and fungibility of coins in the system.
	\item \emph{Enhanced compatibility:} Signature verification, a universally supported operation, ensures compatibility across all blockchain systems, making adaptor signatures-based digital sales compatible with the broader Web3 landscape.
\end{itemize}

\smallskip\noindent\textbf{Looking ahead: Granular and functional sale of digital information.} However, while adaptor signatures offer a promising solution for digital transactions, they come with certain limitations. 
They operate on an all-or-nothing basis, revealing the entire secret upon successful transactions or learning nothing in case the seller aborts. This lack of granularity contrasts with the concept of \emph{functional encryption (FE)}~\cite{fedefn}, where recipients can decrypt and learn specific functions applied to the encrypted message instead of the complete message.
Exploring granularity not only enables more nuanced and functional sales of digital information but also ensures the privacy of the seller's data.
For instance, the seller holding medical records might desire to safeguard specific functions of the records, adhering to medical data privacy regulations like the Health Insurance Portability and Accountability Act (HIPAA) in the United States.
The privacy requirement here is that the buyer learns only the specific function for which the payment is made, ensuring compliance with legal frameworks.
Below we expand on this intuition and give two illustrative examples of real-world scenarios where functional sales of digital information can be critical.

\smallskip\noindent\textbf{Application 1: Medical information.} In this setting the seller holds a medical database and is authorized to sell functions of the database by the patients or appropriate authorities. 
For instance, the buyer who might be an insurance firm, can query some aggregate/statistics of the demography in the database.
If it's a research organization, more complex functions like correlations of pre-existing medical conditions and cancer can be queried. The seller has the financial incentive to provide the information, which, in the long run, can contribute greatly to medical advancements.

\smallskip\noindent\textbf{Application 2: ML model training information.} In this case, the seller owns a dataset, and the buyer, with an ML model, seeks to train the model on the seller's dataset. The buyer presents the model to the seller, who then trains the model over the dataset and returns the result for a price. Training may involve simple similarity measures like computing the kernel of two vectors or more complex functions with large datasets (e.g., regression analysis, clustering). 

Therefore, we ask the following question,
\begin{quote}
\centering
	\emph{Can we facilitate the functional sale of digital information using adaptor signatures?}
\end{quote}

More concretely, let the seller hold secret data $x$, and the buyer wants access to a function $f$. 
\emph{Can we expand the functionality of adaptor signatures such that, at the end of the exchange, the seller obtains a payment transaction and a signature from the buyer, and the buyer learns $f(x)$, and not the entirety of $x$?}

While general functions are the ultimate goal, this work focuses on the restricted function class of linear functions (where $f$ is a linear function), arguing that they are already widely applicable. We show that practical constructions using lightweight cryptographic tools can be achieved for linear functions, with some insights into the challenges of achieving a more expressive class of functions.


\begin{figure}[t]
    \centering
    \captionsetup{justification=centering}
    \includegraphics[width=1.0\linewidth]{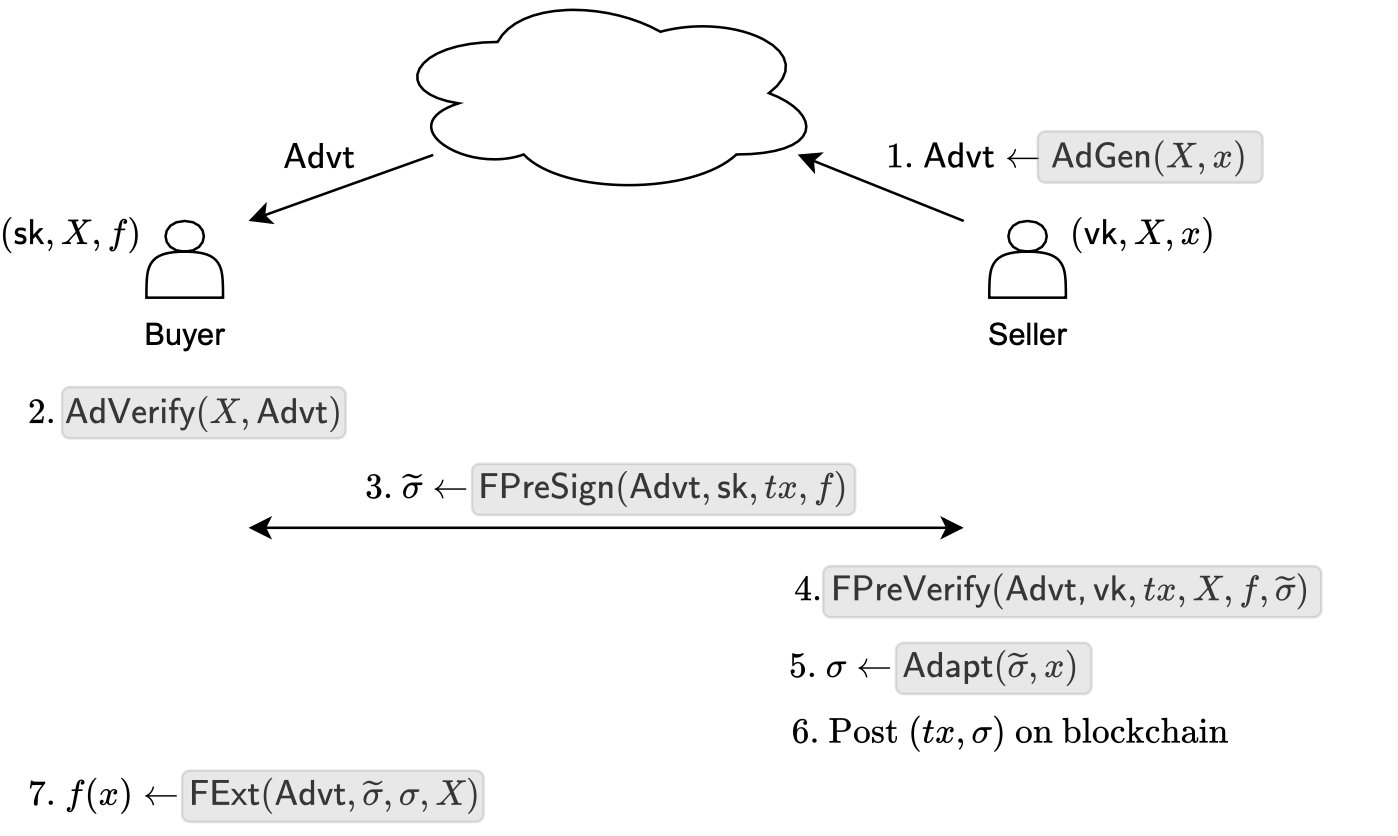}
    \caption{Functional payments via simplified FAS interface.}
    \label{fig:fas-flow}
    \label{fig:fas-protocol}
\end{figure}

\subsection{Our Contributions}
\label{sec:intro-contributions}
Below we summarize the contributions of this work.

\smallskip\noindent\textbf{New cryptographic primitive - Functional Adaptor Signatures.} 
To answer the above question affirmatively, we formally introduce a novel cryptographic primitive called \emph{functional adaptor signatures (FAS)} (in \cref{sec:FAS}). It is similar in functionality to standard adaptor signatures, except that it is additionally parameterized by a family of functions $\mathcal{F}$. This addition introduces new interfaces to FAS; we describe these below in the context of a digital trade for ease of understanding. A pictorial description is given in~\cref{fig:fas-flow}.

The seller of the trade generates an advertisement $\advt$ using the FAS interface $\AdvertisementGen$, that embeds the statement $X$ and the corresponding secret witness $x$, where $(X,x) \in R$ for some NP relation $R$. 
Anyone can publicly verify the well-formedness of $\advt$ using the $\AdvertisementVerify$ algorithm.
The buyer executes the functional pre-sign algorithm $\FPreSign$ to generate a pre-signature $\widetilde{\sigma}$ on the message $m$ (the payment transaction, denoted as $\mathit{tx}$ in~\cref{fig:fas-flow}) for the statement $X$ and a function $f$.
$\FPreVerify$ algorithm helps the seller to verify if $\widetilde{\sigma}$ is valid with respect to the message $m$, statement $X$ and function $f$.
If the pre-signature is valid, the seller using the $\Adapt$ algorithm and the witness $x$ can adapt $\widetilde{\sigma}$ into a signature $\sigma$ on the message $m$.
The seller may now post the transaction and the signature on the blockchain to get paid.
Finally, the buyer can efficiently extract the value $y = f(x)$ using the functional extract algorithm $\FExt$ given access to $\advt$, $\widetilde{\sigma}$, and $\sigma$.
With the above outline of FAS, we can see that we gain the same advantages of adaptor signatures over smart contracts for digital trading, while also achieving the desired granularity.

\smallskip\noindent\textbf{Definitional framework.} Recent works~\cite{as-stronger,as-foundations} have highlighted the inherent complexities in designing the security framework for even standard adaptor signatures, with many subtle nuances that require a lot of care.
In this work, we overcome these challenges and present a comprehensive definitional framework for the security of FAS (in~\cref{sec:FAS}).
We establish the formal foundations of FAS, constituting a substantial contribution of this work.

Broadly speaking, for a secure FAS, we are required to upgrade the security definitions of standard adaptor signatures to handle additional constraints posed by the functional aspects of the primitive.
Moreover, we introduce additional novel security properties for a secure FAS to satisfy, among which \emph{witness privacy} is a prominent one.
This property captures the leakage a malicious buyer learns about $x$ beyond $f(x)$.
To formalize this, we present three different notions of witness privacy for FAS, akin to the different privacy guarantees of cryptographic proofs.
To be more specific, we present (1) \emph{witness hiding}, where the malicious buyer after learning $f(x)$ cannot output the secret witness $x$, (2) \emph{witness indistinguishable}, where the malicious buyer learns $f(x)$ and cannot distinguish between two witnesses $x = x_0$ and $x = x_1$ (for the statement $X$) as long as $f(x_0) = f(x_1)$, and (3) \emph{zero-knowledge}, where the malicious buyer learns no other information about $x$, other than $f(x)$.

\smallskip\noindent\textbf{Efficient constructions.} From a practical perspective, we give several efficient constructions of FAS for the class of linear functions (like aggregates, statistics, kernel computation in ML, etc.) that are compatible with standard signatures like Schnorr and ECDSA.
Notably, these signature schemes are widely used in blockchain systems for transaction verification, thus making our constructions ready for deployment in current systems.
Towards a post-quantum instantiation of FAS, when the buyer has linearly independent function queries, we also present a FAS construction compatible with a variant of the lattice-based signature scheme of Lyubashevsky~\cite{lyu12lattice-sig}.
Later in~\cref{sec:TOinstantiations}, we discuss how the linear-independency requirement can be relaxed with minor trade-offs in efficiency.

Most importantly, the key conceptual contribution of our work is the surprising connection between two seemingly unrelated cryptographic tools, namely, functional encryption~\cite{fedefn} that has been extensively studied for over a decade and a relatively new primitive, adaptor signatures.
Concretely, we construct FAS for a family of inner-product functions $\mcal{F}_{\sf IP}$ using three lightweight building blocks:
\begin{enumerate}[leftmargin=*]
	\item[(i)] a functional encryption scheme $\ipfe$ for $\mcal{F}_{\sf IP}$,
	\item[(ii)] an adaptor signature scheme $\as$ w.r.t.\ 
    the digital signature scheme $\ds$
    and some hard relation $R_\ipfe$ (related to $\ipfe$), 
	\item[(iii)] NIZK argument system for NP.
\end{enumerate}
A nice feature of our construction is that we show it satisfies the zero-knowledge style witness privacy assuming 
only 
selective, IND-security of $\ipfe$ and 
zero-knowledge of $\nizk$.

We show two ways to instantiate the construction: 
\begin{itemize}[leftmargin=*]
\item 
{\bf Prime-order groups (\Cref{sec:fas-from-prime-groups}).}
Using a varaint of the DDH-based IPFE scheme by Abdalla et al.~\cite{ipe}, we get $R_\ipfe$ is the discrete log relation. So, we use  Schnorr adaptor signature~\cite{asig} for $\as$. 
\item 
{\bf Lattices (\Cref{sec:fas-from-lattices}).}
Using a variant of the LWE-based IPFE scheme by Agrawal et al.~\cite{ipe01}, we get $R_\ipfe$ is the short integer solution (SIS) relation. So, we use a variant of the post-quantum adaptor signature~\cite{pq-as} for $\as$. 
\end{itemize}

Throughout the paper, for conceptual clarity, we abstract out the relation $R$ to be any NP relation, and make black-box use of a NIZK~\cite{peikert2019noninteractive} for relation $R$.
However, we emphasize that in both cases, the NIZK can be efficiently instantiated depending on the exact application (i.e., for a concrete relation $R$) and plugged in without affecting  other components. 
Lastly, for a weaker witness privacy notion of witness-indistinguishability, we present a round-optimal construction 
\ifcameraready
(details in the full version of the paper).
\else 
(\Cref{sec:overview_noninteractive,sec:fas-construction-weak}).
\fi

\smallskip\noindent\textbf{Performance evaluation.} 
To assess the practicality of our FAS, we provide
an open-source implementation~\cite{fas-impl} of our prime-order group-based instantiation satisfying zero-knowledge style witness privacy in Python. The details of our performance evaluation are given in~\cref{sec:implementation}.
We also perform a series of benchmarks on our FAS scheme for a wide range of parameter settings.
Our results show that our scheme is practical for a wide variety of real-world scenarios.
For instance, on a MacBook Pro, for 300KB size witness, 
\ifhldiff
{\color{hldiffcolor}
\fi
the time taken for pre-signing is 0.344 seconds, pre-verification is 0.424 seconds, adapt is 0.035 seconds, functional-extract is 1.025 seconds. 
\ifhldiff
}
\fi
We share more details of our findings in~\Cref{sec:implementation}.

\smallskip\noindent\textbf{Applicability of Linear Functions.} Linear functions although restricted, allow us to capture many scenarios including learning statistical information (like mean, average, weighted linear functions, select entries) of records in a medical database, employee records of an organization, weather data, or ecological records of endangered species. Linear functions also allow a buyer to measure the proximity of a vector he holds with the secret vector of the seller. Such proximity measures are quite fundamental in Machine Learning where they are referred to as computing the kernel of two vectors or graphs in the case of~\cite{avrachenkov2017kernels}.

\section{Technical Overview}


In this overview, we present our approach to fair payments for learning linear functions of a seller's witness. 

To describe our techniques, we use the following setup: let $p$ and $\ell$ be integers where $p$ is prime and $L$ be an NP language containing statements $X$ with witness $\bfx = (x_1, \ldots, x_\ell)\in \mathbb{Z}_p^\ell$. Our goal is to support linear functions over $\bfx$. 
Throughout this paper, we represent such linear functions by vectors $\bfy = (y_1, \ldots, y_\ell) \in \mathbb{Z}_p^\ell$ and the corresponding evaluation on $\bfx$ by the inner-product of the vectors $\bfx$ and $\bfy$ modulo $p$.
That is, $f_\bfy(\bfx) = \langle \bfx, \bfy \rangle = \sum_{i=1}^{\ell} x_i \cdot y_i \mod p$.
\footnote{
While our lattice-based instantiation is for computing inner products$\mod p$, the group-based instantiation is for computing integer inner products with bounded outputs, i.e., $f_\bfy(\bfx) \in \{0, \ldots, B\}$ for some apriori fixed bound $B \ll p$.	
} 

\newcommand{\PVer}{\mathsf{PreVerify}}

\subsection{Functional Adaptor Signatures for Fair Functional Payments}

We have a fair exchange scenario where the seller has a witness $\bfx$ which is a vector in $\mathbb{Z}^\ell_p$, and only desires to sell some linear functions $\bfy$ on input $\bfx$. 
To enable such an exchange FAS, like adaptor signatures, involves algorithms like $\FPreSign$ and $\FPreVerify$, $\Adapt$, and $\FExt$. However, the key distinction lies in the fact that $\FPreSign$ and $\FPreVerify$ are defined w.r.t.\ a function $\bfy$ from the buyer, while the functional extraction algorithm $\FExt$, returns $f_\bfy(\bfx)$ instead of $\bfx$ itself, ensuring the buyer learns only the function's result.

\ifacm
\subsubsection{\bf Security of FAS} 
\else
\subsubsection{Security of FAS} 
\fi
\label{sec:overview-fas-security}
We consider two cases, where the adversary can corrupt either the seller or the buyer. 
Similar to standard adaptor signatures, when the seller is corrupt, FAS must mainly guarantee (1) \emph{unforgeability}, where an adversary who doesn't know the witness cannot forge honest buyers' signatures, and (2) \emph{witness extractability}, where the adversary cannot publish a buyer's signature such that when the buyer tries to extract using $\FExt$, it gets $z' \ne f_\bfy(\bfx)$. \footnote{We also require \emph{advertisement soundness} and \emph{pre-signature validity} properties against a corrupt seller, and defer their discussion 
\ifcameraready
to the full version.
\else
to~\Cref{sec:defs-appendix}.
\fi
}
Crucially, in the case of FAS, the adversary is allowed to get many functional pre-signatures for any choice of function $\bfy$.

On the other hand, when the buyer is corrupt, similar to standard adaptor signatures, FAS must guarantee \emph{pre-signature adaptability}, where if the adversary outputs a valid functional pre-signature, then adapting it using a valid witness $\bfx$ should result in a valid signature under the buyer's key.
Most importantly, FAS must guarantee protection of the seller's witness $\bfx$ which we capture in the notion called \emph{witness privacy}.


In more detail, FAS satisfies \emph{zero-knowledge} witness privacy if a malicious buyer learns no more information about $\bfx$ than $f_\bfy(\bfx)$ after the interaction. To formally define this in~\cref{sec:FAS}, we carefully borrow formalism from the related privacy notion of zero-knowledge for cryptographic proof systems~\cite{GMR89}. Informally, for every malicious buyer (that can sample its signing key), we require the existence of an efficient \ppt simulator $\Sim$ that can simulate the buyer's view only using $f_\bfy(\bfx)$. Philosophically, the existence of such an efficient simulator ensures that whatever the buyer learned from interacting with the seller, it could have on its own from $f_\bfy(\bfx)$ by running in polynomial time. While our notion shares similarities and is inspired by cryptographic proof systems, we need additional care to correctly model the fact that the buyer is allowed to ask for multiple functions which, moreover, can be adaptively chosen. 
To gain a comprehensive understanding of witness privacy, we also introduce two natural relaxed variants referred to as \emph{witness hiding} and \emph{witness indistinguishability} 
\ifcameraready
in the full version, 
\else
in~\Cref{def:fas-wh,def:fas-wi}, 
\fi
and study their relation with zero-knowledge witness privacy \nikhil{this study is not yet in the paper, should add}. Interestingly, in~\cref{sec:overview_noninteractive}, we show that the relaxation to witness indistinguishability enables a round optimal construction.


For the rest of this overview, we will focus on building an efficient FAS centered around the witness privacy guarantee given it is the unique property of FAS compared to standard adaptor signatures. Moreover, our focus will be on the strongest notion of zero-knowledge witness privacy.

\ifacm
\subsubsection{\bf Strawman Construction of FAS} 
\else
\subsubsection{Strawman Construction of FAS} 
\fi
To build FAS, one can combine Schnorr adaptor signature (discussed above) along with a general-purpose non-interactive zero-knowledge (NIZK) proof. We present this strawman solution below and highlight a key efficiency challenge that serves as the starting point of our construction.

Recall the protocol flow in~\Cref{fig:fas-flow}. We describe how $\advt, \widetilde{\sigma}, \sigma$ are computed and the functional extraction process. 
The seller publishes $\advt = (\ct, \pi)$, where $\ct$ is a semantically-secure public-key encryption of the witness $\bfx$ and NIZK proof $\pi$ that certifies $\ct$ encrypts a witness for the statement $X$. The pre-signature $\widetilde{\sigma}$ is computed interactively among the buyer and the seller as follows. (i) The buyer sends $f$ to the seller. (ii) The seller computes a secret-key encryption $\ct_z$ of the requested evaluation $z = f(\bfx)$ using a fresh secret-key $k$. Then, the seller encodes $k$ in $\mathbb{Z}_p$ and \emph{engages with the buyer in an adaptor execution to sell the witness for the discrete logarithm of group element $K$, where $K = g^k$ for some generator $g$ of the group $\G$}. That is, the seller sends $(\ct_z, K, \pi_z)$ to the buyer, where NIZK proof $\pi_z$ certifies $\ct_z$ encrypts $f(\bfx)$ using the key $k$ where $\bfx$ is encrypted in $\ct$. The buyer treats $K$ as the Schnorr adaptor statement and computes a pre-signature $\widetilde{\sigma}$ on the transaction message $m$ if and only if the NIZK proof $\pi_z$ verifies. On obtaining $\widetilde{\sigma}$, the seller adapts it into a valid signature $\sigma$ on $m$ using the key $k$ as the witness. For functional extraction, the buyer runs $\Ext$ algorithm of the Schnorr adaptor signature to extract $k$ from $(\widetilde{\sigma}, \sigma)$ and then decrypts $\ct_z$ using it to recover the function evaluation $z$.

The desired privacy of FAS is attained by relying on the zero-knowledge property of both NIZK proofs and the semantic security of $\ct$. 
That is, the seller side of interaction can be efficiently simulated just from the evaluation $z = f(\bfx)$. The security against a malicious seller relies on the security of the underlying adaptor signature scheme and the soundness of the NIZK protocol. 

While this approach enables fair functional payments, it requires that the seller compute NIZK proofs for every $\FPreSign$ interaction with a buyer. The seller's computation in computing these proofs (and their size) grows \emph{polynomially} in the witness size $\bfx$ and is proportional to the complexity of the function $f$. While the proof sizes can be reduced by relying on general-purpose ZK-SNARKs~\cite{BCCT13}, the proving cost growing super-linearly in $|\bfx|$ will become prohibitively expensive even for moderately sized databases $\bfx$. 
This cost also affects the scalability of the solution at the application layer.
For instance, if the seller is dealing with several thousand buyers each with a different function, the seller will be computationally strained, leading to delays and a potential Denial of Service (DoS) attack vector via malicious buyers.
 A more efficient solution would allow (a) the seller's computation to grow linearly in the size of $\bfx$ and (b) the seller's communication to be proportional to the size of the evaluation $z$ which would be significantly smaller than $\bfx$ itself. As we explain below, we achieve these efficiency targets without relying on NIZKs, which ensures we use underlying cryptographic primitives in a \emph{black-box} manner.


\subsection{Our Techniques: Functional Encryption + Adaptor Signatures}
\label{sec:TOtechniques}
The blueprint of our FAS construction is to combine adaptor signatures with another cryptographic tool called functional encryption (FE).
FE was introduced as an enrichment of any (public-key) encryption scheme that allows for fine-grained decryption.
 In particular, in addition to algorithms $(\Setup, \Enc, \Dec)$ which are typical to any encryption scheme, a FE scheme for a function class $\mathcal{F}$ provides an additional functional key-generation algorithm $\KGen$.
 The algorithm takes as input the master secret-key $\msk$ and function $f \in \mathcal{F}$, and outputs a functional secret-key $\sk_f$ such that decrypting any encryption of $\bfx$ (w.r.t.\ $\msk$) with $\sk_f$ reveals $f(\bfx)$ and no more. As discussed earlier, this notion of strong privacy where no information about $\bfx$ is leaked other than $f(\bfx)$ is formalized using the \emph{simulation} paradigm. For simplicity, we will henceforth refer to FE schemes satisfying this privacy property as \emph{simulation-secure}. Numerous works have built FE schemes for rich classes of functions from a variety of hardness assumptions. More specifically, for the case of linear functions, we know of group-based schemes for prime-order groups~\cite{ipe,ipe01,ipe02,ad-sim-ipfe} as well as for unknown-order groups~\cite{ipe01,ad-sim-ipfe}; and post-quantum constructions~\cite{ipe,ipe01,ad-sim-ipfe}  are known from the hardness of the LWE problem.

\smallskip\noindent\textbf{Our construction template in more detail.} In the quest of removing NIZKs from $\FPreSign$, we modify the above steps of the fair exchange as follows: (a) The $\AdvertisementGen$ remains identical except that $\ct$ is computed using the FE scheme's encryption algorithm, (b) In the $\FPreSign$ interaction, the seller instead computes the functional secret-key $\sk_f$ and engages with the buyer in an adaptor execution to sell $\sk_f$, (c) the $\Adapt$ algorithm remains the same, and (d) the $\FExt$ algorithm extracts the functional secret-key $\sk_f$ and decrypts $\ct$ directly to return $f(\bfx)$. If the FE scheme satisfies simulation-security and the NIZK (in the advertisement) is zero-knowledge, then seller's interaction can be simulated only via $f(\bfx)$, which ensures the desired seller privacy.

{An astute reader might ask why to use an encryption of $\bfx$ in the advertisement and not a one-way function of $\bfx$, similar to how statement $X$ and witness $x$ have a discrete log relation in standard adaptor signature constructions. A crucial advantage of our construction template is that since $\bfx$ is encrypted under semantically secure encryption, we can let witness $\bfx$ be of low-entropy like user databases, which are the main applications we are targeting for FAS. A similar setting would be insecure when using one-way functions instead of encryption since $\bfx$ is not a high-entropy witness. Moreover, achieving zero-knowledge witness privacy seems hopeless when using one-way functions for setting up $X$.}


\smallskip\noindent\textbf{Challenges.} Unfortunately the above approach doesn't quite work as-is and runs into two main challenges:
\begin{enumerate}[leftmargin=*]
	\item \textit{Correctness of the Adaptor Statement}: Suppose that one could encode $\sk_f$ as an integer $x_f$ in $\mathbb{Z}_p$ (for an appropriate $p$), and then use the Schnorr adaptor signature for the statement $X_f = g^{x_f}$. Still, for fairness, the buyer needs to gain confidence in the validity of $X_f$ (e.g., learning the incorrect functional key $\tilde{x}$ will disallow the buyer from learning the correct evaluation) during $\FPreSign$. Augmenting the seller's message with a NIZK proof that attests to the correctness of $X_f$ (i.e., discrete-logarithm of $X_f$ is indeed the correct $\sk_f$) would be sufficient. But the use of NIZKs was exactly what we were trying to avoid, and hence it seems we are back to square one.

	\item \textit{Compatibility with Adaptor Signatures}: Recall that known adaptor signature schemes only facilitate the sale of witnesses for special algebraic languages. For the above approach to work, we need a \emph{simulation-secure} FE scheme where the functional key-generation algorithm exhibits the required structural compatibility. For e.g., to rely on Schnorr adaptor signatures that allow selling discrete-logarithms $x \in \mathbb{Z}_p$ of prime-order group elements $X = g^x$, we need a FE scheme where the functional secret-key $\sk_f$ w.r.t.\ any linear function $f$ is such that $\sk_f \in \mathbb{Z}_p$. In fact,~\cite{ipe} presents such an IPFE scheme, but the scheme satisfies only a weaker notion of IND-security (and not simulation security) which is insufficient for the template to work.

\end{enumerate} 

\smallskip\noindent\textbf{Our Technique 1: Augmenting IPFE with $\PubKGen$ algorithm.}
To avoid using NIZKs during $\FPreSign$, we observe that the IPFE schemes for computing inner product of vectors of length $\ell$ are obtained by starting with $\ell$ instances of some PKE scheme and a functional secret key is essentially a linear combination of the secret keys of the $\ell$ PKE instances (as pointed out in~\cite{ipe}). Combining this observation with the fact that the public key and secret key of the underlying PKE satisfy some algebraic relation, one can envision that a similar linear combination of the public keys of the PKE schemes might give us a commitment to the functional secret key that can be directly used as the statement $X_f$. Crucially, if $X_f$ can be deterministically computed using $\ipfe$'s master public key and the function description $f$, then such an algorithm would be public and the buyer can compute $X_f$ without any interaction with the seller, thus avoiding NIZKs during $\FPreSign$ altogether. We formalize this vision by augmenting $\ipfe$ with a public deterministic algorithm $\PubKGen$. The requirement that the output of $\PubKGen$, i.e., $X_f$ must be a commitment to the output of $\KGen$, i.e., $\sk_f$, is formalized via a \emph{compliance property} (\Cref{def:ipfe-compliant}). Informally, $R_\ipfe$-compliance says that for a given relation $R_\ipfe$, it must be the case that $(X_f, \sk_f) \in R_\ipfe$. 

For instance, if we set the relation $R_\ipfe$ to be the discrete-log relation, then the key structure becomes compatible with the hard relation of Schnorr adaptor signatures. As an example, the IND-secure IPFE scheme of~\cite{ipe} can be shown to have such a key structure. In the IND-secure IPFE scheme of~\cite{ipe}, the master secret key is $\bfs = (s_1, \ldots, s_\ell) \in \Z_p^\ell$, the master public key is $g^\bfs = (g^{s_1}, \ldots, g^{s_\ell})$, the functional secret key for function $\bfy = (y_1, \ldots, y_\ell)$ is $\sk_\bfy = \sum_{i \in [\ell]} s_i y_i \bmod{p}$. Then, $\PubKGen$ can be defined to output $\pk_\bfy = \prod_{i \in [\ell]} (g^{s_i})^{y_i}$. Consequently, we can observe that $\pk_\bfy = g^{\sk_\bfy}$, thus $(\pk_\bfy, \sk_\bfy)$ satisfy the discrete log relation. Defining $\PubKGen$ in this way makes it compliant with the Schnorr adaptor signatures: if the buyer wants to learn function $\bfy$ evaluated on the seller's witness, he can locally compute $\pk_\bfy$ and use it as the adaptor statement. Upon receiving a pre-signature from the buyer, the seller can locally compute $\sk_\bfy$ and use it to adapt the pre-signature. Eventually, the buyer can extract $\sk_\bfy$, thus enabling functional decryption. Crucially notice that because of deterministic nature of $\PubKGen$, the pre-signing becomes non-interactive.
But as noted earlier, IND-secure IPFE isn’t sufficient for the FAS template to work, so we are not done yet.

\smallskip\noindent\textbf{Our Technique 2: Simulation-Secure Compatible IPFE.} To resolve the compatibility with known adaptor signatures, we rely on the IND-secure IPFE to simulation-secure IPFE compiler of~\cite{ad-sim-ipfe}. More specifically, their compiler lifts any IND-secure IPFE to achieve simulation-security without changing the structure of the functional secret keys. 
This is achieved by increasing the length of function vectors from $\ell$ to $2\ell$, where the first $\ell$ slots encode the function as before, and the extra $\ell$ slots encode some random coins. These random coins are used by the IPFE simulator to argue simulation security. To preserve correctness, the length of the message vector to be encrypted is also increased from $\ell$ to $2\ell$, where the first $\ell$ slots encode the message as before, and the extra $\ell$ slots are set to zero.
A small caveat is that the compiler results in a stateful functional secret key generation algorithm. 
We make it stateless which allows us to make an optimization where we increase the vector lengths from $\ell$ to only $\ell+1$. 
Instantiating their compiler with the DDH-based IND-secure IPFE scheme of~\cite{ipe} results in a simulation-secure FE that is compatible with Schnorr adaptor signatures. 
We also extend this to the post-quantum setting by using a simplified variant of the LWE-based simulation-secure IPFE scheme of~\cite{ad-sim-ipfe} 
\footnote{Making the functional secret key generation algorithm stateless restricts the functionality in the LWE setting. In particular, such IPFE can only handle linearly independent function queries. We show how to slightly modify our $\fas$ construction to ensure that linearly dependent function queries from different buyers can be augmented with some extra slots that always guarantee linear independence of function queries to the underlying $\ipfe$. 
\ifcameraready
See the full version
\else 
See~\Cref{sec:fas-from-lattices-modified} 
\fi
for more details.}
that is compatible with the Dilithium adaptor signature~\cite{PQadaptor1} instantiated on unstructured lattices.


While the above technique addresses the compatibility w.r.t.\ known adaptor signatures, 
proving \emph{zero-knowledge} witness privacy requires more care and we will introduce another technique for it. Let us first explain the challenge with proving \emph{zero-knowledge}.
Even when starting with a base $\ipfe$ scheme (that is IND-secure) with a deterministic functional key-generation algorithm, the IPFE compiler introduces randomness in the functional key-generation algorithm.
\footnote{DDH-based IPFE scheme of~\cite{ipe01} has been shown to be simulation-secure in~\cite{ad-sim-ipfe}. Interestingly, it does not introduce randomness in functional key-generation and for proving simulation-security, the simulator relies on the fact that there can be many master secret keys corresponding to a master public key. But unfortunately, it does not satisfy the IPFE compliance property as its master keys do not satisfy the discrete log relation. Hence, it is incompatible with Schnorr adaptor signatures. Despite many attempts, we could not make the two compatible.} 
This randomization is crucial for the IPFE simulator's correctness and essential to work with the base FE scheme that is only IND-secure. In the context of FAS, during $\FPreSign$, if the seller could share the randomness used as part of the generating $\sk_f$, then the buyer can mimic the same computation using the seller's randomness and determine the validity of the $\sk_f$ embedded in $X_f$. While this approach makes pre-signing interactive yet it would certainly avoid the use of general-purpose NIZKs as seen in the template above because the check is \emph{canonical} now. Unfortunately, exposing the seller's randomness would compromise simulation security of the compiled IPFE scheme as the IPFE simulator would then get stuck in the analysis. Consequently, we can't show \emph{zero-knowledge} of FAS. Note that an apporach to determine validity of $X_f$ without having the seller share the random coins could be to use NIZKs, but as said before, we want to avoid it.

\smallskip\noindent\textbf{Our Technique 3: Interactive pre-signing without NIZKs.} 
To address the \emph{zero-knowledge} of FAS, we open up the IPFE compiler of~\cite{ad-sim-ipfe}, that is, make non-black-box use of the IPFE compiler and reveal only a part of the randomness of the functional key-generation algorithm. By carefully choosing the randomness revealed by the seller, we can guarantee the validity of the adaptor statement $X_f$ to the buyer (without expensive NIZKs) and also allow flexibility to the IPFE simulator to successfully finish the simulation (See ~\Cref{remark:removing-nizk,remark:ipfe-simulation} for more details).

We emphasize that the non-black-box use of the techniques from~\cite{ad-sim-ipfe} is a significant technical part of this work. We elaborate on these aspects of our construction in more detail in~\Cref{sec:overview-construction}. 


\subsection{Our Construction}
\label{sec:overview-construction}
Suppose that the relation $R$ is such that for any statement $X$ and witness $\bfx$ satisfying $(X, \bfx) \in R$, it is the case that $\bfx \in \Z_p^\ell$. 
Further, 
suppose that inner-product functions are of the form $\bfy \in \Z_p^\ell$. Then, our $\fas$ construction is obtained by employing the abovementioned techniques.
We remind the reader of the protocol flow in~\Cref{fig:fas-protocol}.
\begin{itemize}[leftmargin=*]
\item $\Setup$:
Run by a trusted third party, it samples common reference string $\crs$ for NIZK and public parameters $\pp'$ for IPFE. 


\item $\AdGen$:
The seller samples $(\mpk, \msk)$ corresponding to IND-secure IPFE and random coins $\bft$ used by the non-black-box IPFE compiler to upgrade from IND-secure IPFE to simulation-secure IPFE. It computes the elongated vector $\widetilde{\bfx}$ used by the IPFE compiler, encrypts $\widetilde{\bfx}$ to obtain $\ct$ and computes a NIZK proof $\pi$ certifying that $\ct$ encrypts a witness corresponding to $X$. 

\item $\AdVerify$:
The buyer verifies the NIZK proof $\pi$.

\item Interactive pre-signing:
For the sake of notational simplicity, we denote the 3-round interactive functional pre-signing via three algorithms: $\AuxGen$, $\AuxVerify, \FPreSign$. 
\begin{itemize}[leftmargin=*]
\item 
First round: the buyer sends the function $\bfy$ to the seller. 
\item 
Second round: the seller runs the $\AuxGen$ algorithm and sends auxiliary value $\aux_\bfy$ along with a proof $\pi_\bfy$ validating authenticity of $\aux_\bfy$ to the buyer.
Specifically, the seller uses the random coins $\bft$ of the IPFE compiler to compute $f_\bfy(\bft)$. Then, it sets the elongated vector $\widetilde{\bfy} = (\bfy^T, f_\bfy(\bft))^T$ and generates $\pk_\bfy$ for it. It sends $(\aux_\bfy, \pi_\bfy) := (\pk_\bfy, f_\bfy(\bft))$.
\item 
Third round: the buyer verifies the auxiliary value via $\AuxVerify$ algorithm, i.e., it creates $\widetilde{\bfy} = (\bfy^T, \pi_\bfy)^T$ and checks if $\aux_\bfy$ matches the output of $\ipfe.\PubKGen(\mpk, \widetilde{\bfy})$. If it verifies, then, it computes pre-signature $\widetilde{\sigma}$ for the adaptor statement $\aux_\bfy$ and sends it to the seller.
\end{itemize}
\item $\FPreVerify$: The seller verfies that $\widetilde{\sigma}$ corresponds to $\bfy$, i.e., it verifies that $\widetilde{\sigma}$ corresponds to $\aux_\bfy$ and $\aux_\bfy$ corresponds to $\bfy$.
\item $\Adapt$: The seller computes IPFE functional key $\sk_\bfy$ for  function $\widetilde{\bfy}$. Since $\sk_\bfy$ is a witness to $\pk_\bfy$, it uses $\sk_\bfy$ to adapt $\widetilde{\sigma}$ into $\sigma$.
\item $\FExt$: The buyer extracts the IPFE functional key $\sk_\bfy$ from $(\widetilde{\sigma}, \sigma)$ and uses it to decrypt $\ct$ and recover $f_{\bfy}(\bfx)$.
\end{itemize}

\ifacm 

\begin{figure*}[t]
\ifhldiff
{\color{hldiffcolor}
\fi
\centering
\captionsetup{justification=centering}
\begin{pchstack}[boxed, space=0.5em]

\begin{pcvstack}
\procedure[linenumbering, mode=text]{$\Setup(1^\secparam)$}{
Sample $\crs \gets \nizk.\Setup(1^\secparam)$
\\ 
Sample $\pp' \gets \ipfe.\Gen(1^\secparam)$
\\ 
\ret $\pp := (\crs, \pp')$
}

\procedure[linenumbering, mode=text]{$\AdvertisementGen(\pp, X, \bfx)$:}{
Sample random coins $r_0$, $r_1$ 
\\ 
Let $(\mpk, \msk) := \ipe.\Setup(\pp', 1^{\ell+1}; r_0)$
\\ 
Sample $\bft \getr \Z_p^\ell$,
let $\widetilde{\bfx} := (\bfx^T, 0)^T \in \Z_p^{\ell+1}$ 
\\ 
Let $\ct := \ipe.\Enc(\mpk, \widetilde{\bfx}; r_1)$
\\ 
Let $ {\sf stmt} := (X, \pp', \mpk, \ct), {\sf wit} := (r_0, r_1, \bfx)$
\\
Let $\pi \gets \nizk.\Prove(\crs, {\sf stmt}, {\sf wit})$
\\ 
\ret $\advt := (\mpk, \ct, \pi)$, $\state := (\msk, \bft)$
}

\end{pcvstack}
\begin{pcvstack}

\procedure[linenumbering, mode=text]{$\AdvertisementVerify(\pp, X, \advt)$}{
\ret $\nizk.\Verify(\crs, (X, \pp', \mpk, \ct), \pi)$
}

\procedure[linenumbering, mode=text]{$\AuxGen(\advt, \state, \bfy)$}{
Parse $\advt = (\mpk, \ct, \pi)$, $\st = (\msk, \bft)$
\\ 
Let $\widetilde{\bfy} := (\bfy^T, f_\bfy(\bft))^T \in \Z_p^{\ell+1}$
\\ 
Let $\pk_\bfy := \ipfe.\PubKGen(\mpk, \widetilde{\bfy})$ 
\\ 
\ret $\aux_\bfy := \pk_\bfy$, $\pi_\bfy := f_\bfy(\bft)$
}

\procedure[linenumbering, mode=text]{$\AuxVerify(\advt, \bfy, \aux_\bfy, \pi_\bfy)$}{
Parse $\advt = (\mpk, \ct, \pi)$,
let $\widetilde{\bfy} := (\bfy^T, \pi_\bfy)^T$ 
\\ 
\ret $1$ iff $\aux_\bfy = \ipfe.\PubKGen(\mpk, \widetilde{\bfy})$
}

\procedure[linenumbering, mode=text]{$\FPreSign(\advt, \sk, m, X, \bfy, \aux_\bfy)$}{
\ret $\widetilde{\sigma} \gets \as.\PreSign(\sk, m, \aux_\bfy)$
}

\end{pcvstack}
\begin{pcvstack}

\procedure[linenumbering, mode=text]{$\FPreVerify(\advt, \vk, m, X, \bfy, \aux_\bfy, \pi_\bfy, \widetilde{\sigma})$}{
\ret $\AuxVerify(\advt, \bfy, \aux_\bfy, \pi_\bfy) \wedge$ 
\pcskipln \\ 
$\quad \qquad \as.\PreVerify(\vk, m, \aux_\bfy, \widetilde{\sigma}) $
}

\procedure[linenumbering, mode=text]{$\Adapt(\advt, \state, \vk, m, X, \bfx, \bfy, \aux_\bfy, \widetilde{\sigma})$}{
Parse $\advt = (\mpk, \ct, \pi)$, $\state = (\msk, \bft)$
\\ 
Let $\widetilde{\bfy} := (\bfy^T, f_\bfy(\bft))^T$ 
\\ 
Let $\sk_\bfy := \ipe.\KGen(\msk, \widetilde{\bfy})$
\\ 
\ret $\sigma := \as.\Adapt(\vk, m, \aux_\bfy, \sk_\bfy, \widetilde{\sigma})$
}

\procedure[linenumbering, mode=text]{$\FExt(\advt, \widetilde{\sigma}, \sigma, X, \bfy, \aux_\bfy)$}{ 
Parse $\advt = (\mpk, \ct, \pi)$.
\\ 
Let $z := \as.\Ext(\widetilde{\sigma}, \sigma, \aux_\bfy)$
\\ 
\ret $v := \ipe.\Dec(z, \ct)$
}

\end{pcvstack}
\end{pchstack}
\caption{
\ifhldiff
{\color{hldiffcolor}
\fi
Construction: Functional Adaptor Signatures
\ifhldiff
}
\fi
}
\label{fig:fas-construction}
\label{sec:construction}
\ifhldiff
}
\fi
\end{figure*}

\else 

\begin{figure*}[t]
\ifhldiff
{\color{hldiffcolor}
\fi
\centering
\captionsetup{justification=centering}
\begin{pchstack}[boxed, space=0.5em]

\begin{pcvstack}
\procedure[linenumbering, mode=text]{$\Setup(1^\secparam)$}{
Sample $\crs \gets \nizk.\Setup(1^\secparam)$
\\ 
Sample $\pp' \gets \ipfe.\Gen(1^\secparam)$
\\ 
\ret $\pp := (\crs, \pp')$
}

\procedure[linenumbering, mode=text]{$\AdvertisementGen(\pp, X, \bfx)$:}{
Sample random coins $r_0$, $r_1$ 
\\ 
Let $(\mpk, \msk) := \ipe.\Setup(\pp', 1^{\ell+1}; r_0)$
\\ 
Sample $\bft \getr \Z_p^\ell$,
let $\widetilde{\bfx} := (\bfx^T, 0)^T \in \Z_p^{\ell+1}$ 
\\ 
Let $\ct := \ipe.\Enc(\mpk, \widetilde{\bfx}; r_1)$
\\ 
Let $ {\sf stmt} := (X, \pp', \mpk, \ct), {\sf wit} := (r_0, r_1, \bfx)$
\\
Let $\pi \gets \nizk.\Prove(\crs, {\sf stmt}, {\sf wit})$
\\ 
\ret $\advt := (\mpk, \ct, \pi)$, $\state := (\msk, \bft)$
}

\procedure[linenumbering, mode=text]{$\AdvertisementVerify(\pp, X, \advt)$}{
\ret $\nizk.\Verify(\crs, (X, \pp', \mpk, \ct), \pi)$
}

\procedure[linenumbering, mode=text]{$\AuxGen(\advt, \state, \bfy)$}{
Parse $\advt = (\mpk, \ct, \pi)$, $\st = (\msk, \bft)$
\\ 
Let $\widetilde{\bfy} := (\bfy^T, f_\bfy(\bft))^T \in \Z_p^{\ell+1}$
\\ 
Let $\pk_\bfy := \ipfe.\PubKGen(\mpk, \widetilde{\bfy})$ 
\\ 
\ret $\aux_\bfy := \pk_\bfy$, $\pi_\bfy := f_\bfy(\bft)$
}

\end{pcvstack}
\begin{pcvstack}

\procedure[linenumbering, mode=text]{$\AuxVerify(\advt, \bfy, \aux_\bfy, \pi_\bfy)$}{
Parse $\advt = (\mpk, \ct, \pi)$,
let $\widetilde{\bfy} := (\bfy^T, \pi_\bfy)^T$ 
\\ 
\ret $1$ iff $\aux_\bfy = \ipfe.\PubKGen(\mpk, \widetilde{\bfy})$
}

\procedure[linenumbering, mode=text]{$\FPreSign(\advt, \sk, m, X, \bfy, \aux_\bfy)$}{
\ret $\widetilde{\sigma} \gets \as.\PreSign(\sk, m, \aux_\bfy)$
}

\procedure[linenumbering, mode=text]{$\FPreVerify(\advt, \vk, m, X, \bfy, \aux_\bfy, \pi_\bfy, \widetilde{\sigma})$}{
\ret $\AuxVerify(\advt, \bfy, \aux_\bfy, \pi_\bfy) \wedge$ 
\pcskipln \\ 
$\quad \qquad \as.\PreVerify(\vk, m, \aux_\bfy, \widetilde{\sigma}) $
}

\procedure[linenumbering, mode=text]{$\Adapt(\advt, \state, \vk, m, X, \bfx, \bfy, \aux_\bfy, \widetilde{\sigma})$}{
Parse $\advt = (\mpk, \ct, \pi)$, $\state = (\msk, \bft)$
\\ 
Let $\widetilde{\bfy} := (\bfy^T, f_\bfy(\bft))^T$ 
\\ 
Let $\sk_\bfy := \ipe.\KGen(\msk, \widetilde{\bfy})$
\\ 
\ret $\sigma := \as.\Adapt(\vk, m, \aux_\bfy, \sk_\bfy, \widetilde{\sigma})$
}

\procedure[linenumbering, mode=text]{$\FExt(\advt, \widetilde{\sigma}, \sigma, X, \bfy, \aux_\bfy)$}{ 
Parse $\advt = (\mpk, \ct, \pi)$.
\\ 
Let $z := \as.\Ext(\widetilde{\sigma}, \sigma, \aux_\bfy)$
\\ 
\ret $v := \ipe.\Dec(z, \ct)$
}

\end{pcvstack}
\end{pchstack}
\caption{
\ifhldiff
{\color{hldiffcolor}
\fi
Construction: Functional Adaptor Signatures
\ifhldiff
}
\fi
}
\label{fig:fas-construction}
\label{sec:construction}
\ifhldiff
}
\fi
\end{figure*}

\fi

Our formal construction is as in~\Cref{fig:fas-construction}.
We defer the security theorem to~\Cref{sec:fas-construction}. 
We alluded to in `Our Technique 3' on how to avoid NIZKs in interactive pre-signing by making non-blackbox use of the IPFE compiler. Having described the full details of FAS construction, we now elucidate on it in~\Cref{remark:removing-nizk,remark:ipfe-simulation}.


\begin{remark}[Simulatability with leakage $f_\bfy(\bft)$ on $\bft$]
\label{remark:ipfe-simulation}
Note that while proving zero-knowledge of FAS, the FAS simulator would still have to reveal $f_\bfy(\bft)$.
This helps us avoid NIZKs in the interactive pre-signing as discussed before.
Also note that FAS simulator would use the IPFE simulator internally, and $f_\bfy(\bft)$ is a leakage on 
the random coins $\bft$ of the IPFE (compiler's) simulator. 
Crucially, this does not break simulation security of IPFE.
This is because the IPFE compiler of~\cite{ad-sim-ipfe} --- and hence the IPFE simulator too -- inherently leaks $f_\bfy(\bft)$.
The latter is because $\ipfe$ decryption implicitly takes $\widetilde{\bfy}$ as input and hence, the decrypter needs to know $f_\bfy(\bft)$ to ensure decryption correctness. 
Hence, from an honest seller's perspective, the $\fas$ zero-knowledge simulation won't be affected by giving away $f_\bfy(\bft)$ as part of $\pi_\bfy$.
\end{remark}

\begin{remark}[Purpose of $\pi_\bfy$]
\label{remark:removing-nizk}
Note that $\pi_\bfy$ enables verifying that $\aux_\bfy$ is functional public key corresponding to $\widetilde{\bfy} = (\bfy^T, \pi_\bfy)^T$. Regarding $\widetilde{\bfy}$, the buyer knows $\bfy$ but it does not know whether the last slot of $\widetilde{\bfy}$ chosen by the seller is well-formed, i.e., is $\pi_\bfy = f_\bfy(\bft)$? 
In security against a malicious seller, we argue that we do not need to guarantee this.
In particular, it does not affect an honest buyer whether $\pi_\bfy = f_\bfy(\bft)$ holds or not. Suppose that the malicous seller chooses arbitrary value ${\sf val}$ and sends $(\aux_\bfy, \pi_\bfy) = (\pk_\bfy, {\sf val})$, where $\pk_\bfy = \ipfe.\PubKGen(\mpk, \widetilde{\bfy})$ and $\widetilde{\bfy} = (\bfy^T, {\sf val})^T$.
This would certainly ensure that $\AuxVerify$ passes and the buyer continues the protocol with the seller. Crucially though, we argue that if the honest buyer does end up making the payment to the seller, then, the honest buyer indeed learns $f_\bfy(\bfx)$ at the end of the protocol disregard of what ${\sf val}$ was chosen by the malicious seller. 
This is because \emph{advertisement soundness} would guarantee the honest buyer that $\ct$ encrypts a vector $\widetilde{\bfx}$ of the form $\widetilde{\bfx} = (\bfx^T, 0)^T$, where $(X, \bfx) \in R$.
Further, witness extractability would guarantee that the honest buyer extracts a functional secret key $\sk_\bfy$ corresponding to $\widetilde{\bfy} = (\bfy^T, {\sf val})^T$. Thus, IPFE correctness would guarantee that the honest buyer recovers $f_{\widetilde{\bfy}}(\widetilde{\bfx})$ which is same as $f_\bfy(\bfx)$ since the last slot of $\widetilde{\bfx}$ is zero.
\end{remark}


\subsection{Instantiations}
\label{sec:TOinstantiations}

In~\Cref{sec:fas-from-prime-groups}, we provide an instantiation from prime order groups for the  function class of 
\ifhldiff
{\color{hldiffcolor}
\fi
inner products over integers with output values polynomially bounded by $B \ll p$. 
For example, if each entry of the witness/function vector is bounded by $10^3$, and the vectors have $10^6$ entries, then, the inner product value would be bounded by $10^{12}$, so we can set $B=10^{12}$. We explain below application scenarios where such restrictions are reasonable.
\ifhldiff
}
\fi
We also present an instantiation from lattices in~\Cref{sec:fas-from-lattices} that removes such restrictions. 
\ifhldiff
{\color{hldiffcolor}
\fi
Specifically, it works for the function class of inner products modulo prime integer $p$. 
Further, it is post-quantum secure.
\ifhldiff
}
\fi

\ifacm
\subsubsection{\bf Prime-Order Groups based Instantiation.} 
\else
\subsubsection{Prime-Order Groups based Instantiation.} 
\fi
To instantiate our FAS construction in~\Cref{sec:construction} from prime order groups, 
we instantiate the $\as$ as Schnorr adaptor signature scheme~\cite{erwig2021two} w.r.t. the hard relation $R_\DL$, where $R_\DL$ is the discrete-log relation in prime order 
\ifcameraready
groups. 
\else 
groups (see~\Cref{def:dl-relation}). 
\fi
We then set $R_\ipfe = R_\DL$ and show that a varaiant of the selective, IND-secure IPFE scheme by Abdalla et al.~\cite{ipe} satisfies $R_\DL$-compliance. 

\ifhldiff
{\color{hldiffcolor}
\fi
\smallskip\noindent\textbf{Small plaintext space: reason and application scope.}
This limitation is because the function output here is encoded in the exponent and functional extraction of FAS instantiation involves a discrete log computation step. For values in the exponent bounded by $B$, this incurs a running time of $O(\sqrt{B})$. This is efficient only if bound $B$ is a polynomial. 
We argue that this is sufficient for several realistic application scenarios. We benchmarked computation for datasets having a maximum of $10^6$ entries, resulting in a 32 MB dataset. For example, the breast cancer dataset on Kaggle~\cite{kaggle-cancer} has $20000$ entries, total size $50$ KB, sum of entries $\approx 10^6$. Assuming each entry of a function is at most $10^4$, we get $B=10^{10}$. Thus, the benchmarks in~\Cref{fig:fas-perf} for $(\ell=10^4, B=10^{10})$ fit closest for this dataset.
Other scenarios occur when the secret database is a company's employee record with their age, years of service with the employer, retirement contribution, etc. Such values while sensitive, are typically bounded (e.g., by $B=10^{10}$) and the buyer may wish to learn statistical/aggregate information, like sum, weighted mean or average, etc. whose result is also in the bounded plaintext space. The buyer could be a research organization that is studying the workforce in companies or factories, or it could be a recruitment recommendation agency that tries to match clients with potential employers with appropriate aggregate requirements of the client.
\ifhldiff
}
\fi

\ifacm
\subsubsection{\bf Lattice based Instantiation.} 
\else
\subsubsection{Lattice based Instantiation.} 
\fi
We instantiate the $\as$ as the lattice-based adaptor signature scheme by Esgin et al.~\cite{pq-as} w.r.t. the inhomogenous short integer solution relation  
\ifcameraready
$R_\isis$. 
\else
$R_\isis$  (see~\Cref{def:isis}). 
\fi
Our instantiation is over the ring of integers $\mcal{R} = \Z$. Consequently, its security follows from plain SIS and LWE assumptions. 
Further, we show that a variant of the IND-secure IPFE scheme by Agrawal et al.~\cite[Section 4.2]{ipe01} satisfies $R_\isis$-compliance. 
\ifcameraready\else
\footnote{Note that in lattice-based adaptor signature scheme by Esgin et al.~\cite{pq-as}, the extracted witness does not satisfy $R_\isis$ but it satisfies an extended relation $R'_\isis$ such that $R_\isis \subset R'_\isis$. This is due to the rejection sampling step involved in these signatures. Using such signatures along with $\ipfe$ thus becomes technically more challenging as the extracted witness acts as the $\ipfe$ decryption key. So, we additionally require a decryption robustness property from $\ipfe$ (\Cref{def:ipfe-robust}).
}
\fi

For technical reasons related to IPFE, the resulting FAS construction can only support linearly independent function request across all buyers (See~\Cref{remark:linear-independence} for details). This could be quite restrictive since the scheme cannot support same function requests from different buyers. 
\ifcameraready
In the full version,
\else 
In~\Cref{sec:fas-from-lattices-modified}, 
\fi
we show how to modify the FAS construction to remove this restriction.


\subsection{Towards Non-Interactive Pre-Signing}\label{sec:overview_noninteractive} One of the main features of adaptor signatures that enable scalability is the non-interactive nature of the pre-signature generation algorithm. More specifically, the buyer in an adaptor signature execution can generate a pre-signature just from the seller's advertisement and, more importantly, without further interaction with the seller. Since the exchange of pre-signatures is done off-chain and hence the latency doesn't impact the blockchain eco-system, a non-interactive pre-signing phase is more preferable.

We relaxed pre-signing to be interactive to achieve strong witness privacy for FAS, namely, the malicious buyer learns no information about $\bfx$ other than $f(\bfx)$.
This relaxation is not new to our work and was already introduced in~\cite{qin2023blindhub} to achieve the advanced feature of \emph{blindness} for adaptor signatures.
Despite numerous attempts, the interactive nature of pre-signing seems necessary for this notion of strong privacy.
An immediate challenge towards non-interactive pre-signing would be to build a simulation-secure IPFE scheme where the buyer can itself generate $\pk_f \gets \ipfe.\PubKGen(\mpk, f)$ such that the seller can generate the corresponding functional secret-key $\sk_f \gets \ipfe.\KGen(\msk, f)$.
Unfortunately, the randomized nature of the key-generation algorithm of simulation-secure scheme is a major hurdle to enabling this.


We explore whether a non-interactive pre-signing can be achieved for FAS for relaxed privacy definitions. Towards this, we study FAS that only satisfies \emph{witness indistinguishability}.
The construction of such a FAS follows from our construction template by replacing the simulation-secure IPFE with an appropriate IND-secure IPFE. 
To understand the intuition, we refer the reader to the IPFE compliance example discussed in our technique 1 in~\Cref{sec:TOtechniques}.
We provide more details on this construction 
\ifcameraready
in the full version.
\else
in~\Cref{sec:fas-construction-weak}.
\fi



\section{Preliminaries}
\label{sec:prelims}\label{sec:notations}

We recall the cryptographic tools needed in this work.


\smallskip\noindent\textbf{Notations.}
We denote by $x \gets S$ the experiment of sampling $x$ from a probability distribution $S$. 
We denote by $[A(\cdot)]$ the range of an algorithm $A(\cdot)$. 
If $p(\cdot, \cdot)$ denotes a predicate, then $\Pr[p(y, z): x \gets S, (y, z) \gets A(x)]$ is the probability that the predicate $p(y, z)$ is true after the ordered sequence of events $x \gets S$ followed by $(y, z) \gets A(x)$.
 We denote scalars by lower-case alphabets such as $x$, 
vectors by bold-face lower-case alphabets such as $\bfx$, 
matrices by bold-face upper-case alphabets such as $\bfX$. 
$\bfx = (x_1, x_2)^T$ denotes a column-vector with elements $x_1$ and $x_2$.
$\bfx^T$ denotes a row-vector.
Suppose $\G$ is a cyclic group of prime-order $p$ and with generator $g$. 
For a scalar $x \in \Z_p$,
$g^x$ denotes its group encoding. 
For a vector $\bfx = (x_1, \ldots, x_n)^T \in \Z_p^n$, 
$g^\bfx$ denotes its group encoding $(g^{x_1}, \ldots, g^{x_n})^T$.
Similarly, for a matrix $\bfX$, $g^\bfX$ denotes its group encoding.
Given group encoding of a vector $g^\bfx$ and a vector $\bfy$, we can compute $g^{\bfx^T\bfy}$ efficiently as $g^{\bfx^T\bfy} = \Pi_{i \in [n]} (g^{x_i})^{y_i}$. 

\noindent\textbf{Linear Functions.}
We denote linear functions as inner product computation. 
Let $\mcal{F}_{{\sf IP}, \ell}$ denote the family of inner products of vectors of length $\ell$.
For a prime $p$, in this work we study two restrictions of inner product computations as follows:
\begin{itemize}[leftmargin=*]
\item \emph{Inner products modulo $p$.} 
The function class $\mcal{F}_{{\sf IP}, \ell, p} = \{ f_\bfy : \bfy \in \Z_p^\ell\}$, where $f_\bfy: \Z_p^\ell \to \Z_p$ is defined as $f_\bfy(\bfx) = \bfx^T \bfy \bmod{p}$.

\item \emph{Inner products with output values polynomially bounded by $B \ll p$.}
The function class $\mcal{F}_{{\sf IP}, \ell, p, B} = \{ f_\bfy : \bfy \in \Z_p^\ell\}$, where $f_\bfy: \Z_p^\ell \to \{0, \ldots, B\}$ is defined as $f_\bfy(\bfx) = \bfx^T \bfy \in \{0, \ldots, B\}$.

\end{itemize}
Often we will use $\bfy$ instead of $f_\bfy$ to denote the function. 

\smallskip\noindent\textbf{Digital Signature.}
A digital signature scheme 
$\DS := (\KGen, \Sign, \allowbreak \Verify)$ 
has a key generation algorithm $(\vk, \sk) \gets \KGen(1^\secparam)$ 
that outputs a verification-signing key pair. 
Using signing key $\sk$ we can compute signatures on a message $m$ 
by running $\sigma \gets \Sign(\sk, m)$, 
which can be publicly verified using the corresponding verification key $\vk$ 
by running $\Verify(\vk, m, \sigma)$. 
We require the digital signature scheme to satisfy strong existential unforgeability~\cite{sig}. 


\subsection{Hard Relation}
We recall the notion of a hard relation $R$ with statement/witness pairs 
$(X, x)$. We denote by $L_R$ the associated language defined as 
$L_R := \{ \ X \ | \ \exists \ x, (X , x) \in R \ \}$. 

\begin{definition}[Hardness of $R$]
We say that a relation $R$ is hard, if 
(i) there exists a \ppt sampling algorithm $\GenR(1^\secparam)$ that 
outputs a statement/witness pair $(X, x) \in R$,
(ii) the relation is poly-time decidable,
(iii) for every non-uniform PPT adversary $\A$, there exists a negligible function $\negl$ such that for all $\secparam \in \N$,
\begin{equation}
\Pr[\mathsf{G}_{\A,R}( 1^\secparam) = 1] \le \negl(\secparam) \ ,
\end{equation}
where the game $\mathsf{G}_{\A,R}( 1^\secparam)$ is defined as follows:

\begin{pchstack}[boxed, center, space=1em]
\procedure[linenumbering, mode=text]{Game $\mathsf{G}_{\A,R}( 1^\secparam)$}{
$(X,x) \gets \GenR(1^\secparam)$
\\ 
$(x') \gets \A(1^\secparam, X)$
\\ 
If $(X, x') \in R$: \ret $1$
\\ 
\ret $0$	
}
\end{pchstack}

\end{definition}

\begin{definition}[Hardness of $R$ w.r.t.\ function class $\mcal{F}$]
\label{def:f-hard-relation}
We say that a relation $R$ is $\mcal{F}$-hard, if 
(i) there exists a \ppt sampling algorithm $\GenR(1^\secparam)$ that 
outputs a statement/witness pair $(X, x) \in R$,
(ii) for every non-uniform PPT adversary $\A$ there exists a negligible function $\negl$ such that for all $\secparam \in \N$,
\begin{equation}
\Pr[\mathsf{G}_{\A,R, \mcal{F}}( 1^\secparam) = 1] \le \negl(\secparam) \ ,
\end{equation}
where the game $\mathsf{G}_{\A,R, \mcal{F}}( 1^\secparam)$ is defined as follows:

\begin{pchstack}[boxed, center, space=1em]
\procedure[linenumbering, mode=text]{Game $\mathsf{G}_{\A,R, \mcal{F}}( 1^\lambda)$}{
$(X,x) \gets \GenR(1^\lambda)$ 
\\ 
$(f,z) \gets \A(1^\lambda, X)$
\\ 
If $(f \in \mcal{F}) \wedge \left (z \in \{f(x') : \exists \ x' \text{ such that } (X,x') \in R \} \right )$: \ret $1$
\\ 
\ret $0$	
}
\end{pchstack}
\end{definition}

We note that in the above definition we allow the adversary \A to choose $f$ adaptively as this hardness assumption will later be used in proving unforgeability of our functional adaptor signatures construction where the adversary is allowed to choose the challenge function adaptively. This is necessary as functional adaptor signatures are non-trivial to build only in such a setting and become trivial if the challenge function is selectively chosen upfront by the adversary. Further note that we let the adversary \A win even if \A outputs a function evaluation of a witness $x'$ of $X$ that is different from $x$.

\begin{definition}[Hard Relation $R_\DL$]
\label{def:dl-relation}
The discrete log language $L_\DL$ is defined with respect 
to a group $G$ with generator $g$ and order $p$ as  
$L_\DL := \{ \ X \ | \ \exists \ x \in  \Z_p, X = g^x \ \}$ 
with the corresponding hard relation $R_\DL = \{ \ (X, x) \ | \ X = g^{x}\ \}$.
\end{definition}

\subsection{Adaptor Signatures}
Adaptor Signatures were first formally defined in~\cite{asig}.
~\cite{pq-as} generalized the syntax in~\cite{asig} to enable constructing adaptor signatures from lattices. 
More concretely, ~\cite{asig} defined adaptor signatures w.r.t.\ a digital signature scheme and a hard relation $R$.
~\cite{pq-as} generalized it to be w.r.t.\  two hard relations $R$ and $R'$ such that such that $R \subseteq R'$. For the group-based constructions, typically we have $R=R'$, but for the lattice based constructions, typically we have $R \neq R'$.
In this work, one of our constructions will be from lattices, hence, we follow this generalized syntax.

In a different vein, ~\cite{as-stronger,as-foundations} have strengthened the security properties of adaptor signatures, with notions like \emph{unforgeability}, \emph{pre-signature extractability}, and \emph{witness extractability}.
~\cite{as-stronger} further added \emph{unique extractability} and \emph{unlinkability} as security properties and unified \emph{unforgeability} and \emph{witness extractability} under a single security property called \emph{extractability}. ~\cite{as-foundations} further added \emph{pre-verify soundness} as a security property. We note that not all security properties are needed always. Within the scope of this work, we will use adaptor signatures as a building block for the construction of FAS and we only need \emph{pre-signature adaptability} and \emph{witness extractability} security properties of adaptor signatures. 
Intuitively, pre-signature adaptability ensures that 
given a valid pre-signature and a witness for the statement, one can always adapt the pre-signature into a valid signature.
Witness extractability requires that given an honestly generated pre-signature and a valid signature, one should be able to extract a witness.
We define these formally below.

\begin{definition}
An adaptor signature scheme 
$\AS_{\DS, R, R'} := (\PreSign, $ $\PreVerify, \Adapt, \Ext)$ 
is defined with respect to a signature scheme $\DS = (\KGen, \Sign, \Verify)$ 
and hard relations $R$ and $R'$ such that $R \subseteq R'$.
Here, $R$ constitutes the relation for the statement-witness pairs generated by $\GenR$ and $R'$ is an extended relation that defines the relation for {\em extracted} witnesses. 
The interfaces are described below.

\begin{itemize}[leftmargin=*]

\item 
$\widetilde{\sigma} \gets \PreSign(\sk,m,X)$:
The pre-signing algorithm takes as input
a signing key $\sk$,
a message $m$, 
and a statement $X$ for the language $L_R$, 
and outputs a pre-signature $\widetilde{\sigma}$
(we sometimes also refer to this as a partial signature). 

\item 
$0/1 \gets \PreVerify(\vk, m, X, \widetilde{\sigma})$: 
The pre-signature verification algorithm takes as input 
a verification key $\vk$, 
a message $m$, 
a statement $X$ for the language $L_R$, 
and a pre-signature $\widetilde{\sigma}$,
and outputs $0/1$ signifying whether $\widetilde{\sigma}$ 
is correctly generated. 

\item 
$\sigma := \Adapt(\vk, m, X, x, \widetilde{\sigma})$: 
The adapt algorithm transforms 
a pre-signature $\widetilde{\sigma}$ into a valid signature $\sigma$
given the witness $x$ for the instance $X$ of the language $L_R$. 



\item 
$x := \Ext(\widetilde{\sigma}, \sigma, X)$: 
The extract algorithm takes as input 
a pre-signature $\widetilde{\sigma}$, 
a signature $\sigma$, 
and an instance $X$, 
and outputs a witness $x'$ such that $(X, x') \in R'$, or $\bot$. 

\end{itemize}

\end{definition}

Note that an adaptor signature scheme $\AS_{\DS, R, R'}$ also inherits $\KGen, \Sign, \Verify$ algorithms from the signature scheme $\DS$.

Typically, an adaptor signature is run between a buyer and a seller. 
The buyer runs $\KGen, \Sign, \allowbreak \PreSign, \Ext$ algorithms, 
the seller runs $\Adapt$ algorithm, and anyone can run $\PreVerify, \Verify$ algorithms.

\begin{remark}
Looking ahead, in the discrete-log based instantiation, we will have $R'=R$, but in the lattice-based instantiation, we will have $R' \neq R$ and the reason for this extension is the {\em knowledge/soundness gap} inherent in {\em efficient} lattice-based zero-knowledge proofs. 
\end{remark}

\begin{definition}[Correctness]
\label{def:as-correctness}
An adaptor signature scheme 
$\AS$ satisfies correctness if for every 
$n \in \N$, every message $m \in \{0,1\}^*$, 
and every statement/witness pair $(X, x) \in R$, the following holds:

\[
\Pr \left[
\begin{array}{c}
\PreVerify(\vk, m, X, \widetilde{\sigma}) = 1 \\ 
\wedge \ 
\Verify(\vk, m, \sigma) = 1 \\
\wedge \ 
(X, x') \in R'
\end{array}
:
\begin{array}{l}
(\sk, \vk) \gets \KGen(1^\secparam) \\
\widetilde{\sigma} \gets \PreSign(\sk, m, X) \\ 
\sigma := \Adapt(\vk, m, X, x, \widetilde{\sigma}) \\ 
x' := \Ext(\widetilde{\sigma}, \sigma, X)
\end{array}
\right] = 1
\]

\end{definition}

In terms of security, we want witness extractability against a malicious seller and weak pre-signature adaptability against a malicious seller. We explain these next.

\ignore{

		\begin{definition}[Unforgeability]
		An adaptor signature scheme $\AS$ is aEUF-CMA secure if 
		for every \ppt adversary \A there exists a negligible function $\negl$ such that
		for all $\secparam \in \N$,
		\[
		\Pr[ \aSigForge_{\A,\AS}(1^\secparam) = 1] \leq \negl(\secparam),
		\]
		where the experiment $\aSigForge_{\A,\AS}$ 
		is defined as in~\Cref{fig:asig-unf-exp}.

		\end{definition}

		\begin{figure}[H]
		\centering
		\captionsetup{justification=centering}
		\begin{pchstack}[boxed, space=1em]
		\begin{pcvstack}[space=1em]
		\procedure[linenumbering]{Experiment $\aSigForge_{\A,\AS}$.}{
		\mcal{Q} := \emptyset \\
		(\sk, \vk) \gets \KGen(1^\secparam) \\ 
		m \gets \A^{\mcal{O}_S(\cdot), \mcal{O}_{pS}(\cdot, \cdot)}(\vk) \\ 
		(X, x) \gets \GenR(1^\secparam) \\ 
		\widetilde{\sigma} \gets \PreSign(\sk, m, X) \\ 
		\sigma \gets \A^{\mcal{O}_S(\cdot), \mcal{O}_{pS}(\cdot, \cdot)}(\widetilde{\sigma}, X) \\ 
		\text{\ret \ } ((m \notin \mcal{Q}) \wedge \Verify(\vk, m, \sigma))
		} 

		\end{pcvstack}
		\begin{pcvstack}[space=1em]

		\procedure[linenumbering]{Oracle $\mcal{O}_S(m)$}{
		\sigma \gets \Sign(\sk, m) \\ 
		\mcal{Q} := \mcal{Q} \vee \{m\} \\ 
		\text{\ret \ } \sigma
		}

		\procedure[linenumbering]{Oracle $\mcal{O}_{pS}(m, X)$}{
		\widetilde{\sigma} \gets \PreSign(\sk, m, X) \\ 
		\mcal{Q} := \mcal{Q} \vee \{m\} \\ 
		\text{\ret \ } \widetilde{\sigma}
		}
		\end{pcvstack}

		\end{pchstack}
		\caption{Unforgeability experiment of adaptor signatures}
		\label{fig:asig-unf-exp}
		\end{figure}
}

Witness extractability requires that for an honestly generated pre-signature and a valid signature, one should be able to extract a witness. Intuitively, it tries to capture that an interaction between an honest buyer (i.e., generates valid pre-signature) and a malicious seller trying to get paid (i.e., adapt pre-signature into a valid signature) without revealing a witness (i.e., extraction failure) should be unlikely to occur.

\begin{definition}[Witness Extractability]
\label{def:as-wit-ext}
An adaptor signature scheme $\AS$ is witness extractable if for every PPT adversary \A, there exists a negligible function $\negl$ such that
for all $\secparam \in \N$,
\[
\Pr[ \aWitExt_{\A, \AS}(1^\secparam) = 1] \leq \negl(\secparam),
\]
where the experiment $\aWitExt_{\A, \AS}$ is defined as in~\Cref{fig:asig-wit-ext-exp}.
\end{definition}

\ifacm
	\begin{figure}[H]
	\centering
	\captionsetup{justification=centering}
	\begin{pcvstack}[boxed, space=1em]
	\begin{pchstack}[space=1em]
	\procedure[linenumbering]{Experiment $\aWitExt_{\A,\AS}$.}{
	\mcal{Q} := \emptyset \\
	(\sk, \vk) \gets \KGen(1^\secparam) \\ 
	(m, X) \gets \A^{\mcal{O}_S(\cdot), \mcal{O}_{pS}(\cdot, \cdot)}(\vk) \\ 
	\widetilde{\sigma} \gets \PreSign(\sk, m, X) \\ 
	\sigma \gets \A^{\mcal{O}_S(\cdot), \mcal{O}_{pS}(\cdot, \cdot)}(\widetilde{\sigma}) \\ 
	x' := \Ext(\widetilde{\sigma}, \sigma, X) \\ 
	\text{\ret \ } ((m \notin \mcal{Q}) \wedge ((X, x') \notin R') \wedge \Verify(\vk, m, \sigma))
	}
	\end{pchstack}
	\begin{pchstack}[space=1em]
	\procedure[linenumbering]{Oracle $\mcal{O}_S(m)$}{
	\sigma \gets \Sign(\sk, m) \\ 
	\mcal{Q} := \mcal{Q} \vee \{m\} \\ 
	\text{\ret \ } \sigma
	}

	\procedure[linenumbering]{Oracle $\mcal{O}_{pS}(m, X)$}{
	\widetilde{\sigma} \gets \PreSign(\sk, m, X) \\ 
	\mcal{Q} := \mcal{Q} \vee \{m\} \\ 
	\text{\ret \ } \widetilde{\sigma}
	}
	\end{pchstack}
	\end{pcvstack}
	\caption{Witness Extractability experiment of adaptor signatures}
	\label{fig:asig-wit-ext-exp}
	\end{figure}
\else
	\begin{figure}[H]
	\centering
	\captionsetup{justification=centering}
	\begin{pchstack}[boxed, space=1em]
	\begin{pcvstack}[space=1em]
	\procedure[linenumbering]{Experiment $\aWitExt_{\A,\AS}$.}{
	\mcal{Q} := \emptyset \\
	(\sk, \vk) \gets \KGen(1^\secparam) \\ 
	(m, X) \gets \A^{\mcal{O}_S(\cdot), \mcal{O}_{pS}(\cdot, \cdot)}(\vk) \\ 
	\widetilde{\sigma} \gets \PreSign(\sk, m, X) \\ 
	\sigma \gets \A^{\mcal{O}_S(\cdot), \mcal{O}_{pS}(\cdot, \cdot)}(\widetilde{\sigma}) \\ 
	x' := \Ext(\widetilde{\sigma}, \sigma, X) \\ 
	\text{\ret \ } ((m \notin \mcal{Q}) \wedge ((X, x') \notin R') \wedge \Verify(\vk, m, \sigma))
	}
	\end{pcvstack}
	\begin{pcvstack}[space=1em]
	\procedure[linenumbering]{Oracle $\mcal{O}_S(m)$}{
	\sigma \gets \Sign(\sk, m) \\ 
	\mcal{Q} := \mcal{Q} \vee \{m\} \\ 
	\text{\ret \ } \sigma
	}

	\procedure[linenumbering]{Oracle $\mcal{O}_{pS}(m, X)$}{
	\widetilde{\sigma} \gets \PreSign(\sk, m, X) \\ 
	\mcal{Q} := \mcal{Q} \vee \{m\} \\ 
	\text{\ret \ } \widetilde{\sigma}
	}
	\end{pcvstack}
	\end{pchstack}
	\caption{Witness Extractability experiment of adaptor signatures}
	\label{fig:asig-wit-ext-exp}
	\end{figure}
\fi
Note that, in the above witness extractability definition, the adversary's winning condition is restricted to the extracted witness not being in $R'$.
Since $R \subseteq R'$, $(X, x') \notin R'$ implies $(X, x') \notin R$. Therefore, it is sufficient to ensure that $R'$ is a hard relation, which itself implies that $R$ is also a hard relation. As a result, in our security assumptions, we make sure that $R'$ is a hard relation. 

Weak pre-signature adaptability requires that given a valid pre-signature and a witness for the instance, 
one can always adapt the pre-signature into a valid signature.
Intuitively, this tries to capture that it should be impossible for a malicious buyer to learn the witness without doing the payment. 

\begin{definition}[Weak Pre-signature Adaptability]
\label{def:as-pre-sig-adaptability}
An adaptor signature scheme $\AS$ satisfies weak pre-signature adaptability 
if for any $\secparam \in \N$, 
any message $m \in \{0,1\}^*$, 
any key pair $(\sk, \vk) \gets \KGen(1^\secparam)$, 
any statement/witness pair $(X, x) \in R$, 
and any pre-signature $\widetilde{\sigma} \in \{0,1\}^*$ 
with $\PreVerify(\vk, \allowbreak m, X, \widetilde{\sigma}) = 1$, we have:
\[
\Pr[\Verify(\vk, m, \Adapt(\vk, m, X, x, \widetilde{\sigma})) = 1] = 1.
\]
\end{definition}

Note that the above pre-signature adaptability is called {\em weak} because only statement-witness pairs satisfying $R$ are guaranteed to be adaptable, and not those satisfying $R'$. Therefore, weak pre-signature adaptability does not guarantee, for example, that an {\em extracted} witness can be used to adapt a pre-signature successfully. This is however guaranteed whenever $R=R'$.

We have several constructions of adaptor signatures compatible with ECDSA~\cite{asig}, lattice-based signatures~\cite{pq-as,albrecht2022lattice}, and dichotomic signature schemes~\cite{as-foundations} which is a recently introduced abstraction that captures many popular schemes like Schnorr, CL, BBS, etc. The group-based schemes are constructed w.r.t.\ the discrete logarithm NP language in the respective groups, while the lattice-based schemes are constructed w.r.t.\ the short integer solution NP language.

\ignore {
	
		In terms of security, we want extractability and unique extractability against a malicious seller, weak pre-signature adaptability and pre-verify soundness against a malicious buyer, and unlinkability against a malicious observer. We explain these next.

		Extractability, proposed by Dai, Okamoto, and Yamamoto~\cite{as-stronger}, combines and extends the security properties of witness extractability and adaptor unforgeability (as described in~\cite{asig}) to the so-called multiple query setting. In partic- ular, the adversary is allowed to see multiple pre-signatures on the challenge message as well as pre-signatures on multiple honestly sampled statements. In contrast, in prior definitions, the adversary was restricted to seeing only one pre-signature on the challenge message and an honestly sampled statement. In the multiple query setting, the adversary’s task is to output a special forgery $(m^*, \sigma^*)$ for which $\sigma^*$ cannot be used to successfully extract a witness (breaking extractability). We formally state the definition below:

		\begin{definition}[Extractability]
		An adaptor signature scheme $\AS$ is extractable if for every \ppt adversary \A, there exists a negligible function $\negl$ such that
		for all $\secparam \in \N$,
		\[
		\Pr[ \aExt_{\A, \AS}(1^\secparam) = 1] \leq \negl(\secparam),
		\]
		where the experiment $\aExt_{\A, \AS}$ is defined as in~\Cref{fig:asig-ext-exp} and the probability is taken over the random choices of all probabilistic algorithms.
		\end{definition}

		\begin{figure}[H]
		\centering
		\captionsetup{justification=centering}
		\begin{pcvstack}[boxed, space=1em]
		\procedure[linenumbering,mode=text]{Experiment $\aExt_{\A,\AS}$.}{
		$(\sk, \vk) \gets \KGen(1^\secparam)$; $b=1$;
		$\mcal{S}, \mcal{C}, \mcal{T} := \emptyset$ \\
		$(m^*, \sigma^*) \gets \A^{\mcal{O}_{\sf NewX}(\cdot),\mcal{O}_{\sf S}(\cdot), \mcal{O}_{\sf pS}(\cdot, \cdot)}(\vk)$ \\ 
		If $\Verify(\vk, m^*, \sigma^*) = 0$: $b=0$ \\
		If $m^* \notin \mcal{S}$: $b=0$\\ 
		For $(X, \widetilde{\sigma})\in \mcal{T}[m^*]$ s.t.\ $X \notin \mcal{C}$: \\ 
		\quad If $(X, \Ext(\widetilde{\sigma}, \sigma, X)) \in R'$: $b=0$ \\ 
		\ret $b$
		}

		\procedure[linenumbering,mode=text]{Oracle $\mcal{O}_{\sf NewX}(1^\secparam)$.}{
		$(X, x) \gets R.\GenR(1^\secparam)$; $\mcal{C} := \mcal{C} \cup \{X\}$; \ret $X$	
		}
		\begin{pchstack}[space=1em]
		\procedure[linenumbering]{Oracle $\mcal{O}_{\sf S}(m)$}{
		\sigma \gets \Sign(\sk, m) \\ 
		\mcal{S} := \mcal{S} \cup \{m\} \\ 
		\text{\ret \ } \sigma
		}

		\procedure[linenumbering]{Oracle $\mcal{O}_{\sf pS}(m, X)$}{
		\widetilde{\sigma} \gets \PreSign(\sk, m, X) \\ 
		\mcal{T}[m] := \mcal{T}[n] \cup \{(X, \widetilde{\sigma})\} \\ 
		\text{\ret \ } \widetilde{\sigma}
		}
		\end{pchstack}
		\end{pcvstack}
		\caption{Extractability of adaptor signatures}
		\label{fig:asig-ext-exp}
		\end{figure}

		\begin{remark}
		The above extractability definition generalizes extractability as defined in~\cite{as-stronger,as-foundations} in the following way: the adversary's winning condition is restricted to the extracted witness not being in $R'$.
		Since $R \subseteq R'$, $(X, x') \notin R'$ implies $(X, x') \notin R$. Therefore, it is sufficient to ensure that $R'$ is a hard relation, which itself implies that $R$ is also a hard relation. As a result, in our security assumptions, we make sure that $R'$ is a hard relation. 
		\end{remark}

		Next, we describe unique extractability. 
		Unique extractability guarantees that any verifying pre-signature can be viewed as a commitment to both a single valid signature and a single witness. This means that no efficient adversary can compute a pre-signature $\widetilde{\sigma}$ on a message $m$ and a statement $X$ , such that there exist two different signatures on $m$ that both extract to a valid witness with $\widetilde{\sigma}$. More formally:

		\begin{definition}[Unique Extractability]
		An adaptor signature scheme $\AS$ is unique extractable if for every \ppt adversary \A, there exists a negligible function $\negl$ such that
		for all $\secparam \in \N$,
		\[
		\Pr[ \aUniqueExt_{\A, \AS}(1^\secparam) = 1] \leq \negl(\secparam),
		\]
		where the experiment $\aUniqueExt_{\A, \AS}$ is defined as in~\Cref{fig:asig-unique-ext-exp} and the probability is taken over the random choices of all probabilistic algorithms.
		\end{definition}

		\begin{figure}[H]
		\centering
		\captionsetup{justification=centering}
		\begin{pcvstack}[boxed, space=1em]
		\procedure[linenumbering,mode=text]{Experiment $\aUniqueExt_{\A,\AS}$.}{
		$(\sk, \vk) \gets \KGen(1^\secparam)$; $b=1$ \\
		$(m, X, \widetilde{\sigma}, \sigma, \sigma') \gets \A^{\mcal{O}_{\sf S}(\cdot), \mcal{O}_{\sf pS}(\cdot, \cdot)}(\vk)$ \\ 
		If $(\sigma=\sigma') \cup (\Verify(\vk, m, \sigma) = 0) \cup (\Verify(\vk, m, \sigma') = 0)$: $b=0$ \\
		If $\PreVerify(\vk, m, X, \widetilde{\sigma}) = 0$: $b=0$\\ 
		$x = \Ext(\widetilde{\sigma}, \sigma, X)$, 
		$x' = \Ext(\widetilde{\sigma}, \sigma', X)$ \\
		If $((X, x) \notin R') \cup ((X, x') \notin R')$: $b=0$\\
		\ret $b$
		}

		\begin{pchstack}[space=1em]
		\procedure[linenumbering]{Oracle $\mcal{O}_{\sf S}(m)$}{
		\sigma \gets \Sign(\sk, m) \\ 
		\text{\ret \ } \sigma
		}

		\procedure[linenumbering]{Oracle $\mcal{O}_{\sf pS}(m, X)$}{
		\widetilde{\sigma} \gets \PreSign(\sk, m, X) \\ 
		\text{\ret \ } \widetilde{\sigma}
		}
		\end{pchstack}
		\end{pcvstack}
		\caption{Unique Extractability of adaptor signatures}
		\label{fig:asig-unique-ext-exp}
		\end{figure}

		\begin{remark}
		The above unique extractability definition is a weakening of unique extractability as defined in~\cite{as-stronger,as-foundations} in the following way: the adversary's winning condition is restricted to the extracted witnesses being in $R'$.
		Since $R \subseteq R'$, $(X, x') \in R'$ does not necessarily imply $(X, x') \in R$ and hence this definition is weaker. Note that it is not a weakening if $R=R'$ (which is the typical case for non-lattice based constructions). 
		\end{remark}

		Next, we describe weak pre-signature adaptability. 
		It requires that given a valid pre-signature and a witness for an honestly generated the instance, 
		one can always adapt the pre-signature into a valid signature.

		\begin{definition}[Weak Pre-signature Adaptability]
		An adaptor signature scheme $\AS$ satisfies weak pre-signature adaptability 
		if for any $\secparam \in \N$, 
		any message $m \in \{0,1\}^*$, 
		any key pair $(\sk, \vk) \gets \KGen(1^\secparam)$, 
		any statement/witness pair $(X, x) \in R$, 
		and any pre-signature $\widetilde{\sigma} \in \{0,1\}^*$ 
		with $\PreVerify(\vk, m, X, \widetilde{\sigma}) = 1$, we have:
		\[
		\Pr[\Verify(\vk, m, \Adapt(\vk, m, X, x, \widetilde{\sigma})) = 1] = 1.
		\]
		\end{definition}

		Note that the above pre-signature adaptability is called {\em weak} because only statement-witness pairs satisfying $R$ are guaranteed to be adaptable, and not those satisfying $R'$. Therefore, it weak pre-signature adaptability does not guarantee, for example, that an {\em extracted} witness can be used to adapt a pre-signature successfully. This is however guaranteed whenever $R=R'$.

		Next, we describe pre-verify soundness. It was introduced by~\cite{as-foundations} and it ensures that the pre-verification algorithm satisfies computational soundness w.r.t.\ the relation $R$. In particular, $\PreVerify$ should reject pre-signatures computed using statements $X \notin R$. Intuitively, pre-verify soundness ensures that every valid pre-signature can be adapted to a full signature and one can extract a witness from it. This strengthens the property of weak pre-signature adaptability which is restricted to honestly generated pre-signatures on statements in the relation. 

		\begin{definition}[Pre-verify Soundness]
		An adaptor signature scheme $\AS$ satisfies computational pre-verify soundness if for every \ppt adversary \A, there exists a negligible function $\negl$ such that for every $\secparam \in \N$ and polynomially-bounded $X \notin L_R$,

		\[
		\Pr\left[ 
		\PreVerify(\vk, m, X, \widetilde{\sigma}) = 1: 
		\begin{matrix}
		(\sk, \vk) \gets \KGen(1^\secparam) \\ 
		(m, \widetilde{\sigma}) \gets \A(\sk)
		\end{matrix}
		\right] \leq \negl(\secparam)
		\]
		\end{definition}

		Next, we describe unlinkability.
		It was introduced by~\cite{as-foundations} and
		it guarantees that an adversary cannot distinguish standard signa- tures from adapted pre-signatures, even when the pre-signatures are adapted using adversarially generated witnesses.

		\begin{definition}[Unlinkability]
		An adaptor signature scheme $\AS$ is unlinkable if for every \ppt adversary \A, there exists a negligible function $\negl$ such that for every $\secparam \in \N$,

		\[ 
		\mid \Pr[\aUnlink_{\A, \AS}^0 (1^\secparam) = 1] - 
		\Pr[\aUnlink_{\A, \AS}^1 (1^\secparam) = 1] \mid 
		\leq \negl(\secparam),
		\]

		where experiment $\aUnlink_{\A, \AS}^b$ for $b \in \{0,1\}$ is described in~\Cref{fig:asig-unlink-exp}, and the probability is
		taken over the random choices of all probabilistic algorithms.
		\end{definition}

		\begin{figure}[H]
		\centering
		\captionsetup{justification=centering}
		\begin{pcvstack}[boxed, space=1em]
		\procedure[linenumbering,mode=text]{Experiment $\aUnlink_{\A,\AS}^b$.}{
		$(\sk, \vk) \gets \KGen(1^\secparam)$\\
		$b' \gets \A^{\mcal{O}_{\sf Chal}(\cdot, \cdot), \mcal{O}_{\sf pS}(\cdot, \cdot)}(\vk)$ \\ 
		\ret $b'$
		}

		\procedure[linenumbering,mode=text]{Oracle $\mcal{O}_{\sf Chal}(m, (X, x))$}{
		If $(X, x) \notin R$: \ret $\bot$\\ 
		$\widetilde{\sigma} \gets \PreSign(\sk, m, X)$ \\ 
		$\sigma_0 := \Adapt(\vk, m, X, x, \widetilde{\sigma})$ \\
		$\sigma_1 \gets \Sign(\sk, m)$ \\ 
		\ret  $\sigma_b$
		}

		\begin{pchstack}[space=1em]
		\procedure[linenumbering]{Oracle $\mcal{O}_{\sf S}(m)$}{
		\sigma \gets \Sign(\sk, m) \\ 
		\text{\ret \ } \sigma
		}

		\procedure[linenumbering]{Oracle $\mcal{O}_{\sf pS}(m, X)$}{
		\widetilde{\sigma} \gets \PreSign(\sk, m, X) \\ 
		\text{\ret \ } \widetilde{\sigma}
		}
		\end{pchstack}
		\end{pcvstack}
		\caption{Unlinkability of adaptor signatures}
		\label{fig:asig-unlink-exp}
		\end{figure}
}


\subsection{Non-Interactive Zero-Knowledge Arguments}\label{sec:nizk}

\begin{definition}[Secure NIZK Argument System]
\label{def:nizk}
A $\NIZK$ for language $L_R$ in the {\em common reference string (CRS)} model consists of the following possibly randomized algorithms.

\begin{itemize}[itemsep=1pt,leftmargin=*]
\item 
$\crs \gets \Setup(1^\secparam)$:
it takes in a security parameter $\secparam$ 
and outputs the common reference string $\crs$ which is publicly known to everyone.

\item 
$\pi \gets \Prove(\crs, \stmt, \wit)$:
it takes in the common reference string $\crs$, 
a statement $\stmt$, 
and a witness $\wit$, 
outputs a proof $\pi$.

\item 
$0/1 \gets \Verify(\crs, \stmt, \pi)$:
it takes in the common reference string $\crs$, 
a statement $\stmt$, 
and a proof $\pi$, 
and either accepts (with output $1$) the proof, or rejects (with output $0$) it.
\end{itemize}
A secure $\nizk$ argument system must satisfy the following properties:
\begin{itemize}[itemsep=1pt,leftmargin=*]
\item 
{\bf Completeness:}
For all $\stmt, \wit$ where $R(\stmt, \wit) = 1$, 
if $\crs \gets \Setup(1^\secparam)$ and 
$\pi \gets \Prove(\crs, \allowbreak \stmt, \wit)$, then, 
\[
\Pr[\Verify(\crs, \stmt, \pi) = 1] = 1.
\]
\item 
{\bf Adaptive Soundness:}
For all non-uniform \ppt provers $\mcal{P}^*$, 
there exists a negligible function $\negl$ such that 
for all $\secparam \in \N$,
if $\crs \gets \Setup(1^\secparam)$ and 
$(\stmt, \pi) \gets \mcal{P}^*(\crs)$, then,
\[
\Pr[\stmt \notin L_R \wedge \Verify(\crs, \stmt, \pi) = 1] \leq \negl(\secparam).
\]
\item 
{\bf Zero-Knowledge:}
There exists a 
\ppt simulator $\Sim = (\Setup^*, \allowbreak \Prove^*)$ 
and there exists a negligible function $\negl$ such that 
for all \ppt distinguishers \D, 
for all $\secparam \in \N$,
for all $(\stmt, \wit) \in R$, 
\[
| 
\Pr[\D(\crs, \pi) = 1] 
-
\Pr[\D(\crs^*, \pi^*) = 1]
| \leq \negl(\secparam),
\]
where
$\crs \gets \Setup(1^\secparam)$,
$\pi \gets \Prove(\crs, \stmt,$ $ \wit)$, 
$(\crs^*, \td) \allowbreak \gets \Setup^*(1^\secparam)$,
$\pi^* \allowbreak \gets \Prove^*(\crs^*, \td, \allowbreak \stmt)$ and 
$\td$ is a trapdoor used by $\Sim$ to come up with proof $\pi^*$ for a statement $\stmt$ without knowing its witness.
\end{itemize}

\end{definition}

NIZK arguments are known from 
factoring~\cite{feige1990multiple}, 
and standard assumptions on 
pairings~\cite{canetti2003forward,groth2006non} and 
lattices~\cite{peikert2019noninteractive,canetti2019fiat}.

\subsection{Inner Product Functional Encryption}
\label{sec:ipeprelim}
We will need a inner-product functional encryption scheme (IPFE). 
Let $\ell$ be an integer. 
We define IPFE for computing inner products of vectors of length $\ell$. Let the message space be $\mcal{M} \subseteq \Z^\ell$ and function space $\mcal{F}_{{\sf IP}, \ell}$. Messages are denoted by $\bfx \in \mcal{M}$, functions are denoted by $\bfy \in \mcal{F}_{{\sf IP}, \ell}$ and the function evaluation is denoted by $f_\bfy(\bfx)$ (all three of them are parameterized by the prime $p \in \pp$ as output by $\Gen$ below).
Formally, IPFE consists of the 
following algorithms:

\begin{itemize}[itemsep=1pt,leftmargin=*]
\item 
$\pp \leftarrow \Gen(1^\secparam)$: 
it is a {\em randomized} algorithm that 
takes in a security parameter $\secparam$ 
and samples public parameters $\pp$.
We assume that $\pp$ contains a prime number $p$.

\item 
$(\mpk, \msk) \leftarrow \Setup(\pp, \ell)$:
it is a {\em randomized} algorithm that 
takes in the public parameters $\pp$ and the vector length $\ell$,
outputs a master public key $\mpk$ and 
a master secret key $\msk$. 

\item 
$\sk_\bfy := \KGen(\msk, \bfy \in \mcal{F}_{{\sf IP}, \ell})$:
it is a {\em deterministic} algorithm that 
takes in the master secret key $\msk$, and 
a vector $\bfy \in \mcal{F}_{{\sf IP}, \ell}$,
outputs a functional secret key $\sk_\bfy$.

\item 
$\ct \leftarrow \Enc(\mpk, \bfx \in \mcal{M})$:
it is a {\em randomized} algorithm that 
takes in the master public key $\mpk$, 
a plaintext vector $\bfx \in \mcal{M}$, 
and outputs a ciphertext $\ct$.

\item 
$v := \Dec(\sk_\bfy, \ct)$:
it is a {\em deterministic} algorithm that 
takes in the functional secret key $\sk_\bfy$ 
and a ciphertext $\ct$, and outputs a decrypted outcome $v$.
\end{itemize}

Further, we augment the IPFE scheme with an additional algorithm $\PubKGen$ defined as follows. 
\begin{itemize}[itemsep=1pt,leftmargin=*]
\item $\pk_\bfy := \PubKGen(\mpk, \bfy \in \mcal{F}_{{\sf IP}, \ell})$:
it is a {\em deterministic} algorithm that 
takes in the master public key $\mpk$, and 
a vector $\bfy \in \mcal{F}_{{\sf IP}, \ell}$,
outputs a functional public key $\pk_\bfy$. 
\end{itemize}

Next, we describe the correctness of IPFE.
\begin{definition}[IPFE Correctness]
\label{def:ipfe-correctness}
For any $\secparam \in \N$,
let $\pp \gets \Gen(1^\secparam)$, 
then for any 
$ \ell \in \N, \bfx \in \mcal{M}, \bfy \in \mcal{F}_{{\sf IP}, \ell}$, 
the following holds:
\[
\Pr\left[
v = f_\bfy(\bfx): 
\begin{array}{l}
(\mpk, \msk) \leftarrow  \Setup(\pp, \ell),\\ 
\sk_\bfy = \KGen(\msk, \bfy),\\ 
\ct \leftarrow \Enc(\mpk, \bfx),\\ 
v = \Dec(\sk_\bfy, \ct)
\end{array}
\right] = 1.
\]
\end{definition}


Next, we describe the security of IPFE.
\begin{definition}[Selective, IND-security]
\label{def:ipfe-sel-ind-sec}
We say that the \ipe scheme satisfies selective,  IND-security, 
iff for any non-uniform \ppt admissible adversary \A, 
there exists a negligible function $\negl$ 
such that for all $\secparam \in \N, \ell \in \N$,
\ifacm
	\begin{align*}
	& |
	\Pr[\ipfeIND^0(1^\secparam, \ell) = 1] \\
	& \quad - \Pr[\ipfeIND^1(1^\secparam, \ell) = 1]
	| \leq \negl(\secparam),
	\end{align*}
\else 
\[
| \Pr[\ipfeIND^0(1^\secparam, \ell) = 1] - \Pr[\ipfeIND^1(1^\secparam, \ell) = 1] | \leq \negl(\secparam),
\]
\fi
where experiments $\ipfeIND^b(1^\secparam)$ for $b \in \{0,1\}$ are described in~\Cref{fig:ipfe-sel-ind-sec}.
Here, an adversary \algA is said to be {\it admissible} iff the following
holds with probability $1$:
for any $\bfy$
submitted in a $\mcal{O}_\KGen$ oracle query, 
it must be that 
$f_\bfy(\bfx^*_0) = f_\bfy(\bfx^*_1)$.
\end{definition}

\ifacm
	\begin{figure}[H]
	\centering
	\captionsetup{justification=centering}
	\begin{pcvstack}[boxed, space=1em]
	\procedure[linenumbering, mode=text]{Experiment $\ipfeIND^b(1^\secparam, \ell)$}{
	$\pp \gets \Gen(1^\secparam)$
	\\ 
	$(\bfx^*_0, \bfx^*_1) \gets \A(\pp)$
	\\ 
	$(\mpk, \msk) \gets \Setup(\pp, \ell)$
	\\ 
	$\ct^* \gets \Enc(\mpk, \bfx^*_b)$
	\\ 
	$b' \gets \A^{\mcal{O}_\KGen(\cdot)}(\mpk, \ct^*)$
	\\ 
	\ret $b'$
	}
	\procedure[linenumbering, mode=text]{Oracle $\mcal{O}_\KGen(\bfy)$}{
	\ret $\sk_\bfy = \KGen(\msk, \bfy)$
	}
	\end{pcvstack}
	\caption{Selective, IND-security experiment of Inner Product Functional Encryption
	}
	\label{fig:ipfe-sel-ind-sec}
	\end{figure}
\else 
	\begin{figure}[H]
	\centering
	\captionsetup{justification=centering}
	\begin{pchstack}[boxed, space=1em]
	\procedure[linenumbering, mode=text]{Experiment $\ipfeIND^b(1^\secparam, \ell)$}{
	$\pp \gets \Gen(1^\secparam)$
	\\ 
	$(\bfx^*_0, \bfx^*_1) \gets \A(\pp)$
	\\ 
	$(\mpk, \msk) \gets \Setup(\pp, \ell)$,
	$\ct^* \gets \Enc(\mpk, \bfx^*_b)$
	\\ 
	$b' \gets \A^{\mcal{O}_\KGen(\cdot)}(\mpk, \ct^*)$
	\\ 
	\ret $b'$
	}
	\procedure[linenumbering, mode=text]{Oracle $\mcal{O}_\KGen(\bfy)$}{
	\ret $\sk_\bfy = \KGen(\msk, \bfy)$
	}
	\end{pchstack}
	\caption{Selective, IND-security experiment of Inner Product Functional Encryption
	}
	\label{fig:ipfe-sel-ind-sec}
	\end{figure}
\fi
The above IND-security level is called selective because it requires the adversary to commit to the challenge message pair even before it sees $\mpk$. Stronger security models exist such as (i) semi-adaptive, IND-security where the adversary can commit to the challenge message pair after seeing $\mpk$ but before gaining access to the oracle $\mcal{O}_\KGen$ and (ii) adaptive, IND-security where the adversary can commit to the challenge message pair after seeing $\mpk$ and the power to make queries to the oracle $\mcal{O}_\KGen$ before and after committing to the challenge message pair. But, we only consider selective, IND-security as it suffices for the scope of this work.
We also note that any adaptive, IND-secure IPFE is also selective, IND-secure.

\begin{remark}
We note that $\pk_\bfy$ does not enable decryption by any means, so it will not change the correctness requirement. It will also not affect security of IPFE in any way. To see this, note that $\PubKGen$ is deterministic. So, an adversary can always compute $\pk_\bfy$ on its own. Given that $\KGen$ is also deterministic, one way to think of $\pk_\bfy$ is as a statistically binding and computationally hiding commitment to $\sk_\bfy$. Typically, $\mpk$ also has this property w.r.t.\ $\msk$ and it is always implicit in the security definitions of IPFE. We make this explicit below for $\pk_\bfy$ and $\sk_\bfy$ because our functional adaptor signatures will rely on it. 
We formalize this compliance requirement below.
\end{remark}

\begin{definition}[$R_\ipfe$-Compliant IPFE]
\label{def:ipfe-compliant-appendix}
\label{def:ipfe-compliant}
For a hard relation $R_\ipfe$, we say that an IPFE scheme is $R_\ipfe$ compliant if
for any $\secparam \in \N$,
for any $\pp \gets \ipe.\Gen(1^\secparam)$,
for any $(\mpk, \msk) \gets \ipe.\Setup(\pp, 1^\ell)$, 
for any $\bfy \in \mcal{F}_{{\sf IP}, \ell}$,
let $\pk_\bfy := \PubKGen(\mpk, \bfy)$, and 
$\sk_\bfy := \KGen(\msk, \bfy)$.
Then, it must be the case that $(\pk_\bfy, \sk_\bfy) \in R_\ipe$.
\end{definition}

Next, we describe an additional correctness property of IPFE that is tailored specifically to make IPFE compatible with adaptor signatures when the two are used in tandem in our functional adaptor signature construction. Looking ahead, we need this property because in our functional adaptor signature construction, we will use a standard adaptor signature to sell functional secret keys. Recall that adaptor signatures are defined w.r.t.\ two relations $R$ and $R'$. In this case, $R$ will be $R_\ipfe$ as defined in~\Cref{def:ipfe-compliant-appendix} and let us denote the corresponding extended relation $R'$ by $R'_\ipfe$. Then, the $\as.\Ext$ algorithm of adaptor signature scheme $\as$ will extract a functional secret key $\sk'_\bfy$ as the witness for the adaptor statement $\pk_\bfy$. While $\sk'_\bfy$ will be a witness of $R'_\ipfe$, it may not be a witness of the base relation $R_\ipfe$ since  $R_\ipfe \subseteq R'_\ipfe$, and we still want that $\sk'_\bfy$ should enable meaningful decryption. 
We formalize this robustness requirement below.

\begin{definition}[$R'_\ipfe$-Robust IPFE]
\label{def:ipfe-robust-appendix}
\label{def:ipfe-robust}
Let $R_\ipfe$ be the hard relation as defined in~\Cref{def:ipfe-compliant-appendix}. 
Let $R'_\ipfe$ such that $R_\ipfe \subseteq R'_\ipfe$ be an 
an extended relation of $R_\ipfe$. 
We say that an IPFE scheme is $R'_\ipfe$ robust if 
IPFE correctness holds even w.r.t.\ a functional secret key satisfying $R'_\ipfe$. More formally, 
for any $\secparam \in \N$,
let $\pp \gets \ipfe.\Gen(1^\secparam)$,
then for any $\ell \in \N$, 
$\bfx \in \mcal{M}, \bfy \in \mcal{F}_{{\sf IP}, \ell}$, the following holds: 
\[
\Pr\left[
v = f_\bfy(\bfx): 
\begin{array}{l}
(\mpk, \msk) \leftarrow  \Setup(\pp, \ell),\\ 
\pk_\bfy = \PubKGen(\msk, \bfy),\\ 
\sk'_\bfy \ s.t.\ (\pk_\bfy, \sk'_\bfy) \in R'_\ipfe,\\
\ct \leftarrow \Enc(\mpk, \bfx),\\ 
v = \Dec(\sk'_\bfy, \ct)
\end{array}
\right] = 1.
\]
\end{definition}

\begin{remark}
One can think of decryption correctness (\Cref{def:ipfe-correctness}) as being implicitly defined with respect to $\sk_\bfy$ that is witness to $\pk_\bfy$ for the relation $R_\ipfe$. And one can think of the $R'_\ipfe$-robustness as decryption correctness with respect to $\sk'_\bfy$ that is witness to $\pk_\bfy$ for the relation $R'_\ipfe$. 
\end{remark}



Abdalla et al.~\cite{ipe} constructed selective, IND-secure IPFE schemes from DDH and LWE. Subsequently, Agrawal et al.~\cite{ipe01} constructed adaptive, IND-secure IPFE schemes from DDH, DCR, and LWE. 
Within the scope of this work, we will use the following two IPFE schemes:
\begin{itemize}[leftmargin=*]
\item 
The DDH based IPFE by Abdalla et al.~\cite{ipe}. 
This scheme computes inner products with output values polynomially bounded by $B \ll p$. 
Specifically, the message space is $\mcal{M} = \Z_p^\ell$ and the function class is $\mcal{F}_{{\sf IP}, \ell, p, B} = \{ f_\bfy : \bfy \in \Z_p^\ell\} \subseteq \mcal{F}_{{\sf IP}, \ell}$, where $f_\bfy: \Z_p^\ell \to \{0, \ldots, B\}$ is defined as $f_\bfy(\bfx) = \bfx^T \bfy \in \{0, \ldots, B\}$. 
\item 
The LWE based IPFE by Agrawal et al.~\cite[Section 4.2]{ipe01}.
This scheme computes inner products modulo $p$. 
Specifically, the message space is $\mcal{M} = \Z_p^\ell$ and the function class is $\mcal{F}_{{\sf IP}, \ell, p} = \{ f_\bfy : \bfy \in \Z_p^\ell\} \subseteq \mcal{F}_{{\sf IP}, \ell}$, where $f_\bfy: \Z_p^\ell \to \Z_p$ is defined as $f_\bfy(\bfx) = \bfx^T \bfy \bmod{p} \in \Z_p$.
\end{itemize}

\subsection{Lattice Preliminaries}

We denote by $\S_c = \{ \bfx \in \Z_q^n : ||\bfx||_\infty \leq c\}$ the set of vectors in $\Z_q^n$ whose maximum absolute coefficient is at most $c \in \Z^+$. 

\begin{definition}[$\sis_{n, m,q,\beta}$]
Let $\bfA \getr \Z_q^{n \times m}$.
Given $\bfA$, \sis problem with parameters $m>1$ and $0 < \beta < q$ asks to find a short non-zero $\bfv \in \Z^m$ such that $\bfA \bfv = 0 \bmod{q}$ and $||\bfv||_\infty \leq \beta$.
\end{definition}

Next, we define $\sis$ in Hermite Normal Form (HNF).
\begin{definition}[$\hnfsis_{n, m,q,\beta}$]
Let $\bfA' \getr \Z_q^{n \times (m-1)}$ and $\bfA = (\bfnum{1} || {\bfA'}) \in \Z_q^{n \times m}$.
Given $\bfA$, \hnfsis problem with parameters $m>1$ and $0 < \beta < q$ asks to find a short non-zero $\bfv \in \Z^m$ such that $\bfA \bfv = 0 \bmod{q}$ and $||\bfv||_\infty \leq \beta$.
\end{definition}

\begin{definition}[$\isis_{n, m, q, \beta}$]
\label{def:isis}
It is the inhomogeneous version of \hnfsis. 
Let $\bfA' \getr \Z_q^{n \times (m-1)}$ and $\bfA = (\bfnum{1} || {\bfA'}) \in \Z_q^{n \times m}$ and
let $\bfu \getr \Z^m$ s.t. $||\bfu||_\infty \leq \beta$. 
Let $\bfb = \bfA \bfu \in \Z_q^n$. 
Given $(\bfA, \bfb)$, the problem with parameters $m>1$ and $0 < \beta < q$ asks to find a short non-zero $\bfv \in \Z^m$ such that $\bfA \bfv = \bfb \bmod{q}$ and $||\bfv||_\infty \leq \beta$. 
\end{definition}

\begin{definition}[$\lwe_{n, m, q, \Psi}$]
\lwe problem with parameters $n, m>0$ and distribution $\Psi$ over $\Z_q$ asks to distinguish between the following two cases with non-negligible advantage: 1) $(\bfA, \bfb) \getr \Z_q^{m \times n} \times \Z_q^m$, and 2) $(\bfA, \bfA \bfs + \bfe)$ for $\bfA \getr \Z_q^{m \times n}$, a secret $\bfs \getr \Z_q^n$ and an error vector $\bfe \gets \Psi^m$.
\end{definition}

\begin{definition}[Multi-hint extended-LWE]
\label{def:mhe-lwe}
Let $q, m, t$ be integers, $\alpha$ be a real and $\tau$ be a distribution over $\Z^{t \times m}$, all of them functions of a parameter $n$. The multi-hint extended-LWE problem $\mheLWE_{n, m, q, \alpha, t, \tau}$ is to distinguish between the distributions of the tuples 
\[
(\bfA, \bfb, \bfZ, \bfZ \bfe) \text{ and }
(\bfA, \bfA \bfs + \bfe, \bfZ, \bfZ \bfe), 
\]
where $\bfA \getr \Z_q^{m \times n}$, $\bfs \getr \Z_q^n$, $\bfb \getr \Z_q^m$, $\bfe \gets D_{\Z, \alpha q}^m$, $\bfZ \gets \tau$.
\end{definition}

\begin{lemma}[\cite{ipe01}, Theorem 4]
Let $n \geq 100$, $q \geq 2$, $t < n$ and $m$ with $m = \Omega(n \log n)$ and $m \leq n^{O(1)}$. 
There exists $\xi \leq O(n^4 m^2 \log^{5/2} n)$ and a distribution $\tau$ over $\Z^{t \times m}$ such that the following statements hold:
\begin{itemize}[leftmargin=*]
\item 
There is a reduction from $\LWE_{n-t, m, q, D_{\Z, \alpha q}}$ to $\mheLWE_{n, m, q, \alpha, t, \tau}$.
\item 
It is possible to sample from $\tau$ in polynomial time in $n$.
\item 
Each entry of matrix $\tau$ is an independent discrete Gaussian $\tau_{i,j} = D_{\Z,\sigma_{i,j},c_{i,j}}$ for some $c_{i,j} \in \{0,1\}$ and $\sigma_{i,j} \geq \Omega(\sqrt{mn \log m})$.
\item 
With probability at least  $1 - n^{-\omega(1)}$, all rows from a sample from $\tau$ have norms at most $\xi$.

\end{itemize}
\end{lemma}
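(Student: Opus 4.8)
The plan is to prove this exactly as a dimension-leveraging reduction in the tradition of the extended-LWE hardness results of O'Neill--Peikert--Waters, Brakerski--Langlois--Peikert--Regev--Stehl\'e, and Alperin-Sheriff--Peikert, but generalized to $t$ simultaneous hints. Concretely, I would take an $\LWE_{n-t,m,q,D_{\Z,\alpha q}}$ instance $(\bfA_0,\bfb_0)$ with $\bfA_0\getr\Z_q^{m\times(n-t)}$ and $\bfb_0$ equal either to $\bfA_0\bfs_0+\bfe$ (with $\bfe\gets D_{\Z,\alpha q}^m$) or to uniform, and transform it into an $\mheLWE_{n,m,q,\alpha,t,\tau}$ instance by appending $t$ fresh columns $\bfA_1\getr\Z_q^{m\times t}$ to obtain $\bfA=[\bfA_0\mid\bfA_1]$, sampling the $t$ new secret coordinates, and shifting $\bfb_0$ by the new block applied to the new secret so that the error vector is preserved and the ``real'' (resp.\ ``uniform'') case of the input maps to the ``real'' (resp.\ ``uniform'') case of the output. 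Along the way, the distribution $\tau$ is \emph{defined} to be exactly the auxiliary distribution that the reduction can afford to sample; the claims that $\tau$ is samplable in $\poly(n)$ time, that each entry is a one-dimensional discrete Gaussian $D_{\Z,\sigma_{i,j},c_{i,j}}$ with $c_{i,j}\in\{0,1\}$, and that its rows have norm at most $\xi$ with overwhelming probability then all follow by inspecting that construction coordinatewise and applying a Gaussian tail bound plus a union bound over the $t$ rows, which is where $\xi\le O(n^4m^2\log^{5/2}n)$ comes from.

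The delicate part is making the reduction output the hint value $\bfZ\bfe$ without knowing $\bfe$. The mechanism I would use is to sample $\bfZ$ as a (discrete-Gaussian) matrix whose rows lie in the $q$-ary lattice $\Lambda^\perp_q(\bfA_0^T)=\{\,\bfz\in\Z^m:\bfz^T\bfA_0\equiv\mathbf 0\!\!\pmod q\,\}$, so that $\bfZ\bfb_0\equiv\bfZ\bfe\pmod q$ in the real case; choosing the parameters so that $\|\bfZ\bfe\|_\infty<q/2$ then lets the reduction lift this congruence to the exact integer vector $\bfZ\bfe$ and output it as the hint. Sampling Gaussian vectors in $\Lambda^\perp_q(\bfA_0^T)$ is possible in polynomial time once $m=\Omega(n\log n)$, and the per-coordinate width is forced to be $\sigma_{i,j}=\Omega(\sqrt{mn\log m})$ so that it sits above the smoothing parameter of $\Lambda^\perp_q(\bfA_0^T)$ for all but a negligible fraction of $\bfA_0$; the $\{0,1\}$ centers appear because the kernel constraint is met by sampling relative to a fixed short coset representative of the appropriate sublattice.

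I expect the main obstacle to be the statistical analysis that makes all of this consistent simultaneously: (i) showing that the distribution of $\bfZ$ actually produced by the reduction---a discrete Gaussian over the \emph{random} $q$-ary lattice $\Lambda^\perp_q(\bfA_0^T)$---is within negligible statistical distance of the \emph{fixed}, $\bfA_0$-independent distribution $\tau$, which requires the smoothing/regularity lemma for random $q$-ary lattices and is exactly where $m=\Omega(n\log n)$, $m\le n^{O(1)}$, and $\sigma_{i,j}=\Omega(\sqrt{mn\log m})$ are used; (ii) checking that the appended block $\bfA_1$ (and the shifted $\bfb$) is statistically close to uniform and independent of $\bfZ$, so that the augmented tuple has the correct marginal of an $\mheLWE_n$ instance; and (iii) tracking the noise/parameter bounds so that the lifting $\|\bfZ\bfe\|_\infty<q/2$ holds with the stated $\xi$. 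Granting (i)--(iii), a ``real'' $\LWE_{n-t}$ instance goes to a distribution within $\negl(\secparam)$ of a ``real'' $\mheLWE_{n,m,q,\alpha,t,\tau}$ instance and a ``uniform'' input goes to a ``uniform'' output, so any $\mheLWE$ distinguisher yields an $\LWE_{n-t}$ distinguisher; the remaining three bullets are immediate from the explicit form of $\tau$.
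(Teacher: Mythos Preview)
The paper does not prove this lemma at all: it is stated as a citation of \cite[Theorem~4]{ipe01} in the lattice preliminaries and is used as a black box, so there is no ``paper's own proof'' to compare against. That said, your proposal accurately reconstructs the strategy of the original Agrawal--Libert--Stehl\'e argument: dimension-leveraging by appending $t$ fresh columns, sampling the hint rows as discrete Gaussians over $\Lambda_q^\perp(\bfA_0^T)$ so that $\bfZ\bfb_0\equiv\bfZ\bfe\pmod q$ can be lifted to the integer hint, and invoking the smoothing/regularity lemma for random $q$-ary lattices to decouple the law of $\bfZ$ from $\bfA_0$ (this is exactly where the constraints $m=\Omega(n\log n)$ and $\sigma_{i,j}=\Omega(\sqrt{mn\log m})$ enter). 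Your identification of the delicate points (i)--(iii) is on target; the only thing I would flag is that in the actual construction the two blocks of $\tau$ have \emph{different} Gaussian widths (one block at $\sigma_1=\Theta(\sqrt{n\log m}\max(\sqrt m,K'))$ and another at a much larger $\sigma_2$, with the $\{0,1\}$ centers coming from canonical basis vectors used to make the rows linearly independent), so the per-entry parameters $\sigma_{i,j}$ are not uniform across the matrix---your sketch implicitly treats them as homogeneous, and filling in the details would require tracking both blocks separately.
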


\begin{definition}[Discrete Gaussian distribution]
A discrete gaussian distribution $D_{\Lambda, \sqrt{\Sigma}, \bfc}$, for $\bfc \in \R^n$, $\Sigma$ a positive semi-definite matrix in $\R^{n \times n}$, and a lattice $\Lambda \subset \Z^n$, is a distribution with values in $\Lambda$ and probabilities
\[
\Pr[X = \bfx] \propto \exp{\left(-\frac12 (\bfx - \bfc)^T\Sigma^+(\bfx-\bfc)\right)}.
\]
\end{definition}
Note that $\Sigma^+$ denotes the pseudoinverse of a matrix. If $\Lambda = \Z^n$, we shall just write $D_{\sqrt{\Sigma}, \bfc}$. Further, if $\bfc = 0$, then we shall just write $D_{\sqrt{\Sigma}}$, and if $\sqrt{\Sigma} = \sigma I_n$ for $\sigma \in \R^+$ and $I_n$ an identity matrix, we shall just write $D_\sigma$.


\section{FAS Definition}
\label{sec:FAS}
\label{sec:defs-appendix}
A functional adaptor signature scheme shares similar syntax to an adaptor signature scheme with few additional interfaces and parameters. 
Specifically, 
the interfaces are $\FAS := (\Setup, \AdvertisementGen, \AdvertisementVerify, \allowbreak \AuxGen, \AuxVerify, \FPreSign, \FPreVerify, \allowbreak \Adapt, \FExt)$. 
We recall that $(\AuxGen, \AuxVerify, \allowbreak \FPreSign)$ essentially denote the 3-round interactive pre-signing process. We describe the formal definition below.

\begin{definition}[Functional adaptor signature]
\label{def:fas-syntax}
A functional adaptor signature scheme
$\FAS := (\Setup, \AdvertisementGen, \AdvertisementVerify, \AuxGen, $ $\AuxVerify, \FPreSign, \FPreVerify, \Adapt, \FExt)$ 
is defined with respect to a signature scheme $\DS = (\KGen, \Sign, \Verify)$, a hard relation $R$, and a family of functions $\mcal{F}$. The interfaces are described below.

\begin{itemize}[leftmargin=*]
\item 
$\pp \gets \Setup(1^\secparam)$: The setup algorithm takes as input the security parameter $1^\secparam$,
and outputs public parameters $\pp$.

\item 
$(\advt, \state) \gets \AdvertisementGen(\pp, X, x)$:
The advertisement generation algorithm takes as input 
a public parameter $\pp$,
a statement $X$ for the language $L_R$, and 
a witness $x$, and 
outputs a public advertisement $\advt$, and 
a secret state $\state$.

\item 
$0/1 \gets \AdvertisementVerify(\pp, X, \advt)$: 
The advertisement verify algorithm takes as input 
a public parameter $\pp$,
a statement $X$ for the language $L_R$, and 
a public advertisement $\advt$, and 
outputs $0/1$ signifying whether $\advt$ is generated correctly.



\item 
$(\aux_f, \pi_f) \gets \AuxGen(\advt, \state, f)$:
The auxiliary value generation algorithm takes as input 
a public advertisement $\advt$,  
a {\em secret} state $\state$, 
a function $f$ belonging to the family $\mcal{F}$, and 
deterministically outputs an auxiliary value $\aux_f$, and 
a proof $\pi_f$.

\item 
$0/1 \gets \AuxVerify(\advt, f, \aux_f, \pi_f)$: 
The auxiliary value verify algorithm takes as input 
a public advertisement $\advt$,  
a function $f$ belonging to the family $\mcal{F}$, 
an auxiliary value $\aux_f$, and 
a proof $\pi_f$, and
outputs $0/1$ signifying whether $\aux_f$ is generated correctly.

\item 
$\widetilde{\sigma} \gets \FPreSign(\advt, \sk, m, X, f, \aux_f)$:
The functional pre-signing algorithm takes as input
a public advertisement $\advt$,
a signing key $\sk$,
a message $m$, 
a statement $X$ for the language $L_R$,
a function $f$ belonging to the family $\mcal{F}$, and
an auxiliary value $\aux_f$, and 
outputs a pre-signature $\widetilde{\sigma}$.

\item 
$0/1 \gets \FPreVerify(\advt, \vk, m, X, f, \aux_f, \pi_f, \widetilde{\sigma})$: 
The pre-signature verification algorithm takes as input 
a public advertisement $\advt$,
a {\em secret} state $\state$, 
a verification key $\vk$, 
a message $m$, 
a statement $X$ for the language $L_R$, 
a function $f$ belonging to the family $\mcal{F}$,
an auxiliary value $\aux_f$, 
a proof $\pi_f$, 
and a pre-signature $\widetilde{\sigma}$,
and outputs $0/1$ signifying whether $\widetilde{\sigma}$ 
is correctly generated. 

\item 
$\sigma := \Adapt(\advt, \state, \vk, m, X, x, f, \aux_f, \widetilde{\sigma})$: 
The adapt algorithm transforms 
a pre-signature $\widetilde{\sigma}$ into a valid signature $\sigma$
given the witness $x$ for the instance $X$ of the language $L_R$, 
a function $f$ belonging to the family $\mcal{F}$,
an auxiliary value $\aux_f$, 
a public advertisement $\advt$,
and a {\em secret} state $\state$. 




\item 
$z := \FExt(\advt, \widetilde{\sigma}, \sigma, X, f, \aux_f)$: 
The functional extract algorithm takes as input 
a public advertisement $\advt$,
a pre-signature $\widetilde{\sigma}$, 
a signature $\sigma$, 
an instance $X$ for the language $L_R$, 
and a function $f \in \mcal{F}$,
an auxiliary value $\aux_f$, 
and outputs a value $z$ in the range of $f$. 

\end{itemize}
\end{definition}

In a typical functional payment application, we will have a seller who holds a witness to the statement and wants to sell a function evaluation on it, and a buyer who wants to buy a function evaluation of the witness. We refer the reader to~\Cref{fig:fas-flow} for the protocol flow.

\begin{definition}[Correctness]
\label{def:fas-correctness}
A functional adaptor signature scheme 
$\FAS$ satisfies correctness if for every 
$\secparam \in \N$, every message $m \in \{0,1\}^*$, 
every statement/witness pair $(X, x) \in R$, 
every function $f \in \mcal{F}$, suppose
$\pp \gets \Setup(1^\secparam)$,
$(\advt, \state) \gets \AdvertisementGen(\pp, X, x)$,
$(\sk, \vk) \gets \KGen(1^\secparam)$,
$(\aux_f, \pi_f) \gets \AuxGen(\advt, \state, f)$,
$\widetilde{\sigma} \gets \FPreSign$ $(\advt, \sk, m, X, f, \aux_f)$,
$\sigma := \Adapt($ $\advt, \state, \vk, m, X, x, f,\allowbreak \aux_f, \widetilde{\sigma})$, 
and $z := \FExt(\advt, \widetilde{\sigma}, \sigma, X, f, \aux_f)$. 
Then the following holds:
\[
\Pr \left[
\begin{array}{c}
\AdvertisementVerify(\pp, X, \advt) = 1 \\ 
\wedge \ \AuxVerify(\advt, f, \aux_f, \pi_f) = 1 \\
\wedge \ \FPreVerify(\advt, \vk, m, X, f, \pi_f, \widetilde{\sigma}) = 1 \\ 
\wedge \ \Verify(\vk, m, \sigma) = 1 \ \wedge \ z = f(x)
 
\end{array}
\right] = 1.
\]


\end{definition}

In terms of security, we want the following properties against malicous seller and buyer:
\begin{itemize}
\item {\bf Malicious seller:}
advertisement soundness, 
pre-signature validity, 
unforgeability, 
witness extractability. 
\item {\bf Malicious buyer:}
pre-signature adaptability, 
witness privacy. 
\end{itemize}
Looking ahead, we will consider three notions of witness privacy: zero-knowledge, witness indistinguishability, and witness hiding. Among these, zero-knowledge is the strongest.
Based on this, we define two overall security levels of functional adaptor signatures below.

\begin{definition}[Strongly-secure Functional Adaptor Signature Scheme]
A $\FAS$ scheme is \emph{strongly-secure} if it is 
advertisement sound, pre-signature valid, unforgeable, witness extractable, pre-signature adaptable and zero-knowledge. 
\label{def:fas-strongly-secure}
\end{definition}

\begin{definition}[Weakly-secure Functional Adaptor Signature Scheme]
A $\FAS$ scheme is \emph{weakly-secure} if it is 
advertisement sound, pre-signature valid, unforgeable, witness extractable, pre-signature adaptable and witness indistinguishable.
\label{def:fas-weakly-secure}
\end{definition}
We describe all the security properties next.

\smallskip\noindent\textbf{Advertisement Soundness.} The property requires that given $\pp$, it should be infeasible for a \ppt malicious seller to find statement $X \notin L_R$ and advertisement $\advt$ such that $\advt$ is accepted by $\AdvertisementVerify$.

\begin{definition}[Advertisement Soundness]
\label{def:fas-ad-sound}
A functional adaptor signature scheme $\FAS$ satisfies advertisement soundness if 
for every \ppt adversary \A, there exists a negligible function $\negl$ such that for all 
$\secparam \in \N$, for all NP languages $L_R$, $\pp \gets \Setup(1^\secparam)$, $(X, \advt) \gets \A(\pp)$,
\[
\Pr[X \notin L_R \ \wedge \ \AdvertisementVerify(\pp, X, \advt) = 1 ] \leq \negl(\secparam).
\]
\end{definition}

\smallskip\noindent\textbf{Pre-signature validity.}
This property says that given a valid auxiliary value, one can always compute a valid pre-signature using it. Intuitively, this tries to capture that it should be impossible for a malicious seller to give out a malformed auxiliary value for the function $f$ and yet learn an honestly generated pre-signature for $f$ that is valid.

\begin{definition}[Pre-signature Validity]
A functional adaptor signature scheme $\fas$ satisfies pre-signature validity if for 
any $\secparam \in \N$, 
any $\pp \gets \Setup(1^\secparam)$, 
any $X, \advt$ such that $\AdvertisementVerify(\pp, \advt, $ $X)=1$,
any message $m\in \{0,1\}^*$,
any function class $\mcal{F}$,
any function $f \in \mcal{F}$,
any $(\aux_f, \pi_f)$, 
any key pair $(\sk, \vk) \gets \KGen(1^\secparam)$,
any pre-signature $\widetilde{\sigma} \gets \FPreSign(\advt, \sk, m, X, f, \aux_f)$, 
we have that if $\AuxVerify(\advt, f, \aux_f, \pi_f)=1$, then, 
\[
\Pr[\FPreVerify(\advt, \vk, m, X, f, \aux_f, \pi_f, \widetilde{\sigma})=1]=1.
\]
\end{definition}



\ignore{
	
			\begin{definition}[Unforgeability (Old)]
			A functional adaptor signature scheme $\FAS$ is faEUF-CMA secure if 
			for every stateful \ppt adversary \A there exists a negligible function $\negl$ such that

			\[
			\Pr[ \faSigForge_{\A,\FAS}(1^\secparam) = 1] \leq \negl(\secparam),
			\]
			where the experiment $\faSigForge_{\A,\FAS}$ 
			is defined as in~\Cref{fig:fasig-unf-exp-old}.

			\end{definition}

			\begin{figure}[H]
			\centering
			\captionsetup{justification=centering}
			\begin{mdframed}
			\begin{multicols}{2}

			\paragraph{Experiment $\faSigForge_{\A,\FAS}$.} 

			\begin{enumerate}
			\item 
			$\mcal{Q} := \emptyset$
			\item 
			$\pp \gets \Setup(1^\secparam)$
			\item 
			$(\sk, \vk) \gets \KGen(1^\secparam)$
			\item 
			$(X, x) \gets \A^{\mcal{O}_S(\cdot), \mcal{O}_{\fpS}(\cdot, \cdot, \cdot)}(\pp, \vk)$
			\item 
			$(\advt, \state) \gets \AdvertisementGen(\pp, X, x)$
			\item 
			$m \gets \A^{\mcal{O}_S(\cdot), \mcal{O}_{\fpS}(\cdot, \cdot, \cdot)}(\advt)$
			\item $f \getr \mcal{F}$
			\item 
			$\widetilde{\sigma} \gets \FPreSign(\advt, \sk, m, X, f)$
			\item 
			$\sigma \gets \A^{\mcal{O}_S(\cdot), \mcal{O}_{\fpS}(\cdot, \cdot, \cdot)}(\widetilde{\sigma}, f)$
			\item 
			return $((m \notin \mcal{Q}) \wedge \Verify(\vk, m, \sigma))$
			\end{enumerate}

			\vfill\null 
			\columnbreak 

			\paragraph{Oracle $\mcal{O}_S(m)$.}
			\begin{enumerate}
			\item 
			$\sigma \gets \Sign(\sk, m)$
			\item 
			$\mcal{Q} := \mcal{Q} \vee \{m\}$
			\item 
			return $\sigma$
			\end{enumerate}

			\paragraph{Oracle $\mcal{O}_{\fpS}(m, X, f)$.}
			\begin{enumerate}
			\item 
			$\widetilde{\sigma} \gets \FPreSign(\advt, \sk, m, X, f)$
			\item 
			$\mcal{Q} := \mcal{Q} \vee \{m\}$
			\item 
			return $\widetilde{\sigma}$
			\end{enumerate}
			\vfill\null

			\end{multicols}
			\end{mdframed}
			\caption{Unforgeability experiment of adaptor signatures}
			\label{fig:fasig-unf-exp-old}
			\end{figure}

			\nikhil{the previous definition did not seem very meaningful. the following is a better proposal}
}

\smallskip\noindent\textbf{Unforgeability.} 
This is similar to the unforgeability security property typically defined for adaptor signatures~\cite{asig}.
The property requires that even when the adversary (malicious seller) is given access to pre-signatures with respect to functions $f$ of its choice, it should not be able to forge signatures if it does not know the witness $x^*$ for the challenge statement $X^*$ in the first place. To capture this essence, the experiment (See~\Cref{fig:fasig-unf-exp}) samples $(X^*, x^*)$ at random. Beyond this, the adversary has all the powers of a (malicious) seller starting with generating $\advt$. A curious reader might ask if the adversary doesn't know $x^*$, then, wouldn't it typically fail to pass the $\AdvertisementVerify$ check in step~\ref{step:adverify} in~\Cref{fig:fasig-unf-exp} and thus rendering the experiment after that pointless? The answer is no, because \emph{advertisement soundness} is for statements not in the language $L_R$, but here $X^* \in L_R$ as the experiment chooses it.


\begin{definition}[Unforgeability]
A $\FAS$ scheme is $\faSigForge$-secure or simply unforgeable, if 
for every stateful \ppt adversary \A there exists a negligible function $\negl$ such that
for all $\secparam \in \N$,
$\Pr[ \faSigForge_{\A,\FAS}(1^\secparam) = 1] \leq \negl(\secparam)$,
where the experiment $\faSigForge_{\A,\FAS}$ 
is defined as in~\Cref{fig:fasig-unf-exp}.
\end{definition}
\ifacm
	\begin{figure}[t]
	\ifhldiff
	{\color{hldiffcolor}
	\fi
	\centering
	\captionsetup{justification=centering}
	\begin{pcvstack}[boxed,space=1em]
	\procedure[linenumbering, mode=text]{Experiment $\faSigForge_{\A,\FAS}(1^\secparam)$}{
	$\mcal{Q} := \emptyset$, 
	$\pp \gets \Setup(1^\secparam)$, 
	$(\sk, \vk) \gets \KGen(1^\secparam)$ 
	\\ 
	$(X^*, x^*) \gets \GenR(1^\secparam)$
	\\ 
	$(\advt, m^*, f^*, \aux_f^*, \pi_f^*) \gets \A^{\mcal{O}_S(\cdot)}(\pp, \vk, X^*) $ \\
	\label{step:adverify}
	If $\AdvertisementVerify(\pp, X^*, \advt) =0 \ \vee \ f^* \notin \mcal{F} $ 
	\pcskipln \\
	$\vee \ \AuxVerify(\advt, f^*, \aux_f^*, \pi_f^*)=0$ : 
	\ret $0$ 
	\\ 
	$\widetilde{\sigma}^* \gets \FPreSign(\advt, \sk, m^*, X^*, f^*, \aux_f^*)$ 
	\\ 
	$\sigma^* \gets \A^{\mcal{O}_S(\cdot), \mcal{O}_{\fpS}(\cdot, \cdot, \cdot, \cdot, \cdot)}(\widetilde{\sigma}^*)$ \\
	\ret $((m^* \notin \mcal{Q}) \wedge \Verify(\vk, m^*, \sigma^*))$
	}

	\procedure[linenumbering, mode=text]{Oracle $\mcal{O}_S(m)$}{
	$\sigma \gets \Sign(\sk, m)$,
	$\mcal{Q} := \mcal{Q} \vee \{m\}$, 
	\ret $ \sigma$
	}

	\procedure[linenumbering, mode=text]{Oracle $\mcal{O}_{\fpS}(m, X, f, \aux_f, \pi_f)$}{
	If $\AuxVerify(\advt, f, \aux_f, \pi_f) = 0$: \ret $\bot$ \\
	$\widetilde{\sigma} \gets \FPreSign(\advt, \sk, m, X, f, \aux_f)$, 
	$\mcal{Q} := \mcal{Q} \vee \{m\}$, 
	\ret $\widetilde{\sigma}$
	}

	\end{pcvstack}
	\caption{
	\ifhldiff
	{\color{hldiffcolor}
	\fi
	Unforgeability of functional adaptor signatures 
	\ifhldiff
	}
	\fi
	}
	\label{fig:fasig-unf-exp}
	\ifhldiff
	}
	\fi
	\end{figure}
\else
	\begin{figure}[H]
	\ifhldiff
	{\color{hldiffcolor}
	\fi
	\centering
	\captionsetup{justification=centering}
	\begin{pchstack}[boxed,space=1em]
	\begin{pcvstack}
	\procedure[linenumbering, mode=text]{Experiment $\faSigForge_{\A,\FAS}(1^\secparam)$}{
	$\mcal{Q} := \emptyset$, 
	$\pp \gets \Setup(1^\secparam)$, 
	$(\sk, \vk) \gets \KGen(1^\secparam)$ 
	\\ 
	$(X^*, x^*) \gets \GenR(1^\secparam)$
	\\ 
	$(\advt, m^*, f^*, \aux_f^*, \pi_f^*) \gets \A^{\mcal{O}_S(\cdot)}(\pp, \vk, X^*) $ \\
	\label{step:adverify}
	If $\AdvertisementVerify(\pp, X^*, \advt) =0 \ \vee \ f^* \notin \mcal{F} $ 
	\pcskipln \\
	$\vee \ \AuxVerify(\advt, f^*, \aux_f^*, \pi_f^*)=0$ : 
	\ret $0$ 
	\\ 
	$\widetilde{\sigma}^* \gets \FPreSign(\advt, \sk, m^*, X^*, f^*, \aux_f^*)$ 
	\\ 
	$\sigma^* \gets \A^{\mcal{O}_S(\cdot), \mcal{O}_{\fpS}(\cdot, \cdot, \cdot, \cdot, \cdot)}(\widetilde{\sigma}^*)$ \\
	\ret $((m^* \notin \mcal{Q}) \wedge \Verify(\vk, m^*, \sigma^*))$
	}
	\end{pcvstack}
	\begin{pcvstack}
	\procedure[linenumbering, mode=text]{Oracle $\mcal{O}_S(m)$}{
	$\sigma \gets \Sign(\sk, m)$
	\\
	$\mcal{Q} := \mcal{Q} \vee \{m\}$
	\\ 
	\ret $ \sigma$
	}

	\procedure[linenumbering, mode=text]{Oracle $\mcal{O}_{\fpS}(m, X, f, \aux_f, \pi_f)$}{
	If $\AuxVerify(\advt, f, \aux_f, \pi_f) = 0$: \ret $\bot$ \\
	$\widetilde{\sigma} \gets \FPreSign(\advt, \sk, m, X, f, \aux_f)$, 
	\\ 
	$\mcal{Q} := \mcal{Q} \vee \{m\}$
	\\ 
	\ret $\widetilde{\sigma}$
	}

	\end{pcvstack}
	\end{pchstack}
	\caption{
	\ifhldiff
	{\color{hldiffcolor}
	\fi
	Unforgeability of functional adaptor signatures 
	\ifhldiff
	}
	\fi
	}
	\label{fig:fasig-unf-exp}
	\ifhldiff
	}
	\fi
	\end{figure}
\fi
\smallskip\noindent\textbf{Witness extractability.}
Here we require that for any function $f\in\mcal{F}$, for any honestly generated pre-signature w.r.t.\ $f$ and a valid signature, one should be able to extract the function evaluation $f$ on a witness. Intuitively, it tries to capture the interaction between an honest buyer (i.e., generates valid pre-signature) and a malicious seller trying to get paid (i.e., adapt pre-signature into a valid signature) without revealing function evaluation $f$ on a witness (i.e., extraction failure) should be unlikely to occur.
This is similar to the witness extractability security property of adaptor signatures (See~\Cref{def:as-wit-ext}) except that in the formal witness extractability experiment~\Cref{fig:fasig-wit-ext-exp}, in step~\ref{step:z-inefficient}, computing the set $Z$ is inefficient.


\begin{definition}[Witness Extractability]
A functional adaptor signature scheme $\FAS$ is $\faWitExt$-secure or simply witness extractable, if 
for every stateful \ppt adversary \A, there exists a negligible function $\negl$ such that 
for all $\secparam \in \N$,
\[
\Pr[ \faWitExt_{\A, \FAS}(1^\secparam) = 1] \leq \negl(\secparam),
\]
where the experiment $\faWitExt_{\A, \FAS}$ is defined as in~\Cref{fig:fasig-wit-ext-exp}.
\end{definition}

\ifacm
	\begin{figure}[t]
	\centering
	\captionsetup{justification=centering}
	\begin{pcvstack}[boxed, space=1em]
	\begin{pchstack}
	\procedure[linenumbering, mode=text]{Experiment $\faWitExt_{\A,\FAS}(1^\secparam)$}{
	$\mcal{Q} := \emptyset$ 
	\\ 
	$\pp \gets \Setup(1^\secparam)$ 
	\\ 
	$(\sk, \vk) \gets \KGen(1^\secparam)$ 
	\\ 
	$(X^*, \advt, m^*, f^*, \aux_f^*, \pi_f^*)$ \pcskipln \\
	$\qquad \qquad \gets \A^{\mcal{O}_S(\cdot)}(\pp, \vk)$ 
	\\ 
	If $\AdvertisementVerify(\pp, X^*, \advt) =0 \ \vee \ f^* \notin \mcal{F}$
	\pcskipln \\
	$\vee \ \AuxVerify(\advt, f^*, \aux_f^*, \pi_f^*)=0$: 
	\pcskipln \\
	$\pcind$ \ret $0$ 
	\\ 
	$\widetilde{\sigma}^* \gets \FPreSign(\advt, \sk, m^*, X^*, f^*, \aux_f^*)$ 
	\\ 
	$\sigma^* \gets \A^{\mcal{O}_S(\cdot), \mcal{O}_{\fpS}(\cdot, \cdot, \cdot, \cdot, \cdot)}(\widetilde{\sigma}^*)$ 
	\\
	$z := \FExt(\advt, \widetilde{\sigma}^*, \sigma^*, X^*, f^*, \aux_f^*)$ \\
	\label{step:z-inefficient}
	Let $Z = \{ f^*(x) : \exists \ x\ s.t.\ (X^*, x) \in R\}$ \\
	$b := ((m^* \notin \mcal{Q}) \wedge \Verify(\vk, m^*, \sigma^*) \wedge (z \notin Z))$ \\
	\ret $b$
	}

	\procedure[linenumbering, mode=text]{Oracle $\mcal{O}_S(m)$}{
	$\sigma \gets \Sign(\sk, m)$
	\\
	$\mcal{Q} := \mcal{Q} \vee \{m\}$ 
	\\ 
	\ret $\sigma$
	}

	\end{pchstack}

	\procedure[linenumbering, mode=text]{Oracle $\mcal{O}_{\fpS}(m, X, f, \aux_f, \pi_f)$}{
	If $\AuxVerify(\advt, f, \aux_f, \pi_f) = 0$: \ret $\bot$
	\\
	$\widetilde{\sigma} \gets \FPreSign(\advt, \sk, m, X, f)$ 
	\\ 
	$\mcal{Q} := \mcal{Q} \vee \{m\}$ 
	\\ 
	\ret $\widetilde{\sigma}$
	}
	\end{pcvstack}
	\caption{Witness Extractability of functional adaptor signatures}
	\label{fig:fasig-wit-ext-exp}

	\end{figure}
\else
	\begin{figure}[H]
	\centering
	\captionsetup{justification=centering}
	\begin{pchstack}[boxed, space=1em]
	\begin{pcvstack}
	\procedure[linenumbering, mode=text]{Experiment $\faWitExt_{\A,\FAS}(1^\secparam)$}{
	$\mcal{Q} := \emptyset$, 
	$\pp \gets \Setup(1^\secparam)$, 
	$(\sk, \vk) \gets \KGen(1^\secparam)$ 
	\\ 
	$(X^*, \advt, m^*, f^*, \aux_f^*, \pi_f^*) \gets \A^{\mcal{O}_S(\cdot)}(\pp, \vk)$ 
	\\ 
	If $\AdvertisementVerify(\pp, X^*, \advt) =0 \ \vee \ f^* \notin \mcal{F}$
	\pcskipln \\
	$\vee \ \AuxVerify(\advt, f^*, \aux_f^*, \pi_f^*)=0$: \ret $0$ 
	\\ 
	$\widetilde{\sigma}^* \gets \FPreSign(\advt, \sk, m^*, X^*, f^*, \aux_f^*)$ 
	\\ 
	$\sigma^* \gets \A^{\mcal{O}_S(\cdot), \mcal{O}_{\fpS}(\cdot, \cdot, \cdot, \cdot, \cdot)}(\widetilde{\sigma}^*)$ 
	\\
	$z := \FExt(\advt, \widetilde{\sigma}^*, \sigma^*, X^*, f^*, \aux_f^*)$ \\
	\label{step:z-inefficient}
	Let $Z = \{ f^*(x) : \exists \ x\ s.t.\ (X^*, x) \in R\}$ \\
	\ret $((m^* \notin \mcal{Q}) \wedge \Verify(\vk, m^*, \sigma^*) \wedge (z \notin Z))$
	}

	\end{pcvstack}
	\begin{pcvstack}

	\procedure[linenumbering, mode=text]{Oracle $\mcal{O}_S(m)$}{
	$\sigma \gets \Sign(\sk, m)$
	\\
	$\mcal{Q} := \mcal{Q} \vee \{m\}$ 
	\\ 
	\ret $\sigma$
	}

	\procedure[linenumbering, mode=text]{Oracle $\mcal{O}_{\fpS}(m, X, f, \aux_f, \pi_f)$}{
	If $\AuxVerify(\advt, f, \aux_f, \pi_f) = 0$: \ret $\bot$
	\\
	$\widetilde{\sigma} \gets \FPreSign(\advt, \sk, m, X, f)$ 
	\\ 
	$\mcal{Q} := \mcal{Q} \vee \{m\}$ 
	\\ 
	\ret $\widetilde{\sigma}$
	}
	\end{pcvstack}
	\end{pchstack}
	\caption{Witness Extractability of functional adaptor signatures}
	\label{fig:fasig-wit-ext-exp}

	\end{figure}
\fi
\smallskip\noindent\textbf{Pre-signature adaptability.} 
This property says that given a valid pre-signature and a witness for the instance, 
one can always adapt the pre-signature into a valid signature.
Intuitively, this tries to capture that it should be impossible for a malicious buyer to learn a function evaluation on a witness without making the payment. 
This is similar to the pre-signature adaptability security property of adaptor signatures (See~\Cref{def:as-pre-sig-adaptability}).

\begin{definition}[Pre-signature Adaptability]
\label{def:fas-pre-sig-adaptability}
A functional adaptor signature scheme $\FAS$ satisfies pre-signature adaptability 
if for any $\secparam \in \N$,
any $\pp \gets \Setup(1^\secparam)$, 
any statement/witness pair $(X, x) \in R$, 
any $(\advt, \state) \gets \AdvertisementGen(\pp, X, x)$,
any message $m \in \{0,1\}^*$, 
any function class $\mcal{F}$, 
any function $f \in \mcal{F}$,
any $(\aux_f, \pi_f) $ $\gets \AuxGen(\advt, \state, f)$,
any key pair $(\sk, \vk) \gets \KGen(1^\secparam)$, 
and any pre-signature $\widetilde{\sigma} \in \{0,1\}^*$ 
with $\FPreVerify(\advt, $ $\vk, m, X, f, \aux_f, \pi_f,\allowbreak \widetilde{\sigma}) = 1$, we have:
\[
\Pr[\Verify(\vk, m, \Adapt(\advt, \state, \vk, m, X, x, f, \aux_f, \widetilde{\sigma})) = 1] = 1.
\]
\end{definition}

\smallskip\noindent\textbf{Witness privacy.} 
Here we want that in an interaction between a malicious buyer wanting to learn function evaluation $f$ on the witness and an honest seller using witness $x$ to adapt the pre-signature into a valid signature, at the end the buyer should not be able to extract any information about $x$ beyond $f(x)$. There are different ways to capture this formally. 
We consider various notions of witness privacy akin to those in functional encryption literature and zero-knowledge proofs literature as follows.
\begin{itemize}[leftmargin=*]
\item {\bf Zero-Knowledge:}
for every $(X, x) \in R$, there exists a \ppt simulator $\Sim$ such that the adversary should not be able to guess whether it is interacting with a real-world challenger that is allowed to use the witness $x$ or the simulator $\Sim$ that is only allowed access to $f(x)$. 
\item {\bf Witness Indistinguishability:}
given two valid witnesses $x_0$ and $x_1$ for a statement $X \in L_R$, the adversary should not be able to guess which witness was used to adapt a pre-signature as long as $f(x_0) = f(x_1)$. This notion is meaningful only for languages that have at least two witnesses. 
\item {\bf Witness Hiding:}
for every $X \in L_R$, given valid pre-signature w.r.t.\ function $f$ and valid signature adapted from it, the adversary should not be able to guess a witness $x$ s.t. $(X, x) \in R$.
\end{itemize}

Zero-Knowledge is the strongest among all in the sense that it implies the other two. All three are meaningful for non-unique witness languages. But for unique witness languages, witness indistinguishability is not meaningful. 
Further, witness-indistinguishability and witness-hiding are incomparable.
Next, we describe all three formally.

\begin{definition}[Zero-Knowledge]
\label{def:fas-zk}
A $\FAS$ scheme satisfies zero-knowledge if 
for every \ppt adversary \A, 
there exists a stateful \ppt simulator $\Sim = (\Setup^*, \AdGen^*, \AuxGen^*, \Adapt^*)$,
there exists a negligible function $\negl$ s.t.
for all \ppt distinguishers \D, 
for all $\secparam \in \N$, 
for all $(X, x) \in R$,
$| \Pr[\D(\faZKReal_{\A, \FAS}(1^\secparam, X, x)) = 1] - \Pr[ \D(\faZKIdeal_{\A, \FAS}^\Sim(1^\secparam, X, x)) = 1] | \leq  \negl(\secparam)$,
where experiments $\faZKReal_{\A, \FAS}$ and $\faZKIdeal^\Sim_{\A, \FAS}$ are defined in~\Cref{fig:fas-zk-real,fig:fas-zk-ideal}.

\end{definition}
\ifacm
	\begin{figure}[t]
	\centering
	\captionsetup{justification=centering}
	\begin{pcvstack}[boxed, space=1em]
	\procedure[linenumbering, mode=text]{Experiment $\faZKReal_{\A,\FAS}(1^\secparam, X, x)$; $\pcbox{\faZKIdeal^\Sim_{\A,\FAS}(1^\secparam, X, x)}$}{
	$\pp \gets \Setup(1^\secparam)$ $\pcbox{\Setup^*(1^\secparam)}$ 
	\\ 
	$(\advt, \state) \gets \AdvertisementGen(\pp, X, x)$ $\pcbox{\AdGen^*(\pp, X)}$ 
	\\
	$\vk := \bot$
	\\ 
	$\vk \gets \A^{\mcal{O}_{\AuxGen}(\cdot), \mcal{O}_{\Adapt}(\cdot, \cdot, \cdot)}(\pp, \advt, X)$;  \pcskipln \\
	$\pcbox{\vk \gets \A^{\mcal{O}^*_{\AuxGen}(\cdot, \cdot), \mcal{O}^*_{\Adapt}(\cdot, \cdot, \cdot, \cdot)}(\pp, \advt, X)}$
	\\ 
	\ret view of $\A$
	}

	\procedure[linenumbering, mode=text]{Oracle $\mcal{O}_{\AuxGen}(f);\ \pcbox{\mcal{O}^*_{\AuxGen}(f, f(x))}$.}{
	\ret $(\aux_f, \pi_f) \gets \AuxGen(\advt, \state, f);\ \pcbox{\AuxGen^*(\advt, f, f(x))}$  
	}

	\procedure[linenumbering, mode=text]{Oracle $\mcal{O}_{\Adapt}(m, f, \widetilde{\sigma});\ \pcbox{\mcal{O}^*_{\Adapt}(m, f, \widetilde{\sigma}, f(x))}$.}{
	If $\vk = \bot$: \ret $\bot$
	\\ 
	$(\aux_f, \pi_f) \gets \AuxGen(\advt, \state, f);\ \pcbox{\AuxGen^*(\advt, f, f(x))}$ 
	\\ 
	If $\FPreVerify(\advt, \vk, m, X, f, \aux_f, \pi_f, \widetilde{\sigma}) = 0$: \ret $\bot$
	\\ 
	\ret $\sigma := \Adapt(\advt, \state, \vk, m, X, x, f, \aux_f, \widetilde{\sigma})$  \pcskipln \\
	$\pcbox{\ret \ \sigma :=  \Adapt^*(\advt, \vk, m, X, f, \aux_f, \widetilde{\sigma}, f(x))}$
	}

	\end{pcvstack}

	\caption{Zero-Knowledge experiments of FAS
	}
	\label{fig:fas-zk-real}
	\label{fig:fas-zk-ideal}
	\end{figure}
\else
	\begin{figure}[t]
	\centering
	\captionsetup{justification=centering}
	\begin{pchstack}[boxed, space=0.1em]
	\begin{pcvstack}
	\procedure[linenumbering, mode=text]{Experiment $\faZKReal_{\A,\FAS}(1^\secparam, X, x)$}{
	$\pp \gets \Setup(1^\secparam)$
	\\ 
	$(\advt, \state) \gets \AdvertisementGen(\pp, X, x)$  
	\\
	$\vk := \bot$
	\\ 
	$\vk \gets \A^{\mcal{O}_{\AuxGen}(\cdot), \mcal{O}_{\Adapt}(\cdot, \cdot, \cdot)}(\pp, \advt, X)$
	\\ 
	\ret view of $\A$
	}

	\procedure[linenumbering, mode=text]{Oracle $\mcal{O}_{\AuxGen}(f)$}{
	\ret $(\aux_f, \pi_f) \gets \AuxGen(\advt, \state, f)$  
	}

	\procedure[linenumbering, mode=text]{Oracle $\mcal{O}_{\Adapt}(m, f, \widetilde{\sigma})$}{
	If $\vk = \bot$: \ret $\bot$
	\\ 
	$(\aux_f, \pi_f) \gets \AuxGen(\advt, \state, f)$ 
	\\ 
	If $\FPreVerify(\advt, \vk, m, X, f, \aux_f, \pi_f, \widetilde{\sigma}) = 0$: \pcskipln \\ 
	$\pcind$ \ret $\bot$
	\\ 
	\ret $\sigma := \Adapt(\advt, \state, \vk, m, X, x, f, \aux_f, \widetilde{\sigma})$
	}

	\end{pcvstack}
	\begin{pcvstack}
	\procedure[linenumbering, mode=text]{Experiment $\faZKIdeal^\Sim_{\A,\FAS}(1^\secparam, X, x)$}{
	$\pp \gets \Setup^*(1^\secparam)$
	\\ 
	$\advt \gets \AdvertisementGen^*(\pp, X)$ 
	\\
	$\vk := \bot$
	\\ 
	$\vk \gets \A^{\mcal{O}^*_{\AuxGen}(\cdot, \cdot), \mcal{O}^*_{\Adapt}(\cdot, \cdot, \cdot, \cdot)}(\pp, \advt, X)$
	\\ 
	\ret view of $\A$
	}

	\procedure[linenumbering, mode=text]{Oracle $\mcal{O}^*_{\AuxGen}(f, f(x))$}{
	\ret $(\aux_f, \pi_f) \gets \AuxGen^*(\advt, f, f(x))$  
	}

	\procedure[linenumbering, mode=text]{Oracle $\mcal{O}^*_{\Adapt}(m, f, \widetilde{\sigma}, f(x))$}{
	If $\vk = \bot$: \ret $\bot$
	\\ 
	$(\aux_f, \pi_f) \gets \AuxGen^*(\advt, f, f(x))$ 
	\\ 
	If $\FPreVerify(\advt, \vk, m, X, f, \aux_f, \pi_f, \widetilde{\sigma}) = 0$: \pcskipln \\ 
	$\pcind$ \ret $\bot$
	\\ 
	\ret $\sigma :=  \Adapt^*(\advt, \vk, m, X, f, \aux_f, \widetilde{\sigma}, f(x))$
	}

	\end{pcvstack}
	\end{pchstack}

	\caption{Zero-Knowledge experiments of FAS
	}
	\label{fig:fas-zk-real}
	\label{fig:fas-zk-ideal}
	\end{figure}
\fi

\begin{definition}[Witness Indistinguishability]
A functional adaptor signature scheme $\FAS$ is $\faWI$-secure or simply witness indistinguishable, if 
for every stateful \ppt adversary \A,
there exists a negligible function $\negl$ such that for all $\secparam \in \N$,
\[
|
\Pr[ \faWI^0_{\A, \FAS}(1^\secparam) = 1] 
-
\Pr[ \faWI^1_{\A, \FAS}(1^\secparam) = 1] 
| \leq \negl(\secparam),
\]
where the experiments $\faWI^b_{\A, \FAS}$ for $b \in \{0,1\}$ are defined as in~\Cref{fig:fasig-wit-ind-exp}.
\label{def:fas-wi}
\end{definition}

\ifacm
	\begin{figure}[t]
	\centering
	\captionsetup{justification=centering}
	\begin{pcvstack}[boxed, space=1em]
	\procedure[linenumbering, mode=text]{Experiment $\faWI^b_{\A,\FAS}$}{
	$\crs \gets \Setup(1^\secparam)$ 
	\\ 
	$(\vk, X, x_0, x_1) \gets \A(\crs)$
	\\ 
	$\text{if } (((X, x_0) \notin R) \vee ((X, x_1) \notin R)): \text{\ret } 0$  
	\\ 
	$(\advt, \state) \gets \AdvertisementGen(\crs, X, x_b)$ 
	\\ 
	$(m, f, \aux_f, \pi_f, \widetilde{\sigma}) \gets \A^{\mcal{O}_\AuxGen (\cdot)}(\advt)$ 
	\\ 
	If $\FPreVerify(\advt, \vk, m, X, f, \aux_f, \pi_f, \widetilde{\sigma}) = 0$: \ret $0$
	\\ 
	If $f(x_0) \neq f(x_1)$: \ret $0$
	\\ 
	$\sigma := \Adapt(\advt, \state, \vk, m, X, x_b, f, \aux_f, \widetilde{\sigma})$ 
	\\ 
	$b' \gets \A(\sigma)$ 
	\\ 
	\ret $b'$ 
	} 
	\procedure[linenumbering, mode=text]{Oracle $\mcal{O}_{\AuxGen}(f)$}{
	\ret $\AuxGen(\advt, \state, f)$
	}
	\end{pcvstack}
	\caption{Witness Indistinguishability experiment of adaptor signatures
	}
	\label{fig:fasig-wit-ind-exp}
	\end{figure}
\else 
	\begin{figure}[H]
	\centering
	\captionsetup{justification=centering}
	\begin{pchstack}[boxed, space=1em]
	\procedure[linenumbering, mode=text]{Experiment $\faWI^b_{\A,\FAS}$}{
	$\crs \gets \Setup(1^\secparam)$ 
	\\ 
	$(\vk, X, x_0, x_1) \gets \A(\crs)$
	\\ 
	If $(((X, x_0) \notin R) \vee ((X, x_1) \notin R))$: \ret $0$  
	\\ 
	$(\advt, \state) \gets \AdvertisementGen(\crs, X, x_b)$ 
	\\ 
	$(m, f, \aux_f, \pi_f, \widetilde{\sigma}) \gets \A^{\mcal{O}_\AuxGen (\cdot)}(\advt)$ 
	\\ 
	If $\FPreVerify(\advt, \vk, m, X, f, \aux_f, \pi_f, \widetilde{\sigma}) = 0$: \ret $0$
	\\ 
	If $f(x_0) \neq f(x_1)$: \ret $0$
	\\ 
	$\sigma := \Adapt(\advt, \state, \vk, m, X, x_b, f, \aux_f, \widetilde{\sigma})$ 
	\\ 
	$b' \gets \A(\sigma)$ 
	\\ 
	\ret $b'$ 
	} 
	\procedure[linenumbering, mode=text]{Oracle $\mcal{O}_{\AuxGen}(f)$}{
	\ret $\AuxGen(\advt, \state, f)$
	}
	\end{pchstack}
	\caption{Witness Indistinguishability experiment of adaptor signatures
	}
	\label{fig:fasig-wit-ind-exp}
	\end{figure}
\fi


\begin{definition}[Witness Hiding]
A functional adaptor signature scheme $\FAS$ is $\faWH$-secure or simply witness hiding, if 
for every stateful \ppt adversary \A, there exists a negligible function $\negl$ such that 
for all $\secparam \in \N$,
\[
\Pr[ \faWH_{\A, \FAS}(1^\secparam) = 1] \leq \negl(\secparam),
\]
where 
the experiment $\faWH_{\A, \FAS}$ is defined as in~\Cref{fig:fasig-wit-hid-exp}.
\label{def:fas-wh}
\end{definition}
\ifacm
	\begin{figure}[t]
	\centering
	\captionsetup{justification=centering}
	\begin{pcvstack}[boxed, space=1em]
	\procedure[linenumbering, mode=text]{Experiment $\faWH_{\A,\FAS}$}{	
	$\crs \gets \Setup(1^\secparam)$ 
	\\ 
	$(X, x) \gets \GenR(1^\secparam)$
	\\ 
	$(\advt, \state) \gets \AdvertisementGen(\crs, X, x)$ 
	\\ 
	$(\vk, m, f, \aux_f, \pi_f, \widetilde{\sigma}) \gets \A^{\mcal{O}_\AuxGen(\cdot)}(\crs, \advt, X)$ 
	\\ 
	If $\FPreVerify(\advt, \vk, m, X, f, \aux_f, \pi_f, \widetilde{\sigma}) = 0$: \ret $0$
	\\ 
	$\sigma := \Adapt(\advt, \state, \vk, m, X, x, f, \aux_f, \widetilde{\sigma})$  
	\\ 
	$x' \gets \A(\sigma)$ 
	\\ 
	\ret $1$ iff $(X, x') \in R$ 
	} 
	\procedure[linenumbering, mode=text]{Oracle $\mcal{O}_{\AuxGen}(f)$}{
	\ret $\AuxGen(\advt, \state, f)$
	}
	\end{pcvstack}
	\caption{Witness Hiding experiment of adaptor signatures
	}
	\label{fig:fasig-wit-hid-exp}
	\end{figure}
\else
	\begin{figure}[H]
	\centering
	\captionsetup{justification=centering}
	\begin{pchstack}[boxed, space=1em]
	\procedure[linenumbering, mode=text]{Experiment $\faWH_{\A,\FAS}$}{	
	$\crs \gets \Setup(1^\secparam)$, 
	$(X, x) \gets \GenR(1^\secparam)$,
	$(\advt, \state) \gets \AdvertisementGen(\crs, X, x)$ 
	\\ 
	$(\vk, m, f, \aux_f, \pi_f, \widetilde{\sigma}) \gets \A^{\mcal{O}_\AuxGen(\cdot)}(\crs, \advt, X)$ 
	\\ 
	If $\FPreVerify(\advt, \vk, m, X, f, \aux_f, \pi_f, \widetilde{\sigma}) = 0$: \ret $0$
	\\ 
	$\sigma := \Adapt(\advt, \state, \vk, m, X, x, f, \aux_f, \widetilde{\sigma})$  
	\\ 
	$x' \gets \A(\sigma)$ 
	\\ 
	\ret $1$ iff $(X, x') \in R$ 
	} 
	\procedure[linenumbering, mode=text]{Oracle $\mcal{O}_{\AuxGen}(f)$}{
	\ret $\AuxGen(\advt, \state, f)$
	}
	\end{pchstack}
	\caption{Witness Hiding experiment of adaptor signatures
	}
	\label{fig:fasig-wit-hid-exp}
	\end{figure}
\fi

\section{Strongly-Secure FAS Construction}
\label{sec:fas-construction}

In this section, we build a generic construction of FAS w.r.t.\ digital signature scheme $\DS$, any NP relation $R$ with statement/witness pairs $(X, \bfx) \in R$ such that $\bfx \in \mcal{M}$ for some set $\mcal{M} \subseteq \Z^\ell$, and function family $\mcal{F}_{{\sf IP}, \ell}$ for computing inner products of vectors of length $\ell$.
Our generic construction of FAS is presented in ~\Cref{fig:fas-construction}
\footnote{~\Cref{fig:fas-construction} is presented with $\mcal{M} = \Z_p^\ell$, but the scheme generalizes to $\mcal{M} \subseteq \Z^\ell$ as well.} 
and it uses the following building blocks:
\begin{itemize}[leftmargin=*]
\item 
A selective, IND-secure $\ipe$ satisfying 
$R_\ipfe$-compliance and 
\ifcameraready
$R'_\ipfe$-robustness (\Cref{def:ipfe-compliant,def:ipfe-robust})  
\else
$R'_\ipfe$-robustness (\Cref{def:ipfe-sel-ind-sec,def:ipfe-compliant,def:ipfe-robust})  
\fi
w.r.t.\ hard relations $R_\ipfe$ and $R'_\ipfe$ such that $R_\ipfe \subseteq R'_\ipfe$. The message space of $\ipfe$ is $\mcal{M}' \subseteq \Z^{\ell+1}$ and the function family is $\mcal{F}_{{\sf IP}, \ell+1}$.

\item 
An adaptor signature scheme $\as$ w.r.t.\ a digital signature scheme $\DS$, and hard relations $R_\ipfe$ and $R'_\ipfe$ that satisfies witness 
\ifcameraready
extractability 
\else
extractability (\Cref{def:as-wit-ext}) 
\fi
and weak pre-signature 
\ifcameraready
adaptability 
\else
adaptability (\Cref{def:as-pre-sig-adaptability}) 
\fi
security properties. 
\end{itemize}

\begin{itemize}[leftmargin=*]
\item 
A non-interactive zero-knowledge argument system $\nizk$ in the common reference string model for the NP language $L_\nizk$:
\[
L_\nizk := \left\{ 
\begin{matrix}
(X, \pp', \mpk, \ct) : \\
\exists (r_0, r_1, \bfx)\ \text{such that }
\pp' \in [\ipe.\Gen(1^\secparam)],\\
(\mpk, \msk) = \ipe.\Setup(\pp', 1^{\ell+1}; r_0),\\
(X, \bfx) \in R,
\ct = \ipe.\Enc(\mpk, (\bfx^T, 0)^T; r_1)
\end{matrix}
\right\}.
\]\
\end{itemize}


\ifhldiff
{\color{hldiffcolor}
\fi
We sketch the security proof of unforgeability and zero-knowledge witness privacy below.
Correctness and formal proofs are deferred 
\ifcameraready
to the full version.
\else
to~\cref{sec:fas-proof-appendix}.
\fi
\ifhldiff
}
\fi

\begin{lemma}\label{lemma:correctness-fas}
Suppose NIZK satisfies correctness, $\AS$ satisfies correctness, and $\ipfe$ satisfies $R_\ipfe$-compliance and  $R'_\ipfe$-robustness. Then, the $\fas$ construction in~\Cref{sec:construction} satisfies correctness.
\end{lemma}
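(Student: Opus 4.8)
The plan is to prove correctness by tracing a single honest execution through all the algorithms in Figure~\ref{fig:fas-construction} and verifying each conjunct of the probability statement in Definition~\ref{def:fas-correctness}. Concretely, fix $\secparam$, a message $m$, a pair $(X,\bfx)\in R$, and a function $\bfy\in\mcal{F}_{{\sf IP},\ell}$, and run $\Setup$, $\AdvertisementGen$, $\KGen$, $\AuxGen$, $\FPreSign$, $\Adapt$, $\FExt$ honestly. I would then establish the following chain of sub-claims, in order: (i) $\AdvertisementVerify$ accepts, (ii) $\AuxVerify$ accepts, (iii) $\FPreVerify$ accepts, (iv) $\Verify(\vk,m,\sigma)=1$, and (v) the extracted value equals $f_\bfy(\bfx)$.

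The first three sub-claims are essentially bookkeeping. For (i), note that $\AdvertisementGen$ sets ${\sf stmt}=(X,\pp',\mpk,\ct)$ and ${\sf wit}=(r_0,r_1,\bfx)$ exactly so that $({\sf stmt},{\sf wit})\in R_\nizk$: $\pp'\in[\ipe.\Gen(1^\secparam)]$, $(\mpk,\msk)=\ipe.\Setup(\pp',1^{\ell+1};r_0)$, $(X,\bfx)\in R$, and $\ct=\ipe.\Enc(\mpk,(\bfx^T,0)^T;r_1)$; hence NIZK completeness gives $\nizk.\Verify(\crs,{\sf stmt},\pi)=1$ with probability $1$, which is what $\AdvertisementVerify$ checks. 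For (ii), observe that $\AuxGen$ outputs $\aux_\bfy=\pk_\bfy=\ipfe.\PubKGen(\mpk,\widetilde{\bfy})$ with $\widetilde{\bfy}=(\bfy^T,f_\bfy(\bft))^T$ and $\pi_\bfy=f_\bfy(\bft)$, so $\AuxVerify$ reconstructs the same $\widetilde{\bfy}=(\bfy^T,\pi_\bfy)^T$ and checks $\aux_\bfy=\ipfe.\PubKGen(\mpk,\widetilde{\bfy})$, which holds by the determinism of $\PubKGen$. For (iii), $\FPreVerify$ is the conjunction of $\AuxVerify$ (already shown) and $\as.\PreVerify(\vk,m,\aux_\bfy,\widetilde{\sigma})$, where $\widetilde{\sigma}\gets\as.\PreSign(\sk,m,\aux_\bfy)$; this last conjunct follows from the $\PreVerify$-correctness guaranteed by the first line of $\AS$ correctness (Definition~\ref{def:as-correctness}), provided $(\aux_\bfy,\sk_\bfy)\in R_\ipfe$ — which is precisely the point where $R_\ipfe$-compliance of $\ipfe$ (Definition~\ref{def:ipfe-compliant}) enters, since $\aux_\bfy=\pk_\bfy=\PubKGen(\mpk,\widetilde{\bfy})$ and $\sk_\bfy=\KGen(\msk,\widetilde{\bfy})$.

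Sub-claim (iv) follows from $\AS$ correctness as well: $\Adapt$ computes $\sigma:=\as.\Adapt(\vk,m,\aux_\bfy,\sk_\bfy,\widetilde{\sigma})$ using the witness $\sk_\bfy$ for the statement $\aux_\bfy=\pk_\bfy$ (valid because $(\pk_\bfy,\sk_\bfy)\in R_\ipfe$ by compliance), and the second line of Definition~\ref{def:as-correctness} then yields $\Verify(\vk,m,\sigma)=1$. Sub-claim (v) is the most delicate. In $\FExt$ the buyer computes $z:=\as.\Ext(\widetilde{\sigma},\sigma,\aux_\bfy)$ and returns $v:=\ipe.\Dec(z,\ct)$. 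By the third line of $\AS$ correctness, $(\aux_\bfy,z)\in R'_\ipfe$, but possibly $(\aux_\bfy,z)\notin R_\ipfe$ — so $z$ need not equal $\sk_\bfy$. This is exactly the scenario $R'_\ipfe$-robustness (Definition~\ref{def:ipfe-robust}) is designed for: it guarantees $\ipe.\Dec(z,\ct)=f_{\widetilde{\bfy}}(\widetilde{\bfx})$ for any $z$ with $(\pk_\bfy,z)\in R'_\ipfe$, where $\widetilde{\bfx}=(\bfx^T,0)^T$ and $\widetilde{\bfy}=(\bfy^T,f_\bfy(\bft))^T$. Finally, $f_{\widetilde{\bfy}}(\widetilde{\bfx})=\langle\widetilde{\bfx},\widetilde{\bfy}\rangle=\langle\bfx,\bfy\rangle+0\cdot f_\bfy(\bft)=f_\bfy(\bfx)$ since the last slot of $\widetilde{\bfx}$ is zero, so $v=f_\bfy(\bfx)$ as required. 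I would then note that each step held with probability $1$, so the overall probability is $1$. The main obstacle is keeping straight the interplay between $R_\ipfe$ and $R'_\ipfe$ in steps (iii)/(iv) versus (v): compliance is what makes the adaptor statement/witness pair valid for $\Adapt$ and $\PreVerify$, while robustness is what rescues decryption correctness when $\Ext$ returns a witness only in the larger relation $R'_\ipfe$; conflating the two would break the argument.
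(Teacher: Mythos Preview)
Your proposal is correct and follows essentially the same approach as the paper's own proof: both trace an honest execution and verify each conjunct of Definition~\ref{def:fas-correctness} in turn, invoking NIZK completeness for (i), determinism of $\PubKGen$ for (ii), $\AS$ correctness together with $R_\ipfe$-compliance for (iii)--(iv), and $\AS$ correctness plus $R'_\ipfe$-robustness for (v). If anything, you are slightly more explicit than the paper in two places---you note that compliance is already needed at the $\FPreVerify$ step (the paper only mentions it at the $\Verify$ step), and you spell out the final computation $f_{\widetilde{\bfy}}(\widetilde{\bfx})=f_\bfy(\bfx)$ using the zero in the last slot of $\widetilde{\bfx}$---but the structure and key ideas are identical.
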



\begin{theorem}\label{thm:fas-strongly-secure}
Let $\mcal{F}_{{\sf IP}, \ell}$ be the family of inner products functions of vectors of length $\ell$. 
Let $R$ be any NP relation 
with statement/witness pairs $(X, \bfx)$ such that $\bfx \in \mcal{M}$ for some set $\mcal{M} \subseteq \Z^\ell$.
Suppose that 
\begin{itemize}[leftmargin=*]
\item 
\mcal{M} is an additive group,
$R$ is 
\ifcameraready
$\mcal{F}_{{\sf IP}, \ell}$-hard,
\else
$\mcal{F}_{{\sf IP}, \ell}$-hard (\Cref{def:f-hard-relation}),
\fi
\item
$\nizk$ is a secure NIZK argument 
\ifcameraready
system,
\else
system (\Cref{def:nizk}),
\fi
\item
$\as$ is an adaptor signature scheme w.r.t.\ digital signature scheme $\ds$ and hard relations $R_\ipfe, R'_\ipfe$ satisfying weak pre-signature 
\ifcameraready
adaptability, 
\else
adaptability (\Cref{def:as-pre-sig-adaptability}), 
\fi
witness 
\ifcameraready
extractability,
\else 
extractability (\Cref{def:as-wit-ext}),
\fi
\item
$\ipfe$ is a selective, IND-secure IPFE 
\ifcameraready
scheme 
\else
scheme (\Cref{def:ipfe-sel-ind-sec}) 
\fi
for function family $\mcal{F}_{{\sf IP}, \ell+1}$ satisfying $R_\ipe$-compliance (\Cref{def:ipfe-compliant}), $R'_\ipfe$-robustness (\Cref{def:ipfe-robust}). 
\end{itemize} 
Then, the construction in~\Cref{sec:construction} is a strongly-secure (\Cref{def:fas-strongly-secure}) functional adaptor signature scheme 
w.r.t.\ digital signature scheme $\DS$, NP relation $R$, and family of inner product functions $\mcal{F}_{{\sf IP}, \ell}$.
\end{theorem}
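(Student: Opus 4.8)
The plan is to establish each of the six properties in Definition~\ref{def:fas-strongly-secure} in turn (correctness being handled by Lemma~\ref{lemma:correctness-fas}), reusing the generic arguments already sketched in Remarks~\ref{remark:removing-nizk} and~\ref{remark:ipfe-simulation}. I would first dispose of the three ``structural'' properties. \emph{Advertisement soundness} is immediate from adaptive soundness of $\nizk$: $\AdvertisementVerify$ runs only $\nizk.\Verify$ on $(X,\pp',\mpk,\ct)$, and membership of that statement in $L_\nizk$ forces the existence of $\bfx$ with $(X,\bfx)\in R$, so any $X\notin L_R$ drives the statement out of $L_\nizk$ and soundness applies. \emph{Pre-signature validity} holds because $\FPreVerify$ decomposes as $\AuxVerify\wedge\as.\PreVerify$: once $\AuxVerify$ passes it remains only to note that the honestly produced $\widetilde\sigma\gets\as.\PreSign(\sk,m,\aux_f)$ passes $\as.\PreVerify$, by correctness of $\as$. \emph{Pre-signature adaptability} follows from $R_\ipfe$-compliance of $\ipfe$, which (for the honestly generated keys of that experiment) gives $(\aux_f,\sk_{\widetilde\bfy})=(\ipfe.\PubKGen(\mpk,\widetilde\bfy),\ipfe.\KGen(\msk,\widetilde\bfy))\in R_\ipfe$, combined with weak pre-signature adaptability of $\as$ invoked with statement $\aux_f$ and witness $\sk_{\widetilde\bfy}$.

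Next I would treat \emph{witness extractability} and \emph{unforgeability} together, since both rest on a single observation. Whenever $\mcal{A}$ outputs $\sigma^*$ with $\Verify(\vk,m^*,\sigma^*)=1$ against the honestly generated pre-signature $\widetilde\sigma^*=\as.\PreSign(\sk,m^*,\aux_f^*)$, witness extractability of $\as$ guarantees that $z':=\as.\Ext(\widetilde\sigma^*,\sigma^*,\aux_f^*)$ satisfies $(\aux_f^*,z')\in R'_\ipfe$ except with negligible probability; adaptive soundness of $\nizk$ (available because $\AdvertisementVerify$ passed) guarantees $\ct$ is an honest encryption of $(\bfx^T,0)^T$ with $(X^*,\bfx)\in R$; and $R'_\ipfe$-robustness of $\ipfe$ then yields $\ipfe.\Dec(z',\ct)=f^*_{\widetilde\bfy^*}\big((\bfx^T,0)^T\big)=\langle\bfx,\bfy^*\rangle=f^*(\bfx)$, the zero last slot of $\widetilde\bfx$ annihilating the possibly malformed last slot $\pi_f^*$ of $\widetilde\bfy^*$ (cf.\ Remark~\ref{remark:removing-nizk}). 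For witness extractability this is exactly what is needed, since $f^*(\bfx)$ lies in the set $Z$ of the experiment, so $\mcal{A}$ wins only via the negligible failure events above. For unforgeability I would phrase it as a reduction $\mcal{B}$ to $\mcal{F}_{{\sf IP},\ell}$-hardness of $R$: $\mcal{B}$ receives $X^*$, simulates the entire unforgeability experiment (it knows $\sk,\crs,\pp'$ and answers the signing and functional pre-signature oracles by running $\as.\Sign$ and $\as.\PreSign$, the latter being precisely $\FPreSign$ on $\aux_f$), and outputs $\big(f^*,\ipfe.\Dec(\as.\Ext(\widetilde\sigma^*,\sigma^*,\aux_f^*),\ct)\big)$, which by the above is a valid $\mcal{F}$-hardness answer whenever $\mcal{A}$ wins.

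The core of the proof, and where most of the effort goes, is \emph{zero-knowledge}. I would build $\Sim=(\Setup^*,\AdGen^*,\AuxGen^*,\Adapt^*)$ by composing the $\nizk$ simulator with the simulator underlying sim-security of the compiled $\ipfe$: $\Setup^*$ runs $\nizk.\Setup^*$ to get $(\crs^*,\td)$ and samples $\pp'$ honestly; $\AdGen^*(\pp,X)$ runs the $\ipfe$-simulator's setup and its simulated encryption (needing no witness) to obtain $(\mpk,\ct)$ and attaches $\pi^*\gets\nizk.\Prove^*(\crs^*,\td,(X,\pp',\mpk,\ct))$; and $\AuxGen^*,\Adapt^*$, on input $f(x)$, call the $\ipfe$-simulator's adaptive functional-key generation with answer $f(x)$, obtaining the simulated key $\sk_{\widetilde\bfy}$ (used by $\Adapt^*$ to adapt via $\as.\Adapt$) together with the ``public'' last-slot value $\pi_f=f_\bfy(\bft)$ the $\ipfe$ simulator already produces, from which $\widetilde\bfy$ and hence $\aux_f=\ipfe.\PubKGen(\mpk,\widetilde\bfy)$ are reconstructed. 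Indistinguishability of $\faZKReal$ and $\faZKIdeal^\Sim$ would be shown through two hybrids: (i) switch $(\crs,\pi)$ to $(\crs^*,\pi^*)$, indistinguishable by $\nizk$ zero-knowledge since only one statement is ever proved; (ii) switch all $\ipfe$-derived objects ($\ct$, the functional keys used inside $\Adapt$, and the revealed leakage values) to the $\ipfe$ simulator's outputs, indistinguishable by sim-security of the compiled $\ipfe$, which reduces to selective IND-security of the base $\ipfe$ via the compiler of~\cite{ad-sim-ipfe}. Selective suffices because $\widetilde\bfx=(\bfx^T,0)^T$ is fixed at advertisement time, before any function query.

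I expect the main obstacle to lie entirely inside hybrid (ii), namely the non-black-box point flagged in Remarks~\ref{remark:removing-nizk} and~\ref{remark:ipfe-simulation}: the simulated functional keys must remain $R_\ipfe$-compliant, so that $\Adapt^*$ produces verifying signatures and the hybrid is even well-defined, and revealing $\pi_f=f_\bfy(\bft)$ must not destroy the $\ipfe$ simulator's ability to conclude. Handling this requires opening the~\cite{ad-sim-ipfe} compiler and arguing that exposing only this designated part of the functional-key randomness leaves the simulator's remaining degrees of freedom intact --- precisely because $f_\bfy(\bft)$ is already needed by any honest decryptor, being part of $\widetilde\bfy$. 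Verifying that the opened simulator consistently serves an adaptive sequence of (possibly linearly dependent) function queries while respecting this constraint is the step I expect to demand the most care.
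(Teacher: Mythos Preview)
Your treatment of advertisement soundness, pre-signature validity, pre-signature adaptability, unforgeability, and witness extractability is essentially identical to the paper's. In particular, you correctly reduce unforgeability to witness extractability of $\as$ (rather than to unforgeability of $\as$) together with $\nizk$ soundness, $R'_\ipfe$-robustness, and $\mcal{F}_{{\sf IP},\ell}$-hardness of $R$; the paper makes the same choice and packages it as a three-game hybrid, but the logic is the same.

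For zero-knowledge your two-hybrid outline (first switch $\nizk$ to simulation, then switch the $\ipfe$-derived objects) is also the paper's plan, but the paper executes hybrid~(ii) more concretely and this concreteness is worth noting. Rather than invoking ``sim-security of the compiled $\ipfe$'' as a black box, the paper writes the $\fas$ simulator explicitly: it samples its own $\bft$, encrypts $\widetilde\bfx:=(-\bft^{T},1)^{T}$, and on a function query $\bfy$ with answer $z=f_\bfy(\bfx)$ sets $\pi_\bfy:=f_\bfy(\bft)+z$ and uses $\widetilde\bfy=(\bfy^{T},\pi_\bfy)^{T}$ with the \emph{real} $\ipfe.\KGen$. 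The indistinguishability argument is then just two elementary steps: a change of variables $\bft\mapsto\bft+\bfx$ (identically distributed, and this is exactly what makes the real $\pi_\bfy=f_\bfy(\bft)$ match the simulated $\pi_\bfy=f_\bfy(\bft)+f_\bfy(\bfx)$ in distribution), followed by a direct reduction to selective IND-security of the base $\ipfe$ (admissible because $\langle(\bfx^{T},0),\widetilde\bfy\rangle=\langle(-\bft^{T},1),\widetilde\bfy\rangle$ for every queried $\widetilde\bfy$). Your description of the simulator's output as ``$\pi_f=f_\bfy(\bft)$'' is only correct distributionally; the simulator cannot compute the real-world $f_\bfy(\bft)$ since it doesn't know $\bfx$, and the actual computed value is $f_\bfy(\bft)+f_\bfy(\bfx)$. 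The change-of-variables step is precisely the ``opening of the compiler'' you anticipate, and it resolves cleanly the obstacle you flagged: once $\bft$ is re-centered, the leaked $\pi_\bfy$ is syntactically identical in the two worlds, so there is no residual constraint on the $\ipfe$ reduction.
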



\ifhldiff
{\color{hldiffcolor}
\fi
\smallskip\noindent\textbf{Proof sketch of Unforgeability.}
A natural idea for proving unforgeability of $\fas$ would be to somehow reduce it to unforgeability of $\as$.
But, this requires non-blackbox use of the underlying $\as$, and thus results in a complex proof. 
\nikhil{why?}
We avoid that altogether and present a counter-intuitive yet elegant way of proving unforgeability of $\fas$ by reducing it to witness extractability of $\as$. Our proof technique does not need to open up the blackbox of $\as$ as evident in the sequence of games below. 

In a little more detail, to show that $\Pr[G_0(1^\secparam) = 1] \leq \negl(\secparam)$, we consider a sequence of games $G_0, G_1, G_2$ as decribed in~\Cref{fig:fasig-unf-games-short}.

\begin{figure}[t]
\ifhldiff
{\color{hldiffcolor}
\fi
\centering
\captionsetup{justification=centering}
\begin{gameproof}
\begin{pcvstack}[boxed, space=1em]
\begin{pcvstack}[space=1em]
\procedure[linenumbering, mode=text]{Games $G_0$, {\color{black} \pcbox{G_1}}, {\color{blue} \pcbox{ \color{black} G_2}}}{
$\mcal{Q} := \emptyset$,  
$\crs \gets \nizk.\Setup(1^\secparam)$, 
$\pp' \gets \ipfe.\Gen(1^\secparam)$
\\ 
$\pp := (\crs, \pp')$,
$(\sk, \vk) \gets \KGen(\pp')$, 
$(X^*, \bfx^*) \gets \GenR(1^\secparam)$ 
\\ 
$(\advt, m^*, \bfy^*, \aux_\bfy^*, \pi^*_\bfy) \gets \A^{\mcal{O}_S(\cdot)}(\pp, \vk, X^*)$ 
\\ 
Parse $\advt = (\mpk, \ct, \pi)$, let $\stmt := (X^*, \pp', \mpk, \ct)$
\\ 
$\widetilde{\bfy^*} := ({\bfy^*}^T, \pi^*_\bfy)^T$, 
$\pk_{\bfy^*} := \ipe.\PubKGen(\mpk, \widetilde{\bfy^*})$ 
\\ 
${\sf Bad}_1 := \false, {\sf Bad}_2 := \false$
\\ 
If $\nizk.\Verify(\crs, \stmt, \pi) =0 \ \vee\ \bfy^* \notin \mcal{F}_{{\sf IP}, \ell} \ \vee \ \pk_{\bfy^*} \neq \aux_\bfy^*$: 
\pcskipln \\ 
$\pcind$ \ret $0$
\\
{\color{blue} \pcbox{
\color{black} \pcbox{\text{If $\stmt \notin L_{NIZK}$: ${\sf Bad}_1 := \true$, \ret $0$}}
}}
\\ 
$\widetilde{\sigma}^* \gets \as.\PreSign(\sk, m^*, \aux_\bfy^*)$,  
$\sigma^* \gets \A^{\mcal{O}_S(\cdot), \mcal{O}_{\fpS}(\cdot, \cdot, \cdot, \cdot, \cdot)}(\widetilde{\sigma}^*)$ 
\\
{ \color{blue} \pcbox{ \color{black}
z = \as.\Ext(\widetilde{\sigma}^*, \sigma^*, \aux_\bfy^*)
} 
} \\
{ \color{blue} \pcbox{ \color{black}
\text{If $(m^* \notin \mcal{Q}) \wedge \Verify(\vk, m^*, \sigma^*) \wedge ((\aux_\bfy^*, z) \notin R'_\ipfe)$:}	
} 
} \\
$\pcind$ { \color{blue} \pcbox{ \color{black}
\text{${\sf Bad}_2 := \true$, \ret $0$}	
} 
} \\
\ret $((m^* \notin \mcal{Q}) \wedge \Verify(\vk, m^*, \sigma^*))$
}
\end{pcvstack}
\begin{pcvstack}[space=1em]
\procedure[linenumbering, mode=text]{Oracle $\mcal{O}_S(m)$}{
$\sigma \gets \Sign(\sk, m)$,
$\mcal{Q} := \mcal{Q} \vee \{m\}$, 
\ret $\sigma$
}

\procedure[linenumbering, mode=text]{Oracle $\mcal{O}_{\fpS}(m, X, \bfy, \aux_\bfy, \pi_\bfy)$}{
If $\AuxVerify(\advt, \bfy, \aux_\bfy, \pi_\bfy) = 0$: \ret $\bot$
\\
$\widetilde{\sigma} \gets \FPreSign(\advt, \sk, m, X, \bfy, \aux_\bfy)$, 
$\mcal{Q} := \mcal{Q} \vee \{m\}$, 
\ret $\widetilde{\sigma}$
}
\end{pcvstack}
\end{pcvstack}
\end{gameproof}
\caption{
\ifhldiff
{\color{hldiffcolor}
\fi
Unforgeability Proof: Games $G_0, G_1, G_2$
\ifhldiff
}
\fi
}
\label{fig:fasig-unf-games-short}
\ifhldiff
}
\fi
\end{figure}

\begin{itemize}[leftmargin=*]
\item 
Game $G_0$ is the original game $\faSigForge_{\A, \FAS}$, where the adversary \A has to come up with a valid forgery on a message $m^*$ of his choice, while having access to functional pre-sign oracle $\mcal{O}_{\fpS}$ and sign oracle $\mcal{O}_S$. Here, variables ${\sf Bad}_1, {\sf Bad}_2$ do not affect the game and are used only to aide the analysis below.

\item 
Game $G_1$ is same as $G_0$, except that 
before computing the pre-signature, 
the game checks if the NIZK statement $(X^*, \pp', \mpk, \ct)$ is in the language $L_{NIZK}$. 
If no, the game sets the flag ${\sf Bad}_1 = \true$ and returns $0$.
Observe that $G_0$ and $G_1$ are identical until $G_1$ checks the condition for setting ${\sf Bad}_1$. Hence,
\[
| \Pr[G_0 (1^\secparam) = 1] - \Pr[G_1 (1^\secparam) = 1] | \leq \Pr[{\sf Bad}_1 \text{ in $G_1$}]. 
\]
Thus, for proving $| \Pr[G_0 (1^\secparam) = 1] - \Pr[G_1 (1^\secparam) = 1] | \leq \negl(\secparam)$, it suffices to show that in game $G_1$, $\Pr[{\sf Bad}_1] \leq \negl(\secparam)$. 
${\sf Bad}_1 = \true$ implies $\stmt \notin L_{NIZK}$ and $\nizk.\Verify(\crs, \stmt, \pi) = 1$. Thus, the probability bound follows from the adaptive soundness of NIZK argument system.
\item 
Game $G_2$ is same as $G_1$, except that 
when \A outputs the forgery $\sigma^*$, 
the game extracts the witness $z$ of the underlying adaptor signature scheme for the statement $\pk_{\bfy^*}$ and checks if the NIZK statement $(\aux_\bfy^*, z)$ satisfies $R'_\ipfe$. 
If no, the game sets the flag ${\sf Bad}_2 = \true$ and returns $0$.
Observe that $G_0$ and $G_1$ are identical until $G_1$ checks the condition for setting ${\sf Bad}_1$. Hence,
\[
| \Pr[G_1 (1^\secparam) = 1] - \Pr[G_2 (1^\secparam) = 1] | \leq \Pr[{\sf Bad_2} \text{ in $G_2$}].
\]
Thus, for proving $| \Pr[G_1 (1^\secparam) = 1] - \Pr[G_2 (1^\secparam) = 1] | \leq \negl(\secparam)$, it suffices to show that in game $G_2$, $\Pr[{\sf Bad}_2] \leq \negl(\secparam)$.
This follows from the witness extractability of the underlying adaptor signature scheme $\as$. 
In a little more detail, we can show that if an adversary \A successfully causes $G_2$ to set ${\sf Bad}_2$, then, we can use it to come up with a reduction \B that breaks the witness extractability of the underlying adaptor signature scheme. This is because ${\sf Bad}_2 = \true$ implies $(m^* \notin \mcal{Q}) \wedge ((\aux_\bfy^*, z) \notin R'_\ipfe) \wedge \Verify(\vk, m^*, \sigma^*)$, i.e., the winning condition for \B.
\item 
Lastly, we can show that $\Pr[G_2 (1^\secparam) = 1] \leq \negl(\secparam)$ assuming $\mcal{F}_{{\sf IP}, \ell}$-hardness of relation $R$ and $R'_\ipfe$-robustness of the $\ipfe$ scheme. 
Essentially, we show that if \A wins $G_2$, then, we can build a reduction \B that breaks the $\mcal{F}_{{\sf IP}, \ell}$-hardness of relation $R$.
\B outputs $(\bfy^*, v)$, where $v = \ipfe.\Dec(z, \ct)$.
To win, \B's output must satisfy $(\bfy^* \in \mcal{F}_{{\sf IP}, \ell}) \wedge (v \in \{ f_{\bfy^*}(\bfx)  : \exists \bfx \ s.t. \ (X, \bfx) \in R\})$.
If \A wins, then $\bfy^* \in \mcal{F}_{{\sf IP}, \ell}$ and $\pk_{\bfy^*} = \aux_\bfy^*$.
Next, ${\sf Bad}_1 = \false$ implies that $\stmt \in L_{NIZK}$, where $\stmt = (X^*, \pp', \mpk, \ct)$. Hence, it follows that $\ct$ encrypts some vector $\widetilde{\bfx} = (\bfx^T, 0)^T \in \mcal{M}' \subseteq \Z^{\ell+1}$ under $\mpk$ such that $(X^*, \bfx) \in R$. 
Next, ${\sf Bad}_2 = \false$ implies that $(\aux_\bfy^*, z) \in R'_\ipfe$. 
As $\pk_{\bfy^*} = \aux_\bfy^*$ and IPFE satisfies $R'_\ipfe$-robustness, it follows that $v = f_{\widetilde{\bfy^*}}(\widetilde{\bfx})$. As $\widetilde{\bfy^*} = ({\bfy^*}^T, \pi_\bfy^*)^T$ and $\widetilde{\bfx} = (\bfx^T, 0)^T$, we get that $f_{\widetilde{\bfy^*}}(\widetilde{\bfx}) = f_{\bfy^*}(\bfx)$. 
Hence, we conclude that $v \in \{ f_{\bfy^*}(\bfx) : \exists \bfx \ s.t. \ (X, \bfx) \in R\}$. Thus, \B wins its game. This completes the proof.

\end{itemize}
\ifhldiff
}
\fi

\ifhldiff
{\color{hldiffcolor}
\fi
\smallskip\noindent\textbf{Proof sketch of Zero-Knowledge.}
We first describe the stateful simulator $\Sim = (\Setup^*, \AdGen^*, \allowbreak \AuxGen^*, \Adapt^*)$.  
Let the NIZK simulator be $\nizk.\Sim = (\nizk.\Setup^*, \nizk.\Prove^*)$. 
Then, the simulator $\Sim$ is as follows.
\begin{itemize}[leftmargin=*]
\item $\Setup^*(1^\secparam)$: same as $\Setup$ except $(\crs, \td) \gets \nizk.\Setup^*(1^\secparam)$ and trapdoor $\td$ is stored as internal state by $\Sim$.
\item $\AdGen^*(\pp, X)$: same as $\AdGen$ except $\widetilde{\bfx} := (-\bft^T, 1)^T \in \Z_p^{\ell+1}$ and $\pi \gets \nizk.\Prove^*(\crs, \td, (X, \pp', \allowbreak \mpk, \ct))$.
\item $\AuxGen^*(\advt, \bfy, f_{\bfy}(\bfx))$: same as $\AuxGen$ except $\pk_\bfy$ is computed for $\widetilde{\bfy} := (\bfy^T, f_\bfy(\bft) + f_\bfy(\bfx))^T \in \mcal{F}_{{\sf IP}, \ell+1}$ and thus $\pi_\bfy := f_\bfy(\bft) + f_\bfy(\bfx)$ as well.
\item $\Adapt^*(\advt, \vk, m, X, \bfy, \aux_\bfy, \widetilde{\sigma}, f_\bfy(\bfx))$: same as $\Adapt$ except $\sk_\bfy$ for $\widetilde{\bfy} := (\bfy^T, f_\bfy(\bft) + f_\bfy(\bfx))^T$ is used.
\end{itemize}

To see that adversary's views in real and ideal world (~\Cref{fig:fas-zk-real}) are indistinguishable, consider a sequence of games $G_0, G_1, G_2, G_3$.
\begin{itemize}[leftmargin=*]
\item 
Game $G_0$ corresponds to the real-world experiment where $\Setup$, $\AdGen$, $\AuxGen$, $\Adapt$ are used. 
\item 
Game $G_1$ is same as $G_0$, except the NIZK is switched to simulation mode, i.e., we change $\Setup$ to perform $(\crs, \td) \gets \nizk.\Setup^*(\allowbreak 1^\secparam)$ and change $\AdGen$ to perform $\pi \gets \nizk.\Prove^*(\crs, \td, (X, \allowbreak \pp', \allowbreak \mpk, \ct))$. It follows from the zero-knowledge property of NIZK that $G_0 \approx_c G_1$. 
\item 
Game $G_2$ is same as $G_1$ except $\widetilde{\bfy}$ used in $\AuxGen$ and $\Adapt$ is switched from $\widetilde{\bfy} := (\bfy^T, f_\bfy(\bft))^T$ to $\widetilde{\bfy} := (\bfy^T, f_\bfy(\bft) + f_\bfy(\bfx))^T$. Since $f$ is a linear function, it follows that $f_\bfy(\bft) + f_\bfy(\bfx) = f_\bfy(\bft + \bfx)$, thus, one can view the transition to $G_2$ as a change of variables $\bft \to \bft+\bfx$. Since $\bft$ is uniform random, it follows that $G_1$ and $G_2$ are identically distributed. 
\item 
Game $G_3$ is same as $G_2$ except that in $\AdGen$, $\ct$ encrypts 
$\widetilde{\bfx} := (-\bft^T, 1)^T$ 
instead of 
$\widetilde{\bfx} := (\bfx^T, 0)^T$. 
Observe that this does not change the inner product value $\widetilde{\bfx}^T\widetilde{\bfy}$. Thus, $G_2 \approx_c G_3$ follows from IND-security of the IPFE scheme.
\item
Lastly observe that adversary's view in $G_3$ is same as that in the ideal-world experiment. This complete the proof.
\end{itemize}

We emphasize that in $G_3$, even the underlying IPFE only uses the information $f_\bfy(\bfx)$ about $\bfx$. Thus, going from $G_1$ to $G_3$ is essentially building an IND-security to simulation-security IPFE compiler. The random coins of this compiler are $\bft$. Crucially, $G_2$ and $G_3$ explicitly use $f_\bfy(\bft)$, which is a leakage on these random coins and for this reason the IPFE compiler is used in a non-blackbox way. We refer the reader back to~\Cref{remark:ipfe-simulation} on why $f_\bfy(\bft)$ is an acceptable leakage on the IPFE simulator's random coins $\bft$.
\ifhldiff
}
\fi



\ignore{
		\paragraph{$\Setup(1^\secparam)$:}
		\begin{itemize}[itemsep=0.1em]
		\item Compute $\crs \gets \nizk.\Setup(1^\secparam)$.
		\item Compute $\pp' \gets \ipfe.\Gen(1^\secparam)$.
		\item \ret $\pp = (\crs, \pp')$.
		\end{itemize}

		\paragraph{$\AdvertisementGen(\pp, X, \bfx)$:}
		\begin{itemize}[itemsep=0.1em]
		\item 
		Compute the master keys for IPFE as $(\mpk, \msk) \gets \ipe.\Setup(\pp', 1^\ell)$. 
		\item 
		Sample random coins $r$ and compute ciphertext $\ct = \ipe.\Enc(\mpk, \bfx; r)$.
		\item 
		Compute a proof $\pi \gets \nizk.\Prove(\crs, (X, \pp', \mpk, \ct), (\bfx, r))$ attesting that $\ct$ is an IPFE encryption of $\bfx$ (which is a witness for statement $X$ w.r.t.\ relation $R$) using master public key $\mpk$ and random coins $r$.
		\item 
		Output $\advt = (\mpk, \ct, \pi)$, and $\state = \msk$.
		\end{itemize}

		\paragraph{$\AdvertisementVerify(\pp, X, \advt)$:}
		output $\nizk.\Verify(\crs, (X, \pp', \mpk, \ct), \pi)$.

		\paragraph{$\KGen(1^\secparam, \pp)$:} \nikhil{added crs as input. TODO: update FAS definition to reflect this change.}
		output $(\vk, \sk) \gets \as.\KGen(\pp')$. 
		\nikhil{$\as.\KGen$ takes $\pp'$ as input instead of $1^\secparam$. This is without loss of generality as $\KGen(1^\secparam)$ can be split in 2 algorithms: $\Gen(1^\secparam)$ that generates some public params $\pp'$ and $\KGen(\pp')$. TODO: update this in definition of AS and FAS}

		\paragraph{$\FPreSign(\advt, \sk, m, X, \bfy \in \mcal{F}_{IP, \Z_p^\ell})$:}
		\begin{itemize}[itemsep=0.1em]
		\item 
		Parse $\advt = (\mpk, \ct, \pi)$.
		\item 
		Compute $\pk_\bfy = \ipe.\PubKGen(\mpk, \bfy)$. 
		\item 
		Output $\widetilde{\sigma} \gets \as.\PreSign(\sk, m, \pk_\bfy)$.

		\end{itemize}

		\paragraph{$\FPreVerify(\advt, \vk, m, X, \bfy, \widetilde{\sigma})$:}
		\begin{itemize}[itemsep=0.1em]
		\item 
		Parse $\advt = (\mpk, \ct, \pi)$.
		\item 
		Compute $\pk_\bfy = \ipe.\PubKGen(\mpk, \bfy)$.
		\item 
		Output $\as.\PreVerify(\vk, m, \pk_\bfy, \widetilde{\sigma})$. 
		\end{itemize}

		\paragraph{$Adapt(\advt, \state, \vk, m, X, \bfx, \bfy, \widetilde{\sigma})$:}
		\begin{itemize}[itemsep=0.1em]
		\item 
		Parse $\advt = (\mpk, \ct, \pi)$, and $\state = \msk$.
		\item 
		Compute $\pk_\bfy = \ipe.\PubKGen(\mpk, \bfy)$.
		\item 
		Compute $\sk_\bfy = \ipe.\KGen(\msk, \bfy)$.
		\item
		Output $\sigma = \as.\Adapt(\vk, m, \pk_\bfy, \sk_\bfy, \widetilde{\sigma})$.
		\end{itemize}

		\paragraph{$\Verify(\vk, m, \sigma)$:}
		output $\as.\Verify(\vk, m, \sigma)$.

		\paragraph{$\FExt(\advt, \widetilde{\sigma}, \sigma, X, \bfy)$:} 
		\begin{itemize}[itemsep=0.1em]
		\item
		Parse $\advt = (\mpk, \ct, \pi)$.
		\item 
		Compute $\pk_\bfy = \ipe.\PubKGen(\mpk, \bfy)$.
		\item 
		Compute $z = \as.\Ext(\widetilde{\sigma}, \sigma, \pk_\bfy)$.
		\item 
		Output $v = \ipe.\Dec(z, \ct)$.

		\end{itemize}
}

\ifhldiff
{\color{hldiffcolor}
\fi
\section{FAS from Groups of Prime-Order}
\label{sec:fas-from-prime-groups}\label{sec:detailed_instantiation}

We provide instantiations of the FAS construction in~\Cref{sec:construction} from prime order groups. 
For this, it suffices to instantiate the building blocks $\IPFE$ and $\AS$ from prime order groups, while ensuring the two are compatible with each other. We describe these next.
We can instantiate the $\nizk$ for the language $L_\nizk$ depending on the concrete relation $R$. 
That is, given $R$ and the above instantiation of $\IPFE$, we can pick the most efficient $\nizk$ for $L_\nizk$.
Since this is not the main contribution of this work, we will assume $R$ to be a general NP relation and rely on the existence of $\nizk$ for general NP~\cite{nizk-from-ddh,peikert2019noninteractive} and not delve into its details here.

More concretely, we instantiate the $\as$ as Schnorr adaptor signature scheme~\cite{asig} and thus, we have $R_\ipfe = R'_\ipfe = R_\DL$ as defined 
\ifcameraready
in the full version. 
\else
in~\Cref{def:dl-relation}. 
\fi
Further, we show that the selective, IND-secure IPFE scheme by Abdalla et al.~\cite{ipe} satisfies $R_\DL$-compliance and $R_\DL$-robustness when appropriately augmented with a $\PubKGen$ algorithm. 

%

\subsection{Schnorr Adaptor Signature}\label{sec:schnorr-adaptor}
First, we recall the Schnorr Signature Scheme $\Sch$~\cite{schnorr}.
Suppose $\pp := (\G, p, g) \gets \ipfe.\Gen(1^\secparam)$ as described in~\Cref{fig:ipfe-abdp}, where $\G$ is a cyclic group of prime order $p$ and $g$ is a generator of $\G$. Suppose $H: \{0,1\}^* \to \ZZ_p$ is a hash function modeled as a random oracle. Then $\Sch$ is described below:
\begin{itemize}[leftmargin=*]
	\item $(\vk,\sk)\gets \Sch.\KGen(\pp)$: choose $\delta \gets \ZZ_p$ and set $\sk:=\delta$ and $\vk:=g^\delta$.
	\item $\sigma \gets \Sch.\Sign(\sk, m)$: sample a randomness $r \gets \ZZ_p$ to  compute $ h := H(\vk||g^r||m), s:=r+h\delta$ and output $\sigma:=(h,s)$.
	\item $0/1 \gets \Sch.\Verify(\vk,m,\sigma)$: parse $\sigma:=(h,s)$ and then compute $R := g^s / \vk^h$ and if $h = H(\vk ||R||m)$ output $1$, otherwise output $0$.
\end{itemize}

%
%
%
%

Next, we describe the Adaptor Signature scheme $\AS$ with respect to the digital signature scheme $\Sch$ and hard relations $R=R'=R_\DL$, where $R_\DL$ is the discrete log relation. Let $L_\DL$ be the corresponding language. Then, $\AS$ is described below:

\begin{itemize}[leftmargin=*]
	\item $\widetilde{\sigma}\gets \AS.\PreSign(\sk, m, X \in L_\DL)$: choose $r \getr \Z_p$ and compute $h := H(\vk ||g^r \cdot X||m)$ and $ \widetilde{s} := r + h \delta$. Set $\widetilde{\sigma} := (h, \widetilde{s})$.
	\item $0/1 \gets \AS.\PreVerify(\vk, m, X, \widetilde{\sigma})$: parse $\widetilde\sigma := (h, \widetilde{s})$ and compute $R := g^s / \vk^h$. If $h = H(\vk ||R \cdot X ||m)$ \ret $1$, else \ret $0$. 
	\item $\sigma \gets \AS.\Adapt(\vk, m , X, x, \widetilde{\sigma})$: parse $\widetilde{\sigma} = (h, \widetilde{s})$ and compute $s := \widetilde{s} + x$. \ret $\sigma := (h, s)$
	\item $x' \gets \AS.\Ext(\widetilde{\sigma}, \sigma, X)$: parse $\widetilde{\sigma} = (h, \widetilde{s})$ and $\sigma = (h, s)$. Compute $x' := s - \widetilde{s}$ and if $(X, x')\notin R_\DL$, \ret $\bot$, else \ret $x'$.  
\end{itemize}

%
%
%
%
%
%

\begin{lemma}[\cite{asig}]
\label{lemma:sch-as}
The adaptor signature scheme in~\Cref{sec:schnorr-adaptor} satisfies witness extractability (\Cref{def:as-wit-ext}) and weak pre-signature adaptability (\Cref{def:as-pre-sig-adaptability}).
\end{lemma}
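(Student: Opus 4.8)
The statement to prove is Lemma~\ref{lemma:sch-as}: the Schnorr adaptor signature scheme satisfies witness extractability and weak pre-signature adaptability. The plan is to verify each property by direct computation against the definitions, exploiting the simple additive structure of Schnorr signatures and the fact that here $R = R' = R_\DL$.

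\textbf{Weak pre-signature adaptability.} This is the easier of the two and I would do it first. Fix $\secparam$, a message $m$, a key pair $(\sk,\vk) = (\delta, g^\delta)$, a statement/witness pair $(X,x)\in R_\DL$ so $X = g^x$, and a pre-signature $\widetilde\sigma = (h,\widetilde s)$ with $\AS.\PreVerify(\vk,m,X,\widetilde\sigma) = 1$. By definition of $\PreVerify$, setting $R := g^{\widetilde s}/\vk^h$ we have $h = H(\vk \| R\cdot X \| m)$. Now $\Adapt$ produces $\sigma = (h,s)$ with $s = \widetilde s + x$. To check $\Verify(\vk,m,\sigma) = 1$, compute $R' := g^s/\vk^h = g^{\widetilde s + x}/\vk^h = (g^{\widetilde s}/\vk^h)\cdot g^x = R\cdot X$, hence $H(\vk\| R'\|m) = H(\vk\| R\cdot X\| m) = h$, so verification accepts with probability $1$. (A subtlety to flag: one must be slightly careful that the $h$ in $\widetilde\sigma$ passing $\PreVerify$ is well-defined regardless of whether $\widetilde\sigma$ was honestly generated — but the computation above only uses the $\PreVerify$ equation, so no honesty assumption on $\widetilde\sigma$ is needed, as required by the definition.)

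\textbf{Witness extractability.} This is the main obstacle and requires an unforgeability-style reduction to the hardness of $R_\DL$ (equivalently, the discrete log problem) in the random oracle model, typically via a forking-lemma argument. I would proceed as follows. Suppose a PPT adversary $\A$ wins $\aWitExt_{\A,\AS}$ with non-negligible probability: it outputs $(m^*, X^*)$, receives an honest pre-signature $\widetilde\sigma^* = (h^*, \widetilde s^*)$ on $m^*$ w.r.t.\ $X^*$, then outputs $\sigma^* = (h,s)$ such that $m^*\notin\mcal Q$, $\Verify(\vk,m^*,\sigma^*) = 1$, and $x' := \Ext(\widetilde\sigma^*,\sigma^*,X^*) = s - \widetilde s^*$ satisfies $(X^*, x')\notin R_\DL$, i.e.\ $X^* \ne g^{s-\widetilde s^*}$. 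The key observation is that if the hash value $h$ appearing in $\sigma^*$ equals $h^*$ (the one in the pre-signature), then from $\Verify$ we get $g^s/\vk^h = R^*$ where $R^* = g^{\widetilde s^* }/\vk^{h^*}\cdot X^*$ by $\PreVerify$; dividing gives $g^{s - \widetilde s^*} = X^*$, contradicting $(X^*,x')\notin R_\DL$. So a winning adversary must produce a full signature whose challenge $h \ne h^*$, meaning it is a genuine Schnorr forgery on $m^*$ (it never queried a signing or pre-signing oracle on $m^*$ that yields this $h$). I would then build a reduction $\B$ that, on discrete-log challenge $Y$, sets $\vk := Y$, simulates the signing oracle $\mcal O_S$ and pre-signing oracle $\mcal O_{pS}$ by programming the random oracle $H$ (standard Schnorr simulation: pick $s,h$ first, program $H(\vk\|g^s/\vk^h\|m) := h$; for pre-signatures pick $\widetilde s, h$ and program $H(\vk\| g^{\widetilde s}/\vk^h \cdot X\| m) := h$), embeds $Y$ as the verification key, and applies the forking lemma to $\A$'s output forgery to extract two accepting transcripts $(R, h_1, s_1)$, $(R, h_2, s_2)$ with $h_1\ne h_2$ on the same commitment $R$, yielding $\delta = (s_1 - s_2)/(h_1 - h_2) = \log_g Y$. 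This breaks the hardness of $R_\DL$, a contradiction.

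For the writeup I would cite~\cite{asig} for the full details of the forking argument (since the lemma is attributed to that work) and present the reduction at the level of detail above, emphasizing the case split on whether the forged signature's challenge coincides with $h^*$ — that case analysis is exactly what distinguishes witness extractability from plain unforgeability and is the conceptual heart of the argument. The main technical care-point is ensuring the random-oracle programming is consistent (no collisions in programmed points except with negligible probability, since each $g^s/\vk^h$ or $g^{\widetilde s}/\vk^h\cdot X$ is near-uniform in $\G$) and that $\A$'s winning condition $m^*\notin\mcal Q$ guarantees $H(\vk\|\cdot\|m^*)$ was not programmed by the oracle simulation, so the forgery point is fresh and the forking lemma applies cleanly.
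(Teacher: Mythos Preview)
The paper does not supply its own proof of this lemma; it simply attributes the result to~\cite{asig} and moves on. Your sketch is correct and is essentially the argument one finds in that reference: weak pre-signature adaptability is the direct computation you give, and witness extractability is the case split on whether the adversary's forged challenge $h$ equals the challenge $h^*$ in the honest pre-signature (if so, extraction succeeds by the algebra you wrote; if not, it is a fresh Schnorr forgery and the forking lemma extracts the discrete log of $\vk$).

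One small imprecision worth tightening in a writeup: your final sentence says that $m^*\notin\mcal Q$ guarantees $H(\vk\|\cdot\|m^*)$ was never programmed by the simulation, but the reduction does program $H$ at one such point when it simulates the \emph{challenge} pre-signature $\widetilde\sigma^*$ on $m^*$. This is exactly why the case split on $h=h^*$ versus $h\neq h^*$ is needed; in the second case the forgery's random-oracle query is fresh (distinct from the single programmed point for $m^*$), and the forking lemma applies there. Your argument already handles this correctly---just adjust the phrasing so it does not overclaim.
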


\subsection{IPFE from Groups of Prime-Order}
We first recall the IPFE scheme by Abdalla et al.~\cite{ipe}. Then, we show that it satisfies the additional compliance and robustness properties needed by our FAS scheme.

Suppose $p$ is a $\secparam$-bit prime number and suppose we want to compute inner products of vectors of length $\ell$.
The Abdalla et al.~\cite{ipe} IPFE scheme computes inner products with output values polynomially bounded by $B \ll p$. 
Specifically, the message space is $\mcal{M} = \Z_p^\ell$ and the function class is $\mcal{F}_{{\sf IP}, \ell, p, B} = \{ f_\bfy : \bfy \in \Z_p^\ell\} \subseteq \mcal{F}_{{\sf IP}, \ell}$, where $f_\bfy: \Z_p^\ell \to \{0, \ldots, B\}$ is defined as $f_\bfy(\bfx) = \bfx^T \bfy \in \{0, \ldots, B\}$. 
The scheme by Abdalla et al.\cite{ipe} augmented with $\PubKGen$ algorithm is as in~\Cref{fig:ipfe-abdp}.
For $d \in \G$, we denote the algorithm for computing the discrete log of $d$ with respect to base $g \in \G$ by $\DLog_g(d)$. 
\ifcameraready\else
As long as we are promised that the discrete log is bounded by a polynomial $B$, the $\DLog$ algorithm simply enumerates all the possibilities to find the discrete log. The running time is polynomial because of the bound $B$.
\fi 

\begin{figure}[t]
\ifhldiff
{\color{hldiffcolor}
\fi
\centering
\captionsetup{justification=centering}
\begin{pchstack}[boxed, space=0.5em]
\begin{pcvstack}

\procedure[linenumbering, mode=text]{$\ipfe.\Gen(1^\secparam)$}{
$\pp := (\G, p, g) \gets \GroupGen(1^\secparam)$\\
\ret $\pp$
}

\procedure[linenumbering, mode=text]{$\ipfe.\Setup(\pp, 1^\ell)$}{
$\bfs \getr \Z_p^\ell$, 
\ret $\msk:=\bfs, \mpk:=g^\bfs$
}

\procedure[linenumbering, mode=text]{$\ipfe.\Enc(\mpk, \bfx \in \Z_p^\ell)$}{
$r \getr \Z_p$, $\ct_0 := g^r$, $\ct_1 := g^\bfx \cdot \mpk^r$
\\
\ret $\ct := (\ct_0, \ct_1)$
}

\procedure[linenumbering, mode=text]{$\ipfe.\KGen(\msk, \bfy \in \Z_p^\ell)$}{
\ret $\sk_\bfy := \bfs^T\bfy \in \Z_p$
}

\end{pcvstack}
\begin{pcvstack}

\procedure[linenumbering, mode=text]{$\ipfe.\PubKGen(\mpk, \bfy \in \Z_p^\ell)$}{
Parse $\mpk = (k_1, \ldots, k_\ell)$
\\ 
Parse $\bfy = (y_1, \ldots, y_\ell)$
\\
\ret $\pk_\bfy := \prod_{i \in [\ell]} k_i^{y_i} $
}

\procedure[linenumbering, mode=text]{$\ipfe.\Dec(\sk_\bfy, \ct)$}{
Parse $\ct_1 = (c_1, \ldots, c_\ell)$
\\ 
Parse $\bfy = (y_1, \ldots, y_\ell)$
\\
$d := \prod_{i \in [\ell]} c_i^{y_i} / \ct_0^{\sk_\bfy}$
\\
\ret $v := \DLog_g(d)$
}

\end{pcvstack}
\end{pchstack}

\caption{
\ifhldiff
{\color{hldiffcolor}
\fi
Abdalla et al.~\cite{ipe} IPFE scheme augmented with $\PubKGen$ algorithm
\ifhldiff
}
\fi
}
\label{fig:ipfe-abdp}
\ifhldiff
}
\fi
\end{figure}

\begin{lemma}[\cite{ipe}]\label{lemma:1}
Suppose the DDH assumption holds. 
Then, the IPFE scheme in~\Cref{fig:ipfe-abdp} is selective, 
\ifcameraready
IND-secure.
\else
IND-secure (\Cref{def:ipfe-sel-ind-sec}).
\fi
\end{lemma}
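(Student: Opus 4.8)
\textbf{Proof proposal for Lemma~\ref{lemma:1}.} The statement is the standard selective IND-security of the Abdalla--Bourse--De~Caro--Pointcheval IPFE scheme (now augmented with the deterministic $\PubKGen$ algorithm), under the DDH assumption. The plan is to show that the augmentation is irrelevant to security --- since $\PubKGen$ is a deterministic public function of $\mpk$, the adversary in $\ipfeIND^b$ can compute $\pk_\bfy$ itself and thus gains nothing --- and then invoke the original security theorem of~\cite{ipe}. Concretely, I would first observe that the $\mcal{O}_\KGen$ oracle's behavior is unchanged from the original scheme, and that the view of the adversary in~\Cref{fig:ipfe-sel-ind-sec} ($\pp, \mpk, \ct^*$, and replies $\sk_\bfy = \bfs^T\bfy$) is identical to the view in the original ABDP security game. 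Hence any distinguisher against the augmented scheme is immediately a distinguisher against the unaugmented one, and the claim reduces to citing~\cite{ipe}. For completeness I would also sketch the underlying reduction.

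The core reduction goes as follows. Given a DDH challenge, one embeds it so that $\mpk = g^\bfs$ looks like a tuple of DDH instances sharing the ciphertext randomness $r$. Recall the scheme's ciphertext is $\ct = (g^r,\, g^\bfx \cdot (g^\bfs)^r)$. The reduction receives the selective challenge pair $(\bfx_0^*, \bfx_1^*)$ before generating $\mpk$, samples $\msk = \bfs$ honestly (so it can answer all $\mcal{O}_\KGen$ queries by returning $\bfs^T\bfy$), and then, using the DDH tuple, produces a ciphertext that is an encryption of $\bfx_0^*$ when the tuple is a real DH tuple and is statistically close to an encryption of $\bfx_1^*$ when the tuple is random --- this last step uses the admissibility constraint $f_\bfy(\bfx_0^*) = f_\bfy(\bfx_1^*)$ for every queried $\bfy$, which guarantees that the information $\bfx_1^* - \bfx_0^*$ lies in the kernel of every revealed functional key, so it remains hidden. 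More precisely, I would argue that conditioned on $\mpk$ and all the $\sk_\bfy$'s, the value $g^{\bfx_0^*} \cdot (\text{uniform group element raised componentwise})$ is identically distributed to $g^{\bfx_1^*} \cdot (\text{the same})$, because $\bfx_0^* - \bfx_1^* \in \bigcap_\bfy \bfy^\perp$ and the masking term is uniform on the relevant coset. Hybridizing over the $\ell$ coordinates of the DDH instance (or using the random self-reducibility of DDH to get a single-shot reduction) yields the bound.

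The main obstacle --- and really the only nontrivial point --- is the information-theoretic step that the challenge messages remain indistinguishable given all functional keys, i.e., correctly formalizing why $f_\bfy(\bfx_0^*) = f_\bfy(\bfx_1^*)$ for all queried $\bfy$ implies that the ciphertext in the ``random tuple'' hybrid perfectly (or statistically) hides which of $\bfx_0^*, \bfx_1^*$ was encrypted. This requires a careful linear-algebra argument about the span of the queried function vectors versus $\bfx_0^* - \bfx_1^*$, handled exactly as in~\cite{ipe}; since selectivity lets the reduction know $\bfx_0^* - \bfx_1^*$ up front, it can choose the embedding to respect this subspace. All the rest is bookkeeping: checking that $\mcal{O}_\KGen$ replies are simulatable (trivially, since the reduction holds $\msk$), that the augmented $\PubKGen$ adds no power, and that the reduction's advantage differs from the adversary's by at most a negligible (in fact, in the single-shot DDH self-reduction version, exactly equal up to the statistical defect) amount. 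I would therefore keep the write-up short: state that correctness of the claim is immediate from the observation on $\PubKGen$ plus the theorem of~\cite{ipe}, and defer the detailed reduction to that reference.
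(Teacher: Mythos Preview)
Your proposal is correct and matches the paper's approach: the paper does not give a proof of this lemma at all but simply cites~\cite{ipe}, relying on the earlier remark (just before \Cref{def:ipfe-compliant}) that $\PubKGen$ is deterministic and publicly computable from $\mpk$, hence cannot affect security. Your write-up is more detailed than the paper's (which is literally just the citation), but the logic is identical.
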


Next, we show the IPFE scheme augmented with the above $\PubKGen$ algorithm satisfies $R_\DL$-compliance and $R_\DL$-robustness. 

\begin{lemma}\label{lemma:2}
The IPFE scheme in~\Cref{fig:ipfe-abdp} is $R_\DL$-compliant.
\end{lemma}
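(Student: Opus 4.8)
The plan is to simply unfold the definition of $R_\DL$-compliance (\Cref{def:ipfe-compliant}) and verify the required relation by a direct computation using the homomorphic structure of group exponentiation. Fix an arbitrary security parameter $\secparam$, public parameters $\pp = (\G, p, g) \gets \ipfe.\Gen(1^\secparam)$, a length $\ell$, and a master key pair $(\mpk, \msk) \gets \ipfe.\Setup(\pp, 1^\ell)$. By construction of $\ipfe.\Setup$ in~\Cref{fig:ipfe-abdp}, we have $\msk = \bfs = (s_1, \ldots, s_\ell) \in \Z_p^\ell$ and $\mpk = g^\bfs = (g^{s_1}, \ldots, g^{s_\ell})$.

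Now take any function vector $\bfy = (y_1, \ldots, y_\ell) \in \mcal{F}_{{\sf IP}, \ell, p, B}$, i.e.\ $\bfy \in \Z_p^\ell$. First I would compute $\sk_\bfy := \ipfe.\KGen(\msk, \bfy) = \bfs^T \bfy = \sum_{i \in [\ell]} s_i y_i \bmod p$. Next I would compute $\pk_\bfy := \ipfe.\PubKGen(\mpk, \bfy)$; parsing $\mpk = (k_1, \ldots, k_\ell)$ with $k_i = g^{s_i}$, the algorithm outputs $\pk_\bfy = \prod_{i \in [\ell]} k_i^{y_i} = \prod_{i \in [\ell]} (g^{s_i})^{y_i} = g^{\sum_{i \in [\ell]} s_i y_i}$, where the last equality uses that the map $x \mapsto g^x$ from $\Z_p$ to $\G$ is a group homomorphism (exponents add modulo $p$ since $\G$ has order $p$). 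Hence $\pk_\bfy = g^{\sk_\bfy}$, which is exactly the statement that $(\pk_\bfy, \sk_\bfy) \in R_\DL$ (\Cref{def:dl-relation}). Since $\pp$, $(\mpk,\msk)$, and $\bfy$ were arbitrary, the $\ipfe$ scheme of~\Cref{fig:ipfe-abdp} is $R_\DL$-compliant.

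There is no real obstacle here: the only point that merits a line of care is that all exponent arithmetic is consistently interpreted modulo $p = |\G|$, so that the value $\sk_\bfy = \sum_i s_i y_i \bmod p$ computed by $\KGen$ indeed coincides (as an exponent) with the exponent appearing after collapsing the product in $\PubKGen$. Everything else is a one-line unfolding of the definitions.
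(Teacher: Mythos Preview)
Your proposal is correct and matches the paper's proof essentially line for line: both unfold the definitions, write $\sk_\bfy = \bfs^T\bfy \in \Z_p$ and $\pk_\bfy = \prod_i (g^{s_i})^{y_i} = g^{\bfs^T\bfy}$, and conclude $(\pk_\bfy,\sk_\bfy)\in R_\DL$. Your extra remark about exponent arithmetic modulo $p$ is a harmless elaboration of the same computation.
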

\begin{proof}
For any $\secparam \in \N$,
for any $\pp \gets \ipe.\Gen(1^\secparam)$ as implemented in~\Cref{fig:ipfe-abdp},
for any $(\mpk, \msk) \gets \ipe.\Setup(\pp, 1^\ell)$ as implemented in~\Cref{fig:ipfe-abdp}, 
for any $\bfy \in \mcal{F}_{{\sf IP}, \ell, p, B}$,
let $\pk_\bfy :=  \PubKGen(\allowbreak \mpk, \bfy)$ as implemented in~\Cref{fig:ipfe-abdp}, and 
$\sk_\bfy := \KGen(\msk, \bfy)$ as implemented in~\Cref{fig:ipfe-abdp}. Suppose here $\msk=\bfs \in \Z_p^\ell$. Then, $\sk_\bfy = \bfs^T\bfy \in \Z_p$ and $\pk_\bfy = g^{\bfs^T\bfy} \in \G$. Thus, clearly, $(\pk_\bfy, \sk_\bfy) \in R_\DL$.
\end{proof}

\begin{lemma}\label{lemma:3}
The IPFE scheme in~\Cref{fig:ipfe-abdp} is $R_\DL$-robust.
\end{lemma}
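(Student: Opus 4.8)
The plan is to unwind the definition of $R_\DL$-robustness (\Cref{def:ipfe-robust}) for the concrete scheme in~\Cref{fig:ipfe-abdp}, where $R_\ipfe = R'_\ipfe = R_\DL$. Since here $R_\ipfe = R'_\ipfe$, any $\sk'_\bfy$ satisfying $(\pk_\bfy, \sk'_\bfy) \in R'_\ipfe = R_\DL$ must in fact satisfy $(\pk_\bfy, \sk'_\bfy) \in R_\DL$, i.e.\ $\pk_\bfy = g^{\sk'_\bfy}$. First I would fix $\secparam$, sample $\pp = (\G,p,g) \gets \ipfe.\Gen(1^\secparam)$, fix $\ell$, sample $(\mpk,\msk) \gets \ipfe.\Setup(\pp,1^\ell)$ with $\msk = \bfs \in \Z_p^\ell$ and $\mpk = g^\bfs$, and fix $\bfx \in \mcal{M} = \Z_p^\ell$ and $\bfy \in \mcal{F}_{{\sf IP},\ell,p,B}$. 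Let $\pk_\bfy := \PubKGen(\mpk,\bfy) = \prod_{i\in[\ell]} (g^{s_i})^{y_i} = g^{\bfs^T\bfy}$. By \Cref{lemma:2} (i.e.\ $R_\DL$-compliance, just proven), the honest key is $\sk_\bfy = \bfs^T\bfy \bmod p$, so $\pk_\bfy = g^{\sk_\bfy}$.

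Next I would invoke the hypothesis: take any $\sk'_\bfy$ with $(\pk_\bfy,\sk'_\bfy)\in R_\DL$, i.e.\ $g^{\sk'_\bfy} = \pk_\bfy = g^{\sk_\bfy}$. Since $g$ has order exactly $p$, this forces $\sk'_\bfy \equiv \sk_\bfy \pmod p$. Now let $\ct = (\ct_0,\ct_1) \gets \ipfe.\Enc(\mpk,\bfx)$, so $\ct_0 = g^r$ and $\ct_1 = g^\bfx \cdot \mpk^r$ (coordinatewise, $c_i = g^{x_i} (g^{s_i})^r$) for the random $r\in\Z_p$ chosen by $\Enc$. Running $\Dec(\sk'_\bfy,\ct)$ computes $d := \prod_{i\in[\ell]} c_i^{y_i} / \ct_0^{\sk'_\bfy}$. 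I would expand: $\prod_i c_i^{y_i} = g^{\bfx^T\bfy} \cdot g^{r\,\bfs^T\bfy}$, and $\ct_0^{\sk'_\bfy} = g^{r\,\sk'_\bfy} = g^{r\,\bfs^T\bfy}$ since $\sk'_\bfy \equiv \bfs^T\bfy \pmod p$. Hence $d = g^{\bfx^T\bfy}$, exactly what it would be under the honest $\sk_\bfy$. Then $v := \DLog_g(d) = \bfx^T\bfy$; this is well-defined and terminates because $\bfy \in \mcal{F}_{{\sf IP},\ell,p,B}$ guarantees $\bfx^T\bfy \in \{0,\ldots,B\}$ with $B$ polynomial, so the $\DLog$ enumeration runs in polynomial time. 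Therefore $v = f_\bfy(\bfx)$, and the displayed probability in \Cref{def:ipfe-robust} equals $1$.

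The main thing to get right is simply the observation that because $R'_\ipfe = R_\DL = R_\ipfe$ in this instantiation, the extended relation collapses to equality of discrete logs modulo $p$, which makes $\sk'_\bfy$ functionally indistinguishable from $\sk_\bfy$ in decryption; there is no genuine "gap" to worry about (unlike the lattice instantiation, where the footnote flags that $R'_\isis \supsetneq R_\isis$ and robustness is substantive). So the proof is really a one-line reduction to \Cref{def:ipfe-correctness} via $\sk'_\bfy \equiv \sk_\bfy \pmod p$, plus a remark that the output bound $B$ keeps $\DLog_g$ efficient. I expect no real obstacle; the only care needed is to state explicitly why equality of group elements implies equality of exponents mod $p$ (order of $g$) and why this suffices for the $\ct_0^{\sk'_\bfy}$ term to cancel correctly.
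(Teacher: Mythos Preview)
Your proposal is correct and takes essentially the same approach as the paper: observe that here $R_\ipfe = R'_\ipfe = R_\DL$, that $R_\DL$ has unique witnesses (so $\sk'_\bfy \equiv \sk_\bfy \pmod p$), and hence $R_\DL$-robustness reduces immediately to ordinary IPFE correctness. The paper states this in one line without unpacking the decryption algebra, but your more explicit version is the same argument.
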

\begin{proof}
Observe that in the IPFE scheme in~\Cref{fig:ipfe-abdp}, the base relation $R$ and the extended relation $R'$ are the same: $R = R'=R_\DL$. Further, $R_\DL$ is a unique witness relation. Thus, $R_\DL$-robustness follows trivially from the correctness of the IPFE scheme.  
\end{proof}

\subsection{$\fas$ Construction}

Instantiating $\fas$ construction in~\Cref{sec:fas-construction} with $\ipfe$ from~\Cref{fig:ipfe-abdp} and $\as$ from~\Cref{sec:schnorr-adaptor}, we obtain the following corollary.

\begin{corollary}
\label{corollary:fas-strongly-secure-prime-order-groups}
\label{thm:fas-strongly-secure-prime-order-groups}
Let $p$ be a $\secparam$-bit prime number and let $\ell$ be an integer.
Let $\mcal{M}=\Z_p^\ell$ be an additive group.
let $\mcal{F}_{\sf{IP}, \ell, p, B}$ be the function family for computing inner products of vectors in $\Z_p^\ell$ such that the output value is bounded by some polynomial $B \ll p$. 
Let $R$ be any NP relation 
with statement/witness pairs $(X, \bfx)$ such that $\bfx \in \Z_p^\ell$.
Suppose that 
\begin{itemize}[leftmargin=*]
\item
$R$ is $\mcal{F}_{{\sf IP}, \ell, p, B}$-hard (\Cref{def:f-hard-relation}),
\item
$\nizk$ is a secure NIZK argument system (\Cref{def:nizk}),
\item
$\as$ construction in~\Cref{sec:schnorr-adaptor} is an adaptor signature scheme w.r.t.\ digital signature scheme $\Sch$ and hard relation $R_\DL$ that satisfies weak pre-signature adaptability and witness extractability (\Cref{lemma:sch-as}),
\item
$\ipfe$ construction in~\Cref{fig:ipfe-abdp} is a selective, IND-secure IPFE scheme (\Cref{lemma:1}) for function family $\mcal{F}_{{\sf IP}, \ell+1}$ that is $R_\ipe$-compliant (\Cref{lemma:2}) and $R_\ipfe$-robust (\Cref{lemma:3}). 
\end{itemize} 
Then, the functional adaptor signature scheme 
w.r.t.\ Schnorr signature scheme $\Sch$, NP relation $R$, and family of inner product functions $\mcal{F}_{{\sf IP}, \ell, p, B}$  
constructed in~\Cref{sec:construction} is strongly-secure (\Cref{def:fas-strongly-secure}).
\end{corollary}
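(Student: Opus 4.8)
The plan is to deduce \Cref{corollary:fas-strongly-secure-prime-order-groups} directly from the generic theorem \Cref{thm:fas-strongly-secure} by checking that the hypotheses of that theorem are all met once $\ipfe$ and $\as$ are instantiated with the prime-order-group constructions from \Cref{fig:ipfe-abdp} and \Cref{sec:schnorr-adaptor}. Concretely, the generic theorem requires: (i) $\mcal{M}$ is an additive group and $R$ is $\mcal{F}_{{\sf IP},\ell}$-hard; (ii) $\nizk$ is a secure NIZK argument system; (iii) $\as$ is an adaptor signature scheme w.r.t.\ $\ds$ and hard relations $R_\ipfe \subseteq R'_\ipfe$ satisfying weak pre-signature adaptability and witness extractability; (iv) $\ipfe$ is a selective, IND-secure scheme for $\mcal{F}_{{\sf IP},\ell+1}$ that is $R_\ipfe$-compliant and $R'_\ipfe$-robust.

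First I would fix the instantiation choices: set $R_\ipfe = R'_\ipfe = R_\DL$ (the discrete-log relation over $\G$ of order $p$ with generator $g$), take $\as$ to be the Schnorr adaptor signature of \Cref{sec:schnorr-adaptor} w.r.t.\ $\Sch$ and $R_\DL$, and take $\ipfe$ to be the Abdalla et al.\ scheme of \Cref{fig:ipfe-abdp} augmented with the $\PubKGen$ algorithm there. Then I would verify the four bullets in turn. For (i): $\mcal{M} = \Z_p^\ell$ is clearly an additive group, and the $\mcal{F}_{{\sf IP},\ell,p,B}$-hardness of $R$ is a direct hypothesis of the corollary, so nothing more is needed — note $\mcal{F}_{{\sf IP},\ell,p,B} \subseteq \mcal{F}_{{\sf IP},\ell}$, which is the function family the generic theorem is phrased over, but since the corollary only claims security for the restricted family this inclusion is exactly what we want. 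For (ii): the NIZK hypothesis is assumed verbatim. For (iii): \Cref{lemma:sch-as} (restating~\cite{asig}) gives exactly witness extractability and weak pre-signature adaptability of the Schnorr adaptor signature, and since $R_\DL \subseteq R_\DL$ trivially the ``$R_\ipfe \subseteq R'_\ipfe$'' requirement holds. For (iv): \Cref{lemma:1} gives selective IND-security (under DDH), \Cref{lemma:2} gives $R_\DL$-compliance, and \Cref{lemma:3} gives $R_\DL$-robustness of the augmented IPFE; observing that the generic theorem needs an IPFE for $\mcal{F}_{{\sf IP},\ell+1}$ and the scheme of \Cref{fig:ipfe-abdp} is parameterized by arbitrary length, instantiating it at length $\ell+1$ suffices. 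Having checked all hypotheses, \Cref{thm:fas-strongly-secure} yields that the construction of \Cref{sec:construction}, so instantiated, is a strongly-secure FAS w.r.t.\ $\Sch$, the relation $R$, and $\mcal{F}_{{\sf IP},\ell,p,B}$, which is the claim.

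The one mild subtlety — and the part that needs a sentence of care rather than a routine citation — is matching the function-family indices. The generic theorem is stated with message/function vectors of length $\ell$ for $R$ and length $\ell+1$ for the internal IPFE (the extra coordinate carrying the simulator's leakage slot $f_\bfy(\bft)$); the corollary's $\ipfe$ building block, $\ipfe$ from \Cref{fig:ipfe-abdp}, must therefore be invoked with vector length $\ell+1$, and \Cref{lemma:1,lemma:2,lemma:3} must be applied with that length. Since those lemmas hold for every length this is immediate, but I would spell it out so the reader sees that the restricted output-bound $B \ll p$ on the original function family is preserved (the padding coordinate is set to $0$ in $\widetilde{\bfx}$, so $f_{\widetilde{\bfy}}(\widetilde{\bfx}) = f_\bfy(\bfx)$ stays in $\{0,\dots,B\}$, and the $\DLog_g$ step in decryption remains polynomial-time). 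Beyond that bookkeeping there is no real obstacle: the corollary is a pure plug-in of the instantiation lemmas into the generic theorem, so the proof is essentially ``all hypotheses of \Cref{thm:fas-strongly-secure} are discharged by \Cref{lemma:sch-as,lemma:1,lemma:2,lemma:3} and the standing assumptions, hence the conclusion follows.''
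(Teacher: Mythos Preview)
Your proposal is correct and follows essentially the same approach as the paper: the paper's proof of this corollary is a one-liner stating that it is immediate from \Cref{thm:fas-strongly-secure} together with \Cref{lemma:1,lemma:2,lemma:3} verifying the required $\ipfe$ properties, and (implicitly) \Cref{lemma:sch-as} for the $\as$ properties. Your write-up is in fact more careful than the paper's, spelling out the $R_\ipfe = R'_\ipfe = R_\DL$ choice and the $\ell$ vs.\ $\ell+1$ bookkeeping, but the underlying argument is identical.
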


The proof of the above corollary is immediate from the proof of the~\cref{thm:fas-strongly-secure} concerning the generic construction, and~\cref{lemma:1,lemma:2,lemma:3} that show that the $\ipfe$ scheme in~\Cref{fig:ipfe-abdp} has the required properties.


\ifhldiff
{\color{hldiffcolor}
\fi
\section{$\fas$ from lattices}
\label{sec:fas-from-lattices}

In this section, we provide instantiations of the FAS construction in~\Cref{sec:construction} from lattices.
For this, it suffices to instantiate the building blocks IPFE and AS from lattices, while ensuring the two are compatible with each other. We describe these next.

More specifically, we instantiate the $\as$ as the lattice-based adaptor signature scheme by Esgin et al.~\cite{pq-as}. This scheme was built using cyclotomic ring $\Ring = \Z[x]/(x^d+1)$ of degree $d=256$ under Module-SIS and Module-LWE assumptions. Here, we set the degree to be $d=1$, thus giving us the ring of integers $\mcal{R} = \Z$. Consequently, security follows from plain SIS and LWE assumptions. One can look at the resulting $\as$ to be w.r.t.\ a digital scheme $\Lyu$ that is somewhere in between Lyubashevsky's signature scheme~\cite{lyu12lattice-sig} and Dilithium~\cite{dilithium} instantiated with unstructured lattices. 
Further, $\as$ is w.r.t.\ hard relations $R_\isis$ and $R'_\isis$ such that $R_\isis \subset R'_\isis$.

Further, we show that the IND-secure IPFE scheme by Agrawal et al.~\cite[Section 4.2]{ipe01} satisfies $R_\isis$-compliance and $R'_\isis$-robustness when appropriately augmented with a $\PubKGen$ algorithm. 
We describe the instantiations of these two building blocks in the subsequent sections.

\subsection{Lattice-based Adaptor Signature}
\label{sec:as-lattices}
First, we recall the lattice-based signature scheme $\Lyu$, which can be seen as a variant of Lyubashevsky's signature scheme~\cite{lyu12lattice-sig} where the vector $\bfy$ sampled during signing comes from a uniform distribution over a small range instead of discrete gaussian distribution. 
It can also be seen as a variant of Dilithium signature scheme~\cite{dilithium} instantiated over unstructured lattices. 

Suppose $\pp := (n, m, p, q, k, \alpha) \gets \ipfe.\Gen(1^\secparam, p)$ as described in~\Cref{fig:ipfe-als}, where $p$ is a prime number, $n=\secparam$, $q=p^k$ and $k \in \Z$, $m \in \Z$, real $\alpha \in (0,1)$ are chosen as described in the ``Parameter choices'' in~\Cref{sec:ipfe-lwe}. 
Let $\mcal{H}: \{0,1\}^* \to \C$ be a family of hash functions (modelled as a random oracle), where $\C = \{\bfc \in \Z^h: ||\bfc||_1 = \kappa \wedge ||\bfc||_\infty = 1\}$. 
\nikhil{define $\kappa$}
Let $\gamma$ be the maximum absolute coefficient of the masking randomness $\bfy$ used in the $\Sign$ algorithm below. 
Then, the digital signature scheme $\Lyu$ is as in~\Cref{fig:lyu-sig}.
\ifacm
	\begin{figure}[t]
	\centering
	\captionsetup{justification=centering}
	\begin{pcvstack}[boxed, space=1em]

	\procedure[linenumbering, mode=text]{$\Lyu.\KGen(1^\secparam, \pp)$}{ 
	Sample $\bfA' \getr \Z_q^{n \times m}$, let $\bfA := (I_n || \bfA')^T \in \Z_q^{n \times (n + m)}$
	\\ 
	Sample $\bfR \getr \S_1^{(n + m) \times h}$, let $\bfT := \bfA \bfR \bmod{q} \in \Z_q^{n \times h}$
	\\ 
	\ret $\vk := (\bfA, \bfT), \sk := \bfR$
	}

	\procedure[linenumbering, mode=text]{$\Lyu.\Sign(\sk, \mu \in \{0,1\}^*)$}{ 
	Sample $\bfy \getr \S_\gamma^{n + m}$, let $\bfw := \bfA \bfy$
	\\ 
	Compute $\bfc := \H(\vk \ ||\  \bfw \ ||\  \mu) \in \Z^h$
	\\ 
	Compute $\bfz := \bfy + \bfR \bfc$
	\\ 
	If $||\bfz||_\infty > \gamma - \kappa$: restart 
	\\ 
	\ret $\sigma := (\bfc, \bfz)$
	}

	\procedure[linenumbering, mode=text]{$\Lyu.\Verify(\vk, \mu, \sigma)$}{ 
	Parse $\sigma = (\bfc, \bfz)$
	\\ 
	If $||\bfz||_\infty > \gamma - \kappa$: \ret $0$
	\\ 
	Compute $\bfw' := \bfA \bfz - \bfT \bfc $
	\\ 
	If $\bfc \neq \H(\vk \ ||\  \bfw' \ ||\ \mu)$: \ret $0$
	\\ 
	\ret $1$
	}

	\end{pcvstack}
	\caption{Digital signature scheme $\Lyu$ from \sis and \lwe}
	\label{fig:lyu-sig}
	\end{figure}
\else
	\begin{figure}[H]
	\centering
	\captionsetup{justification=centering}
	\begin{pchstack}[boxed, space=0.1em]

	\procedure[linenumbering, mode=text]{$\Lyu.\KGen(1^\secparam, \pp)$}{ 
	Sample $\bfA' \getr \Z_q^{n \times m}$
	\\ 
	Let $\bfA := (I_n || \bfA') \in \Z_q^{n \times (n + m)}$
	\\ 
	Sample $\bfR \getr \S_1^{(n + m) \times h}$ 
	\\ 
	Let $\bfT := \bfA \bfR \bmod{q} \in \Z_q^{n \times h}$
	\\ 
	\ret $\vk := (\bfA, \bfT), \sk := \bfR$
	}

	\procedure[linenumbering, mode=text]{$\Lyu.\Sign(\sk, \mu \in \{0,1\}^*)$}{ 
	Sample $\bfy \getr \S_\gamma^{n + m}$
	\\ 
	Let $\bfw := \bfA \bfy$
	\\ 
	Let $\bfc := \H(\vk \ ||\  \bfw \ ||\  \mu) \in \Z^h$
	\\ 
	Let $\bfz := \bfy + \bfR \bfc$
	\\ 
	If $||\bfz||_\infty > \gamma - \kappa$: restart 
	\\ 
	\ret $\sigma := (\bfc, \bfz)$
	}

	\procedure[linenumbering, mode=text]{$\Lyu.\Verify(\vk, \mu, \sigma)$}{ 
	Parse $\sigma = (\bfc, \bfz)$
	\\ 
	If $||\bfz||_\infty > \gamma - \kappa$: \ret $0$
	\\ 
	Let $\bfw' := \bfA \bfz - \bfT \bfc $
	\\ 
	If $\bfc \neq \H(\vk \ ||\  \bfw' \ ||\ \mu)$: \ret $0$
	\\ 
	\ret $1$
	}

	\end{pchstack}
	\caption{Digital signature scheme $\Lyu$ from \sis and \lwe}
	\label{fig:lyu-sig}
	\end{figure}
\fi
Next, we describe the Adaptor Signature scheme $\AS$ with respect to the digital signature scheme $\Lyu$ and hard relations $R_\isis := \isis_{n, n + m, q, \beta_0}$ and $R'_\isis:= \isis_{n, n + m, q, \beta_1}$ such that $R_\isis \subseteq R'_\isis$. 
Here, $\beta_0 := \ell p B_\tau$ and $\beta_1 := 2(\gamma - \kappa) - \beta_0$ such that $\beta_1 > \beta_0$ and $\gamma -\kappa - \beta_0 > 0$.
Let $L_\isis$ be the  language corresponding to $R_\isis$ and $R'_\isis$.
Notice that $L_\isis \subseteq \Z_q^n$. 
Hence, for $R_\isis$, statements are of the form $(\bfA, X) \in \Z_q^{n \times (n + m)} \times \Z_q^n$ (same for $'_\isis$) and witnesses are of the form $\bfx \in \Z^{n + m}$ such that $||\bfx||_\infty \leq \beta_0$ ($||\bfx||_\infty \leq \beta_1$ for $R'_\isis$). 
As $\bfA$ is also part of the verification key, henceforth we drop it from the statement. 
We next describe the $\AS$ construction in ~\Cref{fig:lyu-as}.
\ifacm
	\begin{figure}[t]
	\centering
	\captionsetup{justification=centering}
	\begin{pcvstack}[boxed, space=1em]

	\procedure[linenumbering, mode=text]{$\AS.\PreSign(\sk, \mu \in \{0,1\}^*, X \in L_\isis)$}{ 
	Sample $\bfy \getr \S_\gamma^{n + m}$, let $\bfw := \bfA \bfy$
	\\ 
	Compute $\bfc := \H(\vk \ ||\  \bfw + X \ ||\  \mu) \in \Z^h$
	\\ 
	Compute $\widetilde{\bfz} := \bfy + \bfR \bfc$
	\\ 
	If $||\widetilde{\bfz}||_\infty > \gamma - \kappa - \beta_0$: restart 
	\\ 
	\ret $\widetilde{\sigma} := (\bfc, \widetilde{\bfz})$
	}

	\procedure[linenumbering, mode=text]{$\AS.\PreVerify(\vk, \mu, X, \widetilde{\sigma})$}{ 
	Parse $\widetilde{\sigma} = (\bfc, \widetilde{\bfz})$
	\\ 
	If $||\widetilde{\bfz}||_\infty > \gamma - \kappa - \beta_0$: \ret $0$
	\\ 
	Compute $\bfw' := \bfA \widetilde{\bfz} - \bfT \bfc $
	\\ 
	If $\bfc \neq \H(\vk \ ||\  \bfw' \ ||\ \mu)$: \ret $0$
	\\ 
	\ret $1$
	}

	\procedure[linenumbering, mode=text]{$\AS.\Adapt(\vk, \mu, X, x, \widetilde{\sigma})$}{ 
	If $\PreVerify(\vk, \mu, X, \widetilde{\sigma}) = 0$: \ret $\bot$ 
	\\ 
	Parse $\widetilde{\sigma} = (\bfc, \widetilde{\bfz})$
	\\ 
	$\bfz := \widetilde{\bfz} + x$
	\\ 
	\ret $\sigma := (\bfc, \bfz)$
	}

	\procedure[linenumbering, mode=text]{$\AS.\Ext(\widetilde{\sigma}, \sigma, X)$}{ 
	Parse $\widetilde{\sigma} = (\bfc, \widetilde{\bfz}), \sigma = (\bfc, \bfz)$
	\\ 
	$x' := \bfz - \widetilde{\bfz} $
	\\ 
	If $ X \neq \bfA x' \bmod{q}$: \ret $\bot$
	\\
	\ret $x'$
	}

	\end{pcvstack}
	\caption{$\as$ w.r.t.\ $\Lyu$ signature scheme and hard relations $R_\isis$ and $R'_\isis$}
	\label{fig:lyu-as}
	\end{figure}
\else
	\begin{figure}[H]
	\centering
	\captionsetup{justification=centering}
	\begin{pchstack}[boxed, space=1em]
	\begin{pcvstack}
	\procedure[linenumbering, mode=text]{$\AS.\PreSign(\sk, \mu \in \{0,1\}^*, X \in L_\isis)$}{ 
	Sample $\bfy \getr \S_\gamma^{n + m}$, let $\bfw := \bfA \bfy$
	\\ 
	Let $\bfc := \H(\vk \ ||\  \bfw + X \ ||\  \mu) \in \Z^h$
	\\ 
	Let $\widetilde{\bfz} := \bfy + \bfR \bfc$
	\\ 
	If $||\widetilde{\bfz}||_\infty > \gamma - \kappa - \beta_0$: restart 
	\\ 
	\ret $\widetilde{\sigma} := (\bfc, \widetilde{\bfz})$
	}

	\procedure[linenumbering, mode=text]{$\AS.\Adapt(\vk, \mu, X, x, \widetilde{\sigma})$}{ 
	If $\PreVerify(\vk, \mu, X, \widetilde{\sigma}) = 0$: \ret $\bot$ 
	\\ 
	Parse $\widetilde{\sigma} = (\bfc, \widetilde{\bfz})$
	\\ 
	$\bfz := \widetilde{\bfz} + x$
	\\ 
	\ret $\sigma := (\bfc, \bfz)$
	}

	\end{pcvstack}
	\begin{pcvstack}

	\procedure[linenumbering, mode=text]{$\AS.\PreVerify(\vk, \mu, X, \widetilde{\sigma})$}{ 
	Parse $\widetilde{\sigma} = (\bfc, \widetilde{\bfz})$
	\\ 
	If $||\widetilde{\bfz}||_\infty > \gamma - \kappa - \beta_0$: \ret $0$
	\\ 
	Let $\bfw' := \bfA \widetilde{\bfz} - \bfT \bfc $
	\\ 
	If $\bfc \neq \H(\vk \ ||\  \bfw' \ ||\ \mu)$: \ret $0$
	\\ 
	\ret $1$
	}

	\procedure[linenumbering, mode=text]{$\AS.\Ext(\widetilde{\sigma}, \sigma, X)$}{ 
	Parse $\widetilde{\sigma} = (\bfc, \widetilde{\bfz}), \sigma = (\bfc, \bfz)$
	\\ 
	$x' := \bfz - \widetilde{\bfz} $
	\\ 
	If $ X \neq \bfA x' \bmod{q}$: \ret $\bot$
	\\
	\ret $x'$
	}

	\end{pcvstack}
	\end{pchstack}
	\caption{$\as$ w.r.t.\ $\Lyu$ signature scheme and hard relations $R_\isis$ and $R'_\isis$}
	\label{fig:lyu-as}
	\end{figure}
\fi
\noindent\textbf{Parameter choices and comparison.}
Our construction can be seen as similar to~\cite{pq-as} but with degree $1$ instead of $256$.
\nikhil{how does this affect other parameters? CHECK!!} 
Our construction is different from~\cite{pq-as} in the following aspects though:~\cite{pq-as} chooses $\beta_0 = 1$, whereas we choose $\beta_0 = \ell p B_\tau$, where parameters $\ell, p, B_\tau$ are chosen based on the specification in the IPFE construction in~\Cref{fig:ipfe-als}. Further,~\cite{pq-as} sets $\beta_1 = 2(\gamma-\kappa)$ whereas we set it a stricter bound on it $\beta_1 = 2(\gamma-\kappa)-\beta_0$ as it is sufficient for correctness and security. Further, note that for the choice of $\beta_1$, the two constraints 
$\beta_1 > \beta_0$ and $\gamma - \kappa - \beta_0 > 0$ are equivalent. Lastly, we note that just like in~\cite{pq-as}, we can set $\gamma = 2 \kappa$ to ensure that the average number of restarts in $\Sign$ and $\PreSign$ is about $e < 3$, where $\kappa$ is the $\ell_1$ norm of values in the range set $\C$ of the hash family $\mcal{H}$.
Lastly, in~\cite{pq-as}, the hash outputs belong to some polynomial ring $\Ring_q$, whereas we choose the range of hash outputs to be $\Z_q^h$ for some integer $h$.\nikhil{how does $h$ affect any other parameters if at all. CHECK!!}

\begin{lemma}[\cite{pq-as}]
\label{lemma:pq-as}
The adaptor signature scheme in~\Cref{fig:lyu-as} satisfies witness extractability (\Cref{def:as-wit-ext}) and weak pre-signature adaptability (\Cref{def:as-pre-sig-adaptability}).
\nikhil{add assumptions}
\end{lemma}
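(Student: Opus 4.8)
\textbf{Proof proposal for Lemma~\ref{lemma:pq-as}.}

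The plan is to follow the template of Esgin et al.~\cite{pq-as} almost verbatim, adapting only the concrete parameter bounds where our instantiation differs ($\beta_0 = \ell p B_\tau$ instead of $1$, $\beta_1 = 2(\gamma-\kappa)-\beta_0$ instead of $2(\gamma-\kappa)$, degree $d=1$, and hash range $\Z_q^h$). Both properties are unconditional (they do not rely on \sis or \lwe hardness — those are only needed for unforgeability, which we do not claim here), so the proof is a sequence of norm-bound verifications and a syntactic check.

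First I would establish \emph{weak pre-signature adaptability}. Take any $(X,\bfx)\in R_\isis$, so $X = \bfA\bfx \bmod q$ and $\|\bfx\|_\infty \le \beta_0$, and any pre-signature $\widetilde{\sigma} = (\bfc,\widetilde{\bfz})$ with $\PreVerify(\vk,\mu,X,\widetilde{\sigma})=1$. By the $\PreVerify$ check, $\|\widetilde{\bfz}\|_\infty \le \gamma-\kappa-\beta_0$ and $\bfc = \H(\vk\,\|\,\bfA\widetilde{\bfz}-\bfT\bfc\,\|\,\mu)$. The adapted signature is $\sigma=(\bfc,\bfz)$ with $\bfz = \widetilde{\bfz}+\bfx$. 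I need to verify the two checks in $\Verify$: (i) $\|\bfz\|_\infty \le \|\widetilde{\bfz}\|_\infty + \|\bfx\|_\infty \le (\gamma-\kappa-\beta_0)+\beta_0 = \gamma-\kappa$, so the norm check passes; (ii) $\bfw' := \bfA\bfz - \bfT\bfc = \bfA\widetilde{\bfz} + \bfA\bfx - \bfT\bfc = (\bfA\widetilde{\bfz}-\bfT\bfc) + X$, and by construction of $\PreSign$ the pre-signature was formed so that $\bfc = \H(\vk\,\|\,\bfw+X\,\|\,\mu)$ where $\bfw = \bfA\bfy$ and $\bfA\widetilde{\bfz}-\bfT\bfc = \bfA\bfy = \bfw$; hence $\bfw' = \bfw+X$ and the hash check passes. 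So $\Verify(\vk,\mu,\sigma)=1$ with probability $1$. (The one subtlety is that I should argue this for \emph{any} $\widetilde{\sigma}$ passing $\PreVerify$, not just honestly generated ones — but the $\PreVerify$ equation already pins down $\bfA\widetilde{\bfz}-\bfT\bfc$ as the hash preimage, which is all that is used.)

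Next I would establish \emph{witness extractability}, reducing the game $\aWitExt_{\A,\AS}$ (Figure~\ref{fig:asig-wit-ext-exp}) to the hardness of $R'_\isis = \isis_{n,n+m,q,\beta_1}$. The reduction embeds an $\isis$ challenge $(\bfA,\bfb)$ as the statement $X^* = \bfb$ given to $\A$ (after programming $\vk$ to contain $\bfA$ and a fresh $\bfT = \bfA\bfR$), answers $\mathcal{O}_S$ and $\mathcal{O}_{pS}$ queries using the honestly sampled $\bfR$, and generates the challenge pre-signature $\widetilde{\sigma}^* = (\bfc,\widetilde{\bfz}^*)$ honestly via $\PreSign$. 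When $\A$ outputs a forgery $\sigma^* = (\bfc', \bfz^*)$ with $\Verify(\vk,m^*,\sigma^*)=1$ and $m^*\notin\mcal{Q}$, I first argue via a standard forking/random-oracle argument (as in~\cite{pq-as}) that $\bfc' = \bfc$ except with negligible probability — since $m^*$ was never queried to a signing oracle, the only programmed hash value at the relevant point is the one from $\widetilde{\sigma}^*$, and a verifying signature on $m^*$ must reuse that hash query, forcing $\bfw' = \bfw+X^*$ and hence $\bfc'=\bfc$. Then $\Ext$ computes $x' = \bfz^* - \widetilde{\bfz}^*$ and we have $\bfA x' = \bfA\bfz^* - \bfA\widetilde{\bfz}^* = (\bfw+X^*) + \bfT\bfc - (\bfw + \bfT\bfc) = X^*$, so $\bfA x' = X^* = \bfb \bmod q$; moreover $\|x'\|_\infty \le \|\bfz^*\|_\infty + \|\widetilde{\bfz}^*\|_\infty \le (\gamma-\kappa) + (\gamma-\kappa-\beta_0) = 2(\gamma-\kappa)-\beta_0 = \beta_1$, and $x'\neq \bfz^* - \widetilde{\bfz}^*$ is nonzero because $X^*$ is nonzero with overwhelming probability. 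Hence $(X^*,x')\in R'_\isis$ and $\Ext$ does \emph{not} output $\bot$, so whenever $\A$ would win (i.e.\ $(X^*,x')\notin R'_\isis$) we instead have a contradiction — making $\Pr[\aWitExt = 1]$ negligible. (Strictly, the reduction also breaks $R'_\isis$-hardness: if $\A$ produced a verifying $\sigma^*$ but $\Ext$ returned a valid witness, the reduction forwards $x'$ as its own $\isis$ solution; the two viewpoints coincide.)

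The main obstacle is the random-oracle bookkeeping in the witness-extractability step: carefully arguing that the forged signature must collide on the hash query used for $\widetilde{\sigma}^*$, accounting for the $\beta_0$-shifted norm bound in $\PreSign$ (our $\beta_0$ is no longer $1$, so I must recheck that $\gamma-\kappa-\beta_0>0$ still holds with the chosen parameters, which is exactly the constraint stated in Section~\ref{sec:as-lattices}) and that the rejection-sampling loop does not skew the distribution of programmed hash values. Everything else is a direct transcription of the bounds from~\cite{pq-as} with the substitutions noted above; the degree-$1$ specialization changes none of the algebra since the ring becomes $\Z$ and all norm inequalities are over integer vectors.
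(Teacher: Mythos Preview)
Your treatment of weak pre-signature adaptability is fine and matches the standard argument from~\cite{pq-as}: the norm bound on $\widetilde{\bfz}$ plus $\|\bfx\|_\infty\le\beta_0$ gives $\|\bfz\|_\infty\le\gamma-\kappa$, and the hash equation transfers because $\bfA\bfz-\bfT\bfc=(\bfA\widetilde{\bfz}-\bfT\bfc)+X$. (You correctly note that $\PreVerify$ pins down the hash preimage for any passing $\widetilde{\sigma}$, not just honest ones.)

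Your witness-extractability argument has a genuine gap. First, you contradict yourself: you open by saying both properties are unconditional, then reduce witness extractability to $R'_\isis$ hardness. More seriously, the reduction you sketch cannot work. In the game $\aWitExt_{\A,\AS}$ of \Cref{fig:asig-wit-ext-exp}, the \emph{adversary} chooses $(m,X)$, not the challenger; you therefore cannot ``embed an $\isis$ challenge $(\bfA,\bfb)$ as the statement $X^*=\bfb$ given to $\A$.'' The adversary may well know a witness for its chosen $X$, so an $\isis$ instance planted there buys you nothing.

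The actual argument in~\cite{pq-as} (which the paper simply cites without reproducing) reduces to the \emph{strong unforgeability} of the underlying signature scheme $\Lyu$ (equivalently, to $\sis$ via the signature's security proof). The case split is on the first component of the forged signature $\sigma^*=(\bfc^*,\bfz^*)$ versus the challenge pre-signature $\widetilde{\sigma}=(\bfc,\widetilde{\bfz})$. If $\bfc^*=\bfc$, then the two hash equations force $\bfA(\bfz^*-\widetilde{\bfz})=X$ (barring a random-oracle collision) and the norm bound $\|\bfz^*-\widetilde{\bfz}\|_\infty\le(\gamma-\kappa)+(\gamma-\kappa-\beta_0)=\beta_1$ gives $(X,x')\in R'_\isis$; extraction succeeds and $\A$ loses. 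If $\bfc^*\neq\bfc$, then $\sigma^*$ is a valid signature on $m^*\notin\mcal{Q}$ that is distinct from anything the signing oracle produced, i.e., a strong forgery against $\Lyu$. Your random-oracle bookkeeping remark (``the only programmed hash value at the relevant point is the one from $\widetilde{\sigma}^*$'') is too loose: the adversary has direct random-oracle access and can query $\H(\vk\|\cdot\|m^*)$ freely, so ruling out $\bfc^*\neq\bfc$ genuinely requires the computational hardness underlying $\Lyu$'s unforgeability.
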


\subsection{IPFE from Lattices}
\label{sec:ipfe-lwe}
We first recall the IPFE scheme by Agrawal et al.~\cite{ipe01}. Then, we show that it satisfies the additional compliance and robustness properties needed by our FAS construction.

Suppose $p$ is a 
prime number and suppose we want to compute inner products of vectors of length $\ell$.
Let the set of messages be $\mcal{M} = \Z_p^\ell$ and 
the function class be $\mcal{F}_{{\sf IP}, \ell, p} = \{ f_\bfy : \bfy \in \Z_p^\ell\}$, where $f_\bfy: \Z_p^\ell \to \Z_p$ is defined as $f_\bfy(\bfx) = \bfx^T \bfy \bmod{p}$.
Our IPFE scheme augmented with $\PubKGen$ algorithm is as in~\Cref{fig:ipfe-als}.

\begin{remark}
\label{remark:linear-independence}
\ifcameraready
We note that the original construction requires $\KGen$ to be stateful to support key generation queries that are linearly dependent modulo $p$, which is problematic since $\PubKGen$ would also need to be stateful, undermining the algorithm's public nature. To address this, we assume that all $\KGen$ queries to $\ipfe$ are linearly independent. While this restriction is acceptable for a single buyer, it limits support for multiple buyers. 
\ifcameraready
In the full version,
\else
In~\Cref{sec:fas-from-lattices-modified},
\fi 
we show how to assign unique IDs to each buyer and make the underlying $\ipfe$ key generation queries linearly independent.
\else
We note that the original construction needs $\KGen$ to be stateful in order to support key generation queries that are linearly dependent modulo $p$. This is problematic for us as the corresponding $\PubKGen$ will also have to be stateful then, but that defeats the purpose of this algorithm being public. So, in order to remedy the situation, we assume that all $\KGen$ queries to $\ipfe$ are going to be linearly independent. 
Naively instantiating our $\fas$ construction (\Cref{fig:fas-construction}) will result in $\fas$ only supporting linearly independent functions, which is acceptable if it involves a single buyer but in the most general case we would want to support multiple buyers who are agnostic of each other. Specifically, a buyer may engage in $\fas$ protocol to learn a function evaluation of the secret that is linearly dependent on something learnt by other some other buyers. To handle such a scenario, we show 
\ifcameraready
in the full version
\else 
in~\Cref{sec:fas-from-lattices-modified}
\fi
how to assign unique IDs to each buyer and make the underlying $\ipfe$ key generation queries linearly independent.
\fi
\end{remark}
\ifacm
	\begin{figure}[t]
	\ifhldiff
	{\color{hldiffcolor}
	\fi
	\centering
	\captionsetup{justification=centering}
	\begin{pcvstack}[boxed]

	\procedure[linenumbering, mode=text]{$\ipfe.\Gen(1^\secparam, p)$}{
	Let $n=\secparam$, 
	integers $m, k$, real $\alpha \in (0,1)$ as explained below
	\\ 
	Let $q = p^k$,
	\ret $\pp = (n, m, p, q, k, \alpha)$
	}

	\procedure[linenumbering, mode=text]{$\ipfe.\Setup(\pp, 1^\ell)$}{
	Sample $\bfA \getr \Z_q^{m \times n}$, $\bfZ \gets \tau$ where
	\pcskipln \\ 
	distribution $\tau$ over $\Z^{\ell \times m}$ is as explained below
	\\ 
	Let $\bfU := \bfZ \bfA  \in \Z_q^{\ell \times n}$,
	\ret $\mpk := (\bfA, \bfU), \msk := \bfZ$
	}

	\procedure[linenumbering, mode=text]{$\ipfe.\Enc(\mpk, \bfx \in \Z_p^\ell)$}{
	Sample $\bfs \getr \Z_q^n$, $\bfe_0 \gets D_{\Z, \alpha q}^m$, $\bfe_1 \gets D_{\Z, \alpha q}^\ell$ 
	\\ 
	Let $\ct_0 := \bfA \bfs + \bfe_0 \in \Z_q^m$,
	$\ct_1 := \bfU \bfs + \bfe_1 + p^{k-1} \bfx \in \Z_q^\ell$
	\\ 
	\ret $\ct := (\ct_0, \ct_1)$
	}

	\begin{pchstack}
	\procedure[linenumbering, mode=text]{$\ipfe.\KGen(\msk, \bfy \in \Z_p^\ell)$}{
	\ret $\sk_\bfy := \bfZ^T \bfy \in \Z^m$
	}

	\procedure[linenumbering, mode=text]{$\ipfe.\PubKGen(\mpk, \bfy \in \Z_p^\ell)$}{
	\ret $\pk_\bfy := \bfU^T \bfy \in \Z_q^n$
	}
	\end{pchstack}

	\procedure[linenumbering, mode=text]{$\ipfe.\Dec(\sk_\bfy, \ct)$}{
	Let $d := \ct_1^T\bfy - \ct_0^T \sk_\bfy \bmod{q}$
	\\
	\ret $v \in \Z_p$ that minimizes $\mid p^{k-1} v - d \mid$
	}

	\end{pcvstack}
	\caption{
	\ifhldiff
	{\color{hldiffcolor}
	\fi
	Agrawal et al.~\cite{ipe01} IPFE scheme augmented with $\PubKGen$ algorithm. Note that $\KGen$ here is stateless.
	\ifhldiff
	}
	\fi
	}
	\label{fig:ipfe-als}
	\ifhldiff
	}
	\fi
	\end{figure}
\else 
	\begin{figure}[H]
	\ifhldiff
	{\color{hldiffcolor}
	\fi
	\centering
	\captionsetup{justification=centering}
	\begin{pchstack}[boxed, space=0.1em]
	\begin{pcvstack}

	\procedure[linenumbering, mode=text]{$\ipfe.\Gen(1^\secparam, p)$}{
	Let $n=\secparam$, 
	integers $m, k$, real $\alpha \in (0,1)$ as defined below
	\\ 
	Let $q = p^k$,
	\ret $\pp = (n, m, p, q, k, \alpha)$
	}

	\procedure[linenumbering, mode=text]{$\ipfe.\Setup(\pp, 1^\ell)$}{
	Sample $\bfA \getr \Z_q^{m \times n}$, $\bfZ \gets \tau$ where
	\pcskipln \\ 
	distribution $\tau$ over $\Z^{\ell \times m}$ is as defined below
	\\ 
	Let $\bfU := \bfZ \bfA  \in \Z_q^{\ell \times n}$,
	\ret $\mpk := (\bfA, \bfU), \msk := \bfZ$
	}

	\procedure[linenumbering, mode=text]{$\ipfe.\Enc(\mpk, \bfx \in \Z_p^\ell)$}{
	Sample $\bfs \getr \Z_q^n$, $\bfe_0 \gets D_{\Z, \alpha q}^m$, $\bfe_1 \gets D_{\Z, \alpha q}^\ell$ 
	\\ 
	Let $\ct_0 := \bfA \bfs + \bfe_0 \in \Z_q^m$,
	$\ct_1 := \bfU \bfs + \bfe_1 + p^{k-1} \bfx \in \Z_q^\ell$
	\\ 
	\ret $\ct := (\ct_0, \ct_1)$
	}

	\end{pcvstack}
	\begin{pcvstack}

	\procedure[linenumbering, mode=text]{$\ipfe.\KGen(\msk, \bfy \in \Z_p^\ell)$}{
	\ret $\sk_\bfy := \bfZ^T \bfy \in \Z^m$
	}

	\procedure[linenumbering, mode=text]{$\ipfe.\PubKGen(\mpk, \bfy \in \Z_p^\ell)$}{
	\ret $\pk_\bfy := \bfU^T \bfy \in \Z_q^n$
	}

	\procedure[linenumbering, mode=text]{$\ipfe.\Dec(\sk_\bfy, \ct)$}{
	Let $d := \ct_1^T\bfy - \ct_0^T \sk_\bfy \bmod{q}$
	\\
	\ret $v \in \Z_p$ that minimizes $\mid p^{k-1} v - d \mid$
	}

	\end{pcvstack}
	\end{pchstack}
	\caption{
	\ifhldiff
	{\color{hldiffcolor}
	\fi
	Agrawal et al.~\cite{ipe01} IPFE scheme augmented with $\PubKGen$ algorithm. Note that $\KGen$ here is stateless.
	\ifhldiff
	}
	\fi
	}
	\label{fig:ipfe-als}
	\ifhldiff
	}
	\fi
	\end{figure}
\fi

\noindent\textbf{Parameter choices.}
Let $B_\tau$ be such that with probability at most $n^{-\omega(1)}$, each row of sample from $\tau$ has $\ell_2$-norm at least $B_\tau$. Then, the parameter constraints for correctness and security are as follows.
\begin{itemize}[leftmargin=*]
\item 
$\alpha^{-1} \geq \ell^2 p^3 B_\tau \omega(\sqrt{\log n})$, 
$q \geq \alpha^{-1} \omega(\sqrt{\log n})$,
\item 
$ \tau = D^{\ell \times m/2}_{\Z,\sigma_1} \times (D_{\Z^{m/2},\sigma_2, \delta_1} \times \ldots \times D_{\Z^{m/2},\sigma_2, \delta_\ell})$, where $\delta_i \in Z^\ell$ denotes the $i$-th canonical vector, and the
standard deviation parameters satisfy $\sigma_1 = \Theta(\sqrt{n \log m} \max(\sqrt{m}, K'))$ and $\sigma_2 = \Theta(n^{7/2} m{1/2} \max(m, K'^2) \log^{5/2} m)$, with $K' = (\sqrt{\ell}p)^\ell$.
\end{itemize}
Further, $R'_\isis$-robustness (~\Cref{lemma:ipfe-als-robust}) will require the constraint $\alpha^{-1} \geq 4p \omega(\sqrt{\log n}) (\ell p + m \beta_1)$, where $\beta_1$ is as chosen below.

\begin{lemma}[~\cite{ipe01}]
\label{lemma:ipfe-als-secure}
Suppose $\ell \leq n^{O(1)}$, $m \geq 4n \log_2(q)$ and $q, \alpha, \tau$ are as described above.
Suppose $\mheLWE_{q,\alpha,m,\ell,\tau}$ 
\ifcameraready
assumption 
\else
assumption (\Cref{def:mhe-lwe}) 
\fi
holds.
Then, the IPFE scheme in~\Cref{fig:ipfe-als} is selective, 
\ifcameraready
IND-secure.
\else
IND-secure (\Cref{def:ipfe-sel-ind-sec}).
\fi
\end{lemma}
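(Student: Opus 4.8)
The plan is to obtain \Cref{lemma:ipfe-als-secure} almost entirely from the security theorem of Agrawal et al.~\cite{ipe01}, by observing that the scheme in \Cref{fig:ipfe-als} differs from their construction in only two cosmetic ways: (i) it is presented with a \emph{stateless} functional key generation algorithm, and (ii) it is augmented with the deterministic public algorithm $\PubKGen$. Since neither modification affects \lwe-style security, the proof reduces to invoking~\cite{ipe01} and discharging these two points; I would not reprove the underlying $\mheLWE \Rightarrow$ IND-security reduction.

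First I would handle the $\PubKGen$ augmentation. Recall from the remark following \Cref{def:ipfe-sel-ind-sec} that $\PubKGen$ is a deterministic function of $\mpk$ alone, computable by anyone holding $\mpk$, and that it produces no decryption-enabling material. Hence, given any \ppt admissible adversary $\A$ against the selective IND-security of the augmented scheme, I would construct a reduction $\B$ against the base scheme of~\cite{ipe01} that forwards $\mpk$ and the challenge ciphertext to $\A$, relays $\A$'s $\mcal{O}_\KGen$ queries to its own key-generation oracle (this is sound because $\KGen$ in \Cref{fig:ipfe-als} is literally the Agrawal et al.\ map $\bfy \mapsto \bfZ^T\bfy$), and outputs $\A$'s guess; whenever $\A$ internally needs $\PubKGen(\mpk,\cdot)$ it evaluates it itself from $\mpk$. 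The two experiments are perfectly identical from $\A$'s view, so $\A$'s advantage equals $\B$'s advantage against the base scheme.

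Second I would address statelessness. As noted in \Cref{remark:linear-independence}, the only role of state in the original construction is to answer \emph{linearly dependent} function queries consistently; under the standing assumption that all $\mcal{O}_\KGen$ queries are linearly independent over $\Z_p$, the stateless map $\bfy \mapsto \bfZ^T\bfy$ and the stateful original coincide on every query, so the analysis of~\cite{ipe01} carries over verbatim. It then remains only to check that the parameter hypotheses of the lemma --- $\ell \le n^{O(1)}$, $m \ge 4n\log_2 q$, $q = p^k$, and the displayed constraints on $\alpha$ and $\tau$ --- are exactly those under which~\cite{ipe01} proves IND-security from $\mheLWE_{q,\alpha,m,\ell,\tau}$, which in turn reduces to plain \lwe via the lemma quoted earlier from~\cite[Theorem 4]{ipe01}. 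Combining these observations, $\A$'s advantage is $\negl(\secparam)$, establishing the lemma.

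The genuinely technical content --- the reduction from $\mheLWE$ to IND-security of the base scheme --- lives inside~\cite{ipe01} and is invoked as a black box. If one were to inline it, the main obstacle would be simulating the requested functional secret keys $\bfZ^T\bfy$ jointly with the noise term of the challenge ciphertext using the hints $\bfZ\bfe$ from the $\mheLWE$ instance, so that the simulated ciphertext is distributed identically for the two challenge plaintexts; this is precisely where admissibility, $f_\bfy(\bfx^*_0) = f_\bfy(\bfx^*_1)$, is used. Since we rely on the cited theorem, the proof here is correspondingly short.
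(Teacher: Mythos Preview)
Your proposal is correct and aligned with the paper's treatment: the paper provides no proof at all for this lemma, simply citing \cite{ipe01} in the lemma header and moving on. Your write-up is in fact more careful than the paper, since you explicitly discharge the two cosmetic differences (the $\PubKGen$ augmentation and the stateless $\KGen$ under the linear-independence assumption) before invoking the cited result; the paper leaves these points implicit in the surrounding remarks.
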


Next, we show the IPFE scheme augmented with the above $\PubKGen$ algorithm satisfies $R_\isis$-compliance and $R'_\isis$-robustness, 
where $R_\isis = \isis_{n, m, q, \beta_0}$ and $R'_\isis = \isis_{n, m, q, \beta_1}$, where $\beta_0 = \ell p B_\tau$ and $\beta_1 = 2(\gamma - \kappa) - \beta_0$ s.t. $\beta_1 > \beta_0$ and $\gamma - \kappa - \beta_0 > 0$. 

\begin{lemma}
\label{lemma:ipfe-als-compliant}
The IPFE scheme in~\Cref{fig:ipfe-als} is $R_\isis$-compliant, 
where $R_\isis = \isis_{n, m, q, \beta_0}$ and $\beta_0 = \ell p B_\tau$.
\end{lemma}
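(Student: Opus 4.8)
The plan is to verify the two conditions of $R_\isis$-compliance (\Cref{def:ipfe-compliant}) by direct inspection of the scheme in \Cref{fig:ipfe-als}, using only the linearity of $\ipfe.\KGen$ and $\ipfe.\PubKGen$ together with the tail bound on the secret-key distribution $\tau$. Fix $\secparam$, $\pp \gets \ipfe.\Gen(1^\secparam, p)$ and $(\mpk, \msk) \gets \ipfe.\Setup(\pp, 1^\ell)$, so that $\msk = \bfZ$ with $\bfZ \gets \tau$ over $\Z^{\ell \times m}$, and $\mpk = (\bfA, \bfU)$ with $\bfA \getr \Z_q^{m \times n}$ and $\bfU = \bfZ\bfA \bmod{q}$. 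For any $\bfy \in \Z_p^\ell$ we have $\sk_\bfy = \bfZ^T\bfy \in \Z^m$ and $\pk_\bfy = \bfU^T\bfy \in \Z_q^n$. First I would make explicit which $\isis$ instance compliance is stated against: the relation $R_\isis = \isis_{n,m,q,\beta_0}$ here is taken with respect to the public matrix $\bfA^T \in \Z_q^{n\times m}$, which is recoverable from $\mpk$ and which, in the $\fas$ construction, is precisely the matrix carried in the adaptor verification key of \Cref{sec:as-lattices}. Thus what must be shown is that the pair $(\pk_\bfy,\sk_\bfy)$ satisfies $\bfA^T\sk_\bfy = \pk_\bfy \bmod{q}$ and $||\sk_\bfy||_\infty \le \beta_0$.

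The linear condition is immediate from $\bfU = \bfZ\bfA$:
\[
\bfA^T\sk_\bfy = \bfA^T\bfZ^T\bfy = (\bfZ\bfA)^T\bfy = \bfU^T\bfy = \pk_\bfy \bmod{q}.
\]
For the norm bound, write $\bfZ$ by its rows $\bfz_1,\dots,\bfz_\ell \in \Z^m$, so $\sk_\bfy = \sum_{i\in[\ell]} y_i\bfz_i$ with each $|y_i| \le p$; by the defining property of $B_\tau$, every row of a sample from $\tau$ has $\ell_2$-norm --- hence $\ell_\infty$-norm --- at most $B_\tau$ except with probability $n^{-\omega(1)}$, so
\[
||\sk_\bfy||_\infty \le \sum_{i\in[\ell]} |y_i|\,||\bfz_i||_\infty \le \ell p B_\tau = \beta_0.
\]
Combining the two displays yields $(\pk_\bfy,\sk_\bfy) \in R_\isis$, which is exactly compliance.

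The part needing the most care is not the arithmetic, which is routine, but pinning down the precise $\isis$ instance in play and checking that it is consistent with both \Cref{def:isis} and the $\fas$/adaptor-signature instantiation: the witness length is the $\ipfe$ parameter $m$, the public matrix is $\bfA^T$ extracted from $\mpk$, and the bound $\beta_0 = \ell p B_\tau$ is forced by the $\tau$-tail bound together with $\bfy \in \Z_p^\ell$. Two smaller points also need wording in the write-up: (i) the row-norm bound on $\bfZ$ holds only with overwhelming probability, so one should either condition $\ipfe.\Setup$ on that event or phrase compliance as holding with probability $1 - \negl(\secparam)$; and (ii) if one insists on $\sk_\bfy$ being a \emph{nonzero} $\isis$ witness, observe that for $\bfy \ne 0$ the integer vector $\bfZ^T\bfy$ is zero only with negligible probability over $\bfZ \gets \tau$. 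The fact that $\bfA^T$ is uniform rather than in Hermite normal form is immaterial for the compliance statement and, where hardness is invoked elsewhere, is handled by the standard equivalence between $\sis$ and $\hnfsis$.
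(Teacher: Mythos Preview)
Your proposal is correct and follows essentially the same approach as the paper: verify the linear relation $\bfA^T\sk_\bfy = \bfU^T\bfy = \pk_\bfy \bmod q$ directly from $\bfU=\bfZ\bfA$, and bound $\|\sk_\bfy\|_\infty$ by $\ell p B_\tau$ using the row-norm bound on $\bfZ$ from the distribution $\tau$. The paper's proof is terser (it writes $\|\sk_\bfy\|_\infty \le \ell\cdot\|\bfZ\|_\infty\cdot\|\bfy\|_\infty \le \ell\cdot\|\bfZ\|_2\cdot\|\bfy\|_\infty \le \ell B_\tau p$ in one line) and does not flag the points you raise about the tail bound holding only with overwhelming probability, the nonzero-witness condition, or the HNF-versus-uniform form of $\bfA^T$; your extra care on these is welcome but does not change the argument.
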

\begin{proof}
For any $\secparam \in \N$,
for any $\pp \gets \ipe.\Gen(1^\secparam)$ as implemented in~\Cref{fig:ipfe-als},
for any $(\mpk, \msk) \gets \ipe.\Setup(\pp, 1^\ell)$ as implemented in~\Cref{fig:ipfe-als}, 
for any $\bfy \in \mcal{F}_{{\sf IP}, \ell, p}$,
let $\pk_\bfy := \PubKGen(\allowbreak \mpk, \bfy)$ as implemented in~\Cref{fig:ipfe-als}, and 
$\sk_\bfy := \KGen(\msk, \bfy)$ as implemented in~\Cref{fig:ipfe-als}. 
Suppose here $\mpk = (\bfA, \bfU)$ and $\msk = \bfZ$. 
Then, $\sk_\bfy = \bfZ^T\bfy \in \Z^m$ and $\pk_\bfy = \bfU^T \bfy \bmod{q} = \bfA^T \sk_\bfy \bmod{q}$.
Further, note that $||\sk_\bfy||_\infty \leq \ell \cdot ||\bfZ||_\infty \cdot ||\bfy||_\infty  \leq \ell \cdot ||\bfZ||_2 \cdot ||\bfy||_\infty \leq \ell \cdot B_\tau \cdot p = \beta_0$. Hence, $(\pk_\bfy, \sk_\bfy) \in R_\isis$.
\end{proof}

\begin{lemma}
\label{lemma:ipfe-als-robust}
If $\alpha^{-1} \geq 4p \omega(\sqrt{\log n}) (\ell p + m \beta_1)$, then, 
the IPFE scheme in~\Cref{fig:ipfe-als} is $R'_\isis$-robust,
where $R'_\isis = \isis_{n, m, q, \beta_1}$ and $\beta_1 = 2 (\gamma - \kappa)-\beta_0$ such that $\beta_1>\beta_0$.
\end{lemma}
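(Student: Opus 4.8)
The statement to prove is Lemma~\ref{lemma:ipfe-als-robust}: under the constraint $\alpha^{-1} \geq 4p\,\omega(\sqrt{\log n})(\ell p + m\beta_1)$, the lattice IPFE scheme of~\Cref{fig:ipfe-als} is $R'_\isis$-robust, where $R'_\isis = \isis_{n, m, q, \beta_1}$.

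\textbf{Approach.} The plan is to follow the structure of the standard IPFE correctness argument of~\cite{ipe01}, but to carry out the error analysis with a \emph{general} decryption key $\sk'_\bfy$ that is merely promised to satisfy $(\pk_\bfy, \sk'_\bfy) \in R'_\isis$, rather than the honestly generated $\sk_\bfy = \bfZ^T\bfy$. Concretely, fix $\secparam, \ell, \bfx \in \mcal{M}, \bfy \in \mcal{F}_{{\sf IP},\ell,p}$, sample $(\mpk,\msk) = ((\bfA,\bfU),\bfZ)$ and $\ct = (\ct_0,\ct_1) = (\bfA\bfs + \bfe_0,\ \bfU\bfs + \bfe_1 + p^{k-1}\bfx)$ as in the scheme, and let $\pk_\bfy = \bfU^T\bfy \bmod q$. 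Let $\sk'_\bfy \in \Z^m$ be any vector with $\bfA^T\sk'_\bfy = \pk_\bfy \bmod q$ and $\|\sk'_\bfy\|_\infty \le \beta_1$ (this is exactly the $R'_\isis$ membership condition, recalling $\bfA$ is public and $\bfU^T\bfy = \bfA^T\bfZ^T\bfy$). The first step is to compute $d := \ct_1^T\bfy - \ct_0^T\sk'_\bfy \bmod q$ and expand:
\[
d = (\bfU\bfs + \bfe_1 + p^{k-1}\bfx)^T\bfy - (\bfA\bfs + \bfe_0)^T\sk'_\bfy \bmod q.
\]
Since $\bfU^T\bfy = \bfA^T\sk'_\bfy \bmod q$, the terms $\bfs^T\bfU^T\bfy$ and $\bfs^T\bfA^T\sk'_\bfy$ cancel modulo $q$, leaving $d = p^{k-1}\langle\bfx,\bfy\rangle + (\bfe_1^T\bfy - \bfe_0^T\sk'_\bfy) \bmod q$. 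The decryption outputs the $v \in \Z_p$ minimizing $|p^{k-1}v - d|$, so it suffices to show the noise term $e := \bfe_1^T\bfy - \bfe_0^T\sk'_\bfy$ has absolute value strictly less than $p^{k-1}/2$ (mod-$q$ rounding with $q = p^k$), whence $v = \langle\bfx,\bfy\rangle \bmod p = f_\bfy(\bfx)$ with probability $1$ (in the almost-sure sense used throughout, i.e. up to the $n^{-\omega(1)}$ Gaussian tail events already folded into the scheme's correctness statement).

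\textbf{Bounding the noise.} The key step is the tail bound on $|e| = |\bfe_1^T\bfy - \bfe_0^T\sk'_\bfy|$. Here $\bfe_0 \gets D_{\Z,\alpha q}^m$, $\bfe_1 \gets D_{\Z,\alpha q}^\ell$, $\|\bfy\|_\infty \le p$, and $\|\sk'_\bfy\|_\infty \le \beta_1$. By the standard discrete Gaussian tail bound, each coordinate of $\bfe_0,\bfe_1$ is at most $\alpha q\,\omega(\sqrt{\log n})$ in absolute value except with probability $n^{-\omega(1)}$; taking a union bound over the $m + \ell$ coordinates (and absorbing this into the $\omega(\sqrt{\log n})$ slack) gives $|\bfe_1^T\bfy| \le \ell\cdot p\cdot \alpha q\,\omega(\sqrt{\log n})$ and $|\bfe_0^T\sk'_\bfy| \le m\cdot \beta_1\cdot \alpha q\,\omega(\sqrt{\log n})$, hence $|e| \le \alpha q\,\omega(\sqrt{\log n})(\ell p + m\beta_1)$. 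Plugging in the hypothesis $\alpha^{-1} \ge 4p\,\omega(\sqrt{\log n})(\ell p + m\beta_1)$ yields $|e| \le \alpha q\,\omega(\sqrt{\log n})(\ell p + m\beta_1) \le q/(4p) = p^{k-1}/4 < p^{k-1}/2$, which closes the argument.

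\textbf{Main obstacle.} The routine error-propagation is mechanical; the one genuine subtlety — the place I would be most careful — is that $\sk'_\bfy$ is \emph{adversarially chosen} subject only to the $\isis_{n,m,q,\beta_1}$ relation, so it need not be a linear function of $\bfZ$ and in particular is not independent of $\bfe_0$ in any obvious distributional sense. I should therefore phrase the bound on $\bfe_0^T\sk'_\bfy$ as a \emph{worst-case} $\ell_1/\ell_\infty$ Hölder estimate ($|\bfe_0^T\sk'_\bfy| \le \|\bfe_0\|_\infty\cdot\|\sk'_\bfy\|_1 \le m\|\bfe_0\|_\infty\beta_1$) that uses only the infinity-norm tail of $\bfe_0$ and the promised bound $\|\sk'_\bfy\|_\infty \le \beta_1$, never the joint distribution — this is exactly why the proof goes through uniformly over all valid $\sk'_\bfy$ and why it needs the slightly stronger parameter constraint than plain correctness (Definition~\ref{def:ipfe-correctness}). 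A secondary point to double-check is that the cancellation $\bfU^T\bfy = \bfA^T\sk'_\bfy$ holds \emph{as an identity in $\Z_q^n$} (not merely up to the noise), which is immediate since it is precisely the $R'_\isis$ defining equation; I would state this explicitly to make clear that $\sk'_\bfy$'s only role is in the noise term.
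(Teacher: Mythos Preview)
Your proposal is correct and follows essentially the same route as the paper: expand $d = \ct_1^T\bfy - \ct_0^T\sk'_\bfy$, use the $R'_\isis$ relation $\bfA^T\sk'_\bfy = \bfU^T\bfy \bmod q$ to cancel the $\bfs$-terms, then bound $|\bfe_1^T\bfy - \bfe_0^T\sk'_\bfy| \le \alpha q\,\omega(\sqrt{\log n})(\ell p + m\beta_1) \le q/(4p)$ via coordinate-wise Gaussian tails and the worst-case $\|\sk'_\bfy\|_\infty \le \beta_1$ bound. Your ``main obstacle'' discussion is more cautious than strictly necessary (in Definition~\ref{def:ipfe-robust} the key $\sk'_\bfy$ is fixed before $\ct$ is sampled), but the worst-case H\"older estimate you propose is exactly what the paper uses anyway, so no harm done.
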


\begin{proof}
For any $\secparam \in \N$ let $\pp \gets \ipfe.\Gen(1^\secparam)$ 
as implemented in~\Cref{fig:ipfe-als},
for any $\ell \in \N$, 
$\bfx \in \Z_p^\ell, \bfy \in \mcal{F}_{{\sf IP}, \ell, p}$, 
let $(\mpk, \msk) \leftarrow  \Setup(\pp, \ell)$
as implemented in~\Cref{fig:ipfe-als},
let $\pk_\bfy = \PubKGen(\msk, \bfy)$
as implemented in~\Cref{fig:ipfe-als},
let $\ct \leftarrow \Enc(\mpk, \bfx)$ 
as implemented in~\Cref{fig:ipfe-als}.
Then, for any 
$\sk'_\bfy$ such that $(\pk_\bfy, \sk'_\bfy) \in R'_\isis$, 
we want that $\Dec(\sk'_\bfy, \ct)$ outputs $\bfx^T\bfy \bmod{q}$.
Observe that the value $d$ computed by $\Dec$ 
implemented in~\Cref{fig:ipfe-als} simplifies to 
$d = p^{k-1} \bfx^T \bfy + (\bfe_1^T \bfy - \bfe_0^T \sk'_\bfy) \bmod{q}$.
Observe that 
\begin{align*}
|\bfe_1^T \bfy - \bfe_0^T \sk_\bfy|
& \leq \ell p \alpha q \omega(\sqrt{\log n}) + m \alpha q \omega(\sqrt{\log n}) ||\sk'_\bfy||_\infty \\ 
& = \alpha q \omega(\sqrt{\log n}) (\ell p + m \beta_1).
\end{align*}
For decryption correctness to hold, we need that $|\bfe_1^T \bfy - \bfe_0^T \sk_\bfy| \leq q/4p$. Hence it suffices to have $\alpha^{-1} \geq 4p \omega(\sqrt{\log n}) (\ell p + m \beta_1)$.
\end{proof}


\ignore{	
		\noindent\textbf{Correctness.}
		Observe that in the $\Dec$ algorithm, $d$ can be written as follows:
		\begin{align*}
		d 
		& = \ct_1^T \bfy - \ct_0^T \sk_\bfy \bmod{q}\\ 
		& = \bfr^T\bfU \bfy + \lfloor q/p \rfloor \bfx^T \bfy - \bfr^T\bfA \bfS\bfy \bmod{q}\\ 
		& = \lfloor q/p \rfloor \bfx^T \bfy + \bfr^T\bfE\bfy \bmod{q}
		\end{align*}
		Therefore, for the correctness to hold, 
		we need that $||\bfr^T\bfE\bfy ||_\infty < \lfloor q/2p \rfloor$, or equivalently $m \ell B(\sigma \sqrt{\secparam}) < \lfloor q/2p \rfloor$.

		\begin{lemma}
		The above IPFE construction is selective, SIM-secure with proper choice of parameters and under the $\lwe$ assumption.
		\label{lemma:ipfe-lwe-sim}
		\end{lemma}
		\begin{proof}
		We describe the IPFE simulator $\Sim = (\Setup^*, \Enc^*, \KGen^*)$ first.

		\begin{figure}[t]
		\centering
		\captionsetup{justification=centering}
		\begin{pcvstack}[boxed, space=1em]

		\procedure[linenumbering, mode=text]{$\Setup^*(\pp)$}{
		Sample $(\bfA, \bfT_\bfA) \gets \TrapdoorGen(1^m, 1^n)$
		\\
		Sample $\bfS \getr \Z_q^{n \times \ell}$, $\bfE \gets \D_{\Z_q, \sigma, 0}^{m \times \ell}$
		\\ 
		Compute $\bfU = \bfA \bfS + \bfE\bmod{q} \in \Z_q^{m \times \ell}$
		\\ 
		Sample $\bfc_0 \getr \Z_q^n$
		\\ 
		Sample $\bfa^\bot \in \Z_q^n$ using $\bfT_\bfA$ such that $\bfA \bfa^\bot = \bfnum{0} \bmod{q}$ 
		\pcskipln \\ and $\bfc_0^T\bfa^\bot = 1 \bmod{q}$.
		\\
		\ret $\mpk^* = (\bfA, \bfU), \msk^* = (\bfS, \bfE, \bfc_0, \bfa^\bot)$
		}

		\procedure[linenumbering, mode=text]{$\Enc^*(\mpk^*, \msk^*)$}{
		Sample $\bfr \getr \{0,1\}^m$ 
		\\ 
		Compute $\ct_0 = \bfc_0 \in \Z_q^n$
		\\ 
		Compute $\ct_1 = \bfS^T\ct_0 + \bfE^T\bfr \bmod{q} \in \Z_q^\ell$
		\\ 
		\ret $\ct = (\ct_0, \ct_1)$
		}

		\procedure[linenumbering, mode=text]{$\KGen^*(\msk, \bfy \in \mcal{M}, z)$}{
		\ret $\sk^*_\bfy = \bfS \bfy - \bfa^\bot \lfloor q/p \rfloor z \bmod{q} \in \Z_q^n$
		}

		\end{pcvstack}
		\caption{IPFE simulator $\Sim$}
		\label{fig:ipfe-als-simulator}
		\end{figure}

		\noindent\textbf{Decryption correctness.}
		First observe that in $\Dec(\sk^*_\bfy, \ct)$, $d$ can be written as follows:
		\begin{align*}
		d 
		& = {\ct_1}^T \bfy - {\ct_0}^T \sk^*_\bfy \bmod{q}\\ 
		& = \bfc_0^T \bfS \bfy + \bfr^T\bfE \bfy - \bfc_0^T \bfS \bfy + \bfc_0^T \bfa^\bot \lfloor q/p \rfloor z \bmod{q}\\ 
		& = \bfr^T\bfE \bfy + \lfloor q/p \rfloor z \bmod{q}\\ 
		\end{align*}
		Here, the last equality follows from the fact that $\bfc_0^T\bfa^\bot = 1 \bmod{q}$.

		\noindent\textbf{Relation between $\pk_\bfy$ and $\sk^*_\bfy$.}
		Note that the adversary can always compute $\pk_\bfy = \bfU\bfy$ on its own, hence, $\sk^*_\bfy$ must satisfy $\bfA \sk^*_\bfy + \bfE\bfy = \pk_\bfy \bmod{q}$.
		One can observe that $\sk^*_\bfy$ satisfies this relation as $\bfA \sk^*_\bfy = \bfA \bfS \bfy - \bfA \bfa^\bot \lfloor q/p \rfloor z = \bfA \bfS \bfy$, where the last equality is due to the fact that $\bfA \bfa^\bot = \bfnum{0} \bmod{q}$.

		Having described the simulator $\Sim$, experiments $\ipfeSimReal$ and $\ipfeSimIdeal$ are as follows.

		\begin{figure*}[t]
		\centering
		\captionsetup{justification=centering}
		\begin{pchstack}[boxed, space=1em]
		\begin{pcvstack}
		\procedure[linenumbering, mode=text]{Experiments $\ipfeSimReal(1^\secparam, 1^\ell, B)$, \pcbox{\ipfeSimIdeal(1^\secparam, 1^\ell, B)}.}{
		$\pp \gets \Gen(1^\secparam)$
		\\ 
		$\bfx \gets \A(\pp)$
		\\ 
		$\bfA \getr \Z_q^{m \times n}$ 
		\pcbox{\text{
		$(\bfA, \bfT_\bfA) \gets \TrapdoorGen(1^m, 1^n)$
		}}
		\\ 
		$\bfS \getr \Z_q^{n \times \ell}$ 
		\\ 
		$\bfE \gets \D_{\Z_q, \sigma, 0}^{m \times \ell}$
		\\ 
		$\bfU := \bfA \bfS + \bfE\bmod{q}$
		\\ 
		\pcbox{\text{
		$\bfc_0 \getr \Z_q^n$
		}}
		\\ 
		\pcbox{\text{
		Sample $\bfa^\bot \in \Z_q^n$ using $\bfT_\bfA$ s.t.\ $\bfA \bfa^\bot = \bfnum{0} \bmod{q}$, $\bfc_0^T\bfa^\bot = 1 \bmod{q}$
		}}
		\\
		$\mpk := (\bfA, \bfU)$
		\\ 
		$\bfr \getr \{0,1\}^m$ 
		\\ 
		$\ct_0 := \bfA^T \bfr \bmod{q}$
		\pcbox{\text{
		$\ct_0 := \bfc_0 $
		}}
		\\ 
		$\ct_1 := \bfU^T \bfr + \lfloor q/p \rfloor \bfx \bmod{q}$
		\pcbox{\text{
		$\ct_1 := \bfS^T\ct_0 + \bfE^T\bfr \bmod{q} $
		}}
		\\ 
		$\ct := (\ct_0, \ct_1)$
		\\ 
		$b \gets \A^{\mcal{O}_\KGen(\cdot)}(\mpk, \ct)$ 
		\pcbox{\text{
		$b \gets \A^{\mcal{O}_\KGen^*(\cdot, \cdot)}(\mpk, \ct)$
		}}
		\\ 
		\ret the bit $b \in \{0,1\}$
		}
		\procedure[linenumbering, mode=text]{Oracles $\mcal{O}_\KGen(\bfy)$, \pcbox{\text{$\mcal{O}_\KGen^*(\bfy, z)$}}.}{
		\ret $\sk_\bfy := \bfS \bfy \bmod{q}$
		\pcbox{\text{
		\ret $\sk^*_\bfy := \bfS \bfy - \bfa^\bot \lfloor q/p \rfloor z \bmod{q}$
		}}
		}
		\end{pcvstack}

		\end{pchstack}
		\caption{Selective, SIM-security experiments of Inner Product Functional Encryption
		}
		\label{fig:ipfe-sel-sim-end-games}
		\end{figure*}

		To show that \[
		|
		\Pr[\ipfeSimReal(1^\secparam, n) = 1]
		- \Pr[\ipfeSimIdeal(1^\secparam, n) = 1]
		| \leq \negl(\secparam),
		\]
		we consider a sequence of hybrid experiments $G_0, \ldots, G_8$ where $G_0$ is the experiment $\ipfeSimReal$ and $G_8$ is the experiment $\ipfeSimIdeal$.
		The hybrid experiments are as follows:

		\begin{figure*}[t]
		\centering
		\captionsetup{justification=centering}
		\begin{pchstack}[boxed, space=1em]
		\begin{pcvstack}
		\procedure[linenumbering, mode=text]{Experiments $G_0$, 
		{ \color{violet} \pcbox{ \color{black} \text{
		$G_1$
		}}},
		{ \color{blue} \pcbox{ \color{black} \text{
		$G_2$
		}}},
		{ \color{green} \pcbox{ \color{black} \text{
		$G_3$
		}}},
		{ \color{yellow} \pcbox{ \color{black} \text{
		$G_4$
		}}},
		{ \color{orange} \pcbox{ \color{black} \text{
		$G_5$
		}}},
		{ \color{red} \pcbox{ \color{black} \text{
		$G_6$
		}}},
		{ \color{gray} \pcbox{ \color{black} \text{
		$G_7$
		}}},
		\pcbox{\text{
		$G_8$
		}}
		.}{
		$\pp \gets \Gen(1^\secparam)$
		\\ 
		$\bfx \gets \A(\pp)$
		\\ 
		$\bfA \getr \Z_q^{m \times n}$ 
		\pcbox{\text{
		{ \color{violet} \pcbox{ \color{black} \text{
		$(\bfA, \bfT_\bfA) \gets \TrapdoorGen(1^m, 1^n)$
		}}}
		$\ldots$
		}}
		\\ 
		$\bfS \getr \Z_q^{n \times \ell}$ 
		\\ 
		$\bfE \gets \D_{\Z_q, \sigma, 0}^{m \times \ell}$
		\\ 
		$\bfU := \bfA \bfS + \bfE\bmod{q}$
		{ \color{orange} \pcbox{ \color{black} \text{
		{ \color{yellow} \pcbox{ \color{black} \text{
		$\bfU \getr \Z_q^{m \times \ell}$
		}}}
		}}}
		\\ 
		\pcbox{\text{
		{ \color{orange} \pcbox{ \color{black} \text{
		$\bfc_0 \getr \Z_q^n$
		}}}
		$\ldots$
		}}
		\\ 
		\pcbox{\text{
		Sample $\bfa^\bot \in \Z_q^n$ using $\bfT_\bfA$ s.t.\ $\bfA \bfa^\bot = \bfnum{0} \bmod{q}$, $\bfc_0^T\bfa^\bot = 1 \bmod{q}$
		}}
		\\
		$\mpk := (\bfA, \bfU)$
		\\ 
		$\bfr \getr \{0,1\}^m$ 
		\\ 
		$\ct_0 := \bfA^T \bfr \bmod{q}$
		\pcbox{\text{
		{ \color{orange} \pcbox{ \color{black} \text{
		$\ct_0 := \bfc_0$
		}}}
		$\ldots$
		}}
		\\ 
		$\ct_1 := \bfU^T \bfr + \lfloor q/p \rfloor \bfx \bmod{q}$
		{ \color{gray} \pcbox{ \color{black} \text{
		{ \color{blue} \pcbox{ \color{black} \text{
		$\ct_1 := \bfS^T\ct_0 + \bfE^T\bfr + \lfloor q/p \rfloor \bfx \bmod{q}$
		}}}
		}}}
		\pcskipln
		\\
		{ \color{red} \pcbox{ \color{black} \text{
		{ \color{green} \pcbox{ \color{black} \text{
		$\ct_1 := (\bfA^{-1}(\bfU-\bfE))^T\ct_0 + \bfE^T\bfr + \lfloor q/p \rfloor \bfx \bmod{q}$
		}}}
		$\ldots$
		}}}
		\pcbox{\text{
		$\ct_1 := \bfS^T\ct_0 + \bfE^T\bfr \bmod{q}$
		}}
		\\ 
		$\ct := (\ct_0, \ct_1)$
		\\ 
		$b \gets \A^{\mcal{O}_\KGen(\cdot)}(\mpk, \ct)$ 
		\pcbox{\text{
		$b \gets \A^{\mcal{O}_\KGen^*(\cdot, \cdot)}(\mpk, \ct)$
		}}
		\\ 
		\ret the bit $b \in \{0,1\}$
		}
		\procedure[linenumbering, mode=text]{Oracles $\mcal{O}_\KGen(\bfy)$, \pcbox{\text{$\mcal{O}_\KGen^*(\bfy, z)$}}.}{
		\ret $\sk_\bfy := \bfS \bfy \bmod{q}$
		{ \color{red} \pcbox{ \color{black} \text{
		{ \color{green} \pcbox{ \color{black} \text{
		\ret $\sk_\bfy := \bfA^{-1}(\bfU-\bfE) \bfy \bmod{q}$
		}}}
		$\ldots$
		}}}
		\pcbox{\text{
		\ret $\sk^*_\bfy := \bfS \bfy - \bfa^\bot \lfloor q/p \rfloor z \bmod{q}$
		}}
		}
		\end{pcvstack}

		\end{pchstack}
		\caption{Selective, SIM-security experiments of Inner Product Functional Encryption
		}
		\label{fig:ipfe-sel-sim-all-games}
		\end{figure*}

		\noindent\textbf{Game $G_0$.}
		This is the real world game $\ipfeSimReal$.

		\noindent\textbf{Game $G_1$.}
		This is same as $G_0$ except that $\bfA$ is sampled along with its trapdoor $\bfT_\bfA$ using $\TrapdoorGen$. Further, $\msk$ contains $\bfT_\bfA$ as well.
		\nikhil{is this statistically close to $G_0$?}

		\noindent\textbf{Game $G_2$.}
		This is same as $G_1$ except that $\ct_1$ is computed as $\ct_1 := \bfS^T\ct_0 + \bfE^T\bfr + \lfloor q/p \rfloor \bfx \bmod{q}$. One can observe that $G_1$ and $G_2$ are identically distributed as 
		\begin{align*}
		\ct_1 
		& = \bfS^T\ct_0 + \bfE^T\bfr + \lfloor q/p \rfloor \bfx \bmod{q} \\
		& = \bfS^T\bfA^T\bfr + \bfE^T\bfr + \lfloor q/p \rfloor \bfx \bmod{q} \\
		& = \bfU^T\bfr + \lfloor q/p \rfloor \bfx \bmod{q}.
		\end{align*}

		\noindent\textbf{Game $G_3$.}
		This is same as $G_2$ except that $\ct_1$ is computed as 
		$\ct_1 := (\bfA^{-1}(\bfU-\bfE))^T\ct_0 + \bfE^T\bfr + \lfloor q/p \rfloor \bfx \bmod{q}$
		and for any key generation oracle query $\mcal{O}\KGen(\bfy)$, $\sk_\bfy$ is computed as $\sk_\bfy := \bfA^{-1}(\bfU-\bfE) \bfy \bmod{q}$. One can observe that $G_2$ and $G_3$ are identically distributed as the $\bfA^{-1}(\bfU-\bfE) = \bfA^{-1}(\bfA\bfS) = \bfS$.

		\noindent\textbf{Game $G_4$.}
		This is same as $G_3$ except that $\bfU$ is computed as $\bfU \getr \Z_q^{m \times \ell}$. 
		Observe that under the LWE assumption, $G_3$ and $G_4$ are computationally indistinguishable.
		\nikhil{this argument is problematic. The reduction only obtains $(\bfA, \bfU)$ from the LWE challenger, so it does not know $\bfS$ and $\bfE$. But, simulating $\ct_1$ and $\sk_\bfy$ requires knowledge of $\bfE$. How to resolve this? How does ABDP15 handle this? Also, reduction does not have the choice to set $\bfA$ in trapdoor mode. So, when invoking LWE step, $\bfA$ can't be in trapdoor mode.}

		\noindent\textbf{Game $G_5$.}
		This is same as $G_4$ except that $\ct_0$ is computed as $\ct_0 := \bfc_0 \getr \Z_q^n$. Observe that $G_4$ and $G_5$ are statistically indistinguishable due to the Leftover Hash Lemma.
		\nikhil{this argument is problematic. The reduction obtains $(\bfA, \bfc_0)$ from the LHL challenger where $\bfc_0 = \bfA^T\bfr$ or it is uniformly random. Note that the reduction does not have access to $\bfr$. But, simulating $\ct_1$ requires knowledge of $\bfr$. How to resolve this? I think this is fixable. First switch $\bfc_0 = \bfA^T\bfr$ and $\bfc_1 = \bfU^T\bfr$ both to uniformly random via LHL and then, only switch back $\bfc_1$ to $\bfU^T\bfr$ again via LHL.}

		\noindent\textbf{Game $G_6$.}
		This is same as $G_5$ except that $\bfU$ is switched back to $\bfU := \bfA\bfS+\bfE$.
		Observe that under the LWE assumption, $G_5$ and $G_6$ are computationally indistinguishable.
		\nikhil{this argument is problematic. same issue as $G_3$ to $G_4$.}

		\noindent\textbf{Game $G_7$.}
		This is same as $G_6$ except that $\ct_1$ is switched back to 
		$\ct_1 := \bfS^T\ct_0 + \bfE^T\bfr + \lfloor q/p \rfloor \bfx \bmod{q}$ 
		and 
		for any key generation oracle query $\mcal{O}\KGen(\bfy)$, $\sk_\bfy$ is switched back to being computed as $\sk_\bfy := \bfS \bfy \bmod{q}$.

		\noindent\textbf{Game $G_8$.}
		This is same as $G_7$ except that the $\ct_1$ is computed as 
		$\ct_1 := \bfS^T\ct_0 + \bfE^T\bfr \bmod{q}$ and the simulator answers $\KGen$ queries for $\bfy$ using the oracle $\mcal{O}^*_\KGen(\bfy, z)$ as 
		$\sk^*_\bfy := \bfS \bfy - \bfa^\bot \lfloor q/p \rfloor z \bmod{q}$, 
		where $\bfa^\bot \in \Z_q^n$ is computed using $\bfT_\bfA$ such that \ $\bfA \bfa^\bot = \bfnum{0} \bmod{q}$, $\bfc_0^T\bfa^\bot = 1 \bmod{q}$.

		Observe that $G_8$ is the ideal world experiment $\ipfeSimIdeal$. We show below that $G_7$ and $G_8$ are identically distributed as going from $G_7$ to $G_8$ essentially involves a change of variable $\bfS \to \bfS - \bfa^\bot \lfloor q/p \rfloor \bfx^T$. As $\bfS$ is uniformly random, hence, the two distributions are identical.

		Making the change of variable in $\ct_1$ in game $G_7$, we get that 
		\begin{align*}
		\ct_1
		& = (\bfS - \bfa^\bot \lfloor q/p \rfloor \bfx^T)^T\ct_0 + \bfE^T\bfr + \lfloor q/p \rfloor \bfx \bmod{q} \\
		& = \bfS^T\ct_0 - \lfloor q/p \rfloor \bfx {\bfa^\bot}^T \bfc_0  + \bfE^T\bfr + \lfloor q/p \rfloor \bfx \bmod{q} \\
		& = \bfS^T\ct_0 - \lfloor q/p \rfloor \bfx + \bfE^T\bfr + \lfloor q/p \rfloor \bfx \bmod{q} \\
		& = \bfS^T\ct_0 + \bfE^T\bfr \bmod{q},
		\end{align*}
		where the second last equality is due to the fact that ${\bfa^\bot}^T \bfc_0 = 1 \bmod{q}$. Further, making the change of variable in $\sk_\bfy$ in game $G_7$, we get that 
		\begin{align*}
		\sk_\bfy 
		& = (\bfS - \bfa^\bot \lfloor q/p \rfloor \bfx^T)\bfy \bmod{q}\\ 
		& = \bfS\bfy - \bfa^\bot \lfloor q/p \rfloor \bfx^T\bfy \bmod{q}\\ 
		& = \bfS\bfy - \bfa^\bot \lfloor q/p \rfloor z \bmod{q},
		\end{align*}
		where the last equality is due to the fact that $z = \bfx^T\bfy$.

		This completes the proof.
		\end{proof}

}

\subsection{$\fas$ Construction}

Instantiating $\fas$ construction in~\Cref{sec:fas-construction} with $\ipfe$ from~\Cref{fig:ipfe-als} and $\as$ from~\Cref{fig:lyu-as}, we obtain the following corollary.

\begin{corollary}
\label{corollary:fas-strongly-secure-lattices}
\label{thm:fas-strongly-secure-lattices}
Let $p$ be a $\secparam$-bit prime number and let $\ell$ be an integer.
Let $\mcal{M}=\Z_p^\ell$ be an additive group.
\nikhil{remove additive group from paper everywhere and keep it to $\Z_p^\ell$ everywhere.}
Let $\mcal{F}_{{\sf IP}, \ell, p}$ be the function family for computing inner products of vectors in $\Z_p^\ell$. 
Let $R$ be any NP relation 
with statement/witness pairs $(X, \bfx)$ such that $\bfx \in \Z_p^\ell$.
Let $R_\isis = \isis_{n, m, q, \beta_0}$ and $R'_\isis = \isis_{n, m, q, \beta_1}$, where $\beta_0 = \ell p B_\tau$ and $\beta_1 = 2(\gamma - \kappa) - \beta_0$ s.t. $\beta_1 > \beta_0$ and $\gamma - \kappa - \beta_0 > 0$. 
Suppose that 
\begin{itemize}
\item 
All function queries $\bfy$ to $\AuxGen$ are linearly independent modulo $p$,
\item
$R$ is $\mcal{F}_{{\sf IP}, \ell, p}$-hard (\Cref{def:f-hard-relation}),
\item
$\nizk$ is a secure NIZK argument system (\Cref{def:nizk}),
\item
$\as$ construction in~\Cref{fig:lyu-as} is an adaptor signature scheme w.r.t.\ digital signature scheme $\Lyu$ and hard relations  $R_\isis$ and $R'_\isis$ that satisfies weak pre-signature adaptability and witness extractability (\Cref{lemma:pq-as}),
\item
$\ipfe$ construction in~\Cref{fig:ipfe-abdp} is a selective, IND-secure IPFE scheme (\Cref{lemma:ipfe-als-secure}) for function family $\mcal{F}_{{\sf IP}, \ell+1}$ that is $R_\isis$-compliant (\Cref{lemma:ipfe-als-compliant}) and $R'_\isis$-robust (\Cref{lemma:ipfe-als-robust}). 
\end{itemize} 
Then, the functional adaptor signature scheme 
w.r.t.\ Lyubashevsky signature scheme $\Lyu$, NP relation $R$, and family of inner product functions $\mcal{F}_{{\sf IP}, \ell, p}$  
constructed in~\Cref{sec:construction} is strongly-secure (\Cref{def:fas-strongly-secure}).
\end{corollary}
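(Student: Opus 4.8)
\noindent The plan is to derive this corollary directly from the generic result \Cref{thm:fas-strongly-secure} by instantiating its two building blocks: we take $\as$ to be the lattice-based adaptor signature of \Cref{fig:lyu-as} (w.r.t.\ the digital signature $\Lyu$ and the pair of relations $R_\isis \subseteq R'_\isis$), and we take $\ipfe$ to be the scheme of \Cref{fig:ipfe-als} for function vectors of length $\ell+1$, setting $R_\ipfe := R_\isis$ and $R'_\ipfe := R'_\isis$. With this choice the only work is to verify that each hypothesis of \Cref{thm:fas-strongly-secure} holds. First I would dispatch the easy ones: $\mcal{M} = \Z_p^\ell$ is an additive group, $\mcal{F}_{{\sf IP},\ell,p}$-hardness of $R$ and security of $\nizk$ are assumed in the statement, and the containment $R_\isis \subseteq R'_\isis$ follows from $\beta_0 \le \beta_1$ (any $\ell_\infty$-short solution of norm at most $\beta_0$ is also of norm at most $\beta_1$). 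Next I would invoke \Cref{lemma:pq-as} for weak pre-signature adaptability and witness extractability of $\as$, and \Cref{lemma:ipfe-als-secure,lemma:ipfe-als-compliant,lemma:ipfe-als-robust} for selective IND-security, $R_\isis$-compliance, and $R'_\isis$-robustness of $\ipfe$, respectively. Plugging all of these into \Cref{thm:fas-strongly-secure} then yields advertisement soundness, pre-signature validity, unforgeability, witness extractability, pre-signature adaptability, and zero-knowledge, i.e.\ strong security in the sense of \Cref{def:fas-strongly-secure}.

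The subtlety I would have to address carefully is the linear-independence caveat flagged in \Cref{remark:linear-independence}: the stateless $\KGen$ of the IPFE of \Cref{fig:ipfe-als}---and hence the hypotheses of \Cref{lemma:ipfe-als-secure}---are only valid when the IPFE key-generation queries are linearly independent modulo $p$. Inside every reduction used to prove \Cref{thm:fas-strongly-secure}, the IPFE is only queried on padded vectors of the form $\widetilde{\bfy} = (\bfy^T, f_\bfy(\bft))^T$ (and, in the zero-knowledge simulator, on $(\bfy^T, f_\bfy(\bft)+f_\bfy(\bfx))^T$; cf.\ \Cref{remark:ipfe-simulation}). Since appending an extra coordinate to a family of vectors cannot destroy linear independence, the corollary's standing assumption that all function queries $\bfy$ to $\AuxGen$ are linearly independent modulo $p$ implies that the corresponding family of padded $\widetilde{\bfy}$'s is linearly independent, so \Cref{lemma:ipfe-als-secure} applies verbatim within the proof of \Cref{thm:fas-strongly-secure}. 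I would make this propagation explicit rather than leaving it implicit.

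The step I expect to be the most tedious---though it is entirely mechanical---is checking that the parameter constraints coming from the three IPFE lemmas and from the adaptor signature can be satisfied simultaneously. Concretely, $R'_\isis$-robustness (\Cref{lemma:ipfe-als-robust}) needs $\alpha^{-1} \ge 4p\,\omega(\sqrt{\log n})(\ell p + m\beta_1)$ with $\beta_1 = 2(\gamma-\kappa)-\beta_0$, IND-security and compliance impose $\beta_0 = \ell p B_\tau$, $\alpha^{-1} \ge \ell^2 p^3 B_\tau\,\omega(\sqrt{\log n})$, $q \ge \alpha^{-1}\omega(\sqrt{\log n})$, $m \ge 4n\log_2 q$, and the $\as$ of \Cref{fig:lyu-as} needs $\gamma = 2\kappa$ together with $\beta_1 > \beta_0$ and $\gamma-\kappa-\beta_0 > 0$. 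I would pick $\kappa$ (hence $\gamma = 2\kappa$) large enough to make $\gamma-\kappa-\beta_0 > 0$, then $\alpha^{-1}$ as the larger of the two lower bounds, $q = p^k$ with $k$ large enough that $q \ge \alpha^{-1}\omega(\sqrt{\log n})$, and $m = \Theta(n\log q)$, and verify each inequality holds; the underlying hardness then reduces to plain $\SIS$ and $\LWE$ (the latter through $\mheLWE$, via \Cref{lemma:ipfe-als-secure}), since the ring degree is $1$. Finally, the correctness claim that honest runs decrypt to $f_\bfy(\bfx)$ reduces, through the correctness lemma accompanying \Cref{thm:fas-strongly-secure}, to IPFE correctness together with $R_\isis$-compliance and $R'_\isis$-robustness, which are already established; the full details are deferred to the appendix.
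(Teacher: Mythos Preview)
Your proposal is correct and follows the same approach as the paper: the paper's proof is a single sentence stating that the corollary is immediate from \Cref{thm:fas-strongly-secure} together with \Cref{lemma:ipfe-als-secure,lemma:ipfe-als-compliant,lemma:ipfe-als-robust}. Your write-up is strictly more careful---in particular, your explicit handling of the linear-independence propagation from the $\bfy$'s to the padded $\widetilde{\bfy}$'s, and the parameter-compatibility check, are details the paper leaves implicit (the former is only addressed informally in \Cref{remark:linear-independence}, the latter is scattered across the ``Parameter choices'' paragraphs). Note also that the corollary statement contains a typo, pointing to \Cref{fig:ipfe-abdp} where \Cref{fig:ipfe-als} is meant; you correctly work with the lattice-based IPFE.
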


The proof of the above corollary is immediate from the proof of the~\cref{thm:fas-strongly-secure} concerning the generic construction, and~\cref{lemma:ipfe-als-secure,lemma:ipfe-als-compliant,lemma:ipfe-als-robust} that show that the $\ipfe$ scheme in~\Cref{fig:ipfe-als} has the required properties.

\begin{remark}
Note that the linear independence modulo $p$ restriction for all $\AuxGen$ queries in this lattice-based instantiation comes due to the same restriction on the underlying IPFE scheme as discussed in~\Cref{remark:linear-independence}. In practice,  $\AuxGen$ queries can be sent by any buyer to the seller. On one hand it's reasonable to assume that all the queries by a single buyer are going to be linearly independent modulo $p$ because by linearity of the functionality, the buyer can locally compute the function evaluation for some function $\bfy$ that is linear combination of the functions $\bfy_i$'s it previously queried as follows: if $\bfy = \sum_i k_i \bfy_i \bmod{q}$, then, $f_\bfy(\bfx)$ can be computed as $f_\bfy(\bfx) := \sum_i k_i f_{\bfy_i}(\bfx)$. On the other hand, it's unreasonable to assume that all buyers are working in coordination and make sure that their queries are linearly independent. In~\Cref{sec:fas-from-lattices-modified}, we show how to remove this restriction on FAS.
\end{remark}

\subsection{Modifying $\fas$ Construction in~\Cref{fig:fas-construction}.}
\label{sec:fas-from-lattices-modified}
There is a simple way to ensure that even if $\AuxGen$ queries across multiple buyers are linearly dependent modulo $p$, the requests to the underlying $\ipfe.\PubKGen$ are always linearly independent. Assume a polynomial bound $k$ on the number of buyers. Then, we add $k$ additional slots to the underlying IPFE resulting in a total of $\ell_k+1$ slots and modify the $\fas$ construction as follows: 
\begin{itemize}
\item 
In $\AdGen$ on input $\bfx \in \Z_p^\ell$, the vector $\widetilde{\bfx} := (\bfx^T, \bfnum{0}^T)^T \in \Z_p^{\ell+k+1}$ is encrypted using $\ipfe.\Enc$. 
\item 
Further, the seller maintains an internal map of size $k$ from buyer verification keys $vk$'s to a unique unit vector assigned to them. In other words, $i$-th buyer with verification key denoted by $\vk_i$ is assigned $i$-th unit vector $\bfe_i \in \{0,1\}^k$. 
\item 
At time of $\AuxGen$ query for vector $\bfy \in \Z_p^\ell$ by $i$-th buyer, the seller uses $\widetilde{\bfy} := (\bfy^T, f_\bfy(\bft), \bfe_i^T)^T \in \Z_p^{\ell+k+1}$ to compute $\aux_\bfy$ and sets $\pi_\bfy := (f_\bfy(\bft), i)$.
\item 
In $\AuxVerify$, the buyer can recompute $\widetilde{\bfy}$ on its own using $\pi_\bfy = (f_\bfy(\bft), i)$ and return $1$ iff $\aux_\bfy = \ipfe.\PubKGen(\mpk, \widetilde{\bfy})$.
\item 
At the time of running $\Adapt$, the seller looks up $\vk$ in its internal map to find the index $i$ assigned to it. Then, it can recompute the same $\widetilde{\bfy}$ that it used during $\AuxGen$ query $\bfy$ by buyer $\vk$. This will ensure that the adapted signature passes the signature verification. 
\end{itemize}
Note that, the $k$ extra slots as used above do not affect correctness as the function evaluation for a function $\bfy$ is always $\widetilde{\bfx}^T\widetilde{\bfy} = \bfx^T\bfy \bmod{q}$. Further note that these slots remain the same in all experiments for all security properties.

\begin{remark}[Linear Independence]
Observe that with the above modification, even if two buyers $i$ and $j$ requested for the same function $\bfy \in \Z_p^\ell$, $\AuxGen$ will use linearly independent queries $\widetilde{\bfy}_i$ and $\widetilde{\bfy}_j$ to the underlying $\ipfe.\PubKGen$, where 
$\widetilde{\bfy}_i := (\bfy^T, f_\bfy(\bft), \bfe_i^T)^T \in \Z_p^{\ell+k+1}$ and 
$\widetilde{\bfy}_j := (\bfy^T, f_\bfy(\bft), \bfe_j^T)^T \in \Z_p^{\ell+k+1}$. These two are linearly independent because for $i\neq j$, the unit vectors $\bfe_i$ and $\bfe_j$ are linearly independent of each other.
\end{remark}

\begin{remark}[Communication overhead]
Note that in the IPFE construction in~\Cref{fig:ipfe-als}, the size of $\pk_\bfy$ and $\sk_\bfy$ are independent of the vector length parameter $\ell$. Hence, change from $\ell+1$ to $\ell+k+1$ does not increase the size $\aux_\bfy$. The size of $\pi_\bfy$ increases by $\log{k}$ bits, but that can also optimized to be sent only once per buyer and not for every $\AuxGen$ query. The main communication overhead is in the ciphertext $\ct$ communicated as part of the advertisement $\advt$. Now $\ct \in \Z_q^{m+\ell_k+1}$instead of $\ct \in \Z_q^{m+\ell+1}$. But fortunately, $\advt$ needs to be communicated only once.
\end{remark}

\section{Performance Evaluation}\label{sec:implementation}


\ifhldiff
{\color{hldiffcolor}
\fi
Our implementation aims to (i) show that FAS can replace smart contracts for efficient functional sales, and (ii) benchmark the computational costs for each functional sale.
\ifhldiff
}
\fi
We provide
an open-source implementation~\cite{fas-impl} of our prime-order group-based strongly-secure $\fas$ in Python. 
We also perform a series of benchmarks on our implementation for a wide range of parameters.
Our results show that our scheme is practical for variety of real-world scenarios.

We measured the costs on a Apple MacBook Pro with M2 chip, 16GB memory, 8 cores (4 cores @3.49 GHz and 4 cores @2.42 GHz).
\ifhldiff
{\color{hldiffcolor}
\fi
In typical applications, a seller wants to sell multiple functions of the same witness. The (application dependent) advertisement – containing NIZK proof – is just a one-time cost, while other processes are done once per sale. So, we benchmark computation costs of all algorithms except $\AdGen$ and $\AdVerify$.
\ifhldiff
}
\fi

We use the ${\sf Secp256k1}$ elliptic curve that is used by Bitcoin~\cite{secp256k1} and other cryptocurrencies. 
We use the curve's python implementation from ${\sf hanabi1224}$~\cite{hanabi-python-secp256k1}.
For Schnorr signatures, we use a modified version 
of BIP-340 reference implementation~\cite{bip340}. 
We implement the Schnorr adaptor signatures~\cite{asig} on top of it. 
For compliance purpose, we implement the IPFE scheme in~\Cref{fig:ipfe-abdp}
on the ${\sf Secp256k1}$ curve. 
Finally, using these adaptor signatures and IPFE schemes, we implement our functional adaptor signature construction. 
We do not implement the NIZKs as they are needed only for $\AdGen$ and $\AdVerify$ which we do not implement here. 

\smallskip\noindent\textbf{Optimizations.}
We make following implementation optimizations.
\begin{itemize}[leftmargin=*]
\item 
For vectors of length $\ell$, we parallelize $\ipfe.\Setup$, $\ipfe.\Enc$. 
\item 
Recall that $\ipfe.\Dec$ computes $\ct_1^T\bfy - \ct_0^T\sk_\bfy$ (See~\Cref{fig:ipfe-abdp}). Here, $\ct_1$ and $\bfy$ are vectors of length $\ell$ and $\ct_1^T \bfy$ is computed as $\prod_{i \in \ell} c_i^{y_i}$, where $\ct_1 = (c_1, \ldots, c_\ell)$ is a vector of group elements and $\bfy = (y_1, \ldots, y_\ell)$ is a vector of scalars. Thus, computing $\ct_1^T\bfy$ is expensive as it involves $\ell$ group exponentiation operations. But, this computation can be done in an offline stage by the buyer as it does not require knowledge of $\sk_\bfy$ from the seller. This helps us make $\ipfe.\Dec$ and thus, $\fas.\FExt$ very efficient in practice. 
\item 
\ifhldiff
{\color{hldiffcolor}
\fi
Note that $\ipfe.\PubKGen$ and $\ipfe.\Dec$ involve computing a product of powers such as $\prod_{i \in \ell} k_i^{y_i}$ and $\prod_{i \in \ell} c_i^{y_i}$ respectively. We compute these via ${\sf FastMult}$ algorithm of~\cite[Section 3.2]{bellare1998fast}.
\ifhldiff
}
\fi
\ifcameraready\else

\item 
Note that $\ipfe.\Dec$ involves a discrete log computation. For this task, we implement the baby-step giant-step algorithm~\cite{bsgs}.
\ifhldiff
{\color{hldiffcolor}
\fi
\item
Note that $\AuxGen$ involves computing $\pk_\bfy := \ipfe.\PubKGen ( \allowbreak \mpk, \widetilde{\bfy})$ which requires $\ell+1$ group exponentiation operations. As $\AuxGen$ is run by the seller who knows $\state = (\msk, \bft)$, hence, we optimize this computation as $\sk_\bfy := \ipfe.\KGen(\msk, \widetilde{\bfy})$, $\pk_\bfy := g^{\sk_\bfy}$. This resulting way only involves $1$ group exponentiation operation and hence is very efficient.
\ifhldiff
}
\fi

\fi
\end{itemize}

\ifhldiff
{\color{hldiffcolor}
\fi
\smallskip\noindent\textbf{Communication efficiency.}
In our implementation, each group element is 64 bytes elliptic curve point and each $Z_p$ element is 32 bytes. 
In~\Cref{fig:fas-comm}, we give bounds on communication cost for a witness x of dimension $\ell=10^6$ (total size 32 MB) and also specify if the communication is done on- or off-chain.
\begin{table}[t]
\ifhldiff
{\color{hldiffcolor}
\fi
\centering
\captionsetup{justification=centering}
\caption{
\ifhldiff
{\color{hldiffcolor}
\fi
Communication efficiency of $\fas$ for $\ell=10^6$. 
Last column denotes if a secure channel for buyer/seller is used. 
\ifhldiff
}
\fi
}
\ifacm
	\begin{tabular}{ |m{3.5cm}|m{1.1cm}|m{1.0cm}|m{1.5cm}|  }
\else 
	\begin{tabular}{ |m{5.0cm}|m{2.0cm}|m{2.0cm}|m{3.0cm}|  }
\fi
	\hline
	{\bf Variable} & {\bf Size} & {\bf On-chain?} & {\bf Secure channel?} \\
	\hline
	$\advt = (\ipfe.\mpk, \ipfe.\ct)$ & $128$ MB & no\tablefootnote{for example, $\advt$ can be published on a website.
	} & no \\
	Function $\bfy$ & $32$ MB & no & yes \\
	$(\aux_\bfy, \pi_\bfy)$ & $96$ bytes & no & yes \\
	Pre-signature & $64$ bytes & no & yes \\
	Adapted-signature & $64$ bytes & yes & no \\
	\hline
	\end{tabular}
\label{fig:fas-comm}
\ifhldiff
}
\fi
\end{table}
\ifacm
	\begin{table}[t]
	\centering
	\captionsetup{justification=centering}
	\caption{Computational efficiency of $\fas$. }
	\begin{tabu}{ |m{0.7cm}|m{0.8cm}||m{0.6cm}|m{0.6cm}|m{0.7cm}|m{0.7cm}|m{0.7cm}|m{0.7cm}|  }
	\hline
	\multicolumn{2}{|c||}{\bf Params} & \multicolumn{6}{c|}{\bf Running Times (seconds)
	\tablefootnote{${\sf AuxG}$, ${\sf AuxV}$, ${\sf FPS}$, ${\sf FPV}$  denote $\AuxGen$, $\AuxVerify$, $\FPreSign$, $\FPreVerify$.}} \\
	\hline
	$\ell$ & $B$ & ${\sf AuxG}$ & ${\sf AuxV}$ & ${\sf FPS}$ & ${\sf FPV}$ & $\Adapt$ & $\FExt$ \\
	\hline
	$1$ & $10^{6}$ & $0.003$ & 
	\ifhldiff
	{\color{hldiffcolor}$0.003$}
	\else
	$0.003$ 
	\fi
	& $0.008$ & 
	\ifhldiff
	{\color{hldiffcolor}$0.008$}
	\else
	$0.008$ 
	\fi
	& $0.013$ & $0.082$\\
	$10^2$ & $10^{6}$ & $0.003$ & 
	\ifhldiff
	{\color{hldiffcolor}$0.300$}
	\else
	$0.300$ 
	\fi
	& $0.013$ & 
	\ifhldiff
	{\color{hldiffcolor}$0.418$}
	\else
	$0.418$ 
	\fi
	& $0.021$ & $0.142$\\
	$10^2$ & $10^{8}$ & $0.003$ & 
	\ifhldiff
	{\color{hldiffcolor}$0.332$}
	\else
	$0.332$ 
	\fi
	& $0.014$ & 
	\ifhldiff
	{\color{hldiffcolor}$0.477$}
	\else
	$0.477$ 
	\fi 
	& $0.023$ & $0.722$\\
	$10^4$ & $10^{8}$ & $0.011$ & 
	\ifhldiff
	{\color{hldiffcolor}$0.323$}
	\else
	$0.323$ 
	\fi 
	& $0.010$ & 
	\ifhldiff
	{\color{hldiffcolor}$0.424$}
	\else
	$0.424$ 
	\fi 
	& $0.035$ & $1.025$\\
	$10^4$ & $10^{10}$ & $0.011$ & 
	\ifhldiff
	{\color{hldiffcolor}$0.369$}
	\else
	$0.369$ 
	\fi 
	& $0.009$ & 
	\ifhldiff
	{\color{hldiffcolor}$0.352$}
	\else
	$0.352$ 
	\fi 
	& $0.022$ & $9.952$\\
	$10^5$ & $10^{11}$ & $0.092$ & 
	\ifhldiff
	{\color{hldiffcolor}$2.067$}
	\else
	$2.067$ 
	\fi 
	& $0.009$ & 
	\ifhldiff
	{\color{hldiffcolor}$2.101$}
	\else
	$2.101$
	\fi 
	& $0.111$ & $30.38$\\
	$10^6$ & $10^{12}$ & $0.879$ & 
	\ifhldiff
	{\color{hldiffcolor}$19.08$}
	\else
	$19.08$
	\fi 
	& $0.008$ & 
	\ifhldiff
	{\color{hldiffcolor}$18.41$}
	\else
	$18.41$
	\fi 
	& $0.871$ & $95.07$\\
	\hline 
	\hline
	\ifhldiff
	\rowfont{\color{hldiffcolor}}
	\fi
	$10^7$ & $10^{13}$ & $9.191$ & $223.4$ & $0.011$ & $217.9$ & $10.22$ & $317.8$\\
	\ifhldiff
	\rowfont{\color{hldiffcolor}}
	\fi
	$3 \cdot 10^7$ & $3 \cdot 10^{13}$ & $32.03$ & $766.6$ & $0.011$ & $740.4$ & $40.27$ & $452.2$\\
	\hline
	\end{tabu}
	\label{fig:fas-perf}
	\end{table}
\else 
	\begin{table}[t]
	\centering
	\captionsetup{justification=centering}
	\caption{Computational efficiency of $\fas$. }
	\begin{tabu}{ |m{1.2cm}|m{1.2cm}||m{1.2cm}|m{1.4cm}|m{1.4cm}|m{1.6cm}|m{0.8cm}|m{0.8cm}|  }
	\hline
	\multicolumn{2}{|c||}{\bf Params} & \multicolumn{6}{c|}{\bf Running Times (seconds)}\\
	\hline
	$\ell$ & $B$ & $\AuxGen$ & $\AuxVerify$ & $\FPreSign$ & $\FPreVerify$ & $\Adapt$ & $\FExt$ \\
	\hline
	$1$ & $10^{6}$ & $0.003$ & 
	\ifhldiff
	{\color{hldiffcolor}$0.003$}
	\else
	$0.003$ 
	\fi
	& $0.008$ & 
	\ifhldiff
	{\color{hldiffcolor}$0.008$}
	\else
	$0.008$ 
	\fi
	& $0.013$ & $0.082$\\
	$10^2$ & $10^{6}$ & $0.003$ & 
	\ifhldiff
	{\color{hldiffcolor}$0.300$}
	\else
	$0.300$ 
	\fi
	& $0.013$ & 
	\ifhldiff
	{\color{hldiffcolor}$0.418$}
	\else
	$0.418$ 
	\fi
	& $0.021$ & $0.142$\\
	$10^2$ & $10^{8}$ & $0.003$ & 
	\ifhldiff
	{\color{hldiffcolor}$0.332$}
	\else
	$0.332$ 
	\fi
	& $0.014$ & 
	\ifhldiff
	{\color{hldiffcolor}$0.477$}
	\else
	$0.477$ 
	\fi 
	& $0.023$ & $0.722$\\
	$10^4$ & $10^{8}$ & $0.011$ & 
	\ifhldiff
	{\color{hldiffcolor}$0.323$}
	\else
	$0.323$ 
	\fi 
	& $0.010$ & 
	\ifhldiff
	{\color{hldiffcolor}$0.424$}
	\else
	$0.424$ 
	\fi 
	& $0.035$ & $1.025$\\
	$10^4$ & $10^{10}$ & $0.011$ & 
	\ifhldiff
	{\color{hldiffcolor}$0.369$}
	\else
	$0.369$ 
	\fi 
	& $0.009$ & 
	\ifhldiff
	{\color{hldiffcolor}$0.352$}
	\else
	$0.352$ 
	\fi 
	& $0.022$ & $9.952$\\
	$10^5$ & $10^{11}$ & $0.092$ & 
	\ifhldiff
	{\color{hldiffcolor}$2.067$}
	\else
	$2.067$ 
	\fi 
	& $0.009$ & 
	\ifhldiff
	{\color{hldiffcolor}$2.101$}
	\else
	$2.101$
	\fi 
	& $0.111$ & $30.38$\\
	$10^6$ & $10^{12}$ & $0.879$ & 
	\ifhldiff
	{\color{hldiffcolor}$19.08$}
	\else
	$19.08$
	\fi 
	& $0.008$ & 
	\ifhldiff
	{\color{hldiffcolor}$18.41$}
	\else
	$18.41$
	\fi 
	& $0.871$ & $95.07$\\
	\hline 
	\hline
	\ifhldiff
	\rowfont{\color{hldiffcolor}}
	\fi
	$10^7$ & $10^{13}$ & $9.191$ & $223.4$ & $0.011$ & $217.9$ & $10.22$ & $317.8$\\
	\ifhldiff
	\rowfont{\color{hldiffcolor}}
	\fi
	$3 \cdot 10^7$ & $3 \cdot 10^{13}$ & $32.03$ & $766.6$ & $0.011$ & $740.4$ & $40.27$ & $452.2$\\
	\hline
	\end{tabu}
	\label{fig:fas-perf}
	\end{table}
\fi
Using standard compression techniques outlined in BIP-340~\cite{bip340}, advertisement size can be reduced to 66 MB by encoding 64 bytes elliptic curve points using 33 bytes. In practice, the size of $\advt = (\IPFE.\mpk, \IPFE.\ct)$ can be amortized to 33 MB by reusing the same $\IPFE.\mpk$ for all advertisements by a seller. The multi-message security of IPFE scheme should preserve FAS security and result in optimal amortized advertisement size of ~1.03x the witness size.
\ifhldiff
}
\fi

\smallskip\noindent\textbf{Computational efficiency.}
We share the performance numbers in~\Cref{fig:fas-perf}. 
Here, $\ell$ denotes the length of the witness vector.
Each entry of the vector is 32 Bytes integer. Thus, $\ell=10^6$ implies that the witness is of size 32MB.
We note that the pre-dominant cost in the implementation is the group exponentiation operation which hasn't been optimized here. 
We note that for parameter $\ell$, the number of group exponentiation operations in each algorithm are as follows: 
\ifhldiff
{\color{hldiffcolor}
\sout{$\ell+1$}
}
{\color{blue}
$1$ 
}
\fi
in $\AuxGen$,
$\ell+1$ in $\AuxVerify$,
$1$ in $\FPreSign$,
$\ell+2$ in $\FPreVerify$,
$0$ in $\Adapt$,
$\ell+2$ in $\FExt$.
Asymptotically, $\AuxVerify$ and $\FPreVerify$ run in time $O(\ell)$ and $\FExt$ runs in time $O(\sqrt{\ell})$.
But, concretely $\FExt$ is the slowest for practical scenarios as highlighted in~\Cref{fig:fas-perf}. 
The last two rows of~\Cref{fig:fas-perf} do not highlight practical scenarios as the costs of these algorithms are higher than $100$ seconds each, but we benchmark them to understand at what point do the concrete running times of $\AuxVerify$ and $\FPreVerify$ start becoming a bottleneck.

\ignore{
	
		\begin{tabular}{ |p{0.7cm}|p{0.8cm}||p{1.05cm}|p{1.2cm}|p{0.7cm}|p{0.7cm}|p{0.7cm}|  }
		\hline
		$\ell$ & $B$ & $\FPreSign$ & $\FPreVerify$ & $\Adapt$ & $\FExt$ & ${\sf Total}$\\
		\hline
		$1$ & $10^{6}$ & $0.089$ & $0.083$ & $0.015$ & $0.119$ & $0.306$\\
		$10^2$ & $10^{6}$ & $0.220$ & $0.229$ & $0.026$ & $0.144$ & $0.619$\\
		$10^2$ & $10^{8}$ & $0.230$ & $0.241$ & $0.025$ & $0.951$ & $1.447$\\
		$10^4$ & $10^{8}$ & $2.291$ & $2.097$ & $0.024$ & $0.858$ & $5.270$\\
		$10^4$ & $10^{10}$ & $2.459$ & $2.218$ & $0.024$ & $8.483$ & $13.184$\\
		$5 \cdot 10^4$ & $5 \cdot 10^{10}$ & $12.328$ & $12.328$ & $0.059$ & $18.867$ & $44.022$\\
		\hline
		\end{tabular}
}

\section*{Acknowledgements}
This work was supported in part through a gift awarded to Pratik Soni and Sri AravindaKrishnan Thyagarajan under the Stellar Development Foundation Academic Research Grants program.
This work was also sponsored by the Algorand Foundation, the National Science Foundation under award numbers 2044679 and 2212746, and the Office of Naval Research under award number N000142212064.  The views and conclusions contained in this document are those of the author and should not be interpreted as representing the official policies, either expressed or implied, of any sponsoring institution, the U.S. government or any other entity.

\ifacm
\bibliographystyle{ACM-Reference-Format}
\else
\bibliographystyle{alpha}
\fi
\balance
\bibliography{refs}

\newcommand{\etalchar}[1]{$^{#1}$}
\begin{thebibliography}{MMSS{\etalchar{+}}19}

\bibitem[ABDP15]{ipe}
Michel Abdalla, Florian Bourse, Angelo {De Caro}, and David Pointcheval.
\newblock Simple functional encryption schemes for inner products.
\newblock In {\em PKC}, 2015.

\bibitem[ACL{\etalchar{+}}22]{albrecht2022lattice}
Martin~R Albrecht, Valerio Cini, Russell~WF Lai, Giulio Malavolta, and
  Sri~AravindaKrishnan Thyagarajan.
\newblock Lattice-based snarks: publicly verifiable, preprocessing, and
  recursively composable.
\newblock In {\em Annual International Cryptology Conference}, pages 102--132.
  Springer, 2022.

\bibitem[ACR17]{avrachenkov2017kernels}
Konstantin Avrachenkov, Pavel Chebotarev, and Dmytro Rubanov.
\newblock Kernels on graphs as proximity measures.
\newblock In {\em Algorithms and Models for the Web Graph: 14th International
  Workshop, WAW 2017}, 2017.

\bibitem[AEE{\etalchar{+}}21]{asig}
Lukas Aumayr, Oguzhan Ersoy, Andreas Erwig, Sebastian Faust, Kristina
  Host{\'a}kov{\'a}, Matteo Maffei, Pedro Moreno-Sanchez, and Siavash Riahi.
\newblock Generalized channels from limited blockchain scripts and adaptor
  signatures.
\newblock In {\em Advances in Cryptology--ASIACRYPT 2021: 27th International
  Conference on the Theory and Application of Cryptology and Information
  Security}, 2021.

\bibitem[ALMT20]{ad-sim-ipfe}
Shweta Agrawal, Beno{\^\i}t Libert, Monosij Maitra, and Radu Titiu.
\newblock Adaptive simulation security for inner product functional encryption.
\newblock In {\em IACR International Conference on Public-Key Cryptography},
  pages 34--64. Springer, 2020.

\bibitem[ALS16]{ipe01}
Shweta Agrawal, Beno{\^{\i}}t Libert, and Damien Stehl{\'{e}}.
\newblock Fully secure functional encryption for inner products, from standard
  assumptions.
\newblock In {\em CRYPTO}, 2016.

\bibitem[BCCT13]{BCCT13}
Nir Bitansky, Ran Canetti, Alessandro Chiesa, and Eran Tromer.
\newblock Recursive composition and bootstrapping for snarks and proof-carrying
  data.
\newblock In {\em Proceedings of the Forty-Fifth Annual ACM Symposium on Theory
  of Computing}, STOC '13. Association for Computing Machinery, 2013.

\bibitem[BGR98]{bellare1998fast}
Mihir Bellare, Juan~A Garay, and Tal Rabin.
\newblock Fast batch verification for modular exponentiation and digital
  signatures.
\newblock In {\em Advances in Cryptology—EUROCRYPT'98: International
  Conference on the Theory and Application of Cryptographic Techniques}, 1998.

\bibitem[bip]{bip340}
Bip 340: Schnorr signatures for secp256k1.
\newblock
  \url{https://github.com/bitcoin/bips/blob/master/bip-0340/reference.py}.

\bibitem[bri]{bridge}
Blockchain bridges.
\newblock \url{https://ethereum.org/en/bridges/}.

\bibitem[BSW11]{fedefn}
Dan Boneh, Amit Sahai, and Brent Waters.
\newblock Functional encryption: Definitions and challenges.
\newblock In {\em TCC}, 2011.

\bibitem[CCH{\etalchar{+}}19]{canetti2019fiat}
Ran Canetti, Yilei Chen, Justin Holmgren, Alex Lombardi, Guy~N Rothblum, Ron~D
  Rothblum, and Daniel Wichs.
\newblock Fiat-shamir: from practice to theory.
\newblock In {\em Proceedings of the 51st Annual ACM SIGACT Symposium on Theory
  of Computing}, 2019.

\bibitem[cha]{chainlink}
Oracle network for smart contracts.
\newblock \url{https://shorturl.at/fpDEZ}.

\bibitem[CHK03]{canetti2003forward}
Ran Canetti, Shai Halevi, and Jonathan Katz.
\newblock A forward-secure public-key encryption scheme.
\newblock In {\em Advances in Cryptology—EUROCRYPT 2003: International
  Conference on the Theory and Applications of Cryptographic Techniques,
  Warsaw, Poland, May 4--8, 2003 Proceedings 22}, pages 255--271. Springer,
  2003.

\bibitem[DKL{\etalchar{+}}18]{dilithium}
L{\'e}o Ducas, Eike Kiltz, Tancrede Lepoint, Vadim Lyubashevsky, Peter Schwabe,
  Gregor Seiler, and Damien Stehl{\'e}.
\newblock Crystals-dilithium: A lattice-based digital signature scheme.
\newblock {\em IACR Transactions on Cryptographic Hardware and Embedded
  Systems}, pages 238--268, 2018.

\bibitem[DOY22]{as-stronger}
Wei Dai, Tatsuaki Okamoto, and Go~Yamamoto.
\newblock Stronger security and generic constructions for adaptor signatures.
\newblock In {\em International Conference on Cryptology in India}, pages
  52--77. Springer, 2022.

\bibitem[EEE20a]{pq-as}
Muhammed~F Esgin, O{\u{g}}uzhan Ersoy, and Zekeriya Erkin.
\newblock Post-quantum adaptor signatures and payment channel networks.
\newblock In {\em European Symposium on Research in Computer Security}, pages
  378--397. Springer, 2020.

\bibitem[EEE20b]{PQadaptor1}
Muhammed~F. Esgin, O{\u{g}}uzhan Ersoy, and Zekeriya Erkin.
\newblock Post-quantum adaptor signatures and payment channel networks.
\newblock In Liqun Chen, Ninghui Li, Kaitai Liang, and Steve Schneider,
  editors, {\em Computer Security -- ESORICS 2020}, pages 378--397, Cham, 2020.
  Springer International Publishing.

\bibitem[EFH{\etalchar{+}}21]{erwig2021two}
Andreas Erwig, Sebastian Faust, Kristina Host{\'a}kov{\'a}, Monosij Maitra, and
  Siavash Riahi.
\newblock Two-party adaptor signatures from identification schemes.
\newblock In {\em IACR International Conference on Public-Key Cryptography},
  pages 451--480. Springer, 2021.

\bibitem[FLS90]{feige1990multiple}
Uriel Feige, Dror Lapidot, and Adi Shamir.
\newblock Multiple non-interactive zero knowledge proofs based on a single
  random string.
\newblock In {\em Proceedings [1990] 31st Annual Symposium on Foundations of
  Computer Science}, pages 308--317. IEEE, 1990.

\bibitem[GMR88]{sig}
Shafi Goldwasser, Silvio Micali, and Ronald~L Rivest.
\newblock A digital signature scheme secure against adaptive chosen-message
  attacks.
\newblock {\em SIAM Journal on computing}, 17(2):281--308, 1988.

\bibitem[GMR89]{GMR89}
Shafi Goldwasser, Silvio Micali, and Charles Rackoff.
\newblock The knowledge complexity of interactive proof systems.
\newblock {\em SIAM Journal on Computing}, 18(1):186--208, 1989.

\bibitem[GOS06]{groth2006non}
Jens Groth, Rafail Ostrovsky, and Amit Sahai.
\newblock Non-interactive zaps and new techniques for nizk.
\newblock In {\em Advances in Cryptology-CRYPTO 2006: 26th Annual International
  Cryptology Conference, Santa Barbara, California, USA, August 20-24, 2006.
  Proceedings 26}, pages 97--111. Springer, 2006.

\bibitem[GSST24]{as-foundations}
Paul Gerhart, Dominique Schr{\"o}der, Pratik Soni, and Sri~AravindaKrishnan
  Thyagarajan.
\newblock Foundations of adaptor signatures.
\newblock In Marc Joye and Gregor Leander, editors, {\em Advances in Cryptology
  -- EUROCRYPT 2024}, 2024.

\bibitem[han]{hanabi-python-secp256k1}
Secp256k1 python.
\newblock
  \url{https://github.com/hanabi1224/Programming-Language-Benchmarks/blob/main/bench/algorithm/secp256k1/1.py}.

\bibitem[HLTW24]{hanzlik2022sweep}
Lucjan Hanzlik, Julian Loss, Sri~AravindaKrishnan Thyagarajan, and Benedikt
  Wagner.
\newblock Sweep-uc: Swapping coins privately.
\newblock In {\em 2024 IEEE Symposium on Security and Privacy (SP)}, 2024.

\bibitem[JJ21]{nizk-from-ddh}
Abhishek Jain and Zhengzhong Jin.
\newblock Non-interactive zero knowledge from sub-exponential ddh.
\newblock In {\em Annual International Conference on the Theory and
  Applications of Cryptographic Techniques}, pages 3--32. Springer, 2021.

\bibitem[kag]{kaggle-cancer}
\url{https://www.kaggle.com/datasets/erdemtaha/cancer-data}.

\bibitem[KRA{\etalchar{+}}22]{khan2022revolutionizing}
Mohammad~Monirujjaman Khan, Nesat~Tasneem RoJa, Faris~A Almalki, Maha Aljohani,
  et~al.
\newblock Revolutionizing e-commerce using blockchain technology and
  implementing smart contract.
\newblock {\em Security and Communication Networks}, 2022.

\bibitem[liv]{livegascost}
Ethereum gas cost.
\newblock \url{https://milkroad.com/ethereum/gas/}.

\bibitem[Lyu12]{lyu12lattice-sig}
Vadim Lyubashevsky.
\newblock Lattice signatures without trapdoors.
\newblock In {\em Annual International Conference on the Theory and
  Applications of Cryptographic Techniques}, pages 738--755. Springer, 2012.

\bibitem[MMSS{\etalchar{+}}19]{malavolta2019anonymous}
Giulio Malavolta, Pedro Moreno-Sanchez, Clara Schneidewind, Aniket Kate, and
  Matteo Maffei.
\newblock Anonymous multi-hop locks for blockchain scalability and
  interoperability.
\newblock In {\em Proceedings 2019 Network and Distributed System Security
  Symposium}. Internet Society, 2019.

\bibitem[PD]{lightningnetwork}
Joseph Poon and Thaddeus Dryja.
\newblock The bitcoin lightning network: Scalable off-chain instant payments.

\bibitem[PS19]{peikert2019noninteractive}
Chris Peikert and Sina Shiehian.
\newblock Noninteractive zero knowledge for np from (plain) learning with
  errors.
\newblock In {\em Annual International Cryptology Conference}, 2019.

\bibitem[QPM{\etalchar{+}}23]{qin2023blindhub}
Xianrui Qin, Shimin Pan, Arash Mirzaei, Zhimei Sui, O{\u{g}}uzhan Ersoy, Amin
  Sakzad, Muhammed~F Esgin, Joseph~K Liu, Jiangshan Yu, and Tsz~Hon Yuen.
\newblock Blindhub: Bitcoin-compatible privacy-preserving payment channel hubs
  supporting variable amounts.
\newblock In {\em 2023 IEEE Symposium on Security and Privacy (SP)}, pages
  2462--2480. IEEE, 2023.

\bibitem[Sch90]{schnorr}
Claus-Peter Schnorr.
\newblock Efficient identification and signatures for smart cards.
\newblock In {\em Advances in Cryptology—CRYPTO’89 Proceedings 9}, pages
  239--252. Springer, 1990.

\bibitem[sec]{secp256k1}
Bitcoin wiki: Secp256k1.
\newblock \url{https://en.bitcoin.it/wiki/Secp256k1}.

\bibitem[Sha71]{bsgs}
Daniel Shanks.
\newblock Class number, a theory of factorization, and genera.
\newblock In {\em Proc. Symp. Math. Soc., 1971}, volume~20, pages 415--440,
  1971.

\bibitem[TBM{\etalchar{+}}20]{vts}
Sri Aravinda~Krishnan Thyagarajan, Adithya Bhat, Giulio Malavolta, Nico
  D\"{o}ttling, Aniket Kate, and Dominique Schr\"{o}der.
\newblock Verifiable timed signatures made practical.
\newblock In {\em Proceedings of the 2020 ACM SIGSAC Conference on Computer and
  Communications Security}, CCS '20, New York, NY, USA, 2020. Association for
  Computing Machinery.

\bibitem[TM21]{thyagarajan2021lockable}
Sri Aravinda~Krishnan Thyagarajan and Giulio Malavolta.
\newblock Lockable signatures for blockchains: Scriptless scripts for all
  signatures.
\newblock In {\em 2021 IEEE Symposium on Security and Privacy (SP)}, pages
  937--954. IEEE, 2021.

\bibitem[TMMS22]{thyagarajan2022universal}
Sri~AravindaKrishnan Thyagarajan, Giulio Malavolta, and Pedro Moreno-Sanchez.
\newblock Universal atomic swaps: Secure exchange of coins across all
  blockchains.
\newblock In {\em 2022 IEEE Symposium on Security and Privacy (SP)}, pages
  1299--1316. IEEE, 2022.

\bibitem[VST24]{fas-impl}
Nikhil Vanjani, Pratik Soni, and Sri~AravindaKrishnan Thyagarajan.
\newblock Functional adaptor signatures: Implementation, 2024.
\newblock \url{https://github.com/nikhilvanjani/fas-impl}.

\bibitem[Wee16]{ipe02}
Hoeteck Wee.
\newblock New techniques for attribute-hiding in prime-order bilinear groups.
\newblock Manuscript, 2016.

\bibitem[WN22]{nft}
Gang Wang and Mark Nixon.
\newblock Sok: tokenization on blockchain.
\newblock In {\em Proceedings of the 14th IEEE/ACM International Conference on
  Utility and Cloud Computing Companion}, UCC '21, New York, NY, USA, 2022.
  Association for Computing Machinery.

\end{thebibliography}

\ifcameraready\else

\appendix 
\section{Proofs for Generic Construction of FAS}
\label{sec:fas-proof-appendix}

Here we present the formal proofs for the generic FAS construction from~\cref{sec:fas-construction}.
\subsection{Correctness}
First, we give the proof of correctness.

\begin{proof}[Proof of~\cref{lemma:correctness-fas}]
For every 
$\secparam \in \N$, every message $m \in \{0,1\}^*$, 
every statement/witness pair $(X, \bfx) \in R$, 
and every function $\bfy \in \mcal{F}_{{\sf IP}, \ell}$, suppose 
$\pp \gets \Setup(1^\secparam),
(\advt, \state) \gets \AdvertisementGen(\pp, X, \bfx),\allowbreak
(\sk, \vk) \gets \KGen(1^\secparam),
(\aux_\bfy, \pi_\bfy) \gets \AuxGen(\advt, \state, \bfy),
\widetilde{\sigma} \gets \allowbreak \FPreSign(\advt, \sk, $ $m, X, \bfy, \aux_\bfy),
\sigma := \Adapt(\advt, \state, \vk, m, X,\allowbreak \bfx, \bfy, \aux_\bfy \widetilde{\sigma}),
z := \FExt$ $(\advt, \widetilde{\sigma}, \sigma,  X, \bfy, \aux_\bfy)
$. Then,
\begin{itemize}[leftmargin=*]
\item 
From NIZK correctness, it follows that $\AdvertisementVerify(\pp, X, \advt) = 1$. 
\item 
For determinism of $\ipfe.\PubKGen$, it follows that $\AuxVerify(\advt, $ $ \bfy, \aux_\bfy, \pi_\bfy) = 1$.
\item
From correctness of $\AS$, it follows that $\ \FPreVerify(\advt, \vk, m, X, $ $\bfy, \aux_\bfy, \pi_\bfy, \widetilde{\sigma}) = 1$.
\item
Recall that $\widetilde{\sigma}$ is an $\AS$ pre-signature w.r.t.\ the statement $\pk_\bfy$.
Further, recall that $\sk_\bfy$ is the witness computed by the $\Adapt$ algorithm and used by $\AS.\Adapt$ to compute $\sigma$. 
From $R_\ipfe$-compliance of IPFE, it follows that $(\pk_\bfy, \sk_\bfy) \in R_\ipfe$. Hence, by the correctness of $\AS$, it follows that
$\Verify(\vk, m, \sigma) = 1$. 

\item
From $\AS$ correctness, it follows that $(\pk_\bfy, z) \in R'_\ipfe$, where $z$ is the extracted witness in the $\FExt$ algorithm. Then, from $R'_\ipfe$-robustness of IPFE, it follows that $v = f_\bfy(\bfx)$, where $v$ is output of $\FExt$.  
\end{itemize}

\end{proof}

In the subsequent sub-sections, we prove the lemmas regarding advertisement soundness, unforgeability, pre-signature adaptability, witness extractability, zero-knowledge.

\subsection{Advertisement soundness}
\begin{lemma}
Suppose NIZK satisfies adaptive soundness. Then, the functional adaptor signature construction in~\Cref{sec:construction} is advertisement sound.
\label{lemma:fas-ad-sound}
\end{lemma}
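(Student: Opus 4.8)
The plan is a direct reduction to the adaptive soundness of $\nizk$. Recall that in the construction $\AdvertisementGen$ outputs $\advt = (\mpk, \ct, \pi)$ where $\pi$ is a $\nizk$ proof for the statement $\stmt := (X, \pp', \mpk, \ct)$ with respect to the language $L_\nizk$, and $\AdvertisementVerify(\pp, X, \advt)$ simply runs $\nizk.\Verify(\crs, \stmt, \pi)$. So suppose for contradiction that there is a \ppt adversary $\A$ against advertisement soundness with non-negligible winning probability $\epsilon(\secparam)$. I would build a \ppt cheating prover $\mcal{P}^*$ against the adaptive soundness of $\nizk$ as follows: on input $\crs \gets \nizk.\Setup(1^\secparam)$, the prover $\mcal{P}^*$ samples $\pp' \gets \ipfe.\Gen(1^\secparam)$, sets $\pp := (\crs, \pp')$, runs $(X, \advt) \gets \A(\pp)$, parses $\advt = (\mpk, \ct, \pi)$, and outputs the pair $(\stmt, \pi)$ with $\stmt = (X, \pp', \mpk, \ct)$.

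The key observation that makes the reduction go through is that $X \notin L_R$ implies $\stmt \notin L_\nizk$. Indeed, by definition of $L_\nizk$, membership of $(X, \pp', \mpk, \ct)$ requires the existence of a tuple $(r_0, r_1, \bfx)$ with, in particular, $(X, \bfx) \in R$; hence any $\stmt \in L_\nizk$ has $X \in L_R$. Contrapositively, if $\A$ outputs $X \notin L_R$, then $\stmt \notin L_\nizk$. Moreover, since $\mcal{P}^*$ uses exactly the same $\crs$ it received, the distribution of $\pp$ fed to $\A$ is identical to that in the advertisement soundness experiment, so $\A$ wins with probability $\epsilon(\secparam)$. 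Whenever $\A$ wins, we simultaneously have $\stmt \notin L_\nizk$ and $\nizk.\Verify(\crs, \stmt, \pi) = 1$ (the latter because $\AdvertisementVerify(\pp, X, \advt) = 1$), which is exactly the winning condition for $\mcal{P}^*$ in the adaptive soundness game. Therefore $\epsilon(\secparam) \le \negl(\secparam)$ by adaptive soundness of $\nizk$, a contradiction.

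I do not expect any real obstacle here: the argument is a one-step black-box reduction, and the only things to be careful about are (i) correctly arguing the implication $X \notin L_R \Rightarrow \stmt \notin L_\nizk$ from the definition of $L_\nizk$, and (ii) noting that $\mcal{P}^*$ can faithfully simulate $\pp$ because the $\crs$ in $\Setup$ is generated exactly as in $\nizk.\Setup$ and the remaining component $\pp'$ is sampled identically by $\mcal{P}^*$. If one wanted to be maximally careful, one could also remark that $\pp'$ as sampled by $\mcal{P}^*$ lies in $[\ipfe.\Gen(1^\secparam)]$ by construction, though this is only relevant to membership in $L_\nizk$ and not to the soundness reduction itself.
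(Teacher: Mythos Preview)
Your proposal is correct and essentially identical to the paper's own proof: both build a straightforward reduction that receives $\crs$, samples $\pp'$, feeds $\pp=(\crs,\pp')$ to $\A$, and forwards the resulting $(\stmt,\pi)$ to the NIZK soundness challenger, using the observation that $X\notin L_R$ forces $\stmt\notin L_\nizk$ by the definition of $L_\nizk$.
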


\begin{proof}
We show that if a \ppt adversary \A breaks the advertisement soundness of our functional adaptor signature construction with non-negligible advantage, 
then, there exists a \ppt reduction \B that can break the adaptive soundness of the underlying NIZK scheme with the same non-negligible advantage.
Suppose the challenger for adaptive soundness of NIZK is \C.
The reduction \B is as follows.
\begin{itemize}
\item 
The reduction \B obtains $\crs$ from the challenger \C, computes $\pp' \gets \ipfe.\Gen(1^\secparam)$, and sends $\pp=(\crs, \pp')$  to \A.
\item 
The adversary \A sends public advertisement $\advt$ and a statement $X$ to the reduction \B. 
$\B$ parses $\advt=(\mpk, \ct, \pi)$ and sends the statement $(X, \pp, \mpk, \ct)$ and proof $\pi$ to the challenger \C.
\end{itemize}

Observe that if \A breaks advertisement soundness, then, 
$X \notin L_R$ and $\AdvertisementVerify(\pp, X, \advt)=1$.
From the definition of $L_\nizk$ it follows that if $X \notin L_R$, then, $(X, \pp, \mpk, \ct) \notin L_\nizk$. 
Further, recall that the implementation of $\AdvertisementVerify$ simply involves a $\NIZK$ proof verification. 
Therefore, if $\AdvertisementVerify(\pp, X, \advt) = 1$, then, $\NIZK.\Verify(\crs, (X, \pp, \mpk, \ct), \pi) = 1$. 
Hence, whenever \A breaks advertisement soundness of \fas, \B breaks adaptive soundness of \nizk.

\end{proof}

\subsection{Pre-Signature Validity}

\begin{lemma}
\label{lemma:fas-pre-sig-validity}
Suppose the adaptor signature scheme satisfies correctness (\Cref{def:as-correctness}).
Then, the functional adaptor signature construction in~\Cref{fig:fas-construction} satisfies pre-signature validity.
\end{lemma}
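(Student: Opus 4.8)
The plan is to unfold the algorithm definitions of \Cref{fig:fas-construction} and reduce pre-signature validity to correctness of the underlying adaptor signature $\as$. Fix any $\secparam$, any $\pp = (\crs, \pp') \gets \Setup(1^\secparam)$, any $(X, \advt)$ with $\AdvertisementVerify(\pp, X, \advt) = 1$ (write $\advt = (\mpk, \ct, \pi)$), any message $m$, any $f = \bfy \in \mcal{F}_{{\sf IP},\ell}$, any $(\aux_\bfy, \pi_\bfy)$ with $\AuxVerify(\advt, \bfy, \aux_\bfy, \pi_\bfy) = 1$, any $(\sk, \vk) \gets \KGen(1^\secparam)$, and any $\widetilde{\sigma} \gets \FPreSign(\advt, \sk, m, X, \bfy, \aux_\bfy)$. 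By construction $\FPreSign$ simply returns $\widetilde{\sigma} \gets \as.\PreSign(\sk, m, \aux_\bfy)$, and $\FPreVerify(\advt, \vk, m, X, \bfy, \aux_\bfy, \pi_\bfy, \widetilde{\sigma})$ outputs $1$ exactly when both (i) $\AuxVerify(\advt, \bfy, \aux_\bfy, \pi_\bfy) = 1$ and (ii) $\as.\PreVerify(\vk, m, \aux_\bfy, \widetilde{\sigma}) = 1$. Condition (i) is the hypothesis, so the entire work is in establishing (ii).

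For (ii), I would argue that $\aux_\bfy$ is a legal statement for the hard relation $R_\ipfe$ underlying $\as$, and then invoke $\as$-correctness. Concretely: a passing $\AuxVerify$ means $\aux_\bfy = \ipfe.\PubKGen(\mpk, \widetilde{\bfy})$ for $\widetilde{\bfy} = (\bfy^T, \pi_\bfy)^T$. Since $\advt$ verifies, its embedded $\mpk$ is a well-formed IPFE master public key, i.e.\ lies in the image of $\ipfe.\Setup(\pp', 1^{\ell+1}; \cdot)$ --- this is precisely what membership of $(X,\pp',\mpk,\ct)$ in $L_\nizk$ guarantees (and for the prime-order instantiation it even holds unconditionally, since $\PubKGen$ outputs a group element, which trivially has a discrete logarithm, so $\aux_\bfy \in L_{R_\DL}$ always). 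Hence, by $R_\ipfe$-compliance of the IPFE scheme (\Cref{def:ipfe-compliant}), $(\aux_\bfy, \ipfe.\KGen(\msk,\widetilde{\bfy})) \in R_\ipfe$, so $\aux_\bfy \in L_{R_\ipfe}$. Now I apply correctness of $\as$ (\Cref{def:as-correctness}) with the statement/witness pair $(\aux_\bfy, \ipfe.\KGen(\msk,\widetilde{\bfy}))$ and the honestly generated pre-signature $\widetilde{\sigma} \gets \as.\PreSign(\sk, m, \aux_\bfy)$: it yields $\as.\PreVerify(\vk, m, \aux_\bfy, \widetilde{\sigma}) = 1$ with probability $1$. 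Combined with (i), this gives $\FPreVerify(\cdots) = 1$ with probability $1$, which is exactly pre-signature validity.

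The argument is essentially a syntactic unfolding of the construction plus a single appeal to adaptor-signature correctness; no hybrids, rewinding, or reductions are involved. The one point that needs a little care --- and the only place the assumption ``$\advt$ verifies'' is genuinely used --- is the claim $\aux_\bfy \in L_{R_\ipfe}$: in full generality this relies on a verifying advertisement certifying a well-formed $\mpk$, which follows from adaptive soundness of the NIZK in the strongly-secure construction and is immediate for the group-based instantiation. I expect this to be the main (minor) obstacle; the rest is routine.
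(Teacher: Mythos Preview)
Your approach is essentially the paper's: unfold $\FPreVerify$ as $\AuxVerify \wedge \as.\PreVerify$, note the first conjunct is the hypothesis, and reduce the second to correctness of $\as$. The paper's proof is a single sentence to that effect and does not explicitly verify the precondition $\aux_\bfy \in L_{R_\ipfe}$ that \Cref{def:as-correctness} formally requires; you go further and argue it, which is a welcome bit of care.

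One small caveat on that extra step: for the general case, appealing to NIZK soundness to conclude $\mpk$ is well-formed only gives a computational guarantee, whereas pre-signature validity universally quantifies over \emph{all} $\advt$ with $\AdvertisementVerify=1$ and demands probability~$1$. So that route does not quite close the gap in full generality (the paper does not close it either; it simply invokes $\as$-correctness without comment). Your observation that in the prime-order instantiation every group element lies in $L_{R_\DL}$, so $\aux_\bfy \in L_{R_\ipfe}$ unconditionally, is the right fix there; for the lattice instantiation the analogous fact is that $\as.\PreVerify$ passes on an honest $\as.\PreSign$ output for \emph{any} statement $X$, not just those with short witnesses, which one checks directly from the construction rather than from \Cref{def:as-correctness}.
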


\begin{proof}
For 
any $\secparam \in \N$, 
any $\pp \gets \Setup(1^\secparam)$ as computed in~\Cref{fig:fas-construction}, 
any $X, \advt$ such that $\AdvertisementVerify(\pp, \advt, X)=1$ as computed in~\Cref{fig:fas-construction},
any message $m\in \{0,1\}^*$,
any function $\bfy \in \mcal{F_{\sf IP}, \ell}$,
any $(\aux_\bfy, \pi_\bfy)$, 
any key pair $(\sk, \vk) \gets \KGen(1^\secparam)$,
any pre-signature $\widetilde{\sigma} \gets \FPreSign(\advt, \sk, m, X, \bfy, \aux_\bfy)$ as computed in~\Cref{fig:fas-construction}, 
suppose $\AuxVerify(\advt, \bfy, \aux_\bfy, \pi_\bfy)=1$. 
Then, to prove that $\Pr[\FPreVerify(\allowbreak \advt, \vk, m, X, f, \aux_f, \pi_f, \widetilde{\sigma})]=1$ (as computed in~\Cref{fig:fas-construction}), it suffices to show that $\Pr[\AS.\PreVerify(\vk, \allowbreak m, \aux_\bfy, \widetilde{\sigma})]=1$. This follows from the correctness of the adaptor signature scheme $\AS$.
\end{proof}

\subsection{Unforgeability}

\begin{lemma}
\label{lemma:fas-unf}
Suppose the NIZK argument system $\nizk$ satisfies adaptive soundness ((\Cref{def:nizk})), 
the Adaptor Signature scheme $\as$ satisfies witness extractability (\Cref{def:as-wit-ext}), 
the relation $R$ is a $\mcal{F}_{{\sf IP}, \ell}$-hard (\Cref{def:f-hard-relation}), and 
$\ipe$ scheme satisfies correctness and $R'_\ipfe$-robustness (\Cref{def:ipfe-robust}).
Then, the functional adaptor signature instantiation above is $\faSigForge$-secure.
\end{lemma}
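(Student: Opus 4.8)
The plan is to prove this lemma by the game-hopping argument already outlined in \Cref{sec:fas-construction}, showing that $\Pr[\faSigForge_{\A,\FAS}(1^\secparam)=1]\le\negl(\secparam)$ through the three hybrids $G_0,G_1,G_2$ of \Cref{fig:fasig-unf-games-short}. Here $G_0$ is the real unforgeability experiment; $G_1$ additionally parses the adversarially produced advertisement $\advt=(\mpk,\ct,\pi)$, forms the NIZK statement $\stmt:=(X^*,\pp',\mpk,\ct)$, and aborts (setting a flag ${\sf Bad}_1$) whenever $\stmt\notin L_\nizk$ even though $\nizk.\Verify$ accepted $\pi$; and $G_2$ further extracts $z:=\as.\Ext(\widetilde\sigma^*,\sigma^*,\aux_\bfy^*)$ from $\A$'s forgery and aborts (setting ${\sf Bad}_2$) whenever the forgery verifies on a fresh $m^*$ but $(\aux_\bfy^*,z)\notin R'_\ipfe$. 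Since each consecutive pair of games is identical until the corresponding flag is set, it suffices to bound $\Pr[{\sf Bad}_1]$ in $G_1$, bound $\Pr[{\sf Bad}_2]$ in $G_2$, and bound $\Pr[G_2=1]$ directly.

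First I would bound $\Pr[{\sf Bad}_1]$ by the adaptive soundness of $\nizk$ (\Cref{def:nizk}): a prover $\mathcal P^*$ forwards the $\crs$ received from its challenger, samples $\pp'\gets\ipfe.\Gen(1^\secparam)$, runs $\A$ faithfully, and on ${\sf Bad}_1$ outputs $(\stmt,\pi)$ — a verifying proof for $\stmt\notin L_\nizk$. Next I would bound $\Pr[{\sf Bad}_2]$ by the witness extractability of $\as$ (\Cref{def:as-wit-ext}); this is the step I expect to be the main obstacle and is the conceptually interesting one. The reduction $\B$ plays $\aWitExt_{\A,\as}$: it gets $\vk$ from the $\as$ challenger and forwards it to $\A$, answering $\A$'s $\mcal{O}_S$ queries with the $\as$ signing oracle and its $\mcal{O}_{\fpS}(m,X,\bfy,\aux_\bfy,\pi_\bfy)$ queries by first running the $\AuxVerify$ check and then calling the $\as$ pre-sign oracle on the adaptor statement $\aux_\bfy$; when $\A$ outputs $(\advt,m^*,\bfy^*,\aux_\bfy^*,\pi_\bfy^*)$, $\B$ submits $(m^*,\aux_\bfy^*)$ as its challenge, relays the returned pre-signature as $\widetilde\sigma^*$ to $\A$, and outputs $\A$'s forgery $\sigma^*$. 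The delicate bookkeeping is verifying that the simulated FAS sign / pre-sign oracles and the freshness set $\mcal{Q}$ stay perfectly consistent with what the $\as$ game records, so that whenever ${\sf Bad}_2$ holds, $\B$'s winning predicate $(m^*\notin\mcal{Q})\wedge((\aux_\bfy^*,z)\notin R'_\ipfe)\wedge\Verify(\vk,m^*,\sigma^*)$ holds verbatim; note that nothing here opens the black box of $\as$, which is precisely why reducing to witness extractability (rather than unforgeability) of $\as$ keeps the proof modular.

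Finally I would show $\Pr[G_2=1]\le\negl(\secparam)$ from $\mcal{F}_{{\sf IP},\ell}$-hardness of $R$ (\Cref{def:f-hard-relation}) together with correctness and $R'_\ipfe$-robustness of $\ipfe$ (\Cref{def:ipfe-robust}). A reduction $\B'$ receives $X^*$ from the hard-relation challenger, runs the $G_2$ experiment, and on a win computes $v:=\ipfe.\Dec(z,\ct)$ and outputs $(\bfy^*,v)$. Since the game did not abort, ${\sf Bad}_1=\false$ forces $\stmt\in L_\nizk$, so $\ct$ encrypts some $\widetilde\bfx=(\bfx^T,0)^T$ under $\mpk$ with $(X^*,\bfx)\in R$; ${\sf Bad}_2=\false$ forces $(\aux_\bfy^*,z)\in R'_\ipfe$; and the winning condition of $\A$ gives $\bfy^*\in\mcal{F}_{{\sf IP},\ell}$ and $\aux_\bfy^*=\pk_{\bfy^*}$. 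Hence $R'_\ipfe$-robustness yields $v=f_{\widetilde{\bfy^*}}(\widetilde\bfx)=f_{\bfy^*}(\bfx)$ (the padded last coordinate of $\widetilde\bfx$ is $0$), so $v\in\{f_{\bfy^*}(\bfx'):\exists\,\bfx'\text{ s.t. }(X^*,\bfx')\in R\}$ and $\B'$ wins its game. Combining the three bounds gives $\Pr[\faSigForge_{\A,\FAS}(1^\secparam)=1]\le\negl(\secparam)$, as required.
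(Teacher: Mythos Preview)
Your proposal is correct and follows essentially the same approach as the paper: the same three-hybrid sequence $G_0,G_1,G_2$, with ${\sf Bad}_1$ bounded by adaptive soundness of $\nizk$, ${\sf Bad}_2$ bounded via a reduction to witness extractability of $\as$ (with the same oracle-forwarding and bookkeeping you describe), and $\Pr[G_2=1]$ bounded by $\mcal{F}_{{\sf IP},\ell}$-hardness of $R$ together with $R'_\ipfe$-robustness of $\ipfe$. The paper formalizes each step as a separate claim with an explicit reduction figure, but the content is identical to what you wrote.
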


\begin{proof}
We prove the lemma by defining a sequence of games.

\paragraph{Game $G_0$:} It is the original game $\faSigForge_{\A, \FAS}$, where the adversary \A has to come up with a valid forgery on a message $m^*$ of his choice, while having access to functional pre-sign oracle $\mcal{O}_{\fpS}$ and sign oracle $\mcal{O}_S$. 
Game $G_0$ is formally defined in~\Cref{fig:fasig-unf-games}.

\paragraph{Game $G_1$:}
same as game $G_0$, except that 
before computing the pre-signature, 
the game checks if the NIZK statement $(X^*, \pp', \mpk, \ct)$ is in the language $L_{NIZK}$. 
If no, the game sets the flag ${\sf Bad}_1 = \true$.
Game $G_1$ is formally defined in~\Cref{fig:fasig-unf-games}.

\paragraph{Game $G_2$:}
same as game $G_1$, except that 
when \A outputs the forgery $\sigma^*$, 
the game extracts the witness $z$ of the underlying adaptor signature scheme for the statement $\pk_{\bfy^*}$ and checks if the NIZK statement $(\aux_\bfy^*, z)$ satisfies $R'_\ipfe$. 
If no, the game sets the flag ${\sf Bad}_2 = \true$.
Game $G_2$ is formally defined in~\Cref{fig:fasig-unf-games}.

\ifacm
	\begin{figure}[t]
	\centering
	\captionsetup{justification=centering}
	\begin{gameproof}
	\begin{pcvstack}[boxed, space=1em]
	\begin{pcvstack}[space=1em]
	\procedure[linenumbering, mode=text]{Games $G_0$, \pcbox{G_1}, {\color{blue} \pcbox{ \color{black} G_2}}}{
	$\mcal{Q} := \emptyset$ 
	\\ 
	$\crs \gets \nizk.\Setup(1^\secparam)$
	\\ 
	$\pp' \gets \ipfe.\Gen(1^\secparam)$ 
	\\ 
	$\pp := (\crs, \pp')$
	\\ 
	$(\sk, \vk) \gets \KGen(\pp')$ 
	\\ 
	$(X^*, \bfx^*) \gets \GenR(1^\secparam)$ 
	\\ 
	$(\advt, m^*, \bfy^*, \aux_\bfy^*, \pi^*_\bfy) \gets \A^{\mcal{O}_S(\cdot)}(\pp, \vk, X^*)$ 
	\\ 
	Parse $\advt = (\mpk, \ct, \pi)$, let $\stmt := (X^*, \pp', \mpk, \ct)$
	\\ 
	$\widetilde{\bfy^*} := ({\bfy^*}^T, \pi^*_\bfy)^T$
	\\
	$\pk_{\bfy^*} := \ipe.\PubKGen(\mpk, \widetilde{\bfy^*})$ 
	\\ 
	$ {\sf Bad}_1 := \false, {\sf Bad}_2 := \false$
	\\ 
	If $\nizk.\Verify(\crs, \stmt, \pi) =0 \ \vee\ \bfy^* \notin \mcal{F}_{{\sf IP}, \ell} \ \vee \ \pk_{\bfy^*} \neq \aux_\bfy^*$: 
	\pcskipln \\ 
	$\pcind$ \ret $0$
	\\
	{\color{blue} \pcbox{
	\color{black} \pcbox{\text{If $\stmt \notin L_{NIZK}$: ${\sf Bad}_1 := \true$, \ret $0$}}
	}}
	\\ 
	$\widetilde{\sigma}^* \gets \as.\PreSign(\sk, m^*, \aux_\bfy^*)$ 
	\\ 
	$\sigma^* \gets \A^{\mcal{O}_S(\cdot), \mcal{O}_{\fpS}(\cdot, \cdot, \cdot, \cdot, \cdot)}(\widetilde{\sigma}^*)$ 
	\\
	{ \color{blue} \pcbox{ \color{black}
	z = \as.\Ext(\widetilde{\sigma}^*, \sigma^*, \aux_\bfy^*)
	} 
	} \\
	{ \color{blue} \pcbox{ \color{black}
	\text{if $ ((m^* \notin \mcal{Q}) \wedge \Verify(\vk, m^*, \sigma^*) \wedge ((\aux_\bfy^*, z) \notin R'_\ipfe))$: }	
	} 
	} \pcskipln \\
	$\pcind$
	{ \color{blue} \pcbox{ \color{black}
	\text{${\sf Bad}_2 := \true$, \ret $0$}	
	} 
	} \\
	\ret $((m^* \notin \mcal{Q}) \wedge \Verify(\vk, m^*, \sigma^*))$
	}
	\end{pcvstack}
	\begin{pcvstack}[space=1em]
	\procedure[linenumbering, mode=text]{Oracle $\mcal{O}_S(m)$}{
	$\sigma \gets \Sign(\sk, m)$
	\\
	$\mcal{Q} := \mcal{Q} \vee \{m\}$ 
	\\ 
	\ret $\sigma$
	}

	\procedure[linenumbering, mode=text]{Oracle $\mcal{O}_{\fpS}(m, X, \bfy, \aux_\bfy, \pi_\bfy)$}{
	If $\AuxVerify(\advt, \bfy, \aux_\bfy, \pi_\bfy) = 0$: \ret $\bot$
	\\
	$\widetilde{\sigma} \gets \FPreSign(\advt, \sk, m, X, \bfy, \aux_\bfy)$ 
	\\ 
	$\mcal{Q} := \mcal{Q} \vee \{m\}$ 
	\\ 
	\ret $\widetilde{\sigma}$
	}
	\end{pcvstack}
	\end{pcvstack}
	\end{gameproof}
	\caption{Unforgeability Proof: Games $G_0, G_1, G_2$
	}
	\label{fig:fasig-unf-games}
	\end{figure}
\else 
	\begin{figure}[h]
	\centering
	\captionsetup{justification=centering}
	\begin{gameproof}
	\begin{pchstack}[boxed, space=1em]
	\begin{pcvstack}
	\procedure[linenumbering, mode=text]{Games $G_0$, \pcbox{G_1}, {\color{blue} \pcbox{ \color{black} G_2}}}{
	$\mcal{Q} := \emptyset$ 
	\\ 
	$\crs \gets \nizk.\Setup(1^\secparam)$
	\\ 
	$\pp' \gets \ipfe.\Gen(1^\secparam)$ 
	\\ 
	$\pp := (\crs, \pp')$
	\\ 
	$(\sk, \vk) \gets \KGen(\pp')$ 
	\\ 
	$(X^*, \bfx^*) \gets \GenR(1^\secparam)$ 
	\\ 
	$(\advt, m^*, \bfy^*, \aux_\bfy^*, \pi^*_\bfy) \gets \A^{\mcal{O}_S(\cdot)}(\pp, \vk, X^*)$ 
	\\ 
	Parse $\advt = (\mpk, \ct, \pi)$ 
	\\ 
	Let $\stmt := (X^*, \pp', \mpk, \ct)$
	\\ 
	$\widetilde{\bfy^*} := ({\bfy^*}^T, \pi^*_\bfy)^T$
	\\
	$\pk_{\bfy^*} := \ipe.\PubKGen(\mpk, \widetilde{\bfy^*})$ 
	\\ 
	$ {\sf Bad}_1 := \false, {\sf Bad}_2 := \false$
	\\ 
	If $\nizk.\Verify(\crs, \stmt, \pi) =0 \ \vee\ \bfy^* \notin \mcal{F}_{{\sf IP}, \ell}$ 
	\pcskipln \\ 
	$\vee \ \pk_{\bfy^*} \neq \aux_\bfy^*$: \ret $0$
	\\
	{\color{blue} \pcbox{
	\color{black} \pcbox{\text{If $\stmt \notin L_{NIZK}$: ${\sf Bad}_1 := \true$, \ret $0$}}
	}}
	\\ 
	$\widetilde{\sigma}^* \gets \as.\PreSign(\sk, m^*, \aux_\bfy^*)$ 
	\\ 
	$\sigma^* \gets \A^{\mcal{O}_S(\cdot), \mcal{O}_{\fpS}(\cdot, \cdot, \cdot, \cdot, \cdot)}(\widetilde{\sigma}^*)$ 
	\\
	{ \color{blue} \pcbox{ \color{black}
	z = \as.\Ext(\widetilde{\sigma}^*, \sigma^*, \aux_\bfy^*)
	} 
	} \\
	{ \color{blue} \pcbox{ \color{black}
	\text{if $ ((m^* \notin \mcal{Q}) \wedge \Verify(\vk, m^*, \sigma^*)$ }	
	} 
	} \pcskipln \\
	{ \color{blue} \pcbox{ \color{black}
	\text{$ \wedge ((\aux_\bfy^*, z) \notin R'_\ipfe))$: ${\sf Bad}_2 := \true$, \ret $0$}	
	} 
	} \\
	\ret $((m^* \notin \mcal{Q}) \wedge \Verify(\vk, m^*, \sigma^*))$
	}
	\end{pcvstack}
	\begin{pcvstack}[space=1em]
	\procedure[linenumbering, mode=text]{Oracle $\mcal{O}_S(m)$}{
	$\sigma \gets \Sign(\sk, m)$
	\\
	$\mcal{Q} := \mcal{Q} \vee \{m\}$ 
	\\ 
	\ret $\sigma$
	}

	\procedure[linenumbering, mode=text]{Oracle $\mcal{O}_{\fpS}(m, X, \bfy, \aux_\bfy, \pi_\bfy)$}{
	If $\AuxVerify(\advt, \bfy, \aux_\bfy, \pi_\bfy) = 0$: \ret $\bot$
	\\
	$\widetilde{\sigma} \gets \FPreSign(\advt, \sk, m, X, \bfy, \aux_\bfy)$ 
	\\ 
	$\mcal{Q} := \mcal{Q} \vee \{m\}$ 
	\\ 
	\ret $\widetilde{\sigma}$
	}
	\end{pcvstack}
	\end{pchstack}
	\end{gameproof}
	\caption{Unforgeability Proof: Games $G_0, G_1, G_2$
	}
	\label{fig:fasig-unf-games}
	\end{figure}
\fi
To prove the lemma, we need to show that 
\[
\Pr[ G_0(1^\secparam) = 1] \leq \negl(\secparam).
\]
Note that by triangle inequality, it follows that 
\begin{align*}
\Pr[ G_0(1^\secparam) = 1] 
& \leq |\Pr[ G_0(1^\secparam) = 1] - \Pr[ G_1(1^\secparam) = 1]| \\ 
& \quad \quad + |\Pr[ G_1(1^\secparam) = 1] - \Pr[ G_2(1^\secparam) = 1]| \\
& \quad \quad + \Pr[ G_2(1^\secparam) = 1].
\end{align*}
To complete the proof,  
we show in~\Cref{claim:fasig-unf-g1,claim:fasig-unf-g2,claim:fasig-unf-g2-final}
that each of the three terms on the right-hand-side are at most $\negl(\secparam)$. 
\end{proof}

\begin{claim}
\label{claim:fasig-unf-g1}
If the NIZK argument system $\nizk$ satisfies adaptive soundness (\Cref{def:nizk}), then, 
$| \Pr[G_0 (1^\secparam) = 1] - \Pr[G_1 (1^\secparam) = 1] | \leq \negl(\secparam)$.
\end{claim}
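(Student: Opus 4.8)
The plan is the standard ``identical-until-bad'' argument. Games $G_0$ and $G_1$ are syntactically identical except for one extra line in $G_1$ which, immediately before the challenge pre-signature $\widetilde{\sigma}^*$ is computed, tests whether $\stmt := (X^*, \pp', \mpk, \ct) \in L_{\nizk}$ and, if not, sets the flag ${\sf Bad}_1$ and aborts with output $0$. This is the only place where the two executions can diverge, and they are distributed identically conditioned on ${\sf Bad}_1$ not being set; hence a routine coupling gives
\[
| \Pr[G_0(1^\secparam) = 1] - \Pr[G_1(1^\secparam) = 1] | \;\le\; \Pr[{\sf Bad}_1 \text{ in } G_1] .
\]
So it suffices to show $\Pr[{\sf Bad}_1] \le \negl(\secparam)$.

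To do that I would reduce to adaptive soundness of $\nizk$. The key point is that ${\sf Bad}_1$ can only be set once the game has reached the test line, which in turn requires $\nizk.\Verify(\crs, \stmt, \pi) = 1$ (otherwise $G_1$ has already returned $0$ on the preceding check), together with $\stmt \notin L_{\nizk}$ --- i.e., $\A$ has implicitly handed over a verifying NIZK proof for a false statement. The reduction $\B$, on input $\crs$ from the soundness challenger, samples $\pp' \gets \ipfe.\Gen(1^\secparam)$, sets $\pp := (\crs, \pp')$, generates $(\sk, \vk) \gets \KGen(\pp')$ and $(X^*, \bfx^*) \gets \GenR(1^\secparam)$, and runs $\A$ exactly as in $G_1$, answering sign-oracle queries with $\sk$ and computing $\ipfe.\PubKGen$ itself (it is public and deterministic). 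When $\A$ outputs $(\advt, m^*, \bfy^*, \aux_\bfy^*, \pi_\bfy^*)$, $\B$ parses $\advt = (\mpk, \ct, \pi)$ and outputs the pair $(\stmt, \pi)$ with $\stmt = (X^*, \pp', \mpk, \ct)$. On any run where ${\sf Bad}_1$ would be set, this pair satisfies $\stmt \notin L_{\nizk}$ and $\nizk.\Verify(\crs, \stmt, \pi) = 1$, so $\B$ breaks adaptive soundness with the same probability; thus $\Pr[{\sf Bad}_1] \le \negl(\secparam)$, completing the claim.

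There is no real obstacle here --- the only thing to check carefully is that $\B$ emulates the relevant prefix of $G_1$ perfectly without any trapdoor information, which holds because everything executed before the abort line uses only $\ipfe.\Gen$, $\KGen$, $\GenR$, the honestly generated signing key (for the sign oracle), and the public algorithms $\ipfe.\PubKGen$ and $\nizk.\Verify$. Note also that $\B$ never has to simulate the pre-signing oracle or the forgery phase, since if ${\sf Bad}_1$ fires at all it fires strictly before those come into play.
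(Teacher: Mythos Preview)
Your proposal is correct and follows essentially the same approach as the paper: an identical-until-bad argument bounding the gap by $\Pr[{\sf Bad}_1]$, followed by a straightforward reduction to adaptive soundness where $\B$ receives $\crs$, simulates the prefix of $G_1$ using honestly generated $\pp'$, $(\sk,\vk)$, $(X^*,\bfx^*)$ and the sign oracle, and outputs $(\stmt,\pi)$ extracted from $\A$'s advertisement. Your observation that $\B$ need not simulate the pre-signing oracle or forgery phase (since ${\sf Bad}_1$ is decided strictly before them) matches the paper's reduction exactly.
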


\begin{proof}
Observe that in game $G_0$, ${\sf Bad}_1 = \false$ always and game $G_1$ differs from it when the flag ${\sf Bad}_1 = \true$ is set. Both games are identical until $G_1$ checks the condition for setting ${\sf Bad}_1$.
Hence, 
\[
| \Pr[G_0 (1^\secparam) = 1] - \Pr[G_1 (1^\secparam) = 1] | \leq \Pr[{\sf Bad}_1 \text{ in $G_1$}]. 
\]
Hence, it suffices to show that in game $G_1$, $\Pr[{\sf Bad}_1] \leq \negl(\secparam)$.

We prove the claim using a reduction to the adaptive soundness of the underlying $\nizk$ scheme. More specifically, we show that if the adversary \A causes game $G_1$ to set ${\sf Bad}_1 = \true$, then, we can use it to construct a reduction \B that breaks the adaptive soundness of $\nizk$. Let \C be the challenger for the NIZK adaptive soundness. Then, the reduction is as in~\Cref{fig:fasig-unf-g1-proof}.

\ifacm
	\begin{figure}[t]
	\centering
	\captionsetup{justification=centering}
	\begin{gameproof}
	\begin{pcvstack}[boxed, space=1em]
	\begin{pcvstack}[space=1em]
	\procedure[linenumbering, mode=text]{Reduction \B for proof of~\Cref{claim:fasig-unf-g1}}{
	$\mcal{Q} := \emptyset$ 
	\\ 
	{\color{blue}
	$\crs \gets \C(1^\secparam)$
	}
	\\ 
	$\pp' \gets \ipfe.\Gen(1^\secparam)$ 
	\\ 
	$\pp = (\crs, \pp')$
	\\ 
	$(\sk, \vk) \gets \KGen(\pp')$ 
	\\ 
	$(X^*, \bfx^*) \gets \GenR(1^\secparam)$ 
	\\ 
	$(\advt, m^*, \bfy^*, \aux_\bfy^*, \pi^*_\bfy) \gets \A^{\mcal{O}_S(\cdot)}(\pp, \vk, X^*)$ 
	\\ 
	Parse $\advt = (\mpk, \ct, \pi)$, let $\stmt := (X^*, \pp', \mpk, \ct)$
	\\ 
	$\widetilde{\bfy^*} := ({\bfy^*}^T, \pi^*_\bfy)^T$
	\\
	$\pk_{\bfy^*} := \ipe.\PubKGen(\mpk, \widetilde{\bfy^*})$ 
	\\ 
	${\sf Bad}_1 = \false$
	\\ 
	\label{step:bad1-prev}
	If $\nizk.\Verify(\crs, \stmt, \pi) =0 \ \vee\ \bfy^* \notin \mcal{F}_{{\sf IP}, \ell} \ \vee \ \pk_{\bfy^*} \neq \aux_\bfy^*$: 
	\pcskipln \\ 
	$\pcind$ {\bf abort}
	\\ 
	{\color{blue}
	\label{step:bad1}
	If $\stmt \notin L_{NIZK}$: ${\sf Bad}_1 = \true$
	}
	\\ 
	{\color{blue}
	\ret $(\stmt, \pi)$ to \C
	}
	}
	\end{pcvstack}
	\begin{pcvstack}[space=1em]
	\procedure[linenumbering, mode=text]{Oracle $\mcal{O}_S(m)$}{
	$\sigma \gets \Sign(\sk, m)$
	\\
	$\mcal{Q} := \mcal{Q} \vee \{m\}$ 
	\\ 
	\ret $\sigma$
	}

	\end{pcvstack}
	\end{pcvstack}
	\end{gameproof}
	\caption{Reduction \B for proof of~\Cref{claim:fasig-unf-g1}
	}
	\label{fig:fasig-unf-g1-proof}
	\end{figure}
\else
	\begin{figure}[h]
	\centering
	\captionsetup{justification=centering}
	\begin{gameproof}
	\begin{pchstack}[boxed, space=1em]
	\procedure[linenumbering, mode=text]{Reduction \B for proof of~\Cref{claim:fasig-unf-g1}}{
	$\mcal{Q} := \emptyset$ 
	\\ 
	{\color{blue}
	$\crs \gets \C(1^\secparam)$
	}
	\\ 
	$\pp' \gets \ipfe.\Gen(1^\secparam)$ 
	\\ 
	$\pp = (\crs, \pp')$
	\\ 
	$(\sk, \vk) \gets \KGen(\pp')$ 
	\\ 
	$(X^*, \bfx^*) \gets \GenR(1^\secparam)$ 
	\\ 
	$(\advt, m^*, \bfy^*, \aux_\bfy^*, \pi^*_\bfy) \gets \A^{\mcal{O}_S(\cdot)}(\pp, \vk, X^*)$ 
	\\ 
	Parse $\advt = (\mpk, \ct, \pi)$, let $\stmt := (X^*, \pp', \mpk, \ct)$
	\\ 
	$\widetilde{\bfy^*} := ({\bfy^*}^T, \pi^*_\bfy)^T$
	\\
	$\pk_{\bfy^*} := \ipe.\PubKGen(\mpk, \widetilde{\bfy^*})$ 
	\\ 
	${\sf Bad}_1 = \false$
	\\ 
	\label{step:bad1-prev}
	If $\nizk.\Verify(\crs, \stmt, \pi) =0 \ \vee\ \bfy^* \notin \mcal{F}_{{\sf IP}, \ell} \ \vee \ \pk_{\bfy^*} \neq \aux_\bfy^*$: 
	\pcskipln \\ 
	$\pcind$ {\bf abort}
	\\ 
	{\color{blue}
	\label{step:bad1}
	If $\stmt \notin L_{NIZK}$: ${\sf Bad}_1 = \true$
	}
	\\ 
	{\color{blue}
	\ret $(\stmt, \pi)$ to \C
	}
	}

	\procedure[linenumbering, mode=text]{Oracle $\mcal{O}_S(m)$}{
	$\sigma \gets \Sign(\sk, m)$
	\\
	$\mcal{Q} := \mcal{Q} \vee \{m\}$ 
	\\ 
	\ret $\sigma$
	}

	\end{pchstack}
	\end{gameproof}
	\caption{Reduction \B for proof of~\Cref{claim:fasig-unf-g1}
	}
	\label{fig:fasig-unf-g1-proof}
	\end{figure}
\fi
Observe that if \A causes game $G_1$ to not abort, then, it must be the case that 
$\pi$ is a valid proof for $\stmt = (X^*, \pp', \mpk, \ct)$.
Further, if \A causes game $G_1$ to set ${\sf Bad}_1 = \true$, then, it must be the case that 
$\stmt \notin L_{NIZK}$. 
This means that in step~\ref{step:bad1}, the reduction \B successfully returns a valid proof $\pi$ to $\C$ for a statement $\stmt$ not in the language $L_{NIZK}$ and thus breaks the adaptive soundness of the $\nizk$. Thus, $\Pr[{\sf Bad}_1]$ is same as the probability that $\B$ breaks adaptive soundness of the $\nizk$. Hence, we can conclude that $\Pr[{\sf Bad}_1] \leq \negl(\secparam)$.
\end{proof}

\begin{claim}
\label{claim:fasig-unf-g2}
If the Adaptor Signature scheme $\as$ satisfies witness extractability (\Cref{def:as-wit-ext}), then, 
$| \Pr[G_1 (1^\secparam) = 1] - \Pr[G_2 (1^\secparam) = 1] | \leq \negl(\secparam)$.
\end{claim}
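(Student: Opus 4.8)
\textbf{Proof proposal for Claim~\ref{claim:fasig-unf-g2}.}

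The plan is to bound the difference $|\Pr[G_1(1^\secparam)=1] - \Pr[G_2(1^\secparam)=1]|$ by the probability that the flag ${\sf Bad}_2$ is set in $G_2$, and then to construct a reduction $\B$ to the witness extractability of the underlying adaptor signature scheme $\as$ that succeeds exactly when ${\sf Bad}_2$ is triggered. First I would observe that $G_1$ and $G_2$ are syntactically identical up to the point where $G_2$ evaluates the condition for setting ${\sf Bad}_2$: the extra extraction step $z := \as.\Ext(\widetilde\sigma^*, \sigma^*, \aux_\bfy^*)$ and the check $(\aux_\bfy^*, z) \notin R'_\ipfe$ only affect the output when they cause an early return of $0$. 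Hence by the standard ``identical-until-bad'' argument,
\[
| \Pr[G_1 (1^\secparam) = 1] - \Pr[G_2 (1^\secparam) = 1] | \leq \Pr[{\sf Bad}_2 \text{ in } G_2],
\]
so it suffices to show $\Pr[{\sf Bad}_2] \leq \negl(\secparam)$.

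Next I would build the reduction $\B$ playing in the witness extractability experiment $\aWitExt_{\B, \as}$ of~\Cref{fig:asig-wit-ext-exp}. $\B$ receives the verification key $\vk$ from its challenger and has access to the signing oracle $\mcal{O}_S$ and pre-signing oracle $\mcal{O}_{pS}$ of $\as$. It runs $\A$ internally, simulating $G_2$: it generates $\crs \gets \nizk.\Setup(1^\secparam)$, $\pp' \gets \ipfe.\Gen(1^\secparam)$, samples $(X^*, \bfx^*) \gets \GenR(1^\secparam)$, and forwards $(\pp, \vk, X^*)$ to $\A$. All of $\A$'s sign queries are answered using $\B$'s own $\mcal{O}_S$ oracle; all functional pre-sign queries $\mcal{O}_{\fpS}(m, X, \bfy, \aux_\bfy, \pi_\bfy)$ (after the $\AuxVerify$ check passes) are answered by forwarding $(m, \aux_\bfy)$ to $\B$'s $\mcal{O}_{pS}$ oracle, since $\FPreSign$ is literally $\as.\PreSign(\sk, m, \aux_\bfy)$. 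When $\A$ outputs $(\advt, m^*, \bfy^*, \aux_\bfy^*, \pi_\bfy^*)$ and the game does not return $0$ early (i.e., the $\nizk$ verification, membership $\bfy^* \in \mcal{F}_{{\sf IP},\ell}$, and $\pk_{\bfy^*} = \aux_\bfy^*$ all hold), $\B$ submits $(m^*, \aux_\bfy^*)$ as its own challenge message-statement pair to the $\as$ challenger, receives back the honest pre-signature $\widetilde\sigma^*$, relays it to $\A$, and finally relays $\A$'s forgery $\sigma^*$ as its own output.

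The key observation is that ${\sf Bad}_2 = \true$ in $G_2$ precisely when $(m^* \notin \mcal{Q}) \wedge \Verify(\vk, m^*, \sigma^*) \wedge ((\aux_\bfy^*, z) \notin R'_\ipfe)$ where $z = \as.\Ext(\widetilde\sigma^*, \sigma^*, \aux_\bfy^*)$ --- and this is exactly the winning condition of $\aWitExt_{\B, \as}$, with $\aux_\bfy^*$ playing the role of the adaptor statement $X$ and $R'_\ipfe$ the role of the extended relation $R'$. Thus $\Pr[{\sf Bad}_2] = \Pr[\aWitExt_{\B, \as}(1^\secparam) = 1] \leq \negl(\secparam)$ by witness extractability of $\as$. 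I would need to be slightly careful that $\B$'s simulation of $G_2$ is perfect: the sign and pre-sign queries are answered identically (and $\mcal{Q}$ is updated the same way on both sides), the challenge pre-signature is distributed identically since $\B$'s challenger also runs $\as.\PreSign$ honestly, and the $\AuxVerify$ filtering on pre-sign queries is done by $\B$ before calling its oracle. The main obstacle --- really the only subtlety --- is verifying that the bookkeeping of the query set $\mcal{Q}$ and the ``fresh message'' condition $m^* \notin \mcal{Q}$ translate correctly between the two experiments, and that $\B$ never needs to know the signing key $\sk$ (it does not, since every use of $\sk$ in $G_2$ is mediated through $\as$ algorithms that $\B$ can invoke via its oracles or that operate on the adaptor statement $\aux_\bfy^*$ directly). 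This completes the proof of the claim.
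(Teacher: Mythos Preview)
Your proposal is correct and follows essentially the same approach as the paper: bound the difference by $\Pr[{\sf Bad}_2]$ via an identical-until-bad argument, then build a reduction $\B$ to $\aWitExt$ that forwards $\A$'s sign and functional pre-sign queries to its own $\as$ oracles, submits $(m^*, \aux_\bfy^*)$ as the challenge, and outputs $\sigma^*$. The paper's proof highlights the same two points you flag---that the ${\sf Bad}_2$ condition coincides with the $\aWitExt$ winning condition and that $\mcal{Q}$ tracks $\as.\mcal{Q}$ exactly---so your write-up is on target.
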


\begin{proof}
Observe that in game $G_1$, ${\sf Bad}_2 = \false$ always and game $G_2$ differs from it when the flag ${\sf Bad}_2 = \true$ is set. Both games are identical until $G_2$ checks the condition for setting ${\sf Bad}_2$.
Hence, 
\[
| \Pr[G_1 (1^\secparam) = 1] - \Pr[G_2 (1^\secparam) = 1] | \leq \Pr[{\sf Bad_2} \text{ in $G_2$}].
\]

Then, it suffices to show that in game $G_2$, $\Pr[{\sf Bad}_2] \leq \negl(\secparam)$.

We prove the claim using a reduction to the witness extractability of the underlying $\as$ scheme. More specifically, 
we show that if the adversary \A causes ${\sf Bad}_2$ in game $G_2$, then, we can use it to construct a reduction \B that breaks the witness extractability of $\as$.
For the $\as$ witness extractability game, 
let \C be the challenger and let $\as.\mcal{O}_S, \as.\mcal{O}_{pS}$ be the signing and pre-signing oracles that the reduction \B has access to.
Then, the reduction is as in~\Cref{fig:fasig-unf-g2-proof}.

\ifacm
	\begin{figure}[t]
	\centering
	\captionsetup{justification=centering}
	\begin{gameproof}
	\begin{pcvstack}[boxed, space=1em]
	\begin{pcvstack}[space=1em]
	\procedure[linenumbering, mode=text]{Reduction \B for proof of~\Cref{claim:fasig-unf-g2}}{
	$\mcal{Q} := \emptyset$ 
	\\ 
	$\crs \gets \nizk.\Setup(1^\secparam)$
	\\ 
	{\color{blue}
	$(\pp', \vk) \gets \C(1^\secparam)$ 
	}
	\\ 
	$\pp = (\crs, \pp')$
	\\ 
	$(X^*, \bfx^*) \gets \GenR(1^\secparam)$ 
	\\ 
	$(\advt, m^*, \bfy^*, \aux_\bfy^*, \pi^*_\bfy) \gets \A^{\mcal{O}_S(\cdot)}(\pp, \vk, X^*)$ 
	\\ 
	Parse $\advt = (\mpk, \ct, \pi)$, let $\stmt := (X^*, \pp', \mpk, \ct)$
	\\ 
	$\widetilde{\bfy^*} := ({\bfy^*}^T, \pi^*_\bfy)^T$
	\\
	$\pk_{\bfy^*} := \ipe.\PubKGen(\mpk, \widetilde{\bfy^*})$ 
	\\ 
	${\sf Bad}_1 := \false, {\sf Bad}_2 := \false$
	\\ 
	If $\nizk.\Verify(\crs, \stmt, \pi) =0 \ \vee\ \bfy^* \notin \mcal{F}_{{\sf IP}, \ell} \ \vee \ \pk_{\bfy^*} \neq \aux_\bfy^*$: 
	\pcskipln \\ 
	$\pcind$ ${\bf abort}$
	\\
	If $\stmt \notin L_{NIZK}$: ${\sf Bad}_1 := \true$, {\bf abort}
	\\ 
	{\color{blue}
	$\widetilde{\sigma}^* \gets \C(m^*, \aux_\bfy^*)$ 
	}
	\\ 
	$\sigma^* \gets \A^{\mcal{O}_S(\cdot), \mcal{O}_{\fpS}(\cdot, \cdot, \cdot, \cdot, \cdot)}(\widetilde{\sigma}^*)$ 
	\\
	$z = \as.\Ext(\widetilde{\sigma}^*, \sigma^*, \aux_\bfy^*)$
	\\ 
	{\color{blue}
	\label{step:bad2-after}
	If $(m^* \notin \mcal{Q}) \wedge \Verify(\vk, m^*, \sigma^*) \wedge ((\aux_\bfy^*, z) \notin R'_\ipfe)$: 
	}
	\pcskipln \\
	$\pcind$
	{\color{blue}
	${\sf Bad}_2 = \true$, \ret $\sigma^*$ to \C
	}
	\\
	{\color{blue}
	Else: {\bf abort}
	}
	}
	\end{pcvstack}
	\begin{pcvstack}[space=1em]
	\procedure[linenumbering, mode=text]{Oracle $\mcal{O}_S(m)$}{
	{\color{blue}
	$\sigma \gets \as.\mcal{O}_S(m)$
	}
	\\
	$\mcal{Q} := \mcal{Q} \vee \{m\}$ 
	\\ 
	\ret $\sigma$
	}

	\procedure[linenumbering, mode=text]{Oracle $\mcal{O}_{\fpS}(m, X, \bfy, \aux_\bfy, \pi_\bfy)$}{
	If $\AuxVerify(\advt, \bfy, \aux_\bfy, \pi_\bfy) = 0$: \ret $\bot$
	\\
	{\color{blue}
	$\widetilde{\sigma} \gets \as.\mcal{O}_{pS}(m, \aux_\bfy)$ 
	}
	\\ 
	$\mcal{Q} := \mcal{Q} \vee \{m\}$ 
	\\ 
	\ret $\widetilde{\sigma}$
	}
	\end{pcvstack}
	\end{pcvstack}
	\end{gameproof}
	\caption{Reduction \B for proof of~\Cref{claim:fasig-unf-g2}
	}
	\label{fig:fasig-unf-g2-proof}
	\end{figure}
\else
	\begin{figure}[h]
	\centering
	\captionsetup{justification=centering}
	\begin{gameproof}
	\begin{pchstack}[boxed, space=1em]
	\begin{pcvstack}
	\procedure[linenumbering, mode=text]{Reduction \B for proof of~\Cref{claim:fasig-unf-g2}}{
	$\mcal{Q} := \emptyset$ 
	\\ 
	$\crs \gets \nizk.\Setup(1^\secparam)$
	\\ 
	{\color{blue}
	$(\pp', \vk) \gets \C(1^\secparam)$ 
	}
	\\ 
	$\pp = (\crs, \pp')$
	\\ 
	$(X^*, \bfx^*) \gets \GenR(1^\secparam)$ 
	\\ 
	$(\advt, m^*, \bfy^*, \aux_\bfy^*, \pi^*_\bfy) \gets \A^{\mcal{O}_S(\cdot)}(\pp, \vk, X^*)$ 
	\\ 
	Parse $\advt = (\mpk, \ct, \pi)$
	\\ 
	Let $\stmt := (X^*, \pp', \mpk, \ct)$
	\\ 
	$\widetilde{\bfy^*} := ({\bfy^*}^T, \pi^*_\bfy)^T$
	\\
	$\pk_{\bfy^*} := \ipe.\PubKGen(\mpk, \widetilde{\bfy^*})$ 
	\\ 
	${\sf Bad}_1 := \false, {\sf Bad}_2 := \false$
	\\ 
	If $\nizk.\Verify(\crs, \stmt, \pi) =0 \ \vee\ \bfy^* \notin \mcal{F}_{{\sf IP}, \ell}$ 
	\pcskipln \\ 
	$\vee \ \pk_{\bfy^*} \neq \aux_\bfy^*$: ${\bf abort}$
	\\
	If $\stmt \notin L_{NIZK}$: ${\sf Bad}_1 := \true$, {\bf abort}
	\\ 
	{\color{blue}
	$\widetilde{\sigma}^* \gets \C(m^*, \aux_\bfy^*)$ 
	}
	\\ 
	$\sigma^* \gets \A^{\mcal{O}_S(\cdot), \mcal{O}_{\fpS}(\cdot, \cdot, \cdot, \cdot, \cdot)}(\widetilde{\sigma}^*)$ 
	\\
	$z = \as.\Ext(\widetilde{\sigma}^*, \sigma^*, \aux_\bfy^*)$
	\\ 
	{\color{blue}
	\label{step:bad2-after}
	If $(m^* \notin \mcal{Q}) \wedge \Verify(\vk, m^*, \sigma^*) \wedge ((\aux_\bfy^*, z) \notin R'_\ipfe)$: 
	}
	\pcskipln \\
	$\pcind$
	{\color{blue}
	${\sf Bad}_2 = \true$, \ret $\sigma^*$ to \C
	}
	\\
	{\color{blue}
	Else: {\bf abort}
	}
	}
	\end{pcvstack}
	\begin{pcvstack}[space=1em]
	\procedure[linenumbering, mode=text]{Oracle $\mcal{O}_S(m)$}{
	{\color{blue}
	$\sigma \gets \as.\mcal{O}_S(m)$
	}
	\\
	$\mcal{Q} := \mcal{Q} \vee \{m\}$ 
	\\ 
	\ret $\sigma$
	}

	\procedure[linenumbering, mode=text]{Oracle $\mcal{O}_{\fpS}(m, X, \bfy, \aux_\bfy, \pi_\bfy)$}{
	If $\AuxVerify(\advt, \bfy, \aux_\bfy, \pi_\bfy) = 0$: \ret $\bot$
	\\
	{\color{blue}
	$\widetilde{\sigma} \gets \as.\mcal{O}_{pS}(m, \aux_\bfy)$ 
	}
	\\ 
	$\mcal{Q} := \mcal{Q} \vee \{m\}$ 
	\\ 
	\ret $\widetilde{\sigma}$
	}
	\end{pcvstack}
	\end{pchstack}
	\end{gameproof}
	\caption{Reduction \B for proof of~\Cref{claim:fasig-unf-g2}
	}
	\label{fig:fasig-unf-g2-proof}
	\end{figure}
\fi
Observe that the reduction \B perfectly simulates the game $G_2$ to the adversary \A as it appropriately forwards the signing and functional pre-signing queries to its the signing and pre-signing oracles of $\as$. 
Thus, to complete the proof it suffices to argue that if \B returns $\sigma^*$ to \C in step~\ref{step:bad2-after}, then, it breaks the witness extractability of $\as$. Suppose that the query set maintained by \C is denoted by $\as.\mcal{Q}$.
Recall then that to break the witness extractability of $\as$, the signature $\sigma^*$ on message $m^*$ must satisfy $(m^* \notin \as.\mcal{Q}) \wedge ((\aux_\bfy^*, z) \notin R'_\ipfe) \wedge \Verify(\vk, m^*, \sigma^*)$, where $z = \as.\Ext(\widetilde{\sigma}^*, \sigma^*, \aux_\bfy^*)$.
The last of the three conditions holds because the same is present in the pre-condition for returning $\sigma^*$ to \C. Hence, it remains to argue that the first two conditions also hold true.
We note that $m^* \notin \mcal{Q}$ implies $m^* \notin \as.\mcal{Q}$ as from the description of the reduction \B, it follows that $\mcal{Q} = \as.\mcal{Q}$. Next, $(\aux_\bfy^*, z) \notin R'_\ipfe$ because ${\sf Bad}_2 = \true$. Therefore, if the adversary \A causes ${\sf Event}_2$ in game $G_2$, then, \B that breaks the witness extractability of $\as$. Hence, it follows that $\Pr[{\sf Event}_2] \leq \negl(\secparam)$.
\end{proof}

\begin{claim}
\label{claim:fasig-unf-g2-final}
Suppose relation $R$ is a $\mcal{F}_{{\sf IP}, \ell}$-hard (\Cref{def:f-hard-relation}), 
$\ipe$ scheme satisfies $R'_\ipfe$-robustness (\Cref{def:ipfe-robust}). 
Then, $\Pr[G_2(1^\secparam) = 1] \leq \negl(\secparam)$.
\end{claim}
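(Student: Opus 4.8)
The plan is to reduce $G_2$ to the $\mcal{F}_{{\sf IP}, \ell}$-hardness of the relation $R$ (\Cref{def:f-hard-relation}). Concretely, I would build a \ppt reduction $\B$ that takes as input a statement $X^*$ sampled via $\GenR(1^\secparam)$ and simulates $G_2$ for $\A$ by playing the role of the $G_2$ challenger: $\B$ samples its own $\crs \gets \nizk.\Setup(1^\secparam)$ and $\pp' \gets \ipfe.\Gen(1^\secparam)$, generates the signing keypair $(\sk, \vk)$ itself (so it can answer all $\mcal{O}_S$ and $\mcal{O}_{\fpS}$ queries honestly, and can itself compute $\widetilde{\sigma}^*$ via $\as.\PreSign$), and forwards $X^*$ to $\A$ along with $\pp$ and $\vk$. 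Since $\B$ holds $\sk$, simulating $G_2$ is straightforward and perfect. When $\A$ finally outputs a forgery $\sigma^*$, $\B$ runs the extraction $z = \as.\Ext(\widetilde{\sigma}^*, \sigma^*, \aux_\bfy^*)$ (exactly as $G_2$ does), computes $v := \ipfe.\Dec(z, \ct)$ where $\ct$ was parsed from the advertisement $\advt$ produced by $\A$, and outputs the pair $(\bfy^*, v)$ to its own challenger.

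The core of the argument is then to show that whenever $\A$ wins $G_2$, the pair $(\bfy^*, v)$ is a valid answer, i.e., $\bfy^* \in \mcal{F}_{{\sf IP}, \ell}$ and $v \in \{ f_{\bfy^*}(\bfx) : \exists \bfx \text{ s.t.\ } (X^*, \bfx) \in R\}$. The first conjunct is immediate: if $G_2$ returns $1$ then it did not abort at the early check, so $\bfy^* \in \mcal{F}_{{\sf IP}, \ell}$ and moreover $\pk_{\bfy^*} = \aux_\bfy^*$. For the second conjunct I would chain together three facts. (i) Because $G_2$ did not set ${\sf Bad}_1$, we have $\stmt = (X^*, \pp', \mpk, \ct) \in L_{\nizk}$; unfolding the definition of $L_{\nizk}$ this means there exist $r_0, r_1, \bfx$ with $(\mpk, \msk) = \ipfe.\Setup(\pp', 1^{\ell+1}; r_0)$, $(X^*, \bfx) \in R$, and $\ct = \ipfe.\Enc(\mpk, (\bfx^T, 0)^T; r_1)$ — so $\ct$ is a genuine IPFE encryption of $\widetilde{\bfx} := (\bfx^T, 0)^T$ under a well-formed $\mpk$, with $\bfx$ a witness for $X^*$. (ii) Because $G_2$ did not set ${\sf Bad}_2$, we have $(\aux_\bfy^*, z) \in R'_\ipfe$, and since $\aux_\bfy^* = \pk_{\bfy^*}$, the extracted value $z$ is a valid $R'_\ipfe$-witness for $\pk_{\bfy^*}$. (iii) Invoking $R'_\ipfe$-robustness of the IPFE (\Cref{def:ipfe-robust}) with decryption key $z$, the ciphertext $\ct$, and function vector $\widetilde{\bfy^*} = ({\bfy^*}^T, \pi_\bfy^*)^T$, we get $v = \ipfe.\Dec(z, \ct) = f_{\widetilde{\bfy^*}}(\widetilde{\bfx})$. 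Finally, since $\widetilde{\bfx} = (\bfx^T, 0)^T$ has its last coordinate zero, $f_{\widetilde{\bfy^*}}(\widetilde{\bfx}) = \langle \bfy^*, \bfx \rangle = f_{\bfy^*}(\bfx)$ regardless of what $\pi_\bfy^*$ the adversary chose. Hence $v = f_{\bfy^*}(\bfx)$ for a witness $\bfx$ of $X^*$, so $\B$'s output lies in the target set.

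Putting this together, $\Pr[\mathsf{G}_{\B, R, \mcal{F}_{{\sf IP}, \ell}}(1^\secparam) = 1] \ge \Pr[G_2(1^\secparam) = 1]$, and since $R$ is $\mcal{F}_{{\sf IP}, \ell}$-hard the left side is negligible, giving $\Pr[G_2(1^\secparam) = 1] \le \negl(\secparam)$ as claimed. One subtlety worth spelling out carefully: the IPFE here operates on vectors of length $\ell+1$ (the extended message and function space $\mcal{M}' \subseteq \Z^{\ell+1}$, family $\mcal{F}_{{\sf IP}, \ell+1}$), whereas the hard relation $R$ is stated with respect to $\mcal{F}_{{\sf IP}, \ell}$ and $\mcal{M} \subseteq \Z^\ell$ — the padding of $\bfx$ to $(\bfx^T,0)^T$ and of $\bfy^*$ to $({\bfy^*}^T,\pi_\bfy^*)^T$ is precisely what makes the inner products on the two sides agree, so this bookkeeping is the one place to be attentive rather than a genuine obstacle. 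I do not anticipate a hard technical step here; the only mild care needed is (a) checking that $\B$'s simulation is perfect (it is, since $\B$ generates all keys and randomness itself), and (b) making sure the "$\A$ wins $G_2$" hypothesis supplies all three of $\pk_{\bfy^*} = \aux_\bfy^*$, $\neg{\sf Bad}_1$, and $\neg{\sf Bad}_2$, which it does because each of these is enforced by a $\ret 0$ / abort in $G_2$ prior to the final return.
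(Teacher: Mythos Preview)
Your proposal is correct and matches the paper's proof essentially step for step: the same reduction $\B$ to $\mcal{F}_{{\sf IP},\ell}$-hardness that receives $X^*$, self-generates $(\crs,\pp',\sk,\vk)$ to perfectly simulate $G_2$, extracts $z$ and outputs $(\bfy^*, \ipfe.\Dec(z,\ct))$, and then uses $\neg{\sf Bad}_1$ (so $\ct$ encrypts $(\bfx^T,0)^T$ for a witness $\bfx$), $\neg{\sf Bad}_2$ together with $\pk_{\bfy^*}=\aux_\bfy^*$ (so $(\pk_{\bfy^*},z)\in R'_\ipfe$), and $R'_\ipfe$-robustness to conclude $v=f_{\bfy^*}(\bfx)$. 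Your extra remarks about the $\ell$ vs.\ $\ell{+}1$ bookkeeping and why the simulation is perfect are accurate and slightly more explicit than the paper, but the argument is the same.
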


\begin{proof}
\ifacm
	\begin{figure}[t]
	\centering
	\captionsetup{justification=centering}
	\begin{gameproof}
	\begin{pcvstack}[boxed, space=1em]
	\begin{pcvstack}[space=1em]
	\procedure[linenumbering, mode=text]{Reduction \B for proof of~\Cref{claim:fasig-unf-g2-final}}{
	$\mcal{Q} := \emptyset$ 
	\\ 
	$\crs \gets \nizk.\Setup(1^\secparam)$
	\\ 
	$\pp' \gets \ipfe.\Gen(1^\secparam)$ 
	\\ 
	$\pp := (\crs, \pp')$
	\\ 
	$(\sk, \vk) \gets \KGen(\pp')$ 
	\\ 
	{\color{blue}
	$X^* \gets \C(1^\secparam)$ 
	}
	\\ 
	$(\advt, m^*, \bfy^*, \aux_\bfy^*, \pi^*_\bfy) \gets \A^{\mcal{O}_S(\cdot)}(\pp, \vk, X^*)$ 
	\\ 
	Parse $\advt = (\mpk, \ct, \pi)$, let $\stmt := (X^*, \pp', \mpk, \ct)$
	\\ 
	$\widetilde{\bfy^*} := ({\bfy^*}^T, \pi^*_\bfy)^T$
	\\
	$\pk_{\bfy^*} := \ipe.\PubKGen(\mpk, \widetilde{\bfy^*})$ 
	\\ 
	${\sf Bad}_1 := \false, {\sf Bad}_2 := \false$
	\\ 
	If $\nizk.\Verify(\crs, \stmt, \pi) =0 \ \vee\ \bfy^* \notin \mcal{F}_{{\sf IP}, \ell} \ \vee \ \pk_{\bfy^*} \neq \aux_\bfy^*$: 
	\pcskipln \\ 
	$\pcind$ {\bf abort}
	\\
	If $\stmt \notin L_{NIZK}$: ${\sf Bad}_1 := \true$, {\bf abort}
	\\ 
	$\widetilde{\sigma}^* \gets \as.\PreSign(\sk, m^*, \aux_\bfy^*)$ 
	\\ 
	$\sigma^* \gets \A^{\mcal{O}_S(\cdot), \mcal{O}_{\fpS}(\cdot, \cdot, \cdot, \cdot, \cdot)}(\widetilde{\sigma}^*)$ 
	\\
	$z = \as.\Ext(\widetilde{\sigma}^*, \sigma^*, \aux_\bfy^*)$
	\\
	If $(m^* \notin \mcal{Q}) \wedge \Verify(\vk, m^*, \sigma^*) \wedge ((\aux_\bfy^*, z) \notin R'_\ipfe)$: 
	\pcskipln \\ 
	$\pcind$ ${\sf Bad}_2 = \true$, {\bf abort}
	\\
	{\color{blue}
	$v = \ipe.\Dec(z, \ct)$
	}
	\\
	{\color{blue}
	\label{step:good}
	If $((m^* \notin \mcal{Q}) \wedge \Verify(\vk, m^*, \sigma^*))$: 
	}
	\pcskipln \\ 
	{\color{blue}
	$\pcind$ \ret $(\bfy^*, v)$
	}
	\\ 
	{\color{blue}
	Else: abort game with \C
	}
	}
	\end{pcvstack}
	\begin{pcvstack}[space=1em]
	\procedure[linenumbering, mode=text]{Oracle $\mcal{O}_S(m)$}{
	$\sigma \gets \Sign(\sk, m)$
	\\
	$\mcal{Q} := \mcal{Q} \vee \{m\}$ 
	\\ 
	\ret $\sigma$
	}

	\procedure[linenumbering, mode=text]{Oracle $\mcal{O}_{\fpS}(m, X, \bfy, \aux_\bfy, \pi_\bfy)$}{
	If $\AuxVerify(\advt, \bfy, \aux_\bfy, \pi_\bfy) = 0$: \ret $\bot$
	\\
	$\widetilde{\sigma} \gets \FPreSign(\advt, \sk, m, X, \bfy, \aux_\bfy)$ 
	\\ 
	$\mcal{Q} := \mcal{Q} \vee \{m\}$ 
	\\ 
	\ret $\widetilde{\sigma}$
	}
	\end{pcvstack}
	\end{pcvstack}
	\end{gameproof}
	\caption{Reduction \B for proof of~\Cref{claim:fasig-unf-g2-final}
	}
	\label{fig:fasig-unf-g2-final-proof}
	\end{figure}
\else
	\begin{figure}[h]
	\centering
	\captionsetup{justification=centering}
	\begin{gameproof}
	\begin{pchstack}[boxed, space=1em]
	\begin{pcvstack}[space=1em]
	\procedure[linenumbering, mode=text]{Reduction \B for proof of~\Cref{claim:fasig-unf-g2-final}}{
	$\mcal{Q} := \emptyset$ 
	\\ 
	$\crs \gets \nizk.\Setup(1^\secparam)$
	\\ 
	$\pp' \gets \ipfe.\Gen(1^\secparam)$ 
	\\ 
	$\pp := (\crs, \pp')$
	\\ 
	$(\sk, \vk) \gets \KGen(\pp')$ 
	\\ 
	{\color{blue}
	$X^* \gets \C(1^\secparam)$ 
	}
	\\ 
	$(\advt, m^*, \bfy^*, \aux_\bfy^*, \pi^*_\bfy) \gets \A^{\mcal{O}_S(\cdot)}(\pp, \vk, X^*)$ 
	\\ 
	Parse $\advt = (\mpk, \ct, \pi)$, let $\stmt := (X^*, \pp', \mpk, \ct)$
	\\ 
	$\widetilde{\bfy^*} := ({\bfy^*}^T, \pi^*_\bfy)^T$
	\\
	$\pk_{\bfy^*} := \ipe.\PubKGen(\mpk, \widetilde{\bfy^*})$ 
	\\ 
	${\sf Bad}_1 := \false, {\sf Bad}_2 := \false$
	\\ 
	If $\nizk.\Verify(\crs, \stmt, \pi) =0 \ \vee\ \bfy^* \notin \mcal{F}_{{\sf IP}, \ell}$ 
	\pcskipln \\ 
	$\vee \ \pk_{\bfy^*} \neq \aux_\bfy^*$: {\bf abort}
	\\
	If $\stmt \notin L_{NIZK}$: ${\sf Bad}_1 := \true$, {\bf abort}
	\\ 
	$\widetilde{\sigma}^* \gets \as.\PreSign(\sk, m^*, \aux_\bfy^*)$ 
	\\ 
	$\sigma^* \gets \A^{\mcal{O}_S(\cdot), \mcal{O}_{\fpS}(\cdot, \cdot, \cdot, \cdot, \cdot)}(\widetilde{\sigma}^*)$ 
	\\
	$z = \as.\Ext(\widetilde{\sigma}^*, \sigma^*, \aux_\bfy^*)$
	\\
	If $(m^* \notin \mcal{Q}) \wedge \Verify(\vk, m^*, \sigma^*) \wedge ((\aux_\bfy^*, z) \notin R'_\ipfe)$: 
	\pcskipln \\ 
	$\pcind$ ${\sf Bad}_2 = \true$, {\bf abort}
	\\
	{\color{blue}
	$v = \ipe.\Dec(z, \ct)$
	}
	\\
	{\color{blue}
	\label{step:good}
	If $((m^* \notin \mcal{Q}) \wedge \Verify(\vk, m^*, \sigma^*))$: 
	}
	\pcskipln \\ 
	{\color{blue}
	$\pcind$ \ret $(\bfy^*, v)$
	}
	\\ 
	{\color{blue}
	Else: abort game with \C
	}
	}
	\end{pcvstack}
	\begin{pcvstack}[space=1em]
	\procedure[linenumbering, mode=text]{Oracle $\mcal{O}_S(m)$}{
	$\sigma \gets \Sign(\sk, m)$
	\\
	$\mcal{Q} := \mcal{Q} \vee \{m\}$ 
	\\ 
	\ret $\sigma$
	}

	\procedure[linenumbering, mode=text]{Oracle $\mcal{O}_{\fpS}(m, X, \bfy, \aux_\bfy, \pi_\bfy)$}{
	If $\AuxVerify(\advt, \bfy, \aux_\bfy, \pi_\bfy) = 0$: \ret $\bot$
	\\
	$\widetilde{\sigma} \gets \FPreSign(\advt, \sk, m, X, \bfy, \aux_\bfy)$ 
	\\ 
	$\mcal{Q} := \mcal{Q} \vee \{m\}$ 
	\\ 
	\ret $\widetilde{\sigma}$
	}
	\end{pcvstack}
	\end{pchstack}
	\end{gameproof}
	\caption{Reduction \B for proof of~\Cref{claim:fasig-unf-g2-final}
	}
	\label{fig:fasig-unf-g2-final-proof}
	\end{figure}
\fi
We prove the claim using a reduction to the $\mcal{F}_{{\sf IP}, \ell}$-hardness of the relation $R$. 
More specifically, 
we show that if an adversary \A causes game $G_2$ to return $1$, then, 
we can use it to construct a reduction \B that breaks 
the $\mcal{F}_{{\sf IP}, \ell}$-hardness of the relation $R$. 
Let \C be the challenger for 
the $\mcal{F}_{{\sf IP}, \ell}$-hardness of the relation $R$. 
The reduction \B is as in~\Cref{fig:fasig-unf-g2-final-proof}.

Observe that the reduction \B perfectly simulates game $G_2$ to the adversary \A as it obtains $X^*$ from the challenger \C who computes it as $(X^*, \cdot) \gets \GenR(1^\secparam)$. Then, if \A causes $G_2$ to return $1$, it must be the case that the if condition in step~\ref{step:good} of the reduction must be true. In such a case, the reduction returns $(\bfy^*, v)$ to the challenger \C. To complete the proof, it remains to show that this results in \B winning the game against \C. 

To win the game against \C, the reduction \B's output must satisfy $(\bfy^* \in \mcal{F}_{{\sf IP}, \ell}) \wedge (v \in \{ f_{\bfy^*}(\bfx)  : \exists \bfx \ s.t. \ (X, \bfx) \in R\})$.
Note that since the reduction did not abort, it implies $\bfy^* \in \mcal{F}_{{\sf IP}, \ell}$ and $\pk_{\bfy^*} = \aux_\bfy^*$.
Next, ${\sf Bad}_1 = \false$ implies that $\stmt \in L_{NIZK}$, where $\stmt = (X^*, \pp', \mpk, \ct)$. Hence, it follows that $\ct$ encrypts some vector $\widetilde{\bfx} = (\bfx^T, 0)^T \in \mcal{M}' \subseteq \Z^{\ell+1}$ under $\mpk$ such that $(X^*, \bfx) \in R$, where $(\mpk, \msk) \gets \ipe.\Setup(\pp', 1^{\ell+1})$. Next, ${\sf Bad}_2 = \false$ implies that $(\aux_\bfy^*, z) \in R'_\ipfe$. 
As $\pk_{\bfy^*} = \aux_\bfy^*$ and IPFE satisfies $R'_\ipfe$-robustness, it follows that $v = f_{\widetilde{\bfy^*}}(\widetilde{\bfx})$. As $\widetilde{\bfy^*} = ({\bfy^*}^T, \pi_\bfy^*)^T$ and $\widetilde{\bfx} = (\bfx^T, 0)^T$, we get that $f_{\widetilde{\bfy^*}}(\widetilde{\bfx}) = f_{\bfy^*}(\bfx)$. 
Hence, we conclude that $v \in \{ f_{\bfy^*}(\bfx) : \exists \bfx \ s.t. \ (X, \bfx) \in R\}$. This completes the proof.

\end{proof}

\subsection{Witness Extractability}

\begin{lemma}
Suppose $\AS$ satisfies witness extractability, 
$\nizk$ satisfies adaptive soundness, and 
$\ipe$ satisfies $R'_\ipfe$-robustness. 
Then, the functional adaptor signature construction in~\Cref{sec:construction} is $\faWitExt$-secure.
\label{lemma:fas-wit-ext}
\end{lemma}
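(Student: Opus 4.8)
The plan is to follow the same three‑game skeleton used in the proof of unforgeability (\Cref{lemma:fas-unf}). Let $G_0$ denote the original experiment $\faWitExt_{\A,\FAS}(1^\secparam)$ from \Cref{fig:fasig-wit-ext-exp}; I will introduce two ``bad‑event'' games $G_1$ and $G_2$, bound $|\Pr[G_i=1]-\Pr[G_{i+1}=1]|$ by a negligible quantity for $i\in\{0,1\}$, and then show $\Pr[G_2=1]=0$, so that $\Pr[G_0=1]\le\negl(\secparam)$ by the triangle inequality. Throughout, recall that in our construction $\FPreSign$ just runs $\as.\PreSign(\sk,m^*,\aux_\bfy^*)$ on the adaptor statement $\aux_\bfy^*$, that $\AuxVerify$ is exactly the check $\aux_\bfy^*=\ipfe.\PubKGen(\mpk,\widetilde{\bfy^*})$ with $\widetilde{\bfy^*}=({\bfy^*}^T,\pi_\bfy^*)^T$, and that $\FExt$ returns $\ipfe.\Dec(z',\ct)$ where $z':=\as.\Ext(\widetilde{\sigma}^*,\sigma^*,\aux_\bfy^*)$; I will reserve $z$ for this $\FExt$ output and $z'$ for the internally extracted adaptor witness.

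Game $G_1$ is identical to $G_0$ except that, immediately after the initial $\AdvertisementVerify$ and $\AuxVerify$ checks succeed, the challenger additionally tests whether the NIZK statement $\stmt^*:=(X^*,\pp',\mpk,\ct)$ parsed out of $\advt$ lies in $L_\nizk$; if not, it sets a flag ${\sf Bad}_1$ and returns $0$. Since $G_0$ and $G_1$ proceed identically until ${\sf Bad}_1$ is tested, $|\Pr[G_0=1]-\Pr[G_1=1]|\le\Pr[{\sf Bad}_1]$. Because $\AdvertisementVerify(\pp,X^*,\advt)=1$ is by definition $\nizk.\Verify(\crs,\stmt^*,\pi)=1$, any execution that sets ${\sf Bad}_1$ produces an accepting proof $\pi$ for $\stmt^*\notin L_\nizk$, so a reduction that samples $\pp'$ itself and forwards $(\stmt^*,\pi)$ to the NIZK soundness challenger bounds $\Pr[{\sf Bad}_1]\le\negl(\secparam)$ by adaptive soundness. (The test $\stmt^*\in L_\nizk$ is inefficient, but the reduction never needs to perform it.)

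Game $G_2$ is identical to $G_1$ except that, after the challenger computes $z':=\as.\Ext(\widetilde{\sigma}^*,\sigma^*,\aux_\bfy^*)$ inside $\FExt$, it additionally checks whether $(\aux_\bfy^*,z')\in R'_\ipfe$; if not, it sets a flag ${\sf Bad}_2$ and returns $0$. As before, $|\Pr[G_1=1]-\Pr[G_2=1]|\le\Pr[{\sf Bad}_2]$, and I would bound $\Pr[{\sf Bad}_2]$ by a reduction $\B$ to witness extractability of $\as$: $\B$ receives $(\pp',\vk)$ from its $\as$‑challenger, samples $\crs\gets\nizk.\Setup(1^\secparam)$, and runs $\A$, answering $\mcal{O}_S$ via $\as.\mcal{O}_S$ and the functional pre‑sign queries via $\as.\mcal{O}_{pS}$ on $(m,\aux_\bfy)$ after checking $\AuxVerify$; it submits $(m^*,\aux_\bfy^*)$ as its challenge message/statement, receives an honestly generated $\widetilde{\sigma}^*$ (which is exactly what $\FPreSign$ would output), hands it to $\A$, and outputs $\A$'s reply $\sigma^*$. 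Whenever ${\sf Bad}_2$ occurs we have $m^*\notin\mcal{Q}=\as.\mcal{Q}$, $\Verify(\vk,m^*,\sigma^*)=1$, and $(\aux_\bfy^*,z')\notin R'_\ipfe$, which is precisely $\B$'s winning condition — and this last check is efficient since $R'_\ipfe$ is poly‑time decidable. Hence $\Pr[{\sf Bad}_2]\le\negl(\secparam)$.

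Finally I would show $\Pr[G_2=1]=0$. In $G_2$, ${\sf Bad}_1=\false$ forces $\stmt^*\in L_\nizk$, so by definition of $L_\nizk$ the ciphertext $\ct$ encrypts some $\widetilde{\bfx}=(\bfx^T,0)^T$ with $(X^*,\bfx)\in R$ under $(\mpk,\msk)=\ipfe.\Setup(\pp',1^{\ell+1};r_0)$; ${\sf Bad}_2=\false$ forces $(\aux_\bfy^*,z')\in R'_\ipfe$; and passing $\AuxVerify$ forces $\aux_\bfy^*=\ipfe.\PubKGen(\mpk,\widetilde{\bfy^*})$. Then $R'_\ipfe$‑robustness (\Cref{def:ipfe-robust}) gives $\ipfe.\Dec(z',\ct)=f_{\widetilde{\bfy^*}}(\widetilde{\bfx})=f_{\bfy^*}(\bfx)$, using that the last coordinate of $\widetilde{\bfx}$ is $0$; since the value $z$ returned by $\FExt$ equals $\ipfe.\Dec(z',\ct)$, it lies in $Z=\{\,f^*(x):\exists x\text{ s.t. }(X^*,x)\in R\,\}$, so the conjunct $z\notin Z$ of the winning predicate fails and $G_2$ always outputs $0$. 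I expect the main obstacle to be bookkeeping rather than any new idea: the reductions must never decide membership in the inefficiently‑defined set $Z$ or in $L_\nizk$, so every inefficient test has to be confined to this last information‑theoretic step, the ${\sf Bad}_2$ test must be phrased via $R'_\ipfe$ (not via $Z$), and one must verify that the $\as$‑challenger's honestly generated pre‑signature on statement $\aux_\bfy^*$ coincides with the output of $\FPreSign$ so that the simulation is perfect.
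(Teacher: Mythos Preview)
Your proof is correct and uses the same three ingredients as the paper (NIZK adaptive soundness, $\as$ witness extractability, $R'_\ipfe$-robustness); the only difference is packaging. You follow the three-game skeleton of the unforgeability proof, whereas the paper's own argument here uses only two games: it introduces $G_1$ with the ${\sf Bad}_1$ test exactly as you do, and then bounds $\Pr[G_1=1]$ by a single reduction to $\as$ witness extractability, invoking the contrapositive of robustness inside the analysis (if the extracted $v\notin Z$ while $\stmt\in L_\nizk$ and $\aux_\bfy^*=\pk_{\bfy^*}$, then necessarily $(\pk_{\bfy^*},z')\notin R'_\ipfe$, so the reduction wins). Your split has the advantage you already identified: the $\as$ reduction tests only the efficient predicate $(\aux_\bfy^*,z')\in R'_\ipfe$, with all inefficient checks quarantined in the final information-theoretic step. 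One small wording fix: as you defined it, ${\sf Bad}_2$ is only the event $(\aux_\bfy^*,z')\notin R'_\ipfe$, so it does not by itself imply $m^*\notin\mcal{Q}$ or $\Verify=1$; either fold those conjuncts into the ${\sf Bad}_2$ flag (as the unforgeability proof does), or note that the quantity you actually need to bound is $\Pr[G_1{=}1\wedge{\sf Bad}_2]$, which supplies them.
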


\begin{proof}
We prove the lemma by defining a sequence of games.

\paragraph{Game $G_0$:} It is the original game $\faWitExt_{\A, \FAS}$, where the adversary \A who is  
given access to a pre-signature on a message,
must come up with a full signature such that it does not reveal
the function evaluation on a witness.
The adversary also has access to functional pre-sign oracle $\mcal{O}_{\fpS}$ and sign oracle $\mcal{O}_S$. 
Game $G_0$ is formally defined in~\Cref{fig:fasig-wit-ext-games}.

\paragraph{Game $G_1$:}
same as game $G_0$, except that 
when \A outputs a public advertisement $\advt$, 
the game checks if the NIZK statement $(X^*, \mpk, \ct)$ is in the language $L_\nizk$. 
If no, the game sets the flag ${\sf Bad}_1 = \true$.
Game $G_1$ is formally defined in~\Cref{fig:fasig-wit-ext-games}.

\ifacm
	\begin{figure}[t]
	\centering
	\captionsetup{justification=centering}
	\begin{pcvstack}[boxed, space=1em]
	\begin{pcvstack}[space=1em]
	\procedure[linenumbering, mode=text]{Games $G_0$, $\pcbox{G_1}$}{
	$\mcal{Q} := \emptyset$ 
	\\ 
	$\crs \gets \nizk.\Setup(1^\secparam)$
	\\ 
	$\pp' \gets \ipfe.\Gen(1^\secparam)$ 
	\\ 
	$\pp := (\crs, \pp')$
	\\ 
	$(\sk, \vk) \gets \KGen(\pp')$ 
	\\ 
	$(X^*, \advt, m^*, \bfy^*, \aux_\bfy^*, \pi_\bfy^*) \gets \A^{\mcal{O}_S(\cdot)}(\pp, \vk)$ 
	\\ 
	Parse $\advt = (\mpk, \ct, \pi)$, let $\stmt := (X^*, \pp', \mpk, \ct)$
	\\ 
	$\widetilde{\bfy^*} := ({\bfy^*}^T, \pi^*_\bfy)^T$
	\\
	$\pk_{\bfy^*} := \ipe.\PubKGen(\mpk, \widetilde{\bfy^*})$ 
	\\ 
	${\sf Bad}_0 := \false, {\sf Bad}_1 := \false, {\sf Bad}_2 := \false$
	\\ 
	If $\nizk.\Verify(\crs, \stmt, \pi) =0 \vee \bfy^* \notin \mcal{F}_{{\sf IP}, \ell} \vee \pk_{\bfy^*} \neq \aux_\bfy^*$: 
	\pcskipln \\ 
	$\pcind$ ${\sf Bad}_0 := \true$
	\\ 
	\pcbox{\text{If $\stmt \notin L_\nizk$: ${\sf Bad}_1 := \true$}}
	\\
	$\widetilde{\sigma}^* \gets \as.\PreSign(\sk, m^*, \aux_\bfy^*)$ 
	\\ 
	$\sigma^* \gets \A^{\mcal{O}_S(\cdot), \mcal{O}_{\fpS}(\cdot, \cdot, \cdot, \cdot, \cdot)}(\widetilde{\sigma}^*)$ 
	\\
	$z := \as.\Ext(\widetilde{\sigma}^*, \sigma^*, \aux_\bfy^*)$ 
	\\ 
	$v := \ipe.\Dec(z, \ct)$
	\\ 
	If $v \in \{ f_{\bfy^*}(\bfx)  : \exists \ \bfx\ s.t.\ (X^*, \bfx) \in R\}$: ${\sf Bad}_2 = \true$ 
	\\
	\ret $((m^* \notin \mcal{Q}) \wedge \Verify(\vk, m^*, \sigma^*) \wedge \neg {\sf Bad}_0 \wedge \neg {\sf Bad}_1 \wedge \neg {\sf Bad}_2 )$ 
	}
	\end{pcvstack}
	\begin{pcvstack}[space=1em]
	\procedure[linenumbering, mode=text]{Oracle $\mcal{O}_S(m)$}{
	$\sigma \gets \Sign(\sk, m)$
	\\
	$\mcal{Q} := \mcal{Q} \vee \{m\}$ 
	\\ 
	\ret $\sigma$
	}

	\procedure[linenumbering, mode=text]{Oracle $\mcal{O}_{\fpS}(m, X, \bfy, \aux_\bfy, \pi_\bfy)$}{
	If $\AuxVerify(\advt, \bfy, \aux_\bfy, \pi_\bfy) = 0$: \ret $\bot$
	\\
	$\widetilde{\sigma} \gets \FPreSign(\advt, \sk, m, X, \bfy, \aux_\bfy)$ 
	\\ 
	$\mcal{Q} := \mcal{Q} \vee \{m\}$ 
	\\ 
	\ret $\widetilde{\sigma}$
	}
	\end{pcvstack}
	\end{pcvstack}
	\caption{Witness Extractability proof: Games $G_0$ and $G_1$}
	\label{fig:fasig-wit-ext-games}
	\end{figure}
\else
	\begin{figure}[h]
	\centering
	\captionsetup{justification=centering}
	\begin{pchstack}[boxed, space=0.1em]
	\begin{pcvstack}[space=1em]
	\procedure[linenumbering, mode=text]{Games $G_0$, $\pcbox{G_1}$}{
	$\mcal{Q} := \emptyset$ 
	\\ 
	$\crs \gets \nizk.\Setup(1^\secparam)$
	\\ 
	$\pp' \gets \ipfe.\Gen(1^\secparam)$ 
	\\ 
	$\pp := (\crs, \pp')$
	\\ 
	$(\sk, \vk) \gets \KGen(\pp')$ 
	\\ 
	$(X^*, \advt, m^*, \bfy^*, \aux_\bfy^*, \pi_\bfy^*) \gets \A^{\mcal{O}_S(\cdot)}(\pp, \vk)$ 
	\\ 
	Parse $\advt = (\mpk, \ct, \pi)$, let $\stmt := (X^*, \pp', \mpk, \ct)$
	\\ 
	$\widetilde{\bfy^*} := ({\bfy^*}^T, \pi^*_\bfy)^T$
	\\
	$\pk_{\bfy^*} := \ipe.\PubKGen(\mpk, \widetilde{\bfy^*})$ 
	\\ 
	${\sf Bad}_0 := \false, {\sf Bad}_1 := \false, {\sf Bad}_2 := \false$
	\\ 
	If $\nizk.\Verify(\crs, \stmt, \pi) =0 \vee \bfy^* \notin \mcal{F}_{{\sf IP}, \ell} \vee \pk_{\bfy^*} \neq \aux_\bfy^*$: 
	\pcskipln \\ 
	$\pcind$ ${\sf Bad}_0 := \true$
	\\ 
	\pcbox{\text{If $\stmt \notin L_\nizk$: ${\sf Bad}_1 := \true$}}
	\\
	$\widetilde{\sigma}^* \gets \as.\PreSign(\sk, m^*, \aux_\bfy^*)$ 
	\\ 
	$\sigma^* \gets \A^{\mcal{O}_S(\cdot), \mcal{O}_{\fpS}(\cdot, \cdot, \cdot, \cdot, \cdot)}(\widetilde{\sigma}^*)$ 
	\\
	$z := \as.\Ext(\widetilde{\sigma}^*, \sigma^*, \aux_\bfy^*)$ 
	\\ 
	$v := \ipe.\Dec(z, \ct)$
	\\ 
	If $v \in \{ f_{\bfy^*}(\bfx)  : \exists \ \bfx\ s.t.\ (X^*, \bfx) \in R\}$: ${\sf Bad}_2 = \true$ 
	\\
	\ret $((m^* \notin \mcal{Q}) \wedge \Verify(\vk, m^*, \sigma^*) \wedge \neg {\sf Bad}_0 \wedge \neg {\sf Bad}_1 \wedge \neg {\sf Bad}_2 )$ 
	}
	\end{pcvstack}
	\begin{pcvstack}[space=1em]
	\procedure[linenumbering, mode=text]{Oracle $\mcal{O}_S(m)$}{
	$\sigma \gets \Sign(\sk, m)$
	\\
	$\mcal{Q} := \mcal{Q} \vee \{m\}$ 
	\\ 
	\ret $\sigma$
	}

	\procedure[linenumbering, mode=text]{Oracle $\mcal{O}_{\fpS}(m, X, \bfy, \aux_\bfy, \pi_\bfy)$}{
	If $\AuxVerify(\advt, \bfy, \aux_\bfy, \pi_\bfy) = 0$: \ret $\bot$
	\\
	$\widetilde{\sigma} \gets \FPreSign(\advt, \sk, m, X, \bfy, \aux_\bfy)$ 
	\\ 
	$\mcal{Q} := \mcal{Q} \vee \{m\}$ 
	\\ 
	\ret $\widetilde{\sigma}$
	}
	\end{pcvstack}
	\end{pchstack}
	\caption{Witness Extractability proof: Games $G_0$ and $G_1$}
	\label{fig:fasig-wit-ext-games}
	\end{figure}
\fi
To prove the lemma, we need to show that 
\[
\Pr[ G_0(1^\secparam) = 1] \leq \negl(\secparam).
\]
Note that by triangle inequality, it follows that 
\ifacm
	\begin{align*}
	\Pr[ G_0(1^\secparam) = 1] 
	& \leq |\Pr[ G_0(1^\secparam) = 1] - \Pr[ G_1(1^\secparam) = 1]| \\ 
	& \quad + \Pr[ G_1(1^\secparam) = 1].
	\end{align*}
\else 
\[
	\Pr[ G_0(1^\secparam) = 1] 
	\leq |\Pr[ G_0(1^\secparam) = 1] - \Pr[ G_1(1^\secparam) = 1]|
	+ \Pr[ G_1(1^\secparam) = 1].
\]
\fi
To complete the proof,  
we show in~\Cref{claim:fasig-wit-ext-g1,claim:fasig-wit-ext-g1-final}
that each of the two terms on the right-hand-side are at most $\negl(\secparam)$. 
\end{proof}

\begin{claim}
\label{claim:fasig-wit-ext-g1}
If the NIZK argument system $\nizk$ satisfies adaptive soundness, then, 
$| \Pr[G_0 (1^\secparam) = 1] - \Pr[G_1 (1^\secparam) = 1] | \leq \negl(\secparam)$.
\end{claim}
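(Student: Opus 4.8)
The plan is to follow the skeleton of \Cref{claim:fasig-unf-g1}, but with an extra wrinkle forced by the precise shape of the games in \Cref{fig:fasig-wit-ext-games}. First I would observe that $G_0$ and $G_1$ run line-for-line identically except for the single assignment of the flag ${\sf Bad}_1$ that is present only in $G_1$; in $G_0$ the flag stays $\false$ throughout, so the conjunct $\neg{\sf Bad}_1$ in the final return statement is always true there. The flag ${\sf Bad}_1$ influences nothing except that final conjunction. Coupling the two executions on the same random tape, the event ``$G_1$ outputs $1$'' is exactly ``$G_0$ outputs $1$'' intersected with $\{\neg{\sf Bad}_1\}$, so
\[
\Pr[G_0(1^\secparam)=1] - \Pr[G_1(1^\secparam)=1] = \Pr[\, G_0 \text{ outputs } 1 \ \wedge\ {\sf Bad}_1 \,] \ge 0 .
\]
Now ``$G_0$ outputs $1$'' entails $\neg{\sf Bad}_0$, and $\neg{\sf Bad}_0$ in turn entails $\nizk.\Verify(\crs,\stmt,\pi)=1$ for $\stmt := (X^*,\pp',\mpk,\ct)$, while ${\sf Bad}_1$ is exactly the event $\stmt\notin L_\nizk$. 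Hence the right-hand side is at most $\Pr[\, \stmt\notin L_\nizk \ \wedge\ \nizk.\Verify(\crs,\stmt,\pi)=1 \,]$.

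Next I would bound that last probability by a reduction $\B$ to the adaptive soundness of $\nizk$. $\B$ receives $\crs$ from its challenger, samples $\pp'\gets\ipfe.\Gen(1^\secparam)$ and $(\sk,\vk)\gets\KGen(\pp')$, runs $\A$ on input $((\crs,\pp'),\vk)$ while answering the $\mcal{O}_S$ queries with honest signatures under $\sk$, and upon receiving $\A$'s first output $(X^*,\advt,m^*,\bfy^*,\aux_\bfy^*,\pi_\bfy^*)$ it parses $\advt=(\mpk,\ct,\pi)$ and forwards $(\stmt,\pi)=((X^*,\pp',\mpk,\ct),\pi)$. No post-challenge interaction is needed, since the event in question is determined already by $\A$'s first message; $\B$ therefore perfectly emulates the relevant prefix of $G_1$, and it wins its soundness game precisely on those executions where $\stmt\notin L_\nizk$ yet $\pi$ verifies. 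Adaptive soundness of $\nizk$ then gives the desired negligible bound, and chaining the two displays completes the claim.

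The step I expect to be the (mild) obstacle — and the reason a verbatim copy of \Cref{claim:fasig-unf-g1} does not quite work here — is the first bookkeeping step: unlike the unforgeability games, in \Cref{fig:fasig-wit-ext-games} the flag ${\sf Bad}_1$ is set whenever $\stmt\notin L_\nizk$, even if the NIZK proof fails to verify, so the crude estimate $|\Pr[G_0=1]-\Pr[G_1=1]|\le\Pr[{\sf Bad}_1]$ is useless (an adversary can make $\stmt\notin L_\nizk$ trivially). One must intersect with the event that $G_0$ actually accepts, which is what forces $\neg{\sf Bad}_0$ and hence verification of $\pi$; once that is made precise the reduction is completely routine and advantage-preserving.
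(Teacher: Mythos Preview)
Your proposal is correct and is exactly the intended argument; the paper's own proof is just the one-line pointer ``Similar to proof of \Cref{claim:fasig-unf-g1}.'' Your extra bookkeeping observation is on target: in \Cref{fig:fasig-wit-ext-games} the flag ${\sf Bad}_1$ is set without the early abort present in the unforgeability games, so one indeed needs to intersect with the acceptance event (and hence $\neg{\sf Bad}_0$, hence $\nizk.\Verify=1$) before the reduction to adaptive soundness bites --- a subtlety the paper elides.
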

\begin{proof}
Similar to proof of~\Cref{claim:fasig-unf-g1}.
\end{proof}

\begin{claim}
\label{claim:fasig-wit-ext-g1-final}
Suppose the Adaptor Signature scheme $\as$ satisfies witness extractability and 
the IPFE scheme $\ipfe$ satisfies $R'_\ipfe$-robustness. 
Then, 
$\Pr[G_1 (1^\secparam) = 1] \leq \negl(\secparam)$.
\end{claim}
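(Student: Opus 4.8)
\textbf{Proof plan for \Cref{claim:fasig-wit-ext-g1-final}.} The plan is to reduce $G_1$ to witness extractability of the underlying adaptor signature scheme $\as$ (\Cref{def:as-wit-ext}), mirroring the reduction in the proof of \Cref{claim:fasig-unf-g2}. I would build a reduction $\B$ that, given the $\aWitExt_{\B,\as}$ challenger $\C$ with its signing and pre-signing oracles $\mcal{O}_S, \mcal{O}_{pS}$, receives $(\pp', \vk)$ from $\C$, samples $\crs \gets \nizk.\Setup(1^\secparam)$ itself, sets $\pp := (\crs, \pp')$, and runs $\A$ on $(\pp, \vk)$. It answers $\A$'s $\mcal{O}_S$ queries through $\C$'s signing oracle, and its $\mcal{O}_{\fpS}(m, X, \bfy, \aux_\bfy, \pi_\bfy)$ queries by first performing the $\AuxVerify$ check locally (it knows $\advt$) and then calling $\C$'s pre-signing oracle on $(m, \aux_\bfy)$ — this is exactly what $\FPreSign$ does, since in the construction $\FPreSign(\advt,\sk,m,X,\bfy,\aux_\bfy) = \as.\PreSign(\sk,m,\aux_\bfy)$. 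When $\A$ outputs $(X^*, \advt, m^*, \bfy^*, \aux_\bfy^*, \pi_\bfy^*)$, $\B$ parses $\advt = (\mpk, \ct, \pi)$, computes $\pk_{\bfy^*} := \ipfe.\PubKGen(\mpk, ({\bfy^*}^T, \pi_\bfy^*)^T)$, submits the challenge $(m^*, \pk_{\bfy^*})$ to $\C$, forwards the returned pre-signature $\widetilde{\sigma}^*$ to $\A$, obtains $\sigma^*$, and hands $\sigma^*$ back to $\C$. Since the oracles and $\FPreSign$ behave identically to $G_1$, $\B$ perfectly emulates $G_1$, and the query set $\mcal{Q}$ coincides with the one $\C$ maintains.

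The core step is to show that whenever $G_1$ outputs $1$, the reduction $\B$ wins, i.e.\ $m^* \notin \mcal{Q}$, $\Verify(\vk, m^*, \sigma^*) = 1$, and $(\pk_{\bfy^*}, z) \notin R'_\ipfe$ where $z := \as.\Ext(\widetilde{\sigma}^*, \sigma^*, \pk_{\bfy^*})$. The first two conjuncts are part of the $G_1$ winning condition. For the third I argue by contraposition: suppose $(\pk_{\bfy^*}, z) \in R'_\ipfe$ (if $z = \bot$ this fails trivially and $\B$ wins anyway). From $\neg {\sf Bad}_0$ we get $\bfy^* \in \mcal{F}_{{\sf IP}, \ell}$ and $\aux_\bfy^* = \pk_{\bfy^*}$, so $z$ is an $R'_\ipfe$-witness for $\pk_{\bfy^*} = \ipfe.\PubKGen(\mpk, \widetilde{\bfy^*})$ with $\widetilde{\bfy^*} = ({\bfy^*}^T, \pi_\bfy^*)^T$. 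From $\neg {\sf Bad}_1$ we get $(X^*, \pp', \mpk, \ct) \in L_\nizk$, hence there exist $r_0, r_1, \bfx$ with $(\mpk,\msk) = \ipfe.\Setup(\pp', 1^{\ell+1}; r_0)$, $\ct = \ipfe.\Enc(\mpk, (\bfx^T, 0)^T; r_1)$ and $(X^*, \bfx) \in R$; that is, $\mpk$ and $\ct$ lie in the support of honest key generation and of an honest encryption of $\widetilde{\bfx} := (\bfx^T, 0)^T$. Therefore $R'_\ipfe$-robustness (\Cref{def:ipfe-robust}), being a probability-$1$ statement over honest $\Setup$/$\Enc$, applies to this $(\mpk, \ct)$ and gives $v = \ipfe.\Dec(z, \ct) = f_{\widetilde{\bfy^*}}(\widetilde{\bfx}) = \langle \widetilde{\bfy^*}, \widetilde{\bfx}\rangle = \langle \bfy^*, \bfx\rangle = f_{\bfy^*}(\bfx)$ (the extra coordinate contributes $\pi_\bfy^* \cdot 0 = 0$). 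Since $(X^*, \bfx) \in R$, this means $v \in \{ f_{\bfy^*}(\bfx') : \exists\, \bfx'\ \text{s.t.}\ (X^*, \bfx') \in R\}$, i.e.\ ${\sf Bad}_2$ is set — contradicting $\neg {\sf Bad}_2$ in the $G_1$ winning condition. Hence $(\pk_{\bfy^*}, z) \notin R'_\ipfe$, so $\Pr[G_1(1^\secparam) = 1] \le \Pr[\aWitExt_{\B,\as}(1^\secparam) = 1] \le \negl(\secparam)$.

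The main obstacle is invoking $R'_\ipfe$-robustness against an \emph{adversarially} chosen master public key $\mpk$ and ciphertext $\ct$, whereas robustness is only phrased for honestly generated keys/ciphertexts. This is precisely what the game hop to $G_1$ (and, under the hood, adaptive soundness of $\nizk$, already used in \Cref{claim:fasig-wit-ext-g1}) delivers: conditioning on $\neg {\sf Bad}_1$ forces $(\mpk, \ct)$ into the honest support encrypting a plaintext of the prescribed shape $(\bfx^T,0)^T$, so the robustness guarantee becomes usable. A secondary point requiring a line of care is that $\as.\Ext$ may output $\bot$ or only an extended-relation witness; both are handled because witness extractability of $\as$ and the robustness property are both stated w.r.t.\ $R'_\ipfe$, and $(\pk_{\bfy^*}, \bot) \notin R'_\ipfe$.
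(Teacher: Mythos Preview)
Your proposal is correct and follows essentially the same approach as the paper: a direct reduction to $\aWitExt$ of $\as$, using $\neg{\sf Bad}_0$ to identify $\aux_\bfy^*$ with $\pk_{\bfy^*}$, $\neg{\sf Bad}_1$ to place $(\mpk,\ct)$ in the honest support with plaintext $(\bfx^T,0)^T$, and the contrapositive of $R'_\ipfe$-robustness together with $\neg{\sf Bad}_2$ to conclude $(\pk_{\bfy^*},z)\notin R'_\ipfe$. The only cosmetic difference is that the paper submits $(m^*,\aux_\bfy^*)$ to $\C$ (matching $G_1$ verbatim) whereas you submit $(m^*,\pk_{\bfy^*})$; since on any trace with $G_1=1$ these coincide, both choices yield the needed inequality.
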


\begin{proof}

To prove $\Pr[G_1(1^\secparam) = 1] \leq \negl(\secparam)$,
we show that if there exists a \ppt adversary \A such that it wins game $G_1$ with non-negligible advantage, then,
we can construct a \ppt reduction \B that breaks the witness extractability of the underlying adaptor signature scheme $\AS$.
For the $\as$ witness extractability game, 
let \C be the challenger and let $\as.\mcal{O}_S, \as.\mcal{O}_{pS}$ be the signing and pre-signing oracles that the reduction \B has access to.
Then, the reduction is as in~\Cref{fig:fasig-wit-ext-g1-final-proof}.

\ifacm
	\begin{figure}[t]
	\centering
	\captionsetup{justification=centering}
	\begin{pcvstack}[boxed, space=1em]
	\begin{pcvstack}[space=1em]
	\procedure[linenumbering, mode=text]{Reduction $B$ for proof of~\Cref{claim:fasig-wit-ext-g1-final}}{
	$\mcal{Q} := \emptyset$ 
	\\ 
	$\crs \gets \nizk.\Setup(1^\secparam)$
	\\ 
	{\color{blue}
	$(\pp', \vk) \gets \C(1^\secparam)$ 
	}
	\\ 
	$\pp := (\crs, \pp')$
	\\ 
	$(X^*, \advt, m^*, \bfy^*, \aux_\bfy^*, \pi_\bfy^*) \gets \A^{\mcal{O}_S(\cdot)}(\pp, \vk)$ 
	\\ 
	Parse $\advt = (\mpk, \ct, \pi)$, let $\stmt := (X^*, \pp', \mpk, \ct)$
	\\ 
	$\widetilde{\bfy^*} := ({\bfy^*}^T, \pi^*_\bfy)^T$
	\\
	$\pk_{\bfy^*} := \ipe.\PubKGen(\mpk, \widetilde{\bfy^*})$ 
	\\ 
	${\sf Bad}_0 := \false, {\sf Bad}_1 := \false, {\sf Bad}_2 := \false$
	\\ 
	If $\nizk.\Verify(\crs, \stmt, \pi) =0 \vee \bfy^* \notin \mcal{F}_{{\sf IP}, \ell} \vee \pk_{\bfy^*} \neq \aux_\bfy^*$: 
	\pcskipln \\ 
	$\pcind$ ${\sf Bad}_0 := \true$
	\\ 
	If $\stmt \notin L_\nizk$: ${\sf Bad}_1 = \true$
	\\
	{\color{blue}
	$\widetilde{\sigma}^* \gets \C(m^*, \aux_\bfy^*)$ 
	}
	\\ 
	$\sigma^* \gets \A^{\mcal{O}_S(\cdot), \mcal{O}_{\fpS}(\cdot, \cdot, \cdot, \cdot, \cdot)}(\widetilde{\sigma}^*)$ 
	\\
	$z := \as.\Ext(\widetilde{\sigma}^*, \sigma^*, \aux_\bfy^*)$ 
	\\ 
	$v := \ipe.\Dec(z, \ct)$
	\\ 
	If $v \in \{ f_{\bfy^*}(\bfx)  : \exists \ \bfx\ s.t.\ (X^*, \bfx) \in R\}$: ${\sf Bad}_2 = \true$ 
	\\
	{\color{blue}
	If $((m^* \notin \mcal{Q}) \wedge \Verify(\vk, m^*, \sigma^*) \wedge \neg {\sf Bad}_0 \wedge \neg {\sf Bad}_1 \wedge \neg {\sf Bad}_2 )$: 
	}
	\pcskipln \\
	{\color{blue}
	$\pcind$ \ret $\sigma^*$
	}
	\\ 
	{\color{blue}
	Else: abort game with \C 
	}
	}
	\end{pcvstack}
	\begin{pcvstack}[space=1em]
	\procedure[linenumbering, mode=text]{Oracle $\mcal{O}_S(m)$}{
	{\color{blue}
	$\sigma \gets \as.\mcal{O}_S(m)$
	}
	\\
	$\mcal{Q} := \mcal{Q} \vee \{m\}$ 
	\\ 
	\ret $\sigma$
	}

	\procedure[linenumbering, mode=text]{Oracle $\mcal{O}_{\fpS}(m, X, f)$}{
	If $\advt = \bot$: \ret $\bot$ 
	\\ 
	$\pk_\bfy = \ipe.\PubKGen(\mpk, \bfy)$ 
	\\ 
	{\color{blue}
	$\widetilde{\sigma} \gets \as.\mcal{O}_{pS}(m, \pk_\bfy)$ 
	}
	\\ 
	$\mcal{Q} := \mcal{Q} \vee \{m\}$ 
	\\ 
	\ret $\widetilde{\sigma}$
	}
	\end{pcvstack}
	\end{pcvstack}
	\caption{Reduction $B$ for proof of~\Cref{claim:fasig-wit-ext-g1-final}}
	\label{fig:fasig-wit-ext-g1-final-proof}
	\end{figure}
\else 
	\begin{figure}[h]
	\centering
	\captionsetup{justification=centering}
	\begin{pchstack}[boxed, space=1em]
	\begin{pcvstack}[space=1em]
	\procedure[linenumbering, mode=text]{Reduction $B$ for proof of~\Cref{claim:fasig-wit-ext-g1-final}}{
	$\mcal{Q} := \emptyset$ 
	\\ 
	$\crs \gets \nizk.\Setup(1^\secparam)$
	\\ 
	{\color{blue}
	$(\pp', \vk) \gets \C(1^\secparam)$ 
	}
	\\ 
	$\pp := (\crs, \pp')$
	\\ 
	$(X^*, \advt, m^*, \bfy^*, \aux_\bfy^*, \pi_\bfy^*) \gets \A^{\mcal{O}_S(\cdot)}(\pp, \vk)$ 
	\\ 
	Parse $\advt = (\mpk, \ct, \pi)$, let $\stmt := (X^*, \pp', \mpk, \ct)$
	\\ 
	$\widetilde{\bfy^*} := ({\bfy^*}^T, \pi^*_\bfy)^T$
	\\
	$\pk_{\bfy^*} := \ipe.\PubKGen(\mpk, \widetilde{\bfy^*})$ 
	\\ 
	${\sf Bad}_0 := \false, {\sf Bad}_1 := \false, {\sf Bad}_2 := \false$
	\\ 
	If $\nizk.\Verify(\crs, \stmt, \pi) =0 \vee \bfy^* \notin \mcal{F}_{{\sf IP}, \ell} \vee \pk_{\bfy^*} \neq \aux_\bfy^*$: 
	\pcskipln \\ 
	$\pcind$ ${\sf Bad}_0 := \true$
	\\ 
	If $\stmt \notin L_\nizk$: ${\sf Bad}_1 = \true$
	\\
	{\color{blue}
	$\widetilde{\sigma}^* \gets \C(m^*, \aux_\bfy^*)$ 
	}
	\\ 
	$\sigma^* \gets \A^{\mcal{O}_S(\cdot), \mcal{O}_{\fpS}(\cdot, \cdot, \cdot, \cdot, \cdot)}(\widetilde{\sigma}^*)$ 
	\\
	$z := \as.\Ext(\widetilde{\sigma}^*, \sigma^*, \aux_\bfy^*)$ 
	\\ 
	$v := \ipe.\Dec(z, \ct)$
	\\ 
	If $v \in \{ f_{\bfy^*}(\bfx)  : \exists \ \bfx\ s.t.\ (X^*, \bfx) \in R\}$: ${\sf Bad}_2 = \true$ 
	\\
	{\color{blue}
	If $((m^* \notin \mcal{Q}) \wedge \Verify(\vk, m^*, \sigma^*) \wedge \neg {\sf Bad}_0 \wedge \neg {\sf Bad}_1 \wedge \neg {\sf Bad}_2 )$: 
	}
	\pcskipln \\
	{\color{blue}
	$\pcind$ \ret $\sigma^*$
	}
	\\ 
	{\color{blue}
	Else: abort game with \C 
	}
	}
	\end{pcvstack}
	\begin{pcvstack}[space=1em]
	\procedure[linenumbering, mode=text]{Oracle $\mcal{O}_S(m)$}{
	{\color{blue}
	$\sigma \gets \as.\mcal{O}_S(m)$
	}
	\\
	$\mcal{Q} := \mcal{Q} \vee \{m\}$ 
	\\ 
	\ret $\sigma$
	}

	\procedure[linenumbering, mode=text]{Oracle $\mcal{O}_{\fpS}(m, X, f)$}{
	If $\advt = \bot$: \ret $\bot$ 
	\\ 
	$\pk_\bfy = \ipe.\PubKGen(\mpk, \bfy)$ 
	\\ 
	{\color{blue}
	$\widetilde{\sigma} \gets \as.\mcal{O}_{pS}(m, \pk_\bfy)$ 
	}
	\\ 
	$\mcal{Q} := \mcal{Q} \vee \{m\}$ 
	\\ 
	\ret $\widetilde{\sigma}$
	}
	\end{pcvstack}
	\end{pchstack}
	\caption{Reduction $B$ for proof of~\Cref{claim:fasig-wit-ext-g1-final}}
	\label{fig:fasig-wit-ext-g1-final-proof}
	\end{figure}
\fi

From the description of reduction \B, 
we can observe that whenever the challenger \C obtains the signature $\sigma^*$ on message $m^*$ obtained, it is the case that $((m^* \notin \mcal{Q}) \wedge \Verify(\vk, m^*, \sigma^*) \wedge \neg {\sf Bad}_0 \wedge \neg {\sf Bad}_1 \wedge \neg {\sf Bad}_2)$. 
Suppose that the query set maintained by \C is denoted by $\as.\mcal{Q}$. For \B to succeed, 
the signature $\sigma^*$ must satisfy $((m^* \notin \as.\mcal{Q}) \wedge \Verify(\vk, m^*, \sigma^*) \wedge ((\pk_{\bfy^*}, z) \notin R'_\ipfe))$. 
We note that $m^* \notin \mcal{Q}$ implies $m^* \notin \as.\mcal{Q}$ as from the description of the reduction \B, it follows that $\mcal{Q} = \as.\mcal{Q}$.
Note that $\sigma^*$ already satisfies the second condition and it suffices to show that if $\neg {\sf Bad}_0 \wedge \neg {\sf Bad}_1 \wedge \neg {\sf Bad}_2 $, then, $(\pk_{\bfy^*}, z) \notin R'_\ipfe$.

Note that ${\sf Bad}_0 = \false$ implies $\nizk.\Verify(\crs, \stmt, \pi) = 1$ and $\bfy^* \in \mcal{F}_{{\sf IP}, \ell}$ and $\aux_\bfy^* = \pk_{\bfy^*}$.
Next, ${\sf Bad}_1 = \false$ implies that $\stmt \in L_\nizk$, where $\stmt = (X^*, \pp', \mpk, \ct)$. Hence, it follows that $\ct$ encrypts some vector $\widetilde{\bfx^*} = ({\bfx^*}^T, 0)^T \in \mcal{M}' \subseteq \Z^{\ell+1}$ under $\mpk$ such that $(X^*, \bfx^*) \in R$, where $(\mpk, \msk) \gets \ipe.\Setup(\pp', 1^\ell)$. Next, ${\sf Bad}_2 = \false$ implies that $v \notin \{ f_{\bfy^*}(\bfx) : \exists \ \bfx\ s.t.\ (X^*, \bfx) \in R\}$, where $v$ is computed as $v := \ipe.\Dec(z, \ct)$. This implies $v \neq f_{\bfy^*}(\bfx^*)$. 
Note that for $\widetilde{\bfy^*} = ({\bfy^*}^T, \pi_\bfy^*)^T$, we have $f_{\widetilde{\bfy^*}}(\widetilde{\bfx^*}) = f_{\bfy^*}(\bfx^*)$.
Hence, it follows that $v \neq f_{\widetilde{\bfy^*}}(\widetilde{\bfx^*})$.

To complete the proof, we argue that if $v \neq f_{\widetilde{\bfy^*}}(\widetilde{\bfx^*})$, then, $(\pk_{\bfy^*}, z) \notin R'_\ipfe$. This follows from $R'_\ipfe$-robustness of IPFE. In particular, the contra-positive form of $R'_\ipfe$-robustness says that if $v \neq f_{\widetilde{\bfy^*}}(\widetilde{\bfx^*})$, then either $\pp'$ is not computed honestly, or $\mpk$ is not computed honestly or $\ct$ is not computed honestly or $\pk_{\bfy^*}$ is not computed honestly or $v$ is not computed honestly or $(\pk_{\bfy^*}, z) \notin R'_\ipfe$. Observe that the challenger \C samples $\pp'$ honestly, the NIZK proof $\pi$ attests that $\mpk$ and $\ct$ are computed honestly, the reduction \B computes $\pk_{\bfy^*}$ and $v$ honestly. Thus, it must be the case that $(\pk_{\bfy^*}, z) \notin R'_\ipfe$.

\end{proof}

\subsection{Pre-Signature Adaptability}

\begin{lemma}
Suppose \ipe satisfies $R_\ipfe$-compliance (\Cref{def:ipfe-compliant}) and suppose that $\AS$ satisfies weak pre-signature adaptability (\Cref{def:as-pre-sig-adaptability}). 
Then, the functional adaptor signature construction in~\Cref{sec:construction} is pre-signature adaptable (\Cref{def:fas-pre-sig-adaptability}).
\label{lemma:fas-pre-sig-adaptability}
\end{lemma}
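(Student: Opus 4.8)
The statement to prove is pre-signature adaptability of the FAS construction from \Cref{fig:fas-construction}. This is a purely syntactic/structural property—there is no adversary, no reduction, no hybrid argument—so the plan is to unfold the definitions and chain together the correctness of the $\as$ scheme's adapt interface with the $R_\ipfe$-compliance of the $\ipfe$ scheme. Concretely, I would start by fixing arbitrary $\secparam$, $\pp \gets \Setup(1^\secparam)$, a statement/witness pair $(X,\bfx) \in R$, an advertisement/state pair $(\advt,\state) \gets \AdvertisementGen(\pp,X,\bfx)$, a message $m$, a function $\bfy \in \mcal{F}_{{\sf IP},\ell}$, an auxiliary pair $(\aux_\bfy,\pi_\bfy) \gets \AuxGen(\advt,\state,\bfy)$, a key pair $(\sk,\vk) \gets \KGen(1^\secparam)$, and any pre-signature $\widetilde{\sigma}$ with $\FPreVerify(\advt,\vk,m,X,\bfy,\aux_\bfy,\pi_\bfy,\widetilde{\sigma}) = 1$. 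The goal is to show $\Verify(\vk,m,\Adapt(\advt,\state,\vk,m,X,\bfx,\bfy,\aux_\bfy,\widetilde{\sigma})) = 1$ with probability $1$.

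The first step is to parse what the hypothesis $\FPreVerify(\dots) = 1$ gives us. By the construction of $\FPreVerify$ in \Cref{fig:fas-construction}, this conjunction implies in particular $\as.\PreVerify(\vk,m,\aux_\bfy,\widetilde{\sigma}) = 1$, i.e. $\widetilde{\sigma}$ is a valid $\as$ pre-signature for the adaptor statement $\aux_\bfy = \pk_\bfy$, where $\pk_\bfy := \ipfe.\PubKGen(\mpk,\widetilde{\bfy})$ with $\widetilde{\bfy} = (\bfy^T, f_\bfy(\bft))^T$ (here $\advt = (\mpk,\ct,\pi)$, $\state = (\msk,\bft)$). The second step is to inspect $\Adapt$: it computes $\sk_\bfy := \ipfe.\KGen(\msk,\widetilde{\bfy})$ and outputs $\sigma := \as.\Adapt(\vk,m,\aux_\bfy,\sk_\bfy,\widetilde{\sigma})$. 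The crucial observation—and this is where $R_\ipfe$-compliance (\Cref{def:ipfe-compliant}) enters—is that $(\pk_\bfy, \sk_\bfy) = (\ipfe.\PubKGen(\mpk,\widetilde{\bfy}), \ipfe.\KGen(\msk,\widetilde{\bfy})) \in R_\ipfe$, so $\sk_\bfy$ is a genuine witness for the adaptor statement $\aux_\bfy$ with respect to the base relation $R_\ipfe$. The third step then invokes weak pre-signature adaptability of $\as$ (\Cref{def:as-pre-sig-adaptability}): since $\widetilde{\sigma}$ pre-verifies against statement $\aux_\bfy$ and $(\aux_\bfy, \sk_\bfy) \in R_\ipfe$, adapting $\widetilde{\sigma}$ with $\sk_\bfy$ yields a signature $\sigma$ with $\as.\Verify(\vk,m,\sigma) = 1$; and since $\Verify$ of the FAS scheme is just $\as.\Verify$ (inherited from $\DS$), we conclude $\Verify(\vk,m,\sigma) = 1$, as required.

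One subtlety worth flagging: weak pre-signature adaptability of $\as$ as stated in \Cref{def:as-pre-sig-adaptability} quantifies over key pairs $(\sk,\vk) \gets \KGen(1^\secparam)$, statement/witness pairs $(X,x) \in R$ (here meaning $R_\ipfe$), and pre-signatures $\widetilde{\sigma}$ with $\PreVerify(\vk,m,X,\widetilde{\sigma}) = 1$—which is exactly the situation we have, with $X = \aux_\bfy$ and $x = \sk_\bfy$. So the match is clean and no additional argument (e.g. about honest generation of $\aux_\bfy$) is needed beyond what compliance already delivers. I do not anticipate a genuine obstacle here; the only thing requiring a little care is making sure the argument does not secretly rely on $\aux_\bfy$ or $\pi_\bfy$ being honestly generated in a way \emph{stronger} than what $\FPreVerify$ checks—but since $\FPreVerify$ re-derives $\pk_\bfy$ from $\mpk$ and the claimed $\widetilde{\bfy}$ via $\AuxVerify$ and $\PubKGen$ is deterministic, consistency of $\aux_\bfy$ with the $\widetilde{\bfy}$ used inside $\Adapt$ holds automatically in the honest-$\AuxGen$ setting of this definition. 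The whole proof is thus a two-line chain: $\FPreVerify = 1 \Rightarrow \as.\PreVerify = 1$, combined with $R_\ipfe$-compliance $\Rightarrow (\aux_\bfy,\sk_\bfy) \in R_\ipfe$, fed into weak pre-signature adaptability of $\as$.
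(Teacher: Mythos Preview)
Your proposal is correct and follows essentially the same approach as the paper's proof: unfold $\FPreVerify$ to obtain $\as.\PreVerify(\vk,m,\aux_\bfy,\widetilde{\sigma})=1$, use $R_\ipfe$-compliance to conclude $(\aux_\bfy,\sk_\bfy)\in R_\ipfe$, and then invoke weak pre-signature adaptability of $\as$. The paper's argument is the same two-step chain, just stated more tersely.
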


\begin{proof}
For any $\secparam \in \N$, 
let $\crs \gets \Setup(1^\secparam)$ 
be as computed in~\Cref{sec:construction}.
For any statement/witness pair $(X, \bfx) \in R$, 
let $(\advt, $ $\state) \gets \AdvertisementGen(1^\ell,\crs, X, \bfx)$ 
be as computed in~\Cref{sec:construction}.
For any message $m \in \{0,1\}^*$, 
any function $\bfy \in \mcal{F}_{IP, \Z_p^\ell}$,
any $(\aux_\bfy, \pi_\bfy) \gets \AuxGen(\advt, \state, \bfy)$ as computed in~\Cref{sec:construction},
any key pair $(\sk, \vk) \gets \KGen(1^\secparam)$ 
as computed in~\Cref{sec:construction}, 
any pre-signature $\widetilde{\sigma} \in \{0,1\}^*$ 
such that $\FPreVerify(\advt, \vk, m, X, \bfy, \aux_\bfy, \pi_\bfy, \widetilde{\sigma}) = 1$ as computed in~\Cref{sec:construction}.
This implies $\AS.\PreVerify(\vk, m, \aux_\bfy, \widetilde{\sigma}) = 1$.
By $R_\ipfe$-compliance of the $\ipe$ scheme, 
we know that $(\aux_\bfy, \sk_\bfy) \in R_\ipfe$, where $\sk_\bfy$ is as computed in the $\Adapt$ algorithm in~\Cref{sec:construction}.
Then, it follows by the weak pre-signature adaptability of $\AS$ that 
$\Pr[\AS.\Verify(\vk, m, \sigma) = 1] =1 $, where 
$\sigma = \AS.\Adapt(\vk, m, \aux_\bfy, \sk_\bfy, \widetilde{\sigma})$.
Then, from the implementation of the $\Adapt$ algorithm of our functional adaptor signature scheme as in~\Cref{sec:construction}, it follows that 
\[
\Pr[\Verify(\vk, m, \Adapt(\advt, \state, \vk, m, X, \bfx, \bfy, \aux_\bfy, \widetilde{\sigma})) = 1] = 1.
\]

\end{proof}

\subsection{Zero-Knowledge}

\begin{lemma}
Suppose $\mcal{M}$ is an additive group, $\nizk$ satisfies zero-knowledge (\Cref{def:nizk}) and $\ipe$ satisfies selective, IND-security (\Cref{def:ipfe-sel-ind-sec}). 
Then, the functional adaptor signature construction in~\Cref{sec:construction} is zero-knowledge (\Cref{def:fas-zk}).
\label{lemma:fas-zk}
\end{lemma}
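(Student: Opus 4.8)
The plan is to prove Lemma~\ref{lemma:fas-zk} by exhibiting the simulator $\Sim = (\Setup^*, \AdGen^*, \AuxGen^*, \Adapt^*)$ sketched in the proof sketch and then arguing, via a short hybrid argument, that the adversary's view in $\faZKReal$ is computationally indistinguishable from its view in $\faZKIdeal^\Sim$. First I would define the simulator concretely: $\Setup^*$ invokes the NIZK simulator's CRS generator $(\crs,\td)\gets\nizk.\Setup^*(1^\secparam)$ and samples $\pp'$ honestly; $\AdGen^*(\pp,X)$ samples $r_0$, $\bft\getr\Z_p^\ell$, computes $(\mpk,\msk):=\ipfe.\Setup(\pp',1^{\ell+1};r_0)$ but now encrypts the \emph{shifted} vector $\widetilde{\bfx}:=(-\bft^T,1)^T\in\Z_p^{\ell+1}$ and produces the NIZK proof via $\pi\gets\nizk.\Prove^*(\crs,\td,(X,\pp',\mpk,\ct))$; $\AuxGen^*(\advt,\bfy,f_\bfy(\bfx))$ is as $\AuxGen$ except it uses the elongated function vector $\widetilde{\bfy}:=(\bfy^T, f_\bfy(\bft)+f_\bfy(\bfx))^T$ and sets $\pi_\bfy:=f_\bfy(\bft)+f_\bfy(\bfx)$; and $\Adapt^*$ is as $\Adapt$ with the same modified $\widetilde{\bfy}$ so that $\sk_\bfy:=\ipfe.\KGen(\msk,\widetilde{\bfy})$. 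The crucial observation making this well-defined is that $\AuxGen^*$ and $\Adapt^*$ only need $f_\bfy(\bfx)$, never $\bfx$ itself.

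Next I would run the hybrid argument over games $G_0,G_1,G_2,G_3$ with $G_0=\faZKReal_{\A,\FAS}$ and $G_3=\faZKIdeal^\Sim_{\A,\FAS}$. In $G_1$ I switch the NIZK to simulation mode (using $\nizk.\Setup^*$ in $\Setup$ and $\nizk.\Prove^*$ in $\AdGen$), and $G_0\indist G_1$ follows directly from the zero-knowledge property of $\nizk$; since the reduction must answer $\mcal{O}_{\AuxGen}$ and $\mcal{O}_{\Adapt}$ queries, and these only require $\msk$ and $\bft$ which the reduction holds, this is a clean single-invocation reduction to the NIZK distinguisher. In $G_2$ I change the elongated function vector used inside $\AuxGen$ and $\Adapt$ from $(\bfy^T,f_\bfy(\bft))^T$ to $(\bfy^T,f_\bfy(\bft)+f_\bfy(\bfx))^T$; since $f$ is linear, $f_\bfy(\bft)+f_\bfy(\bfx)=f_\bfy(\bft+\bfx)$, so this is literally a reindexing $\bft\mapsto\bft+\bfx$, and because $\bft$ is uniform over $\Z_p^\ell$ (here I need $\mcal{M}$ to be an additive group so that the shift is a bijection on the relevant domain), $G_1$ and $G_2$ are identically distributed. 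In $G_3$ I change the plaintext encrypted in $\ct$ from $\widetilde{\bfx}=(\bfx^T,0)^T$ to $\widetilde{\bfx}=(-\bft^T,1)^T$; I would verify that for every queried $\widetilde{\bfy}=(\bfy^T,f_\bfy(\bft)+f_\bfy(\bfx))^T$ the inner product is preserved, i.e. $\langle(\bfx^T,0)^T,\widetilde{\bfy}\rangle = \langle\bfx,\bfy\rangle = f_\bfy(\bfx)$ and $\langle(-\bft^T,1)^T,\widetilde{\bfy}\rangle = -f_\bfy(\bft) + (f_\bfy(\bft)+f_\bfy(\bfx)) = f_\bfy(\bfx)$, so the two challenge plaintexts are admissible for the selective IND game; hence $G_2\indist G_3$ by selective IND-security of $\ipfe$. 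Finally I observe that $G_3$ is exactly $\faZKIdeal^\Sim$, completing the chain.

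The main obstacle I anticipate is the reduction to selective IND-security of $\ipfe$ in the $G_2\to G_3$ step: the IND challenger fixes $\mpk$ and $\ct^*$, but the reduction must still simulate $\mcal{O}_{\AuxGen}$ and (more delicately) $\mcal{O}_{\Adapt}$, which needs the functional secret key $\sk_{\widetilde{\bfy}}=\ipfe.\KGen(\msk,\widetilde{\bfy})$, and these must be obtained through the IND game's key-generation oracle $\mcal{O}_\KGen$. I must check that every $\widetilde{\bfy}$ the reduction forwards to $\mcal{O}_\KGen$ is admissible, i.e. $f_{\widetilde{\bfy}}(\widetilde{\bfx}_0)=f_{\widetilde{\bfy}}(\widetilde{\bfx}_1)$, which is precisely the inner-product-preservation computation above — so admissibility holds by construction. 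A second point of care is that the reduction needs to know $\bft$ and $f_\bfy(\bfx)$ to form $\widetilde{\bfy}$ before querying $\mcal{O}_\KGen$; this is fine since $\bft$ is chosen by the reduction and $f_\bfy(\bfx)$ is available as the simulator's input (the $\faZKIdeal$ oracles are given $f(x)$). I should also note, as in Remark~\ref{remark:ipfe-simulation}, that revealing $f_\bfy(\bft)$ as part of $\pi_\bfy$ is harmless because this leakage is inherent to the IPFE decryption interface and does not compromise the above reductions, since $f_\bfy(\bft)$ is a deterministic function of quantities the reduction already computes. With these checks the hybrid argument goes through and the lemma follows.
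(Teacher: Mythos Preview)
Your proposal is correct and follows essentially the same approach as the paper's proof: the same simulator, the same hybrid sequence (NIZK simulation, change of variables $\bft\mapsto\bft+\bfx$, then IPFE IND-security on the encrypted vector), and the same admissibility check for the $\mcal{O}_\KGen$ queries. Your treatment of the $G_0\to G_1$ step is in fact slightly cleaner than the paper's, since you explicitly note that the NIZK reduction can answer $\mcal{O}_{\AuxGen}$ and $\mcal{O}_{\Adapt}$ queries using $\msk$ and $\bft$, whereas the paper's written proof of that claim restricts attention to an adversary making no oracle queries.
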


\begin{proof}

To prove the lemma, we need to show that for every stateful \ppt adversary \A, 
there exists a stateful \ppt simulator $\Sim = (\Setup^*, \AdGen^*, \AuxGen^*, \Adapt^*)$ and 
there exists a negligible function $\negl$ such that
for all \ppt distinguishers \D, 
for all $\secparam \in \N$, 
for all $(X, \bfx) \in R$,
\ifacm
	\begin{align*}
	& | \Pr[\D(\faZKReal_{\A, \FAS}(1^\secparam, X, \bfx)) = 1] \\
	& \quad - \Pr[ \D(\faZKIdeal_{\A, \FAS}^\Sim(1^\secparam, X, \bfx)) = 1] | \leq  \negl(\secparam).
	\end{align*}
\else 
\[
	| \Pr[\D(\faZKReal_{\A, \FAS}(1^\secparam, X, \bfx)) = 1]
	- \Pr[ \D(\faZKIdeal_{\A, \FAS}^\Sim(1^\secparam, X, \bfx)) = 1] | \leq  \negl(\secparam).
\]
\fi
We first describe the stateful simulator $\Sim = (\Setup^*, \AdGen^*, \allowbreak \AuxGen^*, \Adapt^*)$.  
Let $\nizk.\Sim = (\nizk.\Setup^*, \nizk.\Prove^*)$ be the NIZK simulator.
Then, the simulator $\Sim$ is as in~\Cref{fig:fas-zk-sim}.

\ifacm
	\begin{figure}[t]
	\centering
	\captionsetup{justification=centering}
	\begin{pchstack}[boxed, space=1em]
	\begin{pcvstack}[space=1em]
	\procedure[linenumbering, mode=text]{$\Setup^*(1^\secparam)$}{
	{\color{blue}
	Compute $(\crs, \td) \gets \nizk.\Setup^*(1^\secparam)$
	}
	\\ 
	Sample $\pp' \gets \ipfe.\Gen(1^\secparam)$
	\\ 
	Store internal state $\td$, \ret $\pp := (\crs, \pp')$
	}

	\procedure[linenumbering, mode=text]{$\AdGen^*(\pp, X)$}{
	Sample $(\mpk, \msk) \gets \ipe.\Setup(\pp', 1^{\ell+1})$
	\\ 
	Sample $\bft \getr \mcal{M}$
	\\
	{\color{blue}
	Let $\widetilde{\bfx} := (-\bft^T, 1)^T \in \mcal{M}' \subseteq \Z^{\ell+1}$ 
	}
	\\ 
	Compute $\ct \gets \ipe.\Enc(\mpk, \widetilde{\bfx})$
	\\
	{\color{blue}
	Compute $\pi \gets \nizk.\Prove^*(\crs, \td, (X, \pp', \mpk, \ct))$
	}
	\\ 
	Store internal state $\state := (\msk, \bft)$
	\\ 
	\ret $\advt := (\mpk, \ct, \pi)$
	}

	\procedure[linenumbering, mode=text]{$\AuxGen^*(\advt, \bfy, f_{\bfy}(\bfx))$}{
	Parse $\advt = (\mpk, \ct, \pi)$
	\\ 
	Compute $\widetilde{\bfy} := (\bfy^T, {\color{blue}f_\bfy(\bft) + f_\bfy(\bfx)})^T \in \mcal{F}_{{\sf IP}, \ell+1}$
	\\ 
	Compute $\pk_\bfy := \ipfe.\PubKGen(\mpk, \widetilde{\bfy})$ 
	\\ 
	\ret $\aux_\bfy := \pk_\bfy$, $\pi_\bfy := {\color{blue}f_\bfy(\bft) + f_\bfy(\bfx)}$
	}

	\procedure[linenumbering, mode=text]{$ \Adapt^*(\advt, \vk, m, X, \bfy, \aux_\bfy, \widetilde{\sigma}, f_\bfy(\bfx))$}{
	Parse $\advt = (\mpk, \ct, \pi)$, and $\state = (\msk, \bft)$
	\\ 
	Compute $\widetilde{\bfy} := (\bfy^T, {\color{blue}f_\bfy(\bft) + f_\bfy(\bfx)})^T \in \mcal{F}_{{\sf IP}, \ell+1}$
	\\ 
	Compute $\sk_\bfy := \ipe.\KGen(\msk, \widetilde{\bfy})$
	\\ 
	\ret $\sigma := \as.\Adapt(\vk, m, \aux_\bfy, \sk_\bfy, \widetilde{\sigma})$
	}

	\end{pcvstack}
	\end{pchstack}
	\caption{Zero-Knowledge Simulator $\Sim$}
	\label{fig:fas-zk-sim}
	\end{figure}
\else 
	\begin{figure}[h]
	\centering
	\captionsetup{justification=centering}
	\begin{pchstack}[boxed, space=1em]
	\begin{pcvstack}[space=1em]
	\procedure[linenumbering, mode=text]{$\Setup^*(1^\secparam)$}{
	{\color{blue}
	Let $(\crs, \td) \gets \nizk.\Setup^*(1^\secparam)$
	}
	\\ 
	Sample $\pp' \gets \ipfe.\Gen(1^\secparam)$
	\\ 
	Store internal state $\td$, \ret $\pp := (\crs, \pp')$
	}

	\procedure[linenumbering, mode=text]{$\AdGen^*(\pp, X)$}{
	Sample $(\mpk, \msk) \gets \ipe.\Setup(\pp', 1^{\ell+1})$
	\\ 
	Sample $\bft \getr \mcal{M}$
	\\
	{\color{blue}
	Let $\widetilde{\bfx} := (-\bft^T, 1)^T \in \mcal{M}' \subseteq \Z^{\ell+1}$ 
	}
	\\ 
	Let $\ct \gets \ipe.\Enc(\mpk, \widetilde{\bfx})$
	\\
	{\color{blue}
	Let $\pi \gets \nizk.\Prove^*(\crs, \td, (X, \pp', \mpk, \ct))$
	}
	\\ 
	Store internal state $\state := (\msk, \bft)$
	\\ 
	\ret $\advt := (\mpk, \ct, \pi)$
	}

	\end{pcvstack}
	\begin{pcvstack}[space=1em]

	\procedure[linenumbering, mode=text]{$\AuxGen^*(\advt, \bfy, f_{\bfy}(\bfx))$}{
	Parse $\advt = (\mpk, \ct, \pi)$
	\\ 
	Let $\widetilde{\bfy} := (\bfy^T, {\color{blue}f_\bfy(\bft) + f_\bfy(\bfx)})^T \in \mcal{F}_{{\sf IP}, \ell+1}$
	\\ 
	Let $\pk_\bfy := \ipfe.\PubKGen(\mpk, \widetilde{\bfy})$ 
	\\ 
	\ret $\aux_\bfy := \pk_\bfy$, $\pi_\bfy := {\color{blue}f_\bfy(\bft) + f_\bfy(\bfx)}$
	}

	\procedure[linenumbering, mode=text]{$ \Adapt^*(\advt, \vk, m, X, \bfy, \aux_\bfy, \widetilde{\sigma}, f_\bfy(\bfx))$}{
	Parse $\advt = (\mpk, \ct, \pi)$, and $\state = (\msk, \bft)$
	\\ 
	Let $\widetilde{\bfy} := (\bfy^T, {\color{blue}f_\bfy(\bft) + f_\bfy(\bfx)})^T \in \mcal{F}_{{\sf IP}, \ell+1}$
	\\ 
	Let $\sk_\bfy := \ipe.\KGen(\msk, \widetilde{\bfy})$
	\\ 
	\ret $\sigma := \as.\Adapt(\vk, m, \aux_\bfy, \sk_\bfy, \widetilde{\sigma})$
	}

	\end{pcvstack}
	\end{pchstack}
	\caption{Zero-Knowledge Simulator $\Sim$}
	\label{fig:fas-zk-sim}
	\end{figure}
\fi
Having described the Simulator $\Sim$, the experiments $\faZKReal_{\A, \FAS}$ and $\faZKIdeal^\Sim_{\A, \FAS}$ are as in~\Cref{fig:fasig-zk-games}. Observe that $\faZKIdeal^\Sim_{\A, \FAS}$ is same as $\faZKReal_{\A, \FAS}$ except that NIZK is switched to simulation mode and IPFE ciphertext $\ct$ encrypts $\widetilde{\bfx} := (-\bft^T, 1)^T \in \mcal{M}' \subseteq \Z^{\ell+1}$ and functional keys $\pk_\bfy$ and $\sk_\bfy$ correspond to $\widetilde{\bfy} := (\bfy^T, f_\bfy(\bft) + f_\bfy(\bfx))^T \in \mcal{F}_{{\sf IP}, \ell+1}$. 
Observe that $f_\bfy(\bfx) = f_{\widetilde{\bfy}}(\widetilde{\bfx})$.
To prove zero-knowledge, we introduce an intermediate hybrid experiment $\Hyb_{\A, \fas}^{\nizk.\Sim}$ which is same as $\faZKReal_{\A, \FAS}$ except that only NIZK is switched to simulation mode.
Note that by triangle inequality, it follows that 
\ifacm
	\begin{align*}
	& | \Pr[\D(\faZKReal_{\A, \FAS}(1^\secparam, X, \bfx)) = 1] \\ 
	& - \Pr[ \D(\faZKIdeal_{\A, \FAS}^\Sim(1^\secparam, X, \bfx)) = 1] | \\ 
	& \quad \leq 
	| \Pr[\D(\faZKReal_{\A, \FAS}(1^\secparam, X, \bfx)) = 1]\\ 
	& \quad \quad - \Pr[ \D(\Hyb_{\A, \FAS}^{\nizk.\Sim}(1^\secparam, X, \bfx)) = 1] | \\ 
	& \quad \quad + | \Pr[ \D(\Hyb_{\A, \FAS}^{\nizk.\Sim}(1^\secparam, X, \bfx)) = 1]\\ 
	& \quad \quad - \Pr[\D(\faZKIdeal_{\A, \FAS}^\Sim(1^\secparam, X, \bfx)) = 1] |.
	\end{align*}
\else
	\begin{align*}
	& | \Pr[\D(\faZKReal_{\A, \FAS}(1^\secparam, X, \bfx)) = 1]
	- \Pr[ \D(\faZKIdeal_{\A, \FAS}^\Sim(1^\secparam, X, \bfx)) = 1] | \\ 
	& \quad \leq 
	| \Pr[\D(\faZKReal_{\A, \FAS}(1^\secparam, X, \bfx)) = 1]
	- \Pr[ \D(\Hyb_{\A, \FAS}^{\nizk.\Sim}(1^\secparam, X, \bfx)) = 1] | \\ 
	& \quad \quad + | \Pr[ \D(\Hyb_{\A, \FAS}^{\nizk.\Sim}(1^\secparam, X, \bfx)) = 1]
	- \Pr[\D(\faZKIdeal_{\A, \FAS}^\Sim(1^\secparam, X, \bfx)) = 1] |.
	\end{align*}
\fi
To complete the proof, we show in~\Cref{clm:fas-zk-nizk,clm:fas-zk-ipfe} that each of the two terms on the right-hand-side are at most $\negl(\secparam)$.
\end{proof}

\ifacm
	\begin{figure*}[t]
	\centering
	\captionsetup{justification=centering}
	\begin{pcvstack}[boxed, space=1em]
	\procedure[linenumbering, mode=text]{Experiments 
	$\faZKReal_{\A,\FAS}(1^\secparam, X, \bfx)$, 
	\pcbox{\text{$\Hyb_{\A,\FAS}^{\nizk.\Sim}(1^\secparam, X, \bfx)$}},
	{\color{blue}\pcbox{\color{black}\text{$\faZKIdeal^\Sim_{\A,\FAS}(1^\secparam, X, \bfx)$}}}.}{
	$\crs \gets \nizk.\Setup(1^\secparam)$
	{\color{blue} \pcbox{
	\color{black} \pcbox{\text{
	$(\crs, \td) \gets \nizk.\Setup^*(1^\secparam)$
	}}
	}}
	\\
	$\pp' \gets \ipfe.\Gen(1^\secparam)$
	\\ 
	$\pp := (\crs, \pp')$
	\\ 
	Sample random coins $r_0$, $(\mpk, \msk) := \ipe.\Setup(\pp', 1^{\ell+1}; r_0)$
	{\color{blue} \pcbox{
	\color{black} \pcbox{\text{
	$(\mpk, \msk) \gets \ipe.\Setup(\pp', 1^{\ell+1})$
	}}
	}}
	\\ 
	Sample $\bft \getr \mcal{M}$
	\\
	Let $\widetilde{\bfx} := (\bfx^T, 0)^T \in \mcal{M}' \subseteq \Z^{\ell+1}$ 
	{ \color{blue} \pcbox{ \color{black} \text{
	Let $\widetilde{\bfx} := (-\bft^T, 1)^T \in \mcal{M}' \subseteq \Z^{\ell+1}$ 
	}}}
	\\ 
	Sample random coins $r_1$, $\ct := \ipe.\Enc(\mpk, \widetilde{\bfx}; r_1)$
	{\color{blue} \pcbox{
	\color{black} \pcbox{\text{
	$\ct \gets \ipe.\Enc(\mpk, \widetilde{\bfx})$
	}}
	}}
	\\ 
	$\pi \gets \nizk.\Prove(\crs, (X, \pp', \mpk, \ct), (r_0, r_1, \bfx))$
	{\color{blue} \pcbox{
	\color{black} \pcbox{\text{
	$\pi \gets \nizk.\Prove^*(\crs, \td, (X, \pp', \mpk, \ct))$
	}}
	}}
	\\ 
	$\advt := (\mpk, \ct, \pi)$, $\state := (\msk, \bft)$
	\\ 
	$\vk = \bot$
	\\ 
	$\vk \gets \A^{\mcal{O}_{\AuxGen}(\cdot),\mcal{O}_{\Adapt}(\cdot, \cdot, \cdot)}(\pp, \advt, X)$ 
	{ \color{blue} \pcbox{ \color{black} \text{
	$\vk \gets \A^{\mcal{O}_{\AuxGen^*}(\cdot, \cdot), \mcal{O}_{\Adapt^*}(\cdot, \cdot, \cdot, \cdot)}(\pp, \advt, X)$ 
	}}}
	\\ 
	\ret view of $\A$
	}

	\procedure[linenumbering, mode=text]{Oracles 
	$\mcal{O}_{\AuxGen}(\bfy)$, 
	{ \color{blue} \pcbox{ \color{black} \text{
	$\mcal{O}^*_{\AuxGen}(\bfy, f_\bfy(\bfx))$}}}.}{
	Parse $\advt = (\mpk, \ct, \pi)$
	\\ 
	Compute $\widetilde{\bfy} := (\bfy^T, f_\bfy(\bft))^T \in \mcal{F}_{{\sf IP}, \ell+1}$
	{ \color{blue} \pcbox{ \color{black} \text{
	Compute $\widetilde{\bfy} := (\bfy^T, f_\bfy(\bft)+f_\bfy(\bfx))^T \in \mcal{F}_{{\sf IP}, \ell+1}$
	}}}
	\\ 
	Compute $\pk_\bfy := \ipfe.\PubKGen(\mpk, \widetilde{\bfy})$ 
	\\ 
	\ret $\aux_\bfy := \pk_\bfy$, $\pi_\bfy := f_\bfy(\bft)$
	{ \color{blue} \pcbox{ \color{black} \text{
	\ret $\aux_\bfy := \pk_\bfy$, $\pi_\bfy := f_\bfy(\bft)+f_\bfy(\bfx)$
	}}}
	}

	\procedure[linenumbering, mode=text]{Oracles 
	$\mcal{O}_{\Adapt}(m, \bfy, \widetilde{\sigma})$, 
	{ \color{blue} \pcbox{ \color{black} \text{
	$\mcal{O}^*_{\Adapt}(m, \bfy, \widetilde{\sigma}, f_\bfy(\bfx))$}}}.}{
	Parse $\advt = (\mpk, \ct, \pi)$, and $\state = (\msk, \bft)$
	\\ 
	If $\vk = \bot$: \ret $\bot$
	\\ 
	$(\aux_\bfy, \pi_\bfy) \gets \mcal{O}_{\AuxGen}(\bfy)$
	{ \color{blue} \pcbox{ \color{black} \text{
	$(\aux_\bfy, \pi_\bfy) \gets \mcal{O}^*_{\AuxGen}(\bfy, f_\bfy(\bfx))$
	}}}
	\\ 
	If $\AuxVerify(\advt, \bfy, \aux_\bfy, \pi_\bfy) = 0 \ \vee\ \as.\PreVerify(\vk, m, \aux_\bfy, \widetilde{\sigma}) = 0\ $: \ret $\bot$
	\\ 
	Compute $\widetilde{\bfy} := (\bfy^T, f_\bfy(\bft))^T \in \mcal{F}_{{\sf IP}, \ell+1}$
	{ \color{blue} \pcbox{ \color{black} \text{
	Compute $\widetilde{\bfy} := (\bfy^T, f_\bfy(\bft) + f_\bfy(\bfx))^T \in \mcal{F}_{{\sf IP}, \ell+1}$
	}}}
	\\ 
	Compute $\sk_\bfy := \ipe.\KGen(\msk, \widetilde{\bfy})$
	\\ 
	\ret $\sigma := \as.\Adapt(\vk, m, \aux_\bfy, \sk_\bfy, \widetilde{\sigma})$
	}
	\end{pcvstack}
	\caption{Zero-knowledge security of functional adaptor signatures: $\faZKReal_{\A,\FAS}$ is the real world experiment, $\Hyb_{\A,\FAS}^{\nizk.\Sim}$ is an intermediate hybrid experiment, $\faZKIdeal^\Sim_{\A,\FAS}$ is the ideal world experiment.}
	\label{fig:fasig-zk-games}
	\end{figure*}
\else
	\begin{figure}[h]
	\centering
	\captionsetup{justification=centering}
	\begin{pcvstack}[boxed, space=1em]
	\procedure[linenumbering, mode=text]{Experiments 
	$\faZKReal_{\A,\FAS}(1^\secparam, X, \bfx)$, 
	\pcbox{\text{$\Hyb_{\A,\FAS}^{\nizk.\Sim}(1^\secparam, X, \bfx)$}},
	{\color{blue}\pcbox{\color{black}\text{$\faZKIdeal^\Sim_{\A,\FAS}(1^\secparam, X, \bfx)$}}}.}{
	$\crs \gets \nizk.\Setup(1^\secparam)$
	{\color{blue} \pcbox{
	\color{black} \pcbox{\text{
	$(\crs, \td) \gets \nizk.\Setup^*(1^\secparam)$
	}}
	}}
	\\
	$\pp' \gets \ipfe.\Gen(1^\secparam)$,
	$\pp := (\crs, \pp')$,
	sample random coins $r_0$
	\\ 
	Let $(\mpk, \msk) := \ipe.\Setup(\pp', 1^{\ell+1}; r_0)$
	{\color{blue} \pcbox{
	\color{black} \pcbox{\text{
	$(\mpk, \msk) \gets \ipe.\Setup(\pp', 1^{\ell+1})$
	}}
	}}
	\\ 
	Sample $\bft \getr \mcal{M}$
	\\
	Let $\widetilde{\bfx} := (\bfx^T, 0)^T \in \mcal{M}' \subseteq \Z^{\ell+1}$ 
	{ \color{blue} \pcbox{ \color{black} \text{
	$\widetilde{\bfx} := (-\bft^T, 1)^T \in \mcal{M}' \subseteq \Z^{\ell+1}$ 
	}}}
	\\ 
	Sample random coins $r_1$, $\ct := \ipe.\Enc(\mpk, \widetilde{\bfx}; r_1)$
	{\color{blue} \pcbox{
	\color{black} \pcbox{\text{
	$\ct \gets \ipe.\Enc(\mpk, \widetilde{\bfx})$
	}}
	}}
	\\ 
	$\pi \gets \nizk.\Prove(\crs, (X, \pp', \mpk, \ct), (r_0, r_1, \bfx))$
	{\color{blue} \pcbox{
	\color{black} \pcbox{\text{
	$\pi \gets \nizk.\Prove^*(\crs, \td, (X, \pp', \mpk, \ct))$
	}}
	}}
	\\ 
	$\advt := (\mpk, \ct, \pi)$, $\state := (\msk, \bft)$,
	$\vk := \bot$
	\\ 
	$\vk \gets \A^{\mcal{O}_{\AuxGen}(\cdot),\mcal{O}_{\Adapt}(\cdot, \cdot, \cdot)}(\pp, \advt, X)$ 
	{ \color{blue} \pcbox{ \color{black} \text{
	$\vk \gets \A^{\mcal{O}_{\AuxGen^*}(\cdot, \cdot), \mcal{O}_{\Adapt^*}(\cdot, \cdot, \cdot, \cdot)}(\pp, \advt, X)$ 
	}}}
	\\ 
	\ret view of $\A$
	}

	\procedure[linenumbering, mode=text]{Oracles 
	$\mcal{O}_{\AuxGen}(\bfy)$, 
	{ \color{blue} \pcbox{ \color{black} \text{
	$\mcal{O}^*_{\AuxGen}(\bfy, f_\bfy(\bfx))$}}}.}{
	Described in~\Cref{fig:fasig-zk-games-oracles}
	}

	\procedure[linenumbering, mode=text]{Oracles 
	$\mcal{O}_{\Adapt}(m, \bfy, \widetilde{\sigma})$, 
	{ \color{blue} \pcbox{ \color{black} \text{
	$\mcal{O}^*_{\Adapt}(m, \bfy, \widetilde{\sigma}, f_\bfy(\bfx))$}}}.}{
	Described in~\Cref{fig:fasig-zk-games-oracles}
	}
	\end{pcvstack}
	\caption{Zero-knowledge security of FAS: $\faZKReal_{\A,\FAS}$ is the real world experiment, $\Hyb_{\A,\FAS}^{\nizk.\Sim}$ is an intermediate hybrid experiment, $\faZKIdeal^\Sim_{\A,\FAS}$ is the ideal world experiment.}
	\label{fig:fasig-zk-games}
	\end{figure}

	\begin{figure}[h]
	\centering
	\captionsetup{justification=centering}
	\begin{pcvstack}[boxed, space=1em]

	\procedure[linenumbering, mode=text]{Oracles 
	$\mcal{O}_{\AuxGen}(\bfy)$, 
	{ \color{blue} \pcbox{ \color{black} \text{
	$\mcal{O}^*_{\AuxGen}(\bfy, f_\bfy(\bfx))$}}}.}{
	Parse $\advt = (\mpk, \ct, \pi)$
	\\ 
	Compute $\widetilde{\bfy} := (\bfy^T, f_\bfy(\bft))^T \in \mcal{F}_{{\sf IP}, \ell+1}$
	{ \color{blue} \pcbox{ \color{black} \text{
	Compute $\widetilde{\bfy} := (\bfy^T, f_\bfy(\bft)+f_\bfy(\bfx))^T \in \mcal{F}_{{\sf IP}, \ell+1}$
	}}}
	\\ 
	Compute $\pk_\bfy := \ipfe.\PubKGen(\mpk, \widetilde{\bfy})$ 
	\\ 
	\ret $\aux_\bfy := \pk_\bfy$, $\pi_\bfy := f_\bfy(\bft)$
	{ \color{blue} \pcbox{ \color{black} \text{
	\ret $\aux_\bfy := \pk_\bfy$, $\pi_\bfy := f_\bfy(\bft)+f_\bfy(\bfx)$
	}}}
	}

	\procedure[linenumbering, mode=text]{Oracles 
	$\mcal{O}_{\Adapt}(m, \bfy, \widetilde{\sigma})$, 
	{ \color{blue} \pcbox{ \color{black} \text{
	$\mcal{O}^*_{\Adapt}(m, \bfy, \widetilde{\sigma}, f_\bfy(\bfx))$}}}.}{
	Parse $\advt = (\mpk, \ct, \pi)$, and $\state = (\msk, \bft)$.
	If $\vk = \bot$: \ret $\bot$
	\\ 
	$(\aux_\bfy, \pi_\bfy) \gets \mcal{O}_{\AuxGen}(\bfy)$
	{ \color{blue} \pcbox{ \color{black} \text{
	$(\aux_\bfy, \pi_\bfy) \gets \mcal{O}^*_{\AuxGen}(\bfy, f_\bfy(\bfx))$
	}}}
	\\ 
	If $\AuxVerify(\advt, \bfy, \aux_\bfy, \pi_\bfy) = 0 \ \vee\ \as.\PreVerify(\vk, m, \aux_\bfy, \widetilde{\sigma}) = 0\ $: \ret $\bot$
	\\ 
	Compute $\widetilde{\bfy} := (\bfy^T, f_\bfy(\bft))^T \in \mcal{F}_{{\sf IP}, \ell+1}$
	{ \color{blue} \pcbox{ \color{black} \text{
	Compute $\widetilde{\bfy} := (\bfy^T, f_\bfy(\bft) + f_\bfy(\bfx))^T \in \mcal{F}_{{\sf IP}, \ell+1}$
	}}}
	\\ 
	Compute $\sk_\bfy := \ipe.\KGen(\msk, \widetilde{\bfy})$
	\\ 
	\ret $\sigma := \as.\Adapt(\vk, m, \aux_\bfy, \sk_\bfy, \widetilde{\sigma})$
	}
	\end{pcvstack}
	\caption{Oracle descriptions for experiments $\faZKReal_{\A,\FAS}$, $\Hyb_{\A,\FAS}^{\nizk.\Sim}$, $\faZKIdeal^\Sim_{\A,\FAS}$.}
	\label{fig:fasig-zk-games-oracles}
	\end{figure}	
\fi




\begin{claim}
Suppose $\nizk$ satisfies zero-knowledge. 
Then, for every stateful \ppt adversary \A, 
for the \ppt simulator $\nizk.\Sim$, 
there exists a negligible function $\negl$ such that
for all \ppt distinguishers \D, 
for all $\secparam \in \N$, 
for all $(X, \bfx) \in R$,
\ifacm
	\begin{align*}
	& | \Pr[\D(\faZKReal_{\A, \FAS}(1^\secparam, X, \bfx)) = 1] \\ 
	& \quad - \Pr[ \D(\Hyb_{\A, \FAS}^{\nizk.\Sim}(1^\secparam, X, \bfx)) = 1] | \leq  \negl(\secparam).
	\end{align*}
\else 
	\[
	| \Pr[\D(\faZKReal_{\A, \FAS}(1^\secparam, X, \bfx)) = 1] 
	- \Pr[ \D(\Hyb_{\A, \FAS}^{\nizk.\Sim}(1^\secparam, X, \bfx)) = 1] | \leq  \negl(\secparam).
	\]

\fi
\label{clm:fas-zk-nizk}
\end{claim}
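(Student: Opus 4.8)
\textbf{Proof plan for Claim~\ref{clm:fas-zk-nizk}.} The plan is to reduce the indistinguishability of $\faZKReal_{\A,\FAS}$ and $\Hyb_{\A,\FAS}^{\nizk.\Sim}$ directly to the zero-knowledge property of the underlying $\nizk$ argument system (\Cref{def:nizk}). The key observation is that the only difference between the two experiments is in how the common reference string $\crs$ and the advertisement proof $\pi$ are generated: in $\faZKReal$ they come from $\nizk.\Setup(1^\secparam)$ and $\nizk.\Prove(\crs, (X,\pp',\mpk,\ct), (r_0,r_1,\bfx))$, whereas in $\Hyb$ they come from $\nizk.\Setup^*(1^\secparam)$ and $\nizk.\Prove^*(\crs,\td,(X,\pp',\mpk,\ct))$. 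Every other component — $\pp'$, $(\mpk,\msk)$, $\bft$, $\widetilde{\bfx} = (\bfx^T,0)^T$, $\ct$, and the answers to all $\mcal{O}_{\AuxGen}$ and $\mcal{O}_{\Adapt}$ queries — is computed identically in the two experiments (in particular, note that $\Hyb$ still uses the honest $\widetilde{\bfx}$ and honest $\widetilde{\bfy} = (\bfy^T, f_\bfy(\bft))^T$, so the IPFE part is untouched).

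\textbf{Key steps.} First I would fix an arbitrary stateful \ppt adversary $\A$, distinguisher $\D$, security parameter $\secparam$, and pair $(X,\bfx)\in R$, and construct a \ppt NIZK distinguisher $\B$ that plays the zero-knowledge game of \Cref{def:nizk} for the fixed statement-witness pair $\stmt := (X,\pp',\mpk,\ct)$ and $\wit := (r_0,r_1,\bfx)$. The subtlety is that $\stmt$ and $\wit$ themselves depend on $\crs$-independent randomness that $\B$ must generate \emph{before} it can name the statement to its challenger; I would handle this by having $\B$ first sample $\pp' \gets \ipfe.\Gen(1^\secparam)$, $r_0, r_1$, compute $(\mpk,\msk) := \ipfe.\Setup(\pp',1^{\ell+1};r_0)$, $\bft \getr \mcal{M}$, $\widetilde{\bfx} := (\bfx^T,0)^T$, $\ct := \ipfe.\Enc(\mpk,\widetilde{\bfx};r_1)$, and then submit $(\stmt,\wit)$ to the NIZK challenger, receiving back a pair $(\crs^\dagger, \pi^\dagger)$ which is either $(\nizk.\Setup(1^\secparam), \nizk.\Prove(\crs^\dagger,\stmt,\wit))$ or $(\nizk.\Setup^*(1^\secparam), \nizk.\Prove^*(\crs^\dagger,\td,\stmt))$. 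Then $\B$ sets $\pp := (\crs^\dagger,\pp')$, $\advt := (\mpk,\ct,\pi^\dagger)$, $\state := (\msk,\bft)$, runs $\A$ on input $(\pp,\advt,X)$, answering the $\mcal{O}_{\AuxGen}$ and $\mcal{O}_{\Adapt}$ oracle queries exactly as in \Cref{fig:fasig-zk-games} (which $\B$ can do since it knows $\state$), collects $\A$'s view, feeds it to $\D$, and outputs whatever $\D$ outputs.

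\textbf{Concluding the reduction.} Next I would verify that when the NIZK challenger is in real mode, $\B$ perfectly simulates $\faZKReal_{\A,\FAS}(1^\secparam,X,\bfx)$ to $\A$, and when it is in simulation mode, $\B$ perfectly simulates $\Hyb_{\A,\FAS}^{\nizk.\Sim}(1^\secparam,X,\bfx)$; this is immediate by inspection since all other quantities are generated identically and independently of $\crs$. Hence $\B$'s distinguishing advantage against the NIZK zero-knowledge game equals $|\Pr[\D(\faZKReal_{\A,\FAS}(1^\secparam,X,\bfx))=1] - \Pr[\D(\Hyb_{\A,\FAS}^{\nizk.\Sim}(1^\secparam,X,\bfx))=1]|$, which is therefore bounded by the negligible function $\negl$ guaranteed by \Cref{def:nizk}. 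I expect the main obstacle to be the bookkeeping around ``which randomness is sampled when'': one must be careful that $\B$ can indeed commit to the statement $\stmt$ (and its witness $\wit$) before obtaining $\crs$, and that the $\mcal{O}_{\Adapt}$ oracle — which invokes $\as.\PreVerify$ and $\ipfe.\KGen$ — is answerable by $\B$ using only $\state = (\msk,\bft)$ and public information, with no dependence on the NIZK trapdoor. Since the zero-knowledge definition in \Cref{def:nizk} quantifies over a fixed $(\stmt,\wit)\in R$ and the completeness of $\nizk$ guarantees $\stmt\in L_\nizk$ by construction, this poses no difficulty.
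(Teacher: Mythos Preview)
Your proposal is correct and follows the same high-level route as the paper: a direct reduction to the zero-knowledge property of the underlying $\nizk$, exploiting that the \emph{only} difference between $\faZKReal$ and $\Hyb^{\nizk.\Sim}$ is how $(\crs,\pi)$ is produced.

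In fact your write-up is more complete than the paper's own proof. The paper's argument explicitly restricts to a stateful adversary $\A$ \emph{that makes no queries} to $\mcal{O}_{\AuxGen}$ or $\mcal{O}_{\Adapt}$, so that $\A$'s view collapses to $(\pp,\advt,X)$ and the NIZK distinguisher $\D_\nizk$ can simply feed this triple to $\D$. You instead handle the fully general $\A$ by observing that your reduction $\B$ samples all of the IPFE randomness itself (in particular $\msk$ and $\bft$), so $\B$ holds $\state=(\msk,\bft)$ and can answer every $\mcal{O}_{\AuxGen}$ and $\mcal{O}_{\Adapt}$ query exactly as the experiment would, independently of whether $(\crs,\pi)$ was real or simulated. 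This is the right way to close the gap, and your attention to the order in which $\B$ fixes $(\stmt,\wit)$ versus receiving $\crs$ matches the non-adaptive ZK definition in \Cref{def:nizk}.
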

\begin{proof}
Suppose towards a contradiction that for the stateful \ppt adversary \A that makes no queries to the oracles $\mcal{O}_\AuxGen$ and $\mcal{O}_\Adapt$, there exists a \ppt distinguisher $\D$ and a non-negligible value $\epsilon$ such that 
\ifacm
	\begin{align*}
	& | \Pr[\D(\faZKReal_{\A, \FAS}(1^\secparam, X, \bfx)) = 1] \\ 
	& \quad - \Pr[ \D(\Hyb_{\A, \FAS}^{\nizk.\Sim}(1^\secparam, X, \bfx)) = 1] | = \epsilon.
	\end{align*}
\else 
	\[
	| \Pr[\D(\faZKReal_{\A, \FAS}(1^\secparam, X, \bfx)) = 1] 
	- \Pr[ \D(\Hyb_{\A, \FAS}^{\nizk.\Sim}(1^\secparam, X, \bfx)) = 1] | = \epsilon.
	\]
\fi
Then, we build a \ppt distinguisher $\D_\nizk$ using $\D$ which breaks the zero-knowledge of NIZK for the NP language $L_\nizk$. This should complete the proof. 

Suppose the inputs to the distinguisher $D_\nizk$ are the NIZK common reference string $\crs$ and a NIZK proof $\pi$ for some statement $(X, \pp', \mpk, \ct) \in L_\nizk$. Then, $\D_\nizk$ is as follows. It sets $\pp=(\crs, \pp')$ and $\advt = (\mpk, \ct, \pi)$. As \A makes no oracle queries, the view of the adversary \A against the the zero-knowledge security of functional adaptor signatures is simply $(\pp, \advt, X)$. 
Then, $\D_\nizk$ runs the distinguisher $\D$ on inputs $(\pp, \advt, X)$ and returns whatever $\D$ returns.
Then, observe that 
\begin{align*}
\Pr&\left[ 
\begin{matrix} 
\D_\nizk(\crs, \pi) = 1: \\
\crs \gets \nizk.\Setup(1^\secparam), \\ 
\pi \gets \nizk.\Prove(\pp, (X, \pp', \mpk, \ct), (r_0, r_1, \bfx)) \end{matrix}\right] \\
& = 
\Pr\left[
\begin{matrix}
\D(\pp=(\crs, \pp'), \advt = (\mpk, \ct, \pi), X) = 1: \\ 
\crs \gets \nizk.\Setup(1^\secparam), \\ 
\pi \gets \nizk.\Prove(\pp, (X, \pp', \mpk, \ct), (r_0, r_1, \bfx)) 
\end{matrix}\right] \\ 
& = 
\Pr\left[\D(\faZKReal_{\A, \fas}(1^\secparam, X, \bfx)) = 1\right].
\end{align*}
Similarly, we get that 
\begin{align*}
\Pr&\left[
\begin{matrix} 
\D_\nizk(\crs, \pi) = 1: \\
(\crs, \td) \gets \nizk.\Setup^*(1^\secparam) \\
\pi \gets \nizk.\Prove^*(\crs, \td, \stmt=(X, \pp', \mpk, \ct)) 
\end{matrix}\right] \\ 
& =
\Pr\left[
\begin{matrix} 
\D(\pp=(\crs, \pp'), \advt = (\mpk, \ct, \pi), X) = 1: \\ 
(\crs, \td) \gets \nizk.\Setup^*(1^\secparam) \\
\pi \gets \nizk.\Prove^*(\crs, \td, \stmt=(X, \pp', \mpk, \ct)) 
\end{matrix}\right] \\ 
& = 
\Pr\left[\D(\Hyb_{\A, \fas}^{\nizk.\Sim}(1^\secparam, X, \bfx)) = 1\right].
\end{align*}
This implies that 
\begin{align*}
& \left| 
\begin{matrix}
\Pr\left[ 
\begin{matrix} 
\D_\nizk(\crs, \pi) = 1: \\
\crs \gets \nizk.\Setup(1^\secparam), \\ 
\pi \gets \nizk.\Prove(\pp, (X, \pp', \mpk, \ct), (\bfx, r)) \end{matrix}\right] \\
- 
\Pr\left[
\begin{matrix} 
\D_\nizk(\crs, \pi) = 1: \\
(\crs, \pi) \gets \nizk.\Sim(1^\secparam, \stmt=(X, \pp', \mpk, \ct)) \end{matrix}\right]
\end{matrix} \right|\\ 
& = \left| 
\begin{matrix}
\Pr\left[\D(\faZKReal_{\A, \fas}(1^\secparam, X, \bfx)) = 1\right] \\ 
- \Pr\left[\D(\Hyb_{\A, \fas}^{\nizk.\Sim}(1^\secparam, X, \bfx)) = 1\right] 
\end{matrix} \right| \\ 
& = \epsilon.
\end{align*}

As $\epsilon$ is non-negligible, hence, $\D_\nizk$ breaks the zero-knowledge of the underlying NIZK scheme.

\end{proof}

\begin{claim}
Suppose $\mcal{M}$ is an additive group and $\ipfe$ satisfies selective, IND-security.
Then, for every stateful \ppt adversary \A, 
for the \ppt simulators $\nizk.\Sim$ and $\Sim$ (described in~\Cref{fig:fas-zk-sim}), 
there exists a negligible function $\negl$ such that
for all \ppt distinguishers \D, 
for all $\secparam \in \N$, 
for all $(X, \bfx) \in R$,
\ifacm
	\begin{align*}
	& | \Pr[ \D(\Hyb_{\A, \FAS}^{\nizk.\Sim}(1^\secparam, X, \bfx)) = 1]\\ 
	& \quad  - \Pr[\D(\faZKIdeal_{\A, \FAS}^\Sim(1^\secparam, X, \bfx)) = 1] | \leq  \negl(\secparam).
	\end{align*}
\else 
\[
	| \Pr[ \D(\Hyb_{\A, \FAS}^{\nizk.\Sim}(1^\secparam, X, \bfx)) = 1] 
	- \Pr[\D(\faZKIdeal_{\A, \FAS}^\Sim(1^\secparam, X, \bfx)) = 1] | \leq  \negl(\secparam).
\]
\fi
\label{clm:fas-zk-ipfe}
\end{claim}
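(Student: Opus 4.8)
The plan is to bridge $\Hyb_{\A,\FAS}^{\nizk.\Sim}$ and $\faZKIdeal_{\A,\FAS}^\Sim$ through a single intermediate hybrid $\mathsf{H}$. Recall that, once the NIZK has been switched to simulation mode, the only remaining differences between the two experiments (see~\Cref{fig:fasig-zk-games,fig:fas-zk-sim}) are: (i) the advertisement ciphertext $\ct$ encrypts $\widetilde{\bfx} := (\bfx^T, 0)^T$ in $\Hyb$ versus $\widetilde{\bfx} := (-\bft^T, 1)^T$ in $\faZKIdeal$; and (ii) the elongated function vector used inside $\mcal{O}_{\AuxGen}$ and $\mcal{O}_{\Adapt}$ is $\widetilde{\bfy} := (\bfy^T, f_\bfy(\bft))^T$ in $\Hyb$ versus $\widetilde{\bfy} := (\bfy^T, f_\bfy(\bft)+f_\bfy(\bfx))^T$ in $\faZKIdeal$, with $\pi_\bfy$ changing accordingly. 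I define $\mathsf{H}$ to be identical to $\Hyb_{\A,\FAS}^{\nizk.\Sim}$ except that change (ii) alone is applied, i.e.\ $\mathsf{H}$ still encrypts $(\bfx^T,0)^T$ but its oracles use $\widetilde{\bfy} := (\bfy^T, f_\bfy(\bft)+f_\bfy(\bfx))^T$.

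First I would argue that $\Hyb_{\A,\FAS}^{\nizk.\Sim}$ and $\mathsf{H}$ induce identical distributions on $\A$'s view. This is a change of variables: in $\Hyb$ the coins $\bft$ are uniform over the additive group $\mcal{M}$ and enter $\A$'s view only through the quantities $f_\bfy(\bft)$ (both $\ct$ and the IPFE master keys are independent of $\bft$, which is exactly why the NIZK must already be simulated here). Setting $\bfs := \bft + \bfx$, linearity of $f_\bfy$ gives $f_\bfy(\bfs) = f_\bfy(\bft) + f_\bfy(\bfx)$, so running $\Hyb$ with coins $\bfs$ reproduces exactly the oracle behaviour of $\mathsf{H}$ with coins $\bft$; since $\bfs$ is uniform over $\mcal{M}$ whenever $\bft$ is, the two experiments are perfectly indistinguishable.

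Second I would show $\mathsf{H} \approx_c \faZKIdeal_{\A,\FAS}^\Sim$ by reduction to selective, IND-security of $\ipfe$ (\Cref{def:ipfe-sel-ind-sec}). From a distinguisher $\A$ I build $\B$: first $\B$ samples $\bft \getr \mcal{M}$, sets $\widetilde{\bfx}_0 := (\bfx^T, 0)^T$, $\widetilde{\bfx}_1 := (-\bft^T, 1)^T$, and submits $(\widetilde{\bfx}_0, \widetilde{\bfx}_1)$ as its challenge pair \emph{before} receiving $\mpk$, so selective security suffices. On receiving $(\mpk, \ct^*)$, $\B$ runs $\nizk.\Setup^*$ and $\nizk.\Prove^*$ to form $\advt := (\mpk, \ct^*, \pi)$, and answers each $\mcal{O}_{\AuxGen}$/$\mcal{O}_{\Adapt}$ query for $\bfy$ using $\widetilde{\bfy} := (\bfy^T, f_\bfy(\bft)+f_\bfy(\bfx))^T$: the value $\aux_\bfy = \ipfe.\PubKGen(\mpk, \widetilde{\bfy})$ is public, while the functional key $\sk_\bfy = \ipfe.\KGen(\msk, \widetilde{\bfy})$ needed to run $\as.\Adapt$ is obtained from $\B$'s $\mcal{O}_\KGen$ oracle. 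When $\ct^* \gets \Enc(\mpk, \widetilde{\bfx}_0)$ this is exactly $\mathsf{H}$, and when $\ct^* \gets \Enc(\mpk, \widetilde{\bfx}_1)$ it is exactly $\faZKIdeal_{\A,\FAS}^\Sim$. Admissibility of $\B$ must be checked: for every queried $\bfy$, $f_{\widetilde{\bfy}}(\widetilde{\bfx}_0) = \langle \bfx, \bfy\rangle = f_\bfy(\bfx)$ and $f_{\widetilde{\bfy}}(\widetilde{\bfx}_1) = -f_\bfy(\bft) + (f_\bfy(\bft) + f_\bfy(\bfx)) = f_\bfy(\bfx)$, so the two evaluations agree and $\B$ is admissible, hence $\B$ inherits $\A$'s distinguishing advantage.

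The main obstacle is this second step: one must verify that the reduction can faithfully answer adaptive $\mcal{O}_{\Adapt}$ queries, which require genuine functional secret keys available only through the IND challenger's $\mcal{O}_\KGen$; that the challenge-message pair — which depends on the internally sampled $\bft$ — is committed to before $\mpk$ is seen, so that \emph{selective} IND-security is truly enough; and that the admissibility predicate $f_{\widetilde{\bfy}}(\widetilde{\bfx}_0) = f_{\widetilde{\bfy}}(\widetilde{\bfx}_1)$ holds for the specific shifted elongated vectors produced by $\Sim$. Combining the two steps via the triangle inequality yields the claimed negligible bound and completes the proof of~\cref{clm:fas-zk-ipfe}, and therefore the zero-knowledge property of the construction.
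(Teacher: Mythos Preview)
Your proposal is correct and matches the paper's proof essentially step for step: the paper introduces the same intermediate hybrid (there called $\overline{\Hyb}$) that keeps $\widetilde{\bfx}=(\bfx^T,0)^T$ while switching to $\widetilde{\bfy}=(\bfy^T,f_\bfy(\bft)+f_\bfy(\bfx))^T$, argues $\Hyb\equiv\overline{\Hyb}$ via the same change of variables $\bft\mapsto\bft+\bfx$ using that $\mcal{M}$ is an additive group, and then reduces $\overline{\Hyb}\approx_c\faZKIdeal$ to selective IND-security of $\ipfe$ with exactly your reduction $\B$ (sampling $\bft$ up front, submitting $(\bfx^T,0)^T$ and $(-\bft^T,1)^T$ as the selective challenge, and answering $\Adapt$ queries through $\mcal{O}_\KGen$). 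Your explicit admissibility check $f_{\widetilde{\bfy}}(\widetilde{\bfx}_0)=f_{\widetilde{\bfy}}(\widetilde{\bfx}_1)=f_\bfy(\bfx)$ is a welcome addition that the paper leaves implicit.
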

\begin{proof}
Observe that experiments $\Hyb_{\A, \FAS}^{\nizk.\Sim}$ and $\faZKIdeal_{\A, \FAS}^\Sim$ differ in the following two aspects:

\begin{itemize}[leftmargin=*]
\item 
The vector $\widetilde{\bfx}$ that ciphertext $\ct$ encrypts: 
In $\Hyb_{\A, \FAS}^{\nizk.\Sim}$, we have 
$\widetilde{\bfx} := (\bfx^T, 0)^T \in \mcal{M}' \subseteq \Z^{\ell+1}$, 
and in $\faZKIdeal_{\A, \FAS}^\Sim$, we have 
$\widetilde{\bfx} := (-\bft^T, 1)^T \in \mcal{M}' \subseteq \Z^{\ell+1}$. 
\item 
The vector $\widetilde{\bfy}$ used for computing $\pk_\bfy$ and $\sk_\bfy$ for every $\bfy$ queried to $\AuxGen$ and $\Adapt$ oracles: 
In $\Hyb_{\A, \FAS}^{\nizk.\Sim}$, we have 
$\widetilde{\bfy} := (\bfy^T, f_\bfy(\bft))^T \in \mcal{F}_{{\sf IP}, \ell+1}$,
and in $\faZKIdeal_{\A, \FAS}^\Sim$, we have 
$\widetilde{\bfy} := (\bfy^T, f_\bfy(\bft) + f_\bfy(\bfx))^T \in \mcal{F}_{{\sf IP}, \ell+1}$.

\end{itemize}

Consider an intermediate hybrid $\overline{\Hyb}_{\A, \FAS}^{\nizk.\Sim}$ that uses $\widetilde{\bfx}$ as in $\Hyb_{\A, \FAS}^{\nizk.\Sim}$ and $\widetilde{\bfy}$ as in $\faZKIdeal_{\A, \FAS}^\Sim$. 

We argue that $\Hyb$ and $\overline{\Hyb}$ are identically distributed. The two differ in how $\widetilde{\bfy}$ is computed. 
Notice that by linearity, it follows that $f_\bfy(\bft) + f_\bfy(\bfx) = f_\bfy(\bft+\bfx)$. Hence, one can view the change from $\Hyb$ to $\overline{\Hyb}$ as simply a change of variables $\widetilde{\bft} = \bft \to \widetilde{\bft} = \bft+\bfx$. 
As $\mcal{M}$ is an additive group and $\bft, \bfx \in \mcal{M}$, hence, $\bft+\bfx \in \mcal{M}$.
As $\bft \getr \mcal{M}$, hence, it follows that $\bft$ and $\bft+\bfx$ are identically distributed. 
Therefore, $\Hyb$ and $\overline{\Hyb}$ are identically distributed. In other words,
\ifacm
	\begin{align*}
	& | \Pr[ \D(\Hyb_{\A, \FAS}^{\nizk.\Sim}(1^\secparam, X, \bfx)) = 1]\\ 
	& \quad  - \Pr[\D(\overline{\Hyb}_{\A, \FAS}^{\nizk.\Sim}(1^\secparam, X, \bfx)) = 1] | = 0.
	\end{align*}
\else 
\[
	| \Pr[ \D(\Hyb_{\A, \FAS}^{\nizk.\Sim}(1^\secparam, X, \bfx)) = 1] 
	- \Pr[\D(\overline{\Hyb}_{\A, \FAS}^{\nizk.\Sim}(1^\secparam, X, \bfx)) = 1] | = 0.
\]
\fi
\nikhil{add more details to this proof.}
\nikhil{do we need to assume that the IPFE should satisfy the property that functional keys for vectors that are linearly dependent on previously queried vectors may be computed as linear combination of previously returned keys.}

To complete the proof, we argue that if $\ipfe$ is selective, IND-secure, then, 
\ifacm
	\begin{align*}
	& | \Pr[ \D(\overline{\Hyb}_{\A, \FAS}^{\nizk.\Sim}(1^\secparam, X, \bfx)) = 1]\\ 
	& \quad  - \Pr[\D(\faZKIdeal_{\A, \FAS}^\Sim(1^\secparam, X, \bfx)) = 1] | \leq  \negl(\secparam).
	\end{align*}
\else 
\[ 
	| \Pr[ \D(\overline{\Hyb}_{\A, \FAS}^{\nizk.\Sim}(1^\secparam, X, \bfx)) = 1]
	- \Pr[\D(\faZKIdeal_{\A, \FAS}^\Sim(1^\secparam, X, \bfx)) = 1] | \leq  \negl(\secparam).
\]
\fi

Suppose towards a contradiction that for any stateful \ppt adversary \A, there exists a 
there exists a \ppt distinguisher $\D$ and a non-negligible value $\epsilon$ such that 
\ifacm 
	\begin{align*}
	& | \Pr[ \D(\overline{\Hyb}_{\A, \FAS}^{\nizk.\Sim}(1^\secparam, X, \bfx)) = 1] \\ 
	& \quad - \Pr[\D(\faZKIdeal_{\A, \FAS}^\Sim(1^\secparam, X, \bfx)) = 1] | = \epsilon.
	\end{align*}
\else 
\[
	| \Pr[ \D(\overline{\Hyb}_{\A, \FAS}^{\nizk.\Sim}(1^\secparam, X, \bfx)) = 1]  
	- \Pr[\D(\faZKIdeal_{\A, \FAS}^\Sim(1^\secparam, X, \bfx)) = 1] | = \epsilon.
\]
\fi
Then, we build a \ppt reduction \B using \D that breaks the selective, IND-security of IPFE. This should complete the proof.

\ifacm
	\begin{figure}[t]
	\centering
	\captionsetup{justification=centering}
	\begin{pcvstack}[boxed, space=1em]
	\procedure[linenumbering, mode=text]{Reduction $\B(1^\secparam, X, \bfx)$ for proof of~\Cref{clm:fas-zk-ipfe}.}{
	$\pp' \gets \C(1^\secparam)$
	\\ 
	Sample $\bft \getr \mcal{M}$
	\\ 
	Let $\widetilde{\bfx_0} := (\bfx^T, 0)^T \in \mcal{M}' \subseteq \Z^{\ell+1}$ 
	\\
	Let $\widetilde{\bfx_1} := (-\bft^T, 1)^T \in \mcal{M}' \subseteq \Z^{\ell+1}$ 
	\\
	$(\mpk, \ct) \gets \C(\widetilde{\bfx_0}, \widetilde{\bfx_1})$
	\\ 
	$(\crs, \td) \gets \nizk.\Setup^*(1^\secparam)$
	\\ 
	$\pp = (\crs, \pp')$
	\\ 
	$\pi \gets \nizk.\Prove^*(\crs, \td, (X, \pp', \mpk, \ct))$
	\\ 
	$\advt = (\mpk, \ct, \pi)$, $\state = \bft$
	\\ 
	$\vk = \bot$
	\\
	$\vk \gets \A^{\mcal{O}_{\AuxGen^*}(\cdot, \cdot) \mcal{O}_{\Adapt^*}(\cdot, \cdot, \cdot, \cdot)}(\pp, \advt, X)$ 
	\\ 
	$b \gets \D(\text{view of \A})$ 
	\\ 
	\ret bit $b$ to \C
	}

	\procedure[linenumbering, mode=text]{Oracle 
	$\mcal{O}^*_{\AuxGen}(\bfy, f_\bfy(\bfx))$.}{
	Parse $\advt = (\mpk, \ct, \pi)$ and $\state=\bft$
	\\ 
	Compute $\widetilde{\bfy} := (\bfy^T, f_\bfy(\bft)+f_\bfy(\bfx))^T \in \mcal{F}_{{\sf IP}, \ell+1}$
	\\ 
	Compute $\pk_\bfy := \ipfe.\PubKGen(\mpk, \widetilde{\bfy})$ 
	\\ 
	\ret $\aux_\bfy := \pk_\bfy$, $\pi_\bfy := f_\bfy(\bft)+f_\bfy(\bfx)$
	}

	\procedure[linenumbering, mode=text]{Oracle 
	$\mcal{O}^*_{\Adapt}(m, \bfy, \widetilde{\sigma}, f_\bfy(\bfx))$.}{
	Parse $\advt = (\mpk, \ct, \pi)$, and $\state = \bft$
	\\ 
	If $\vk = \bot$: \ret $\bot$
	\\ 
	$(\aux_\bfy, \pi_\bfy) \gets \mcal{O}^*_{\AuxGen}(\bfy, f_\bfy(\bfx))$
	\\ 
	If $\AuxVerify(\advt, \bfy, \aux_\bfy, \pi_\bfy)=0 \ \vee$  
	\pcskipln \\
	$\as.\PreVerify(\vk, m, \aux_\bfy, \widetilde{\sigma}) = 0\ $: 
	\pcskipln \\
	$\pcind$ \ret $\bot$
	\\ 
	Compute $\widetilde{\bfy} := (\bfy^T, f_\bfy(\bft) + f_\bfy(\bfx))^T \in \mcal{F}_{{\sf IP}, \ell+1}$
	\\ 
	Compute $\sk_\bfy \gets \ipe.\mcal{O}_\KGen(\widetilde{\bfy})$
	\\ 
	\ret $\sigma := \as.\Adapt(\vk, m, \aux_\bfy, \sk_\bfy, \widetilde{\sigma})$
	}

	\end{pcvstack}

	\caption{Reduction \B for proof of~\Cref{clm:fas-zk-ipfe}}
	\label{fig:fasig-zk-reduction}
	\end{figure}
\else
	\begin{figure}[h]
	\centering
	\captionsetup{justification=centering}
	\begin{pchstack}[boxed, space=1em]
	\begin{pcvstack}[space=1em]
	\procedure[linenumbering, mode=text]{Reduction $\B(1^\secparam, X, \bfx)$ for proof of~\Cref{clm:fas-zk-ipfe}.}{
	$\pp' \gets \C(1^\secparam)$
	\\ 
	Sample $\bft \getr \mcal{M}$
	\\ 
	Let $\widetilde{\bfx_0} := (\bfx^T, 0)^T \in \mcal{M}' \subseteq \Z^{\ell+1}$ 
	\\
	Let $\widetilde{\bfx_1} := (-\bft^T, 1)^T \in \mcal{M}' \subseteq \Z^{\ell+1}$ 
	\\
	$(\mpk, \ct) \gets \C(\widetilde{\bfx_0}, \widetilde{\bfx_1})$
	\\ 
	$(\crs, \td) \gets \nizk.\Setup^*(1^\secparam)$
	\\ 
	$\pp = (\crs, \pp')$
	\\ 
	$\pi \gets \nizk.\Prove^*(\crs, \td, (X, \pp', \mpk, \ct))$
	\\ 
	$\advt = (\mpk, \ct, \pi)$, $\state = \bft$
	\\ 
	$\vk = \bot$
	\\
	$\vk \gets \A^{\mcal{O}_{\AuxGen^*}(\cdot, \cdot) \mcal{O}_{\Adapt^*}(\cdot, \cdot, \cdot, \cdot)}(\pp, \advt, X)$ 
	\\ 
	$b \gets \D(\text{view of \A})$ 
	\\ 
	\ret bit $b$ to \C
	}

	\end{pcvstack}
	\begin{pcvstack}[space=1em]

	\procedure[linenumbering, mode=text]{Oracle 
	$\mcal{O}^*_{\AuxGen}(\bfy, f_\bfy(\bfx))$.}{
	Parse $\advt = (\mpk, \ct, \pi)$ and $\state=\bft$
	\\ 
	Compute $\widetilde{\bfy} := (\bfy^T, f_\bfy(\bft)+f_\bfy(\bfx))^T \in \mcal{F}_{{\sf IP}, \ell+1}$
	\\ 
	Compute $\pk_\bfy := \ipfe.\PubKGen(\mpk, \widetilde{\bfy})$ 
	\\ 
	\ret $\aux_\bfy := \pk_\bfy$, $\pi_\bfy := f_\bfy(\bft)+f_\bfy(\bfx)$
	}

	\procedure[linenumbering, mode=text]{Oracle 
	$\mcal{O}^*_{\Adapt}(m, \bfy, \widetilde{\sigma}, f_\bfy(\bfx))$.}{
	Parse $\advt = (\mpk, \ct, \pi)$, and $\state = \bft$
	\\ 
	If $\vk = \bot$: \ret $\bot$
	\\ 
	$(\aux_\bfy, \pi_\bfy) \gets \mcal{O}^*_{\AuxGen}(\bfy, f_\bfy(\bfx))$
	\\ 
	If $\AuxVerify(\advt, \bfy, \aux_\bfy, \pi_\bfy)=0 \ \vee$  
	\pcskipln \\
	$\as.\PreVerify(\vk, m, \aux_\bfy, \widetilde{\sigma}) = 0\ $: 
	\pcskipln \\
	$\pcind$ \ret $\bot$
	\\ 
	Compute $\widetilde{\bfy} := (\bfy^T, f_\bfy(\bft) + f_\bfy(\bfx))^T \in \mcal{F}_{{\sf IP}, \ell+1}$
	\\ 
	Compute $\sk_\bfy \gets \ipe.\mcal{O}_\KGen(\widetilde{\bfy})$
	\\ 
	\ret $\sigma := \as.\Adapt(\vk, m, \aux_\bfy, \sk_\bfy, \widetilde{\sigma})$
	}

	\end{pcvstack}
	\end{pchstack}

	\caption{Reduction \B for proof of~\Cref{clm:fas-zk-ipfe}}
	\label{fig:fasig-zk-reduction}
	\end{figure}
\fi
Let the IPFE challenger be \C and let its $\KGen$ oracle be denoted be $\ipfe.\mcal{O}_\KGen$. The reduction \B is as in~\Cref{fig:fasig-zk-reduction}.
Observe that if the challenger \C chooses to encrypt $\widetilde{\bfx_0}$, then, \A's view is same as in $\overline{\Hyb}$ and if it chooses to encrypt $\widetilde{\bfx_1}$, then \A's view is same as in $\faZKIdeal$. Thus, if \D can distinguish the two views of \A with a non-negligible distinguishing advantage $\epsilon$ against \B, then, \B has the same non-negligible distinguishing advantage $\epsilon$ against \C. Thus, \B breaks the selective, IND-security of IPFE.

\end{proof}

\section{Weakly-Secure $\fas$ Construction}
\label{sec:fas-construction-weak}

Recall that our main $\fas$ construction was obtained by 
starting with a simulation-secure IPFE scheme. 
The construction was presented using an IND-secure IPFE and applying the IND-secure IPFE to simulation-secure IPFE compiler of~\cite{ad-sim-ipfe} in a non-black-box way within our $\fas$ construction. Here, we show that if (i) we do not apply the compiler and just use the IND-secure IPFE, (ii) use NIZK with adaptive zero-knowledge, then, the resulting $\fas$ construction is weakly-secure. This means that it satisfies all security properties as before except that for witness privacy, it satisfies witness indistinguishability. We first define adaptive zero-knowledge of NIZK and then describe the FAS construction next.

\begin{definition}[Adaptively Secure NIZK Argument System]
\label{def:adaptive-nizk}
An adaptively secure NIZK argument system must satisfy completeness and adaptive soundness as before. Additionally, it must satisfy \emph{adaptive zero-knowledge} which requires that 
there exists a 
\ppt simulator $\Sim = (\Setup^*, \Prove^*)$ 
and there exists a negligible function $\negl$ such that 
for all \ppt distinguishers \D, 
for all $\secparam \in \N$,
for all $(\stmt, \wit) \in R$, 
\[
| 
\Pr[\D(\crs, \pi) = 1] 
-
\Pr[\D(\crs^*, \pi^*) = 1]
| \leq \negl(\secparam),
\]
where
$\crs \gets \Setup(1^\secparam)$,
$\pi \gets \Prove(\crs, \stmt,$ $ \wit)$, 
and 
$(\crs^*, {\sf td}) \gets \Setup^*(1^\secparam)$, 
$\pi \gets \Prove^*(\crs^*, {\sf td}, \stmt)$.
Here, ${\sf td}$ is the trapdoor for the simulated $\crs^*$ that is used by $\Prove^*$ to generate an accepting proof for $\stmt$ without knowing the witness.
\end{definition}

Our generic construction of weakly-secure functional adaptor signatures will use the following building blocks:
\begin{itemize}[leftmargin=*]
\item 
An inner product functional encryption scheme $\ipe$ that satisfies selective, IND-security (\Cref{def:ipfe-sel-ind-sec}), 
$R_\ipfe$-compliance (\Cref{def:ipfe-compliant}) and 
$R'_\ipfe$-robustness (\Cref{def:ipfe-robust})  
w.r.t.\ hard relations $R_\ipfe$, $R'_\ipfe$ such that $R_\ipfe \subseteq R'_\ipfe$. The message space of $\ipfe$ is $\mcal{M}' \subseteq \Z^{\ell}$ and the function family is $\mcal{F}_{{\sf IP}, \ell}$.

\item 
An adaptor signature scheme $\as$ w.r.t.\ a digital signature scheme $\DS$, and hard relations $R_\ipfe$ and $R'_\ipfe$ that satisfies witness extractability (\Cref{def:as-wit-ext}) and weak pre-signature adaptability (\Cref{def:as-pre-sig-adaptability}) security properties. 

\item 
An adaptively secure $\nizk$ (\Cref{def:adaptive-nizk}) for the NP language $L_\nizk$ defined as 
\[
L_\nizk := \left\{ 
(X, \pp', \mpk, \ct) : 
\begin{matrix}
\exists (r_0, r_1, \bfx)\ \text{such that}\\
\pp' \in [\ipe.\Gen(1^\secparam)],\\
(\mpk, \msk) = \ipe.\Setup(\pp', 1^{\ell+1}; r_0),\\
(X, \bfx) \in R,\\ 
\ct = \ipe.\Enc(\mpk, \bfx; r_1)
\end{matrix}
\right\}.
\]\
\end{itemize}

Our construction is as in~\Cref{fig:fas-construction-weak}. Observe that it uses $\ell$ slots instead of $\ell+1$ slots in the underlying $\ipfe$, thus making the construction slightly more efficient than before. Further, compared to before, $\AuxGen$ and $\AuxVerify$ are redundant, thus, can be skipped in the protocol resulting in a non-interactive pre-signing stage. The construction can be instantiated from prime-order groups and lattices as before.
\ifacm
	\begin{figure}[t]
	\centering
	\captionsetup{justification=centering}
	\begin{pcvstack}[boxed, space=1em]
	\procedure[linenumbering, mode=text]{$\Setup(1^\secparam)$}{
	Sample $\crs \gets \nizk.\Setup(1^\secparam)$
	\\ 
	Sample $\pp' \gets \ipfe.\Gen(1^\secparam)$
	\\ 
	\ret $\pp := (\crs, \pp')$
	}

	\procedure[linenumbering, mode=text]{$\AdvertisementGen(\pp, X, \bfx)$:}{
	Sample random coins $r_0$, $r_1$ 
	\\ 
	Compute $(\mpk, \msk) := \ipe.\Setup(\pp', 1^{\ell}; r_0)$
	\\ 
	Compute $\ct := \ipe.\Enc(\mpk, \bfx; r_1)$
	\\ 
	Compute $\pi \gets \nizk.\Prove(\crs, (X, \pp', \mpk, \ct), (r_0, r_1, \bfx))$
	\\ 
	\ret $\advt := (\mpk, \ct, \pi)$, and $\state := \msk$
	}

	\procedure[linenumbering, mode=text]{$\AdvertisementVerify(\pp, X, \advt)$}{
	\ret $\nizk.\Verify(\crs, (X, \pp', \mpk, \ct), \pi)$
	}

	\procedure[linenumbering, mode=text]{$\AuxGen(\advt, \state, \bfy)$}{
	\ret $\aux_\bfy = \bot, \pi_\bfy = \bot$
	}

	\procedure[linenumbering, mode=text]{$\AuxVerify(\advt, \bfy, \aux_\bfy, \pi_\bfy)$}{
	\ret $1$
	}

	\procedure[linenumbering, mode=text]{$\FPreSign(\advt, \sk, m, X, \bfy, \aux_\bfy)$}{
	Parse $\advt = (\mpk, \ct, \pi)$
	\\
	Compute $\pk_\bfy := \ipfe.\PubKGen(\mpk, \bfy)$
	\\
	\ret $\widetilde{\sigma} \gets \as.\PreSign(\sk, m, \pk_\bfy)$
	}

	\procedure[linenumbering, mode=text]{$\FPreVerify(\advt, \vk, m, X, \bfy, \aux_\bfy, \pi_\bfy, \widetilde{\sigma})$}{
	Parse $\advt = (\mpk, \ct, \pi)$
	\\
	Compute $\pk_\bfy := \ipfe.\PubKGen(\mpk, \bfy)$
	\\
	\ret $(\AuxVerify(\advt, \bfy, \aux_\bfy, \pi_\bfy)) \wedge$ 
	\pcskipln \\ 
	$\quad \qquad (\as.\PreVerify(\vk, m, \pk_\bfy, \widetilde{\sigma})) $
	}

	\procedure[linenumbering, mode=text]{$\Adapt(\advt, \state, \vk, m, X, \bfx, \bfy, \aux_\bfy, \widetilde{\sigma})$}{
	Parse $\advt = (\mpk, \ct, \pi)$, and $\state = (\msk, \bft)$
	\\ 
	Compute $\pk_\bfy := \ipfe.\PubKGen(\mpk, \bfy)$
	\\
	Compute $\sk_\bfy := \ipe.\KGen(\msk, \bfy)$
	\\ 
	\ret $\sigma := \as.\Adapt(\vk, m, \pk_\bfy, \sk_\bfy, \widetilde{\sigma})$
	}

	\procedure[linenumbering, mode=text]{$\FExt(\advt, \widetilde{\sigma}, \sigma, X, \bfy, \aux_\bfy)$}{ 
	Parse $\advt = (\mpk, \ct, \pi)$.
	\\ 
	Compute $\pk_\bfy := \ipe.\PubKGen(\mpk, \bfy)$
	\\ 
	Compute $z := \as.\Ext(\widetilde{\sigma}, \sigma, \pk_\bfy)$
	\\ 
	\ret $v := \ipe.\Dec(z, \ct)$
	}

	\end{pcvstack}
	\caption{Construction: Weakly-secure $\fas$}
	\label{fig:fas-construction-weak}
	\label{sec:construction-weak}
	\end{figure}
\else 
	\begin{figure}[h]
	\centering
	\captionsetup{justification=centering}
	\begin{pchstack}[boxed, space=1em]
	\begin{pcvstack}[space=1em]
	\procedure[linenumbering, mode=text]{$\Setup(1^\secparam)$}{
	Sample $\crs \gets \nizk.\Setup(1^\secparam)$
	\\ 
	Sample $\pp' \gets \ipfe.\Gen(1^\secparam)$
	\\ 
	\ret $\pp := (\crs, \pp')$
	}

	\procedure[linenumbering, mode=text]{$\AdvertisementGen(\pp, X, \bfx)$:}{
	Sample random coins $r_0$, $r_1$ 
	\\ 
	Let $(\mpk, \msk) := \ipe.\Setup(\pp', 1^{\ell}; r_0)$
	\\ 
	Let $\ct := \ipe.\Enc(\mpk, \bfx; r_1)$
	\\ 
	Let $\pi \gets \nizk.\Prove(\crs, (X, \pp', \mpk, \ct), (r_0, r_1, \bfx))$
	\\ 
	\ret $\advt := (\mpk, \ct, \pi)$, and $\state := \msk$
	}

	\procedure[linenumbering, mode=text]{$\AdvertisementVerify(\pp, X, \advt)$}{
	\ret $\nizk.\Verify(\crs, (X, \pp', \mpk, \ct), \pi)$
	}

	\procedure[linenumbering, mode=text]{$\AuxGen(\advt, \state, \bfy)$}{
	\ret $\aux_\bfy = \bot, \pi_\bfy = \bot$
	}

	\procedure[linenumbering, mode=text]{$\AuxVerify(\advt, \bfy, \aux_\bfy, \pi_\bfy)$}{
	\ret $1$
	}

	\end{pcvstack}
	\begin{pcvstack}[space=1em]

	\procedure[linenumbering, mode=text]{$\FPreSign(\advt, \sk, m, X, \bfy, \aux_\bfy)$}{
	Parse $\advt = (\mpk, \ct, \pi)$
	\\
	Let $\pk_\bfy := \ipfe.\PubKGen(\mpk, \bfy)$
	\\
	\ret $\widetilde{\sigma} \gets \as.\PreSign(\sk, m, \pk_\bfy)$
	}

	\procedure[linenumbering, mode=text]{$\FPreVerify(\advt, \vk, m, X, \bfy, \aux_\bfy, \pi_\bfy, \widetilde{\sigma})$}{
	Parse $\advt = (\mpk, \ct, \pi)$
	\\
	Let $\pk_\bfy := \ipfe.\PubKGen(\mpk, \bfy)$
	\\
	\ret $(\AuxVerify(\advt, \bfy, \aux_\bfy, \pi_\bfy)) \wedge$ 
	\pcskipln \\ 
	$\quad \qquad (\as.\PreVerify(\vk, m, \pk_\bfy, \widetilde{\sigma})) $
	}

	\procedure[linenumbering, mode=text]{$\Adapt(\advt, \state, \vk, m, X, \bfx, \bfy, \aux_\bfy, \widetilde{\sigma})$}{
	Parse $\advt = (\mpk, \ct, \pi)$, and $\state = (\msk, \bft)$
	\\ 
	Let $\pk_\bfy := \ipfe.\PubKGen(\mpk, \bfy)$
	\\
	Let $\sk_\bfy := \ipe.\KGen(\msk, \bfy)$
	\\ 
	\ret $\sigma := \as.\Adapt(\vk, m, \pk_\bfy, \sk_\bfy, \widetilde{\sigma})$
	}

	\procedure[linenumbering, mode=text]{$\FExt(\advt, \widetilde{\sigma}, \sigma, X, \bfy, \aux_\bfy)$}{ 
	Parse $\advt = (\mpk, \ct, \pi)$.
	\\ 
	Let $\pk_\bfy := \ipe.\PubKGen(\mpk, \bfy)$
	\\ 
	Let $z := \as.\Ext(\widetilde{\sigma}, \sigma, \pk_\bfy)$
	\\ 
	\ret $v := \ipe.\Dec(z, \ct)$
	}

	\end{pcvstack}
	\end{pchstack}
	\caption{Construction: Weakly-secure $\fas$}
	\label{fig:fas-construction-weak}
	\label{sec:construction-weak}
	\end{figure}
\fi
\begin{lemma}\label{lemma:correctness-fas-weak}
Suppose NIZK satisfies correctness, $\AS$ satisfies correctness, and $\ipfe$ satisfies $R_\ipfe$-compliance and  $R'_\ipfe$-robustness. Then, the functional adaptor signatures construction in~\Cref{sec:construction-weak} satisfies correctness.
\end{lemma}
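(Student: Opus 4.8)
\textbf{Proof proposal for \Cref{lemma:correctness-fas-weak}.}

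The plan is to verify each of the five conjuncts in the correctness predicate (\Cref{def:fas-correctness}) in turn, tracing how each building block's guarantee feeds into the next. This proof is essentially a simplified version of the proof of \Cref{lemma:correctness-fas}; the key difference is that the weakly-secure construction in \Cref{fig:fas-construction-weak} uses $\ell$ slots (not $\ell+1$) and the algorithms $\AuxGen$, $\AuxVerify$ are trivial, so the argument is slightly shorter. First I would fix $\secparam \in \N$, $m \in \{0,1\}^*$, $(X,\bfx) \in R$, $\bfy \in \mcal{F}_{{\sf IP},\ell}$, and run the sequence of algorithms as in \Cref{def:fas-correctness}: sample $\pp = (\crs,\pp') \gets \Setup(1^\secparam)$, compute $(\advt,\state) \gets \AdvertisementGen(\pp,X,\bfx)$ where $\advt = (\mpk,\ct,\pi)$ with $(\mpk,\msk) = \ipfe.\Setup(\pp',1^\ell;r_0)$, $\ct = \ipfe.\Enc(\mpk,\bfx;r_1)$, and $\pi \gets \nizk.\Prove(\crs,(X,\pp',\mpk,\ct),(r_0,r_1,\bfx))$, and then $(\sk,\vk) \gets \KGen(1^\secparam)$, $(\aux_\bfy,\pi_\bfy) = (\bot,\bot)$, $\widetilde\sigma \gets \FPreSign$, $\sigma := \Adapt$, $z := \FExt$.

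The five checks proceed as follows. (i) $\AdvertisementVerify(\pp,X,\advt) = 1$: since $\AdvertisementVerify$ just runs $\nizk.\Verify(\crs,(X,\pp',\mpk,\ct),\pi)$ and $(X,\pp',\mpk,\ct) \in L_\nizk$ with witness $(r_0,r_1,\bfx)$ by construction, this follows from NIZK completeness. (ii) $\AuxVerify(\advt,\bfy,\aux_\bfy,\pi_\bfy) = 1$: immediate, since $\AuxVerify$ is defined to always return $1$. (iii) $\FPreVerify(\advt,\vk,m,X,\bfy,\aux_\bfy,\pi_\bfy,\widetilde\sigma) = 1$: $\FPreVerify$ returns the conjunction of $\AuxVerify(\cdots)$ (which is $1$ by (ii)) and $\as.\PreVerify(\vk,m,\pk_\bfy,\widetilde\sigma)$ where $\pk_\bfy = \ipfe.\PubKGen(\mpk,\bfy)$; since $\widetilde\sigma \gets \as.\PreSign(\sk,m,\pk_\bfy)$ with the same deterministically-computed $\pk_\bfy$, $\as.\PreVerify$ outputs $1$ by adaptor signature correctness (\Cref{def:as-correctness}), provided $(\pk_\bfy,\cdot) \in R_\ipfe$ is never needed for the $\PreVerify$ step — and indeed $\PreVerify$ only takes the statement $\pk_\bfy$, not a witness. (iv) $\Verify(\vk,m,\sigma) = 1$: here $\sigma = \as.\Adapt(\vk,m,\pk_\bfy,\sk_\bfy,\widetilde\sigma)$ with $\sk_\bfy = \ipfe.\KGen(\msk,\bfy)$; by $R_\ipfe$-compliance of $\ipfe$ (\Cref{def:ipfe-compliant}) we have $(\pk_\bfy,\sk_\bfy) \in R_\ipfe$, so $\sk_\bfy$ is a valid witness for the adaptor statement, and adaptor signature correctness gives $\Verify(\vk,m,\sigma) = 1$. (v) $z = f_\bfy(\bfx)$: in $\FExt$ we have $z' := \as.\Ext(\widetilde\sigma,\sigma,\pk_\bfy)$ and $z := \ipfe.\Dec(z',\ct)$; adaptor signature correctness guarantees $(\pk_\bfy,z') \in R'_\ipfe$, and then $R'_\ipfe$-robustness of $\ipfe$ (\Cref{def:ipfe-robust}) guarantees that $\ipfe.\Dec$ with a key satisfying the extended relation still decrypts correctly, yielding $z = f_\bfy(\bfx)$.

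The main (and only mild) obstacle is bookkeeping: making sure the \emph{same} $\pk_\bfy$ is used as the adaptor statement in $\FPreSign$, $\FPreVerify$, $\Adapt$, and $\FExt$. This holds because $\ipfe.\PubKGen$ is deterministic and each of these algorithms recomputes $\pk_\bfy = \ipfe.\PubKGen(\mpk,\bfy)$ from the same $\mpk$ (extracted from $\advt$) and the same $\bfy$. Once this is noted, each of the five conjuncts reduces directly to one of the stated properties (NIZK completeness; $\as$ correctness; $\ipfe$ $R_\ipfe$-compliance; $\ipfe$ $R'_\ipfe$-robustness), and the probability in \Cref{def:fas-correctness} is exactly $1$ since all invoked guarantees hold with probability $1$. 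I would write this up as a short chain of implications mirroring the five checks above, citing \Cref{def:nizk}, \Cref{def:as-correctness}, \Cref{def:ipfe-compliant}, and \Cref{def:ipfe-robust} at the appropriate points.
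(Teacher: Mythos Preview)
Your proposal is correct and follows essentially the same approach as the paper: the paper's proof simply says ``Similar to the proof of~\Cref{lemma:correctness-fas},'' and that proof verifies the five conjuncts of \Cref{def:fas-correctness} one by one using NIZK completeness, determinism of $\ipfe.\PubKGen$/triviality of $\AuxVerify$, $\as$ correctness, $R_\ipfe$-compliance, and $R'_\ipfe$-robustness---exactly as you do. Your additional remark about the determinism of $\ipfe.\PubKGen$ ensuring the same $\pk_\bfy$ appears throughout is precisely the bookkeeping the paper relies on implicitly.
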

\begin{proof}
Similar to the proof of~\Cref{lemma:correctness-fas}.
\end{proof}

\begin{theorem}
\label{thm:fas-weakly-secure}
Let $\mcal{F}_{{\sf IP}, \ell}$ be the function family for computing inner products of vectors of length $\ell$. 
Let $R$ be any NP relation 
with statement/witness pairs $(X, \bfx)$ such that $\bfx \in \mcal{M}$ for some set $\mcal{M} \subseteq \Z^\ell$.
Suppose that 
\begin{itemize}
\item 
\mcal{M} is an additive group,
\item
$R$ is $\mcal{F}_{{\sf IP}, \ell}$-hard (\Cref{def:f-hard-relation}),
\item
$\nizk$ is a secure NIZK argument system (\Cref{def:nizk}),
\item
$\as$ is an adaptor signature scheme w.r.t.\ digital signature scheme $\ds$ and hard relations $R_\ipfe, R'_\ipfe$ that satisfies weak pre-signature adaptability (\Cref{def:as-pre-sig-adaptability}) and witness extractability (\Cref{def:as-wit-ext}),
\item
$\ipfe$ is a selective, IND-secure IPFE scheme (\Cref{def:ipfe-sel-ind-sec}) for function family $\mcal{F}_{{\sf IP}, \ell}$ that is $R_\ipe$-compliant (\Cref{def:ipfe-compliant}) and $R'_\ipfe$-robust (\Cref{def:ipfe-robust}). 
\end{itemize}  
Then, the functional adaptor signature scheme 
w.r.t.\ digital signature scheme $\DS$, NP relation $R$, and family of inner product functions $\mcal{F}_{{\sf IP}, \ell}$  
constructed in~\Cref{sec:construction-weak} is weakly-secure (\Cref{def:fas-weakly-secure}).

\end{theorem}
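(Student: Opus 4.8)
\smallskip\noindent\textbf{Proof plan for \Cref{thm:fas-weakly-secure}.}
The plan is to establish the six properties of \Cref{def:fas-weakly-secure} — advertisement soundness, pre-signature validity, unforgeability, witness extractability, pre-signature adaptability, and witness indistinguishability — by reusing the arguments developed for the strongly-secure construction for the first five, and giving a fresh hybrid argument for the last. The construction in \Cref{fig:fas-construction-weak} differs from the one in \Cref{fig:fas-construction} only in that (i) the underlying $\ipfe$ is used with $\ell$ rather than $\ell+1$ slots, (ii) there is no IND-to-SIM compiler randomness $\bft$, and (iii) $\AuxGen/\AuxVerify$ are degenerate, so the adaptor statement is simply $\pk_\bfy := \ipfe.\PubKGen(\mpk,\bfy)$ and $\pi_\bfy=\bot$. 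None of these simplifications affects the arguments used for soundness, validity, unforgeability, witness extractability, or adaptability; if anything they streamline them, since one no longer tracks the elongations $\widetilde{\bfx}=(\bfx^T,0)^T$, $\widetilde{\bfy}=(\bfy^T,f_\bfy(\bft))^T$ and the identity $f_{\widetilde{\bfy}}(\widetilde{\bfx})=f_\bfy(\bfx)$.

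Concretely: advertisement soundness follows exactly as in \Cref{lemma:fas-ad-sound} by a reduction to adaptive soundness of $\nizk$, using that $X\notin L_R$ implies $(X,\pp',\mpk,\ct)\notin L_\nizk$. Pre-signature validity follows as in \Cref{lemma:fas-pre-sig-validity}: $\AuxVerify$ always returns $1$ and $\as$ correctness gives $\as.\PreVerify(\vk,m,\pk_\bfy,\widetilde{\sigma})=1$ for $\widetilde{\sigma}\gets\as.\PreSign(\sk,m,\pk_\bfy)$. Pre-signature adaptability follows as in \Cref{lemma:fas-pre-sig-adaptability}: $\FPreVerify=1$ gives $\as.\PreVerify(\vk,m,\pk_\bfy,\widetilde{\sigma})=1$, $R_\ipfe$-compliance (\Cref{def:ipfe-compliant}) gives $(\pk_\bfy,\sk_\bfy)\in R_\ipfe$, and weak pre-signature adaptability of $\as$ (\Cref{def:as-pre-sig-adaptability}) closes the argument. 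Unforgeability is shown with the same three-game sequence as in \Cref{lemma:fas-unf}: game $G_1$ aborts when $(X^*,\pp',\mpk,\ct)\notin L_\nizk$, bounded by adaptive soundness of $\nizk$; game $G_2$ extracts $z:=\as.\Ext(\widetilde\sigma^*,\sigma^*,\pk_{\bfy^*})$ and aborts when $(\pk_{\bfy^*},z)\notin R'_\ipfe$, bounded by witness extractability of $\as$ (\Cref{def:as-wit-ext}); and finally $\Pr[G_2=1]\le\negl(\secparam)$ is reduced to $\mcal{F}_{{\sf IP},\ell}$-hardness of $R$ (\Cref{def:f-hard-relation}), where the reduction outputs $(\bfy^*,v)$ with $v:=\ipfe.\Dec(z,\ct)$ and uses $R'_\ipfe$-robustness of $\ipfe$ (\Cref{def:ipfe-robust}) to conclude $v=f_{\bfy^*}(\bfx)$ for the witness $\bfx$ that $\ct$ encrypts. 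Witness extractability is the two-game analogue, exactly as in \Cref{lemma:fas-wit-ext}.

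The only genuinely new part is witness indistinguishability, for which I would run a short hybrid over the experiment of \Cref{fig:fasig-wit-ind-exp}. Let $G_0=\faWI^0_{\A,\FAS}$ (the advertisement encrypts $\bfx_0$, real $\nizk$). $G_1$ switches $\nizk$ to simulation mode, computing $(\crs^*,\td)\gets\nizk.\Setup^*(1^\secparam)$ and $\pi\gets\nizk.\Prove^*(\crs^*,\td,(X,\pp',\mpk,\ct))$; since $\A$ chooses $X$ after seeing $\crs$, this step invokes adaptive zero-knowledge of $\nizk$ (\Cref{def:adaptive-nizk}), and $G_0\approx_c G_1$. $G_2$ replaces $\ct\gets\ipfe.\Enc(\mpk,\bfx_0)$ by $\ct\gets\ipfe.\Enc(\mpk,\bfx_1)$, with $G_1\approx_c G_2$ by selective IND-security of $\ipfe$ (\Cref{def:ipfe-sel-ind-sec}). $G_3$ switches $\nizk$ back to real mode, $G_2\approx_c G_3$ by adaptive zero-knowledge again, and $G_3=\faWI^1_{\A,\FAS}$.

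The subtle step, and the main obstacle, is the $G_1\to G_2$ reduction, which must respect the selective ordering and admissibility of the IPFE game. This works because in \Cref{fig:fasig-wit-ind-exp} the adversary outputs $(X,\bfx_0,\bfx_1)$ from $\crs$ alone, so the reduction can submit the challenge pair $(\bfx_0,\bfx_1)$ to the $\ipfe$ challenger before receiving $\mpk$; the advertisement is then $\advt=(\mpk,\ct^*,\pi^*)$ with $\pi^*$ simulated (no $\msk$ needed to build $\advt$), the $\mcal{O}_\AuxGen$ queries are answered by $(\bot,\bot)$ and create no $\ipfe$ key queries, and the single $\ipfe$ key query the reduction must forward — for the challenge function $\bfy$, needed to compute $\sk_\bfy:=\ipfe.\KGen(\msk,\bfy)$ inside $\Adapt$ — is admissible precisely because $\faWI$ outputs $0$ unless $f_\bfy(\bfx_0)=f_\bfy(\bfx_1)$; in that losing case the reduction simply outputs $0$, consistently across both hybrids. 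Finally, since $\sk_\bfy$ (hence the adapted signature $\sigma$ the adversary receives) depends only on $\msk$ and $\bfy$ and not on $b$, no further bit-dependence is exposed, and chaining the hybrids yields $|\Pr[\faWI^0_{\A,\FAS}=1]-\Pr[\faWI^1_{\A,\FAS}=1]|\le\negl(\secparam)$. Unlike the zero-knowledge argument of \Cref{lemma:fas-zk}, here we neither invoke the IND-to-SIM compiler nor reason about leaking $f_\bfy(\bft)$, so the argument is direct.
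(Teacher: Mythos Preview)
Your proposal is correct and matches the paper's approach essentially verbatim: the paper's proof of \Cref{thm:fas-weakly-secure} simply cites \Cref{lemma:fas-ad-sound,lemma:fas-pre-sig-validity,lemma:fas-unf,lemma:fas-wit-ext,lemma:fas-pre-sig-adaptability} for the first five properties and a new \Cref{lemma:fas-wit-ind} for witness indistinguishability, whose proof is exactly your four-hybrid sequence (the paper indexes it as $G_0^0,G_1^0,G_1^1,G_0^1$, with $G_0^b\approx_c G_1^b$ by adaptive zero-knowledge of $\nizk$ and $G_1^0\approx_c G_1^1$ by selective IND-security of $\ipfe$). Your discussion of why the $\ipfe$ reduction respects the selective ordering and admissibility constraint is likewise the content of the paper's \Cref{claim:fas-wit-ind-g1-final}.
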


\begin{proof}
Follows from~\Cref{lemma:fas-ad-sound,lemma:fas-pre-sig-validity,lemma:fas-unf,lemma:fas-wit-ext,lemma:fas-pre-sig-adaptability,lemma:fas-wit-ind}.
\end{proof}

\subsection{Witness Indistinguishability}
\begin{lemma}
Suppose $\nizk$ satisfies adaptive zero-knowledge and $\ipe$ satisfies selective, IND-security. 
Then, the functional adaptor signature construction in~\Cref{sec:construction} is witness indistinguishable.
\label{lemma:fas-wit-ind}
\end{lemma}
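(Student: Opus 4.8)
The plan is to prove witness indistinguishability (\Cref{def:fas-wi}) for the weakly-secure $\fas$ construction of~\Cref{fig:fas-construction-weak} by a hybrid argument that walks from the game $\faWI^0_{\A,\FAS}$ (which uses witness $\bfx_0$) to $\faWI^1_{\A,\FAS}$ (which uses witness $\bfx_1$). Recall that in these games the adversary first outputs $(\vk, X, \bfx_0, \bfx_1)$ with both $(X,\bfx_0),(X,\bfx_1)\in R$; the challenger runs $\AdvertisementGen(\crs, X, \bfx_b)$, answers $\AuxGen$ oracle queries (which in this construction are trivial, returning $\bot$), and on the final query checks $\FPreVerify$ and $f(\bfx_0)=f(\bfx_1)$ before running $\Adapt$ with $\bfx_b$ and $f$. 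The only places $\bfx_b$ is used are: (i) inside $\AdvertisementGen$, to form $\ct := \ipfe.\Enc(\mpk,\bfx_b;r_1)$ and as part of the NIZK witness $(r_0,r_1,\bfx_b)$ for $\pi$; and (ii) inside $\Adapt$, where $\bfx_b$ is \emph{not actually used} — the adapt algorithm only needs $\msk$ to compute $\sk_\bfy := \ipfe.\KGen(\msk,\bfy)$ and then calls $\as.\Adapt$. So the entire dependence on $b$ is concentrated in the advertisement.

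The hybrid sequence I would use is: $G_0 = \faWI^0$; $G_1$ switches the NIZK to simulation mode, i.e. use $\nizk.\Setup^*$ to get $(\crs^*,{\sf td})$ and $\nizk.\Prove^*(\crs^*,{\sf td},(X,\pp',\mpk,\ct))$ to produce $\pi$ without a witness — indistinguishability of $G_0$ and $G_1$ follows from adaptive zero-knowledge of $\nizk$ (\Cref{def:adaptive-nizk}), and here adaptivity is genuinely needed since the statement $(X,\pp',\mpk,\ct)$ depends on $\crs^*$ (because $\pp'$ is derived from $\crs^*$ via $\pp$ — actually $\pp'$ is sampled independently, but $\ct$ is formed after $\crs^*$, so the statement is chosen adaptively). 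Next, $G_2$ switches the ciphertext from $\ipfe.\Enc(\mpk,\bfx_0)$ to $\ipfe.\Enc(\mpk,\bfx_1)$; indistinguishability of $G_1$ and $G_2$ follows from selective IND-security of $\ipfe$ (\Cref{def:ipfe-sel-ind-sec}). Finally $G_3$ switches the NIZK back to real mode with witness $(r_0,r_1,\bfx_1)$, again by adaptive zero-knowledge, and $G_3 = \faWI^1$. Summing the three negligible gaps by the triangle inequality completes the proof.

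The main technical care is in the reduction to IND-security of $\ipfe$ in the $G_1 \to G_2$ step, and there are two points to handle. First, the IND-security challenger samples $(\mpk,\msk)$ internally and gives only $\mpk$ to the reduction, whereas $\Adapt$ needs $\msk$ to compute $\sk_\bfy = \ipfe.\KGen(\msk,\bfy)$; this is exactly why we route all $\KGen$ computations through the IND-security oracle $\mcal{O}_\KGen$, so the reduction can answer every $\Adapt$ query by querying $\mcal{O}_\KGen(\bfy)$ to obtain $\sk_\bfy$ and then computing $\as.\Adapt$ and $\pk_\bfy = \ipfe.\PubKGen(\mpk,\bfy)$ itself (recall $\PubKGen$ is public and deterministic). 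Second, admissibility of the IND adversary requires $f_\bfy(\bfx_0) = f_\bfy(\bfx_1)$ for every $\bfy$ submitted to $\mcal{O}_\KGen$; this is where the game's explicit check ``If $f(\bfx_0)\neq f(\bfx_1)$: \ret $0$'' is used — but note the subtlety that this check happens only for the \emph{final} function $f$, whereas $\Adapt$ may have been invoked (via the oracle, though in the WI game there is no $\Adapt$ oracle, only $\AuxGen$) — in fact in $\faWI^b$ there is a single $\Adapt$ call at the end after the check, and the $\mcal{O}_\AuxGen$ oracle does nothing in this construction, so the reduction only ever queries $\mcal{O}_\KGen$ once, on the challenge $f$, and admissibility holds precisely because of the pre-$\Adapt$ check. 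I expect this bookkeeping — confirming that the only $\ipfe.\KGen$ call made on a function $\bfy$ for which we cannot guarantee $f_\bfy(\bfx_0)=f_\bfy(\bfx_1)$ is absent — to be the part requiring the most attention, but it goes through because the construction's $\AuxGen$ is vacuous and the WI experiment performs the equality check before the single $\Adapt$ invocation. Formally, I would write out the intermediate experiments in a figure analogous to~\Cref{fig:fasig-zk-games} and then two short claims bounding the NIZK gap and the IPFE gap, each by an immediate reduction.

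\begin{proof}[Proof sketch]
Consider hybrids $G_0 = \faWI^0_{\A,\FAS}$, $G_1$ (NIZK proof $\pi$ generated by $\nizk.\Sim$ with simulated $\crs^*$, ciphertext still encrypting $\bfx_0$), $G_2$ (simulated NIZK, ciphertext encrypting $\bfx_1$), $G_3 = \faWI^1_{\A,\FAS}$ (real NIZK with witness for $\bfx_1$). By the triangle inequality it suffices to bound each consecutive gap. The gap $|\Pr[G_0 = 1] - \Pr[G_1 = 1]|$ is negligible by adaptive zero-knowledge of $\nizk$: the reduction receives $(\crs,\pi)$ for the adaptively chosen statement $(X,\pp',\mpk,\ct)$ and simulates the rest of $\faWI^0$ (all $\Adapt$/$\AuxGen$ steps use only $\msk$, which the reduction samples itself). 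Symmetrically $|\Pr[G_2 = 1] - \Pr[G_3 = 1]|$ is negligible. The gap $|\Pr[G_1 = 1] - \Pr[G_2 = 1]|$ is negligible by selective IND-security of $\ipfe$: the reduction submits $(\bfx_0,\bfx_1)$ as the challenge pair, receives $(\mpk,\ct^*)$, runs $\nizk.\Sim$ to produce $\advt = (\mpk,\ct^*,\pi)$, answers the single challenge $\Adapt$ on function $\bfy$ by querying $\mcal{O}_\KGen(\bfy)$ for $\sk_\bfy$ and computing $\pk_\bfy = \ipfe.\PubKGen(\mpk,\bfy)$ and $\sigma = \as.\Adapt(\vk,m,\pk_\bfy,\sk_\bfy,\widetilde\sigma)$, and outputs $\A$'s bit. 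Admissibility of this reduction holds because $\faWI^b$ aborts unless $f_\bfy(\bfx_0) = f_\bfy(\bfx_1)$, and this is the only $\KGen$ query. Combining the three bounds yields $|\Pr[\faWI^0_{\A,\FAS} = 1] - \Pr[\faWI^1_{\A,\FAS} = 1]| \le \negl(\secparam)$.
\end{proof}
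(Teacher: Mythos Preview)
The proposal is correct and takes essentially the same approach as the paper: the same four-hybrid sequence (real NIZK with $\bfx_0$, simulated NIZK with $\bfx_0$, simulated NIZK with $\bfx_1$, real NIZK with $\bfx_1$), with adaptive zero-knowledge of $\nizk$ handling the two outer gaps and selective IND-security of $\ipfe$ handling the middle gap. The paper indexes the hybrids as $G_i^b$ for $b\in\{0,1\}$ rather than linearly as $G_0,\dots,G_3$, but the sequence and the reductions (including routing the single $\KGen$ call through $\ipfe.\mcal{O}_\KGen$ and using the $f_\bfy(\bfx_0)=f_\bfy(\bfx_1)$ check for admissibility) are identical.
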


\begin{proof}
We proof this through a sequence of games $G_0^b$ and $G_1^b$ for $b \in \{0,1\}$.

\paragraph{Game $G_0^b$:}
This game corresponds to the original game 
$\faWitInd^b_{\A, \FAS}$.
Formally, the game is defined in~\Cref{fig:fasig-wit-ind-games}.
For sake of simplicity, we drop the oracle $\mcal{O}_\AuxGen$ as it is redundant in context of the construction.

\paragraph{Game $G_1^b$:} 
This game is same as $G_0^b$ except that NIZK simulator is used instead of NIZK prover. 
Formally, the game is defined in~\Cref{fig:fasig-wit-ind-games}.


\ifacm
	\begin{figure}[t]
	\centering
	\captionsetup{justification=centering}
	\begin{gameproof}
	\begin{pchstack}[boxed, space=1em]
	\procedure[linenumbering, mode=text]{Games $G^b_0$, \pcbox{\text{$G^b_1$}}.}{
	$\crs \gets \nizk.\Setup(1^\secparam)$
	\pcbox{\text{
	$\crs \gets \nizk.\Setup^*(1^\secparam)$	
	}}
	\\ 
	$\pp' \gets \ipfe.\Gen(1^\secparam)$
	\\ 
	$\pp = (\crs, \pp')$
	\\ 
	$(\vk, X, \bfx_0, \bfx_1) \gets \A(\pp)$
	\\ 
	$\text{if } (((X, \bfx_0) \notin R) \vee ((X, \bfx_1) \notin R)): \text{\ret } 0$  
	\\ 
	Sample random coins $r_0, r_1$
	\\
	$(\mpk, \msk) := \ipe.\Setup(\pp', 1^\ell; r_0)$
	\\ 
	$\ct := \ipe.\Enc(\mpk, \bfx_b; r_1)$
	\\ 
	$\pi \gets \nizk.\Prove(\crs, (X, \pp', \mpk, \ct), (r_0, r_1, \bfx_b))$
	\pcskipln \\
	\pcbox{\text{
	$\pi \gets \nizk.\Prove^*(\crs, (X, \pp', \mpk, \ct))$
	}}
	\\ 
	$\advt := (\mpk, \ct, \pi)$, $\state := \msk$
	\\ 
	$(m, \bfy, \aux_\bfy, \pi_\bfy, \widetilde{\sigma}) \gets \A(\advt)$ 
	\\ 
	$\pk_\bfy := \ipe.\PubKGen(\mpk, \bfy)$
	\\ 
	If $\as.\PreVerify(\vk, m, \pk_\bfy, \widetilde{\sigma}) = 0$: \ret $0$
	\\ 
	If $f_\bfy(\bfx_0) \neq f_\bfy(\bfx_1)$: \ret $0$
	\\ 
	$\sk_\bfy := \ipe.\KGen(\msk, \bfy)$
	\\ 
	$\sigma = \as.\Adapt(\vk, m, \pk_\bfy, \sk_\bfy, \widetilde{\sigma})$
	\\ 
	$b' \gets \A(\sigma)$ 
	\\ 
	\ret $b'$ 
	} 
	\end{pchstack}
	\end{gameproof}
	\caption{Witness Indistinguishability proof: Games $G^b_0$ and $G^b_1$ for $b \in \{0,1\}$
	}
	\label{fig:fasig-wit-ind-games}
	\end{figure}
\else 
	\begin{figure}[h]
	\centering
	\captionsetup{justification=centering}
	\begin{gameproof}
	\begin{pchstack}[boxed, space=1em]
	\procedure[linenumbering, mode=text]{Games $G^b_0$, \pcbox{\text{$G^b_1$}}.}{
	$\crs \gets \nizk.\Setup(1^\secparam)$
	\pcbox{\text{
	$\crs \gets \nizk.\Setup^*(1^\secparam)$	
	}}
	\\ 
	$\pp' \gets \ipfe.\Gen(1^\secparam)$
	\\ 
	$\pp = (\crs, \pp')$
	\\ 
	$(\vk, X, \bfx_0, \bfx_1) \gets \A(\pp)$
	\\ 
	$\text{if } (((X, \bfx_0) \notin R) \vee ((X, \bfx_1) \notin R)): \text{\ret } 0$  
	\\ 
	Sample random coins $r_0, r_1$
	\\
	$(\mpk, \msk) := \ipe.\Setup(\pp', 1^\ell; r_0)$
	\\ 
	$\ct := \ipe.\Enc(\mpk, \bfx_b; r_1)$
	\\ 
	$\pi \gets \nizk.\Prove(\crs, (X, \pp', \mpk, \ct), (r_0, r_1, \bfx_b))$
	\pcskipln \\
	\pcbox{\text{
	$\pi \gets \nizk.\Prove^*(\crs, (X, \pp', \mpk, \ct))$
	}}
	\\ 
	$\advt := (\mpk, \ct, \pi)$, $\state := \msk$
	\\ 
	$(m, \bfy, \aux_\bfy, \pi_\bfy, \widetilde{\sigma}) \gets \A(\advt)$ 
	\\ 
	$\pk_\bfy := \ipe.\PubKGen(\mpk, \bfy)$
	\\ 
	If $\as.\PreVerify(\vk, m, \pk_\bfy, \widetilde{\sigma}) = 0$: \ret $0$
	\\ 
	If $f_\bfy(\bfx_0) \neq f_\bfy(\bfx_1)$: \ret $0$
	\\ 
	$\sk_\bfy := \ipe.\KGen(\msk, \bfy)$
	\\ 
	$\sigma = \as.\Adapt(\vk, m, \pk_\bfy, \sk_\bfy, \widetilde{\sigma})$
	\\ 
	$b' \gets \A(\sigma)$ 
	\\ 
	\ret $b'$ 
	} 
	\end{pchstack}
	\end{gameproof}
	\caption{Witness Indistinguishability proof: Games $G^b_0$ and $G^b_1$ for $b \in \{0,1\}$
	}
	\label{fig:fasig-wit-ind-games}
	\end{figure}
\fi
To prove the lemma, we need to show that 
\[
|
\Pr[ G^0_0(1^\secparam) = 1] 
-
\Pr[ G^1_0(1^\secparam) = 1] 
| \leq \negl(\secparam),
\]
Note that by triangle inequality, it follows that 
\begin{align*}
& | \Pr[ G^0_0(1^\secparam) = 1] - \Pr[ G^1_0(1^\secparam) = 1] | \\
& \quad \leq 
| \Pr[ G^0_0(1^\secparam) = 1] - \Pr[ G^0_1(1^\secparam) = 1] | \\
& \quad \quad +  
| \Pr[ G^0_1(1^\secparam) = 1] - \Pr[ G^1_1(1^\secparam) = 1] | \\
& \qquad +  
| \Pr[ G^1_1(1^\secparam) = 1] - \Pr[ G^1_0(1^\secparam) = 1] |. 
\end{align*}
To complete the proof, 
we show in~\Cref{claim:fas-wit-ind-g1} 
that the first and third terms on the right-hand-side are 
$\leq \negl(\secparam)$ 
and in~\Cref{claim:fas-wit-ind-g1-final} 
that the second term on the right-hand-side is 
$\leq \negl(\secparam)$.

\end{proof}

\begin{claim}
If $\nizk$ satisfies adaptive zero-knowledge, then, 
there exists a negligible function $\negl$ such that for all $\secparam \in \N$,
for all $b \in \{0,1\}$, 
\[
| \Pr[ G^b_0(1^\secparam) = 1] - \Pr[ G^b_1(1^\secparam) = 1] | \leq \negl(\secparam).
\]
\label{claim:fas-wit-ind-g1}
\end{claim}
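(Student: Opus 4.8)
The plan is to prove this by a direct reduction to the adaptive zero-knowledge property of $\nizk$, in the same spirit as \Cref{clm:fas-zk-nizk} for the strongly-secure construction. The first observation is that $G^b_0$ and $G^b_1$ are syntactically identical except that $G^b_0$ samples $\crs \gets \nizk.\Setup(1^\secparam)$ and forms the advertisement proof via $\nizk.\Prove$ with witness $(r_0, r_1, \bfx_b)$, whereas $G^b_1$ samples $\crs$ (together with its trapdoor) via $\nizk.\Setup^*$ and forms the advertisement proof via $\nizk.\Prove^*$ without the witness. Moreover, in this weakly-secure construction $\AuxGen$ is trivial (it returns $\bot$), so across the whole game only a single NIZK proof is ever produced; hence it suffices to invoke the zero-knowledge guarantee once, on a single (adaptively chosen) statement.

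\textbf{The reduction.} Fixing $b \in \bit$, I would assume for contradiction a \ppt adversary $\A$ distinguishing $G^b_0$ from $G^b_1$ with non-negligible advantage $\epsilon$, and build a distinguisher $\D_\nizk$ against adaptive zero-knowledge. On input a reference string $\rho$ --- either honestly generated or simulated --- $\D_\nizk$ samples $\pp' \gets \ipfe.\Gen(1^\secparam)$, sets $\pp := (\rho, \pp')$, and runs $(\vk, X, \bfx_0, \bfx_1) \gets \A(\pp)$; if either of $(X,\bfx_0),(X,\bfx_1)$ is not in $R$ it outputs $0$. Otherwise it samples coins $r_0, r_1$, computes $(\mpk,\msk) := \ipfe.\Setup(\pp', 1^\ell; r_0)$ and $\ct := \ipfe.\Enc(\mpk, \bfx_b; r_1)$, and hands the statement $(X, \pp', \mpk, \ct)$ with witness $(r_0, r_1, \bfx_b)$ to the zero-knowledge challenger, receiving a proof $\pi$; by construction this pair lies in the relation of $L_\nizk$ precisely because $(X, \bfx_b) \in R$ and $\mpk, \ct$ were built from $r_0, r_1$. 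It then sets $\advt := (\mpk, \ct, \pi)$, $\state := \msk$, and finishes the simulation of $G^b$ verbatim from \Cref{fig:fasig-wit-ind-games} using $\msk$: run $\A(\advt)$ to get $(m, \bfy, \aux_\bfy, \pi_\bfy, \widetilde{\sigma})$, compute $\pk_\bfy := \ipfe.\PubKGen(\mpk, \bfy)$, output $0$ on either abort condition (pre-verify failure, or $f_\bfy(\bfx_0) \neq f_\bfy(\bfx_1)$), and otherwise compute $\sk_\bfy := \ipfe.\KGen(\msk, \bfy)$, $\sigma := \as.\Adapt(\vk, m, \pk_\bfy, \sk_\bfy, \widetilde{\sigma})$, run $b' \gets \A(\sigma)$, and output $b'$.

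\textbf{Correctness and the main obstacle.} When the challenger uses the honest $\Setup/\Prove$, $\D_\nizk$'s view is distributed exactly as $G^b_0$; when it uses $\Setup^*/\Prove^*$, it is distributed exactly as $G^b_1$ --- every other ingredient ($\pp'$, $\mpk$, $\ct$, the verification checks, $\sk_\bfy$, $\sigma$, and $\A$'s coins) is computed identically in both worlds from $\rho$, $\pi$, and the honestly sampled $\msk$. Hence $\D_\nizk$ inherits advantage $\epsilon$, contradicting adaptive zero-knowledge. The subtle point --- and the reason the weakly-secure construction is stated over an \emph{adaptively}-secure NIZK rather than merely the ZK notion of \Cref{def:nizk} --- is that the statement $(X, \pp', \mpk, \ct)$ and its witness are determined only after $\crs$ has been fixed and handed to $\A$, so the reduction genuinely needs the simulator to produce valid-looking proofs for statements chosen adaptively relative to $\crs^*$. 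I expect the main work to be confirming that the adaptive-ZK formulation being invoked really supports this (e.g.\ phrasing it with a single adaptively-chosen statement, or equivalently with a proving oracle), and then verifying that the relation-membership guard in the experiment is exactly what ensures the pair submitted to the challenger always lies in the relation; the remainder is a routine perfect-simulation argument that works uniformly for both $b \in \bit$.
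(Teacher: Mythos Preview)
Your proposal is correct and follows essentially the same approach as the paper: both build a distinguisher against adaptive zero-knowledge that receives the (real-or-simulated) $\crs$, runs $\A$ to obtain $(\vk, X, \bfx_0, \bfx_1)$, computes $(\mpk,\msk,\ct)$ locally, requests a proof on the adaptively-formed statement $(X,\pp',\mpk,\ct)$ with witness $(r_0,r_1,\bfx_b)$, and then finishes the game verbatim using $\msk$. Your identification of the main subtlety---that the statement is fixed only after $\crs$ is seen, hence the need for \emph{adaptive} zero-knowledge rather than the weaker notion of \Cref{def:nizk}---matches the paper's remark that the distinguisher must be allowed to choose the statement-witness pair adaptively after seeing $\crs$.
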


\begin{proof}

We show that if there exists a \ppt adversary \A can distinguish its view in games $G^b_0$ and $G^b_1$, then, we can create a distinguisher $\D^b$ that breaks the adaptive zero-knowledge of the underlying $\nizk$. Let \C be the adaptive zero-knowledge challenger for $\nizk$. $\C$ samples a random bit $\beta \getr \{0,1\}$ and if $\beta=0$, it samples $\crs$ and $\pi$ as in the real-world and if $\beta=1$, it samples $\crs$ and $\pi$ using the $\nizk$ simulator algorithms $\Setup^*$ and $\Prove^*$ respectively.
Note that as the adaptive zero-knowledge of the NIZK proof system holds for all valid statement-witness pairs, we will allow the distinguisher $\D^b$ to choose a valid statement-witness pair adaptively after seeing $\crs$.
The distinguisher $\D^b$ must output a bit at the end and it will use the adversary \A for this task. 
The distinguisher $\D^b$ interacts with \A and either presents it the view as in $G^b_0$ or as in $G^b_1$ and at the end \A outputs its guess bit $\beta' \in \{0,1\}$. The distinguisher $\D^b$ is as in~\Cref{fig:fasig-wit-ind-g1-proof}.

\ifacm
\begin{figure}[t]
\else
\begin{figure}[h]
\fi
\centering
\captionsetup{justification=centering}
\begin{pchstack}[boxed, space=1em]
\procedure[linenumbering, mode=text]{Distinguisher $\D^b$ for proof of~\Cref{claim:fas-wit-ind-g1}}{
{\color{blue}
$\crs \gets \C(1^\secparam)$
}
\\ 
$\pp' \gets \ipfe.\Gen(1^\secparam)$
\\ 
$\pp = (\crs, \pp')$
\\ 
$(\vk, X, \bfx_0, \bfx_1) \gets \A(\pp)$
\\ 
$\text{if } (((X, \bfx_0) \notin R) \vee ((X, \bfx_1) \notin R)): \text{\ret } 0$  
\\ 
Sample random coins $r_0, r_1$
\\
$(\mpk, \msk) := \ipe.\Setup(\pp', 1^\ell; r_0)$
\\ 
$\ct := \ipe.\Enc(\mpk, \bfx_b; r_1)$
\\ 
{\color{blue}
$\pi \gets \C((X, \pp', \mpk, \ct), (r_0, r_1, \bfx_b))$
}
\\ 
$\advt := (\mpk, \ct, \pi)$, $\state := \msk$
\\ 
$(m, \bfy, \aux_\bfy, \pi_\bfy, \widetilde{\sigma}) \gets \A(\advt)$ 
\\ 
$\pk_\bfy := \ipe.\PubKGen(\mpk, \bfy)$
\\ 
If $\as.\PreVerify(\vk, m, \pk_\bfy, \widetilde{\sigma}) = 0$: \ret $0$
\\ 
If $f_\bfy(\bfx_0) \neq f_\bfy(\bfx_1)$: \ret $0$
\\ 
$\sk_\bfy := \ipe.\KGen(\msk, \bfy)$
\\ 
$\sigma = \as.\Adapt(\vk, m, \pk_\bfy, \sk_\bfy, \widetilde{\sigma})$
\\  
$\beta' \gets \A(\sigma)$ 
\\ 
\ret $\beta'$ 
} 

\end{pchstack}
\caption{Distinguisher $\D^b$ for proof of~\Cref{claim:fas-wit-ind-g1}
}
\label{fig:fasig-wit-ind-g1-proof}
\end{figure}

Observe that if \C chose bit $\beta$, then, the view observed by \A is that of game $G^b_\beta$.
Hence, it follows that 
\begin{align*}
& |\Pr[\D^b(\crs, \pi) = 1| \beta = 0] - \Pr[\D^b(\crs, \pi) = 1| \beta = 1]| \\ 
& \quad =  
|\Pr[G^b_0(1^\secparam) = 1] - \Pr[G^b_0(1^\secparam) = 1]| . 
\end{align*}
Hence, if \A has a noticeable distinguishing advantage $\epsilon$, then, $\D^b$ also has a noticeable distinguishing advantage $\epsilon$ in its game with challenger \C. This completes the proof.
\end{proof}

\begin{claim}
If $\ipe$ satisfies selective, IND-security, then,
there exists a negligible function $\negl$ such that for all $\secparam \in \N$,
\[
| \Pr[ G^0_1(1^\secparam) = 1] - \Pr[ G^1_1(1^\secparam) = 1] | \leq \negl(\secparam).
\]
\label{claim:fas-wit-ind-g1-final}
\end{claim}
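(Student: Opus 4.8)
\textbf{Proof of \Cref{claim:fas-wit-ind-g1-final} (proposal).}
The plan is to reduce directly to the selective IND-security of $\ipfe$. Given the stateful \ppt adversary $\A$ attacking witness indistinguishability, I would build a \ppt distinguisher $\B$ that plays $\ipfeIND^\beta(1^\secparam)$ against a challenger $\C$ holding a secret bit $\beta \in \{0,1\}$. The crucial structural observation is that in $G^b_1$ the common reference string is produced by $\nizk.\Setup^*$ and the advertisement proof by $\nizk.\Prove^*$, so neither requires the encrypted witness; moreover $\pk_\bfy$ is computed by the \emph{public} algorithm $\ipfe.\PubKGen(\mpk,\bfy)$, and the only place the master secret key is ever touched in $G^b_1$ is to form $\sk_\bfy = \ipfe.\KGen(\msk,\bfy)$ inside $\Adapt$. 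Hence $\B$ can run the whole experiment knowing only $\mpk$, the challenge ciphertext, the NIZK trapdoor, and the answer to a single key-generation oracle query.

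Concretely, $\B$ receives $\pp'$ from $\C$, samples $(\crs,\td) \gets \nizk.\Setup^*(1^\secparam)$, sets $\pp := (\crs,\pp')$, and runs $\A(\pp)$ to obtain $(\vk, X, \bfx_0, \bfx_1)$. If $(X,\bfx_0)\notin R$ or $(X,\bfx_1)\notin R$, $\B$ outputs the fixed bit $0$ and halts, mirroring the early ``return $0$'' of $G^b_1$. Otherwise $\B$ submits $(\bfx_0,\bfx_1)$ as its selective challenge pair, receives $(\mpk, \ct^*)$ with $\ct^* \gets \ipfe.\Enc(\mpk,\bfx_\beta)$, computes $\pi \gets \nizk.\Prove^*(\crs,\td,(X,\pp',\mpk,\ct^*))$, sets $\advt := (\mpk,\ct^*,\pi)$, and runs $\A(\advt)$ to get $(m,\bfy,\aux_\bfy,\pi_\bfy,\widetilde{\sigma})$. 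It computes $\pk_\bfy := \ipfe.\PubKGen(\mpk,\bfy)$; if $\as.\PreVerify(\vk,m,\pk_\bfy,\widetilde{\sigma}) = 0$ or $f_\bfy(\bfx_0)\neq f_\bfy(\bfx_1)$, it outputs $0$ and halts, again matching $G^b_1$. Only at this point --- when $f_\bfy(\bfx_0) = f_\bfy(\bfx_1)$ is already guaranteed --- does $\B$ query $\mcal{O}_\KGen(\bfy)$ to obtain $\sk_\bfy$, so $\B$ remains admissible. Finally $\B$ computes $\sigma := \as.\Adapt(\vk,m,\pk_\bfy,\sk_\bfy,\widetilde{\sigma})$, hands $\sigma$ to $\A$, and outputs $\A$'s guess bit $b'$.

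The remaining work is to verify perfect simulation: when $\beta = 0$ every value $\B$ feeds $\A$ is distributed exactly as in $G^0_1$, and when $\beta = 1$ exactly as in $G^1_1$. In particular the oracle-returned $\sk_\bfy$ equals $\ipfe.\KGen(\msk,\bfy)$ for the $\msk$ that $\C$ sampled jointly with $\mpk$, which is precisely the $\Adapt$ step of the games, and since $\AuxGen$ is vacuous in this construction there is exactly one function request $\bfy$, so one key-generation query suffices. Consequently, for this $\B$, $|\Pr[\ipfeIND^0(1^\secparam)=1] - \Pr[\ipfeIND^1(1^\secparam)=1]| = |\Pr[G^0_1(1^\secparam)=1] - \Pr[G^1_1(1^\secparam)=1]|$, and selective IND-security of $\ipfe$ bounds the former by $\negl(\secparam)$; combined with the two applications of \Cref{claim:fas-wit-ind-g1} (for $b=0$ and $b=1$) and the triangle inequality already set up in the proof of \Cref{lemma:fas-wit-ind}, this finishes witness indistinguishability.

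The only real obstacle is bookkeeping, not depth: the IND-security game protects $\B$ only if every key query $\bfy$ satisfies $f_\bfy(\bfx_0)=f_\bfy(\bfx_1)$, so $\B$ must be arranged so that it performs the games' $f_\bfy(\bfx_0)=f_\bfy(\bfx_1)$ test \emph{before} ever invoking $\mcal{O}_\KGen$, and emits the fixed bit $0$ when the test (or the $\as.\PreVerify$ check) fails, exactly reproducing the early ``return $0$'' branches of $G^b_1$ so that no distinguishing signal is lost in the translation.
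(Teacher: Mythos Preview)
Your proposal is correct and follows essentially the same approach as the paper: build a reduction $\B$ that outsources $\pp'$, $(\mpk,\ct)$, and the single key $\sk_\bfy$ to the $\ipfe$ IND challenger, simulates the NIZK using $(\Setup^*,\Prove^*)$, and otherwise runs $G^b_1$ verbatim, observing that admissibility holds because $\mcal{O}_\KGen(\bfy)$ is queried only after the $f_\bfy(\bfx_0)=f_\bfy(\bfx_1)$ check. Your write-up is in fact slightly more careful than the paper's in spelling out why the early ``return $0$'' branches are perfectly mirrored and why selectivity is respected (since $\A$ commits to $\bfx_0,\bfx_1$ after $\pp'$ but before $\mpk$).
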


\begin{proof}

We show that if there exists a \ppt adversary \A can distinguish its view in games $G^0_1$ and $G^1_1$, then, we can create a reduction \B that breaks the selective, IND-security of the underlying $\ipe$. 
For the $\ipe$ selective, IND-security game, 
let \C be the  challenger and let $\ipe.\mcal{O}_\KGen$ be the key generation oracle that the reduction \B has access to. 
The reduction \B is as in~\Cref{fig:fasig-wit-ind-g1-final-proof}.

\ifacm
\begin{figure}[t]
\else
\begin{figure}[h]
\fi
\centering
\captionsetup{justification=centering}
\begin{pchstack}[boxed, space=1em]
\procedure[linenumbering, mode=text]{Reduction \B for proof of~\Cref{claim:fas-wit-ind-g1-final}}{
$\crs \gets \nizk.\Setup^*(1^\secparam)$
\\ 
{\color{blue}
$\pp' \gets \C(1^\secparam)$
}
\\ 
$\pp = (\crs, \pp')$
\\ 
$(\vk, X, \bfx_0, \bfx_1) \gets \A(\pp)$
\\ 
$\text{if } (((X, \bfx_0) \notin R) \vee ((X, \bfx_1) \notin R)): \text{\ret } 0$  
\\ 
{\color{blue}
$(\mpk, \ct) \gets \C(\bfx_0, \bfx_1)$
}
\\ 
$\pi \gets \nizk.\Prove^*(\crs, (X, \pp', \mpk, \ct))$
\\ 
$\advt := (\mpk, \ct, \pi)$, $\state := \msk$
\\ 
$(m, \bfy, \aux_\bfy, \pi_\bfy, \widetilde{\sigma}) \gets \A(\advt)$ 
\\ 
$\pk_\bfy := \ipe.\PubKGen(\mpk, \bfy)$
\\ 
If $\as.\PreVerify(\vk, m, \pk_\bfy, \widetilde{\sigma}) = 0$: \ret $0$
\\ 
If $f_\bfy(\bfx_0) \neq f_\bfy(\bfx_1)$: \ret $0$
\\ 
{\color{blue}
$\sk_\bfy = \ipe.\mcal{O}_\KGen(\bfy)$
}
\\ 
$\sigma = \as.\Adapt(\vk, m, \pk_\bfy, \sk_\bfy, \widetilde{\sigma})$
\\ 
$b' \gets \A(\sigma)$ 
\\ 
\ret $b'$ 
} 

\end{pchstack}
\caption{Reduction \B for proof of~\Cref{claim:fas-wit-ind-g1-final}
}
\label{fig:fasig-wit-ind-g1-final-proof}
\end{figure}

Observe that when \C plays game ${\sf IPFE\text{-}Expt\text{-}SEL}_\A^b(1^\secparam, n)$ with the reduction \B for some $b \in \{0,1\}$, then, $\ct$ encrypts $\bfx_b$ and thus, \A's view is same as $G^b_1$.
Further, observe that \B makes query to the key generation oracle only if $f_\bfy(\bfx_0)  = f_\bfy(\bfx_1)$. Thus, \B is an admissible adversary in the game with \C. Hence, it follows that for all $b \in \{0,1\}$,
\begin{align*}
\Pr[{\sf IPFE\text{-}Expt\text{-}SEL}_\B^b(1^\secparam, n) = 1]
& = \Pr[b' = 1] \\
& = \Pr[G^b_1(1^\secparam) = 1].
\end{align*}
Therefore, we get that 
\ifacm
	\begin{align*}
	& |
	\Pr[{\sf IPFE\text{-}Expt\text{-}SEL}_\A^0(1^\secparam, n) = 1] \\ 
	& \quad - \Pr[{\sf IPFE\text{-}Expt\text{-}SEL}_\A^1(1^\secparam, n) = 1]
	| \\
	& \qquad = 
	| \Pr[ G^0_1(1^\secparam) = 1] - \Pr[ G^1_1(1^\secparam) = 1] |.
	\end{align*}
\else 
	\begin{align*}
	& |
	\Pr[{\sf IPFE\text{-}Expt\text{-}SEL}_\A^0(1^\secparam, n) = 1]
	- \Pr[{\sf IPFE\text{-}Expt\text{-}SEL}_\A^1(1^\secparam, n) = 1]
	| \\
	& \qquad = 
	| \Pr[ G^0_1(1^\secparam) = 1] - \Pr[ G^1_1(1^\secparam) = 1] |.
	\end{align*}
\fi
Hence, if \A  has a noticeable distinguishing advantage $\epsilon$ in its game with reduction \B, then, \B also has a noticeable distinguishing advantage $\epsilon$ in its game with challenger \C. This completes the proof.
\end{proof}

\fi

\ifacm
\pagestyle{fancy}
\else
\pagestyle{plain}
\fi
\end{document}